\renewcommand{\Delta}{\triangle}
\definecolor{darkblue}{rgb}{0,0,0.7}
\definecolor{darkgreen}{rgb}{0.01,0.75,0.24}
\def \Ee[#1]{\mathcal{E}^{\text{{#1}}}}
\def\R{\mathbf{R}}
\def\pa[#1,#2]{\frac{\partial {#1}}{\partial {#2}} }
\def\idom[#1,#2,#3]{\int_{#1}\hspace{1pt} {#2} \hspace{1pt} \text{d}{#3}}
\def\res[#1,#2]{\left.{#1}\right|_{#2}}
\def\var[#1,#2]{\langle \delta \mathcal{E}^{\text{{#1}}}({#2}),v\rangle}
\def\vars[#1,#2,#3]{\langle \delta^2\mathcal{E}^{\text{{#1}}}({#2})v,{#3}\rangle}
\def\vard[#1,#2,#3,#4]{\langle \delta\mathcal{E}^{\text{{#1}}}({#2})-\delta\mathcal{E}^{\text{{#3}}}({#4}),v\rangle}
\renewcommand{\O}{\mathcal{O}}
\def\E{\E}
\def\Ind{\mathbbm{1}}
\newcommand{\I}{\mathbb{I}}
\newcommand{\C}{\mathcal{C}}
\newcommand{\UBUBU}{\text{UBUBU}}
\newcommand{\UBUBUSG}{\text{UBUBU-SG}}
\newcommand{\UBU}{\text{UBU}}
\newcommand{\BAOAB}{\text{BAOAB}}
\newcommand{\SWR}{\mathcal{SWR}}
\renewcommand{\R}{\mathbb{R}}
\newcommand{\M}{\mathcal{M}}
\newcommand{\PP}{\mathcal{P}}
\newcommand{\HH}{\mathcal{H}}
\newcommand{\ol}{\overline}
\newcommand{\ul}{\underline}
\newcommand{\be}{\begin{equation}}
\newcommand{\en}{\end{equation}}
\newcommand{\ben}{\begin{equation*}}
\newcommand{\enn}{\end{equation*}}
\newcommand{\bea}{\begin{aligned}}
\newcommand{\ena}{\end{aligned}}
\def\ba#1\ena{\begin{align}#1\end{align}}
\def\ban#1\enan{\begin{align*}#1\end{align*}}
\theoremstyle{plain}
\newtheorem{theorem}{Theorem}[section]
\newtheorem{lemma}[theorem]{Lemma}
\newtheorem{corollary}[theorem]{Corollary}
\newtheorem{proposition}[theorem]{Proposition}
\newtheorem{assumption}[theorem]{Assumption}
\newtheorem{example}[theorem]{Example}
\newtheorem{remark}[theorem]{Remark}
\newtheorem{definition}[theorem]{Definition}
\numberwithin{equation}{section}
\newcommand{\Var}{\mathrm{Var}}
\newcommand{\Cov}{\mathrm{Cov}}
\renewcommand{\E}{\mathbb{E}}
\begin{document}

\begin{frontmatter}
\title{Unbiased Kinetic Langevin Monte Carlo \\ with  Inexact gradients}
\runtitle{Unbiased kinetic Langevin Monte Carlo with Inexact gradients}

\begin{aug}
\author[A]{\fnms{Neil K.}~\snm{Chada}\ead[label=e1]{neilchada123@gmail.com}},
\author[B]{\fnms{Benedict}~\snm{Leimkuhler}\ead[label=e2]{b.leimkuhler@ed.ac.uk}}, 
\author[C]{\fnms{Daniel}~\snm{Paulin}\ead[label=e3]{paulindani@gmail.com.}}\\
\and
\author[D]{\fnms{Peter A.}~\snm{Whalley}\ead[label=e4]{peter.a.whalley@gmail.com}}
\address[A]{Department of Mathematics, City  University of Hong Kong\printead[presep={,\ }]{e1}}

\address[B]{School of Mathematics, University of Edinburgh\printead[presep={,\ }]{e2}}

\address[C]{School of Physical and Mathematical Sciences, Nanyang Technological University\printead[presep={,\ }]{e3}}

\address[D]{Seminar for Statistics, ETH Zurich \printead[presep={,\ }]{e4}}
\end{aug}

\begin{abstract}

We present an unbiased method for Bayesian posterior means based on kinetic Langevin dynamics that combines advanced splitting methods with enhanced gradient approximations. Our approach avoids Metropolis correction by coupling Markov chains at different discretization levels in a multilevel Monte Carlo approach. Theoretical analysis demonstrates that our proposed estimator is unbiased,  attains finite variance, and satisfies a central limit theorem. It can achieve accuracy $\epsilon>0$ for estimating expectations of Lipschitz functions in $d$ dimensions with $\O(d^{1/4}\epsilon^{-2})$ expected gradient evaluations, without assuming warm start.   We exhibit similar bounds using both approximate and stochastic gradients, and our method's computational cost is shown to scale independently of the size of the dataset.
The proposed method is tested using a multinomial regression problem on the MNIST dataset and a Poisson regression model for soccer scores. Experiments indicate that the number of gradient evaluations per effective sample is independent of dimension, even when using inexact gradients. For product distributions, we give dimension-independent variance bounds. Our results demonstrate that in large-scale applications, the unbiased algorithm we present can be 2-3 orders of magnitude more efficient than the ``gold-standard" randomized Hamiltonian Monte Carlo.
\end{abstract}

\begin{keyword}[class=MSC]
\kwd{65C05}
\kwd{65C30}
\kwd{65C40}
\kwd{62F15}
\end{keyword}

\begin{keyword}
\kwd{unbiased estimation}
\kwd{kinetic Langevin dynamics}
\kwd{multilevel Monte Carlo}
\kwd{stochastic gradient}
\end{keyword}

\end{frontmatter}

\tableofcontents

\section{Introduction}
Markov chain Monte Carlo (MCMC) methods are standard computational tools for high-dimensional Bayesian inference \cite{Roberts2004}. They enable the computation of posterior means and variances and other observable averages by replacing ensemble calculations with Monte Carlo sums over discrete Markov processes.
A limitation to the broader uptake of Bayesian inference is the scaling of the computational cost of MCMC algorithms with model dimension and dataset size.
Typical MCMC methods (Metropolis Adjusted Langevin Algorithm \citep{Besag1994,roberts1998optimal}, Hamiltonian Monte Carlo \citep{duane1987hybrid, neal2011mcmc}) employ Metropolis-Hastings correction steps to ensure convergence to the desired invariant distribution. The cost of implementing such corrections scales linearly with dataset size. Even worse, in order to maintain a high acceptance rate, stepsizes must decrease as a function of the model dimension, which implies that convergence rates are also dependent on dimension  \cite{roberts2001optimal, Beskos2013, chen2023does}.   

By contrast, optimization methods typically have convergence rates that are independent of the dimension and can make use of stochastic gradients based on a subset of the data instead of the entire dataset \cite{johnson2013accelerating}. For these reasons, optimization algorithms are much more scalable than sampling methods, so practitioners often prefer machine-learning approaches.   The relative inefficiency of sampling compared to optimization also limits the uptake of uncertainty quantification techniques (typically built on a Bayesian foundation) in high-dimensional machine learning applications.

\subsection{Unbiased estimation without accept/reject steps}
This paper describes a technique for performing Bayesian inference based on unbiased unadjusted Markov chain Monte Carlo that does not rely on Metropolis-Hastings accept/reject steps. 
Our algorithm is based on a multilevel scheme \cite{MBG15} that combines several different unadjusted MCMC chains to eliminate bias efficiently.   Our approach is related to a recent paper \cite{ruzayqat2022unbiased} that introduced an unbiased unadjusted MCMC method, however we employ state-of-the-art integrators, and we extend the method with modifications for handling incomplete (or approximate) gradients, thus obtaining a procedure with improved scalability and competitiveness compared to state-of-the-art algorithms such as randomized Hamiltonian Monte Carlo (RHMC) \cite{RHMC,chen2023does}. {We also provide theoretical guarantees that are explicit in key parameters such as dimension, enabling direct comparison with state-of-the-art algorithms.}

Unbiased Monte Carlo methods have been widely studied in the recent literature; see Section 2.1 of \cite{jacob2020unbiased} for an overview. The goal of the methods of \cite{glynn2014exact,rhee2015unbiased, jacob2020unbiased, heng2019unbiased, corenflos2022unbiased} is to remove burn-in bias via couplings. \cite{kahale2022unbiased} proposed an alternative method for eliminating burn-in bias by considering a burn-in period of random length. The cited papers above all require that the stationary distribution of the Markov chain has no bias (hence, these methods typically involve Metropolization) and are not able to remove discretization bias in SDEs such as \eqref{eq:kinetic_langevin} treated using numerical methods. Middleton et al \cite{middleton2020unbiased}  extended unbiased methods to intractable likelihoods, and \cite{Douc2022Poission} created unbiased estimators of MCMC asymptotic variances. 
 
There have been several proposals for creating computationally efficient estimators for functions of SDE paths based on numerical discretization using multilevel Monte Carlo variance reduction techniques.  Our scheme relates to the method of M\"{u}ller et al \cite{muller2015improving} for approximating functions of whole paths of kinetic Langevin dynamics using integrators based on splitting. Unlike our approach, that work did not address the stationary distribution; moreover,  the burn-in bias was not eliminated, and they did not consider the incorporation of approximate or stochastic gradients. More recently, Giles et al \cite{giles2020multi} introduced a general framework for multilevel approximation of expectations with respect to the stationary distribution of overdamped Langevin dynamics and also considered stochastic gradients. However, their approach does not produce unbiased {estimates}, and overdamped Langevin dynamics generally appears less efficient at exploring distributions with high condition numbers than well-tuned kinetic Langevin dynamics \cite{GP14}, as considered here.  Until this work, multilevel approaches have not been shown to be competitive with Hamiltonian Monte Carlo methods for high-dimensional {Bayesian computation}.

We also mention that, in the area of molecular simulation, unadjusted numerical discretizations of kinetic Langevin dynamics have been employed for sampling from complex distributions for many years \cite{brunger1984stochastic, izaguirre2001langevin,leimkuhler2013rational,leimkuhler2015molecular}.
Even though such discretizations introduce bias, this is often dominated by the Monte Carlo error--even at substantially larger stepsizes than would typically be used in Metropolized calculations \cite{leimkuhler2015molecular}.  On the other hand, the magnitude of the sampling bias due to finite stepsize 
is problem-dependent and can be difficult to quantify; thus, there are situations where the ability to ameliorate
the discretization bias is crucial.  Some authors have proposed reducing the discretization bias by decreasing stepsize asymptotically \cite{welling2011bayesian, durmus2017nonasymptotic}. However, such a procedure can slow convergence or introduce heuristic schedules into the sampling apparatus.

\subsection{Proposed methodology}
We consider kinetic Langevin dynamics (also referred to as underdamped Langevin dynamics \cite{dalalyan2020sampling, cheng2018underdamped}):
\begin{equation}\label{eq:kinetic_langevin}
    \begin{split}
    dX_{t} &= V_{t}dt,\\
    dV_{t} &= -\nabla U(X_{t}) dt - \gamma V_{t}dt + \sqrt{2\gamma}dW_{t},
\end{split}
\end{equation}
where $U:\R^d\to \R$ is a potential energy function, $\{W_t\}_{t\geq 0}$ is a standard $d-$dimensional Brownian motion, and $\gamma>0$ is a friction coefficient. {One may also consider a general matrix representation for the friction parameter $\gamma>0$, which can accelerate convergence (see \cite{Chak2023}).} Under fairly weak assumptions, the unique invariant measure of the process $\{X_t,V_t\}_{t\geq 0}$ is of the form
\begin{equation}
\label{eq:inv}
\pi(dx dv) \propto \exp\left(-U(x)-\frac{ \|v\|^2}{2}\right) dx dv.
\end{equation}

This dynamics forms the basis of many sampling methods \citep{brunger1984stochastic, leimkuhler2023contractiona}, and it has a dimension-independent convergence rate for a large class of distributions \cite{lu2022explicit}. {However, in practice, one needs to discretize \eqref{eq:kinetic_langevin}, which introduces a bias in the invariant measure \cite{dalalyan2020sampling}. Typically, this bias is either ignored or corrected using Metropolization \cite{PM21}.}
In this paper, we develop a comprehensive and practical framework for unbiased estimation {which avoids Metropolis adjustment}. {We focus on} a splitting integrator called $\UBU$ \cite{sanz2021wasserstein}, which is strongly second-order accurate, where the unbiased estimator we introduce is referred to as $\UBUBU$ (Unbiased-UBU). 
 \begin{figure}
 \centering
 \includegraphics[width=0.49\linewidth]{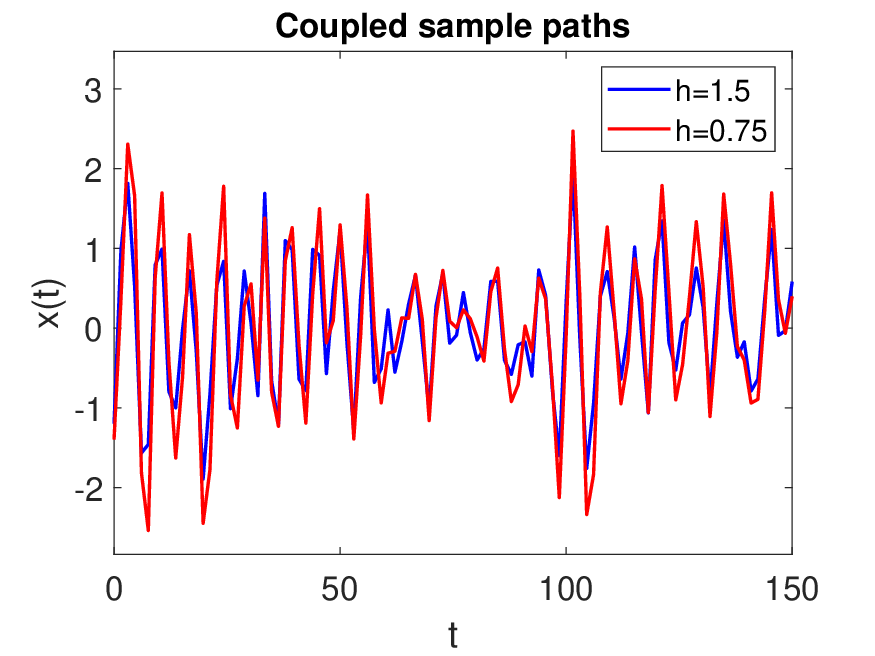}
 \includegraphics[width=0.49\linewidth]{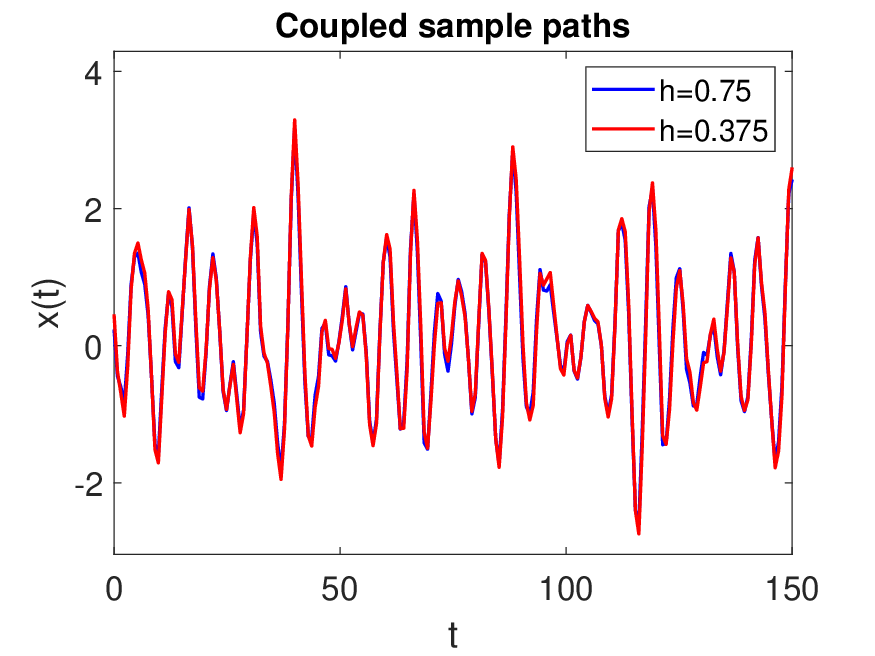}
 \caption{Coupled sample paths based on synchronous coupling from $\UBU$ (Section \ref{sec:back}) discretization scheme of kinetic Langevin diffusion for a Gaussian target at stepsizes $h=1.5,0.75$ and $h=0.75, 0.375$. $\UBU$ is strong order 2, so the typical distance between coupled paths is $\O(h^2)$.}
 \label{fig:coupled_paths}
 \end{figure}
 
In Figure \ref{fig:coupled_paths} we see that $\UBU$ discretization can be pathwise accurate even at large stepsize.  Nevertheless, there is always some residual bias, and the stationary distribution of the discretization with stepsize $h$, $\pi_h$, differs from the target distribution $\pi$.
{\cite{ruzayqat2022unbiased} proposed an unbiased estimation method which} considers a sequence of discretization levels $h_l=2^{-l}h_0$ for $l=0,1,2,\ldots$  and creates an estimator of the form
\begin{equation}
\label{eq:summ}
\hat{\pi}(f)=\hat{\pi}_{h_0}(f)+\sum_{l=0}^{\infty} \hat{\pi}_{h_{l+1},h_{l}}(f),
\end{equation}
where $f$ is some arbitrary quantity of interest, $\hat{\pi}_{h_{0}}(f)$ is an unbiased estimator of $\pi_{h_0}(f)$, and $\hat{\pi}_{h_{l+1},h_{l}}(f)$ is an unbiased estimator of  $\pi_{h_{l+1}}(f)-\pi_{h_{l}}(f)$. A sophisticated coupling construction was used for defining $\hat{\pi}_{h_{l+1},h_{l}}(f)$ based on four Markov chains using Euler–Maruyama discretization of \eqref{eq:kinetic_langevin}. Under certain weak assumptions, the estimator \eqref{eq:summ} was shown to have no bias, finite variance and finite expected computational cost. 

\subsection{{Our Contributions}}
{This paper presents UBUBU, an unbiased estimator of Bayesian posterior means for high-dimensional settings. The method combines an advanced splitting scheme, UBU, with a telescoping sum expansion \eqref{eq:summ} that is motivated by multilevel Monte Carlo. The benefits associated with our methodology are listed below.
\begin{enumerate}[(i)]
\item The burn-in bias is eliminated differently than e.g. in \cite{ruzayqat2022unbiased}, resulting in simpler couplings. Our estimator is still of the form \eqref{eq:summ}.
However, instead of estimating $\pi_{h_0}(f)$ and $\pi_{h_{l+1}}(f)-\pi_{h_{l}}(f)$, which requires eliminating the burn-in bias for both discretization levels, we let $\hat{\pi}_{h_0}(f)$ be an unbiased estimator of $\tilde{\pi}_{h_0}(f)$, and $\hat{\pi}_{h_{l+1},h_{l}}(f)$ be an unbiased estimator $\tilde{\pi}_{h_{l+1}}(f)-\tilde{\pi}_{h_{l}}(f)$. Here $\tilde{\pi}_{h_{l}}(f)$ denotes the expected value of $f$ according to the empirical distribution of a Markov chain using discretization stepsize $h_l$, thinning $2^l$, and burn-in period of length $(B_0+l\cdot B)/h_l$, for some constants $B_0, B>0$. See Figure \ref{fig:bias.elimination} for an illustration. Due to the increasing burn-in periods at smaller stepsizes, the bias of $\tilde{\pi}_{h_{l}}(f)$ shrinks to zero as $l\to \infty$. With this approach, we only need to couple two chains for creating unbiased estimators of $\tilde{\pi}_{h_{l+1}}(f)-\tilde{\pi}_{h_{l}}(f)$, and simple synchronous couplings can be used. 
\item  In our method, the number of samples per level is deterministic (except at very small stepsize), and we can use Richardson extrapolation \cite{richardson1911approximate} to further lower the variance.
\item We show unbiasedness and finite variance even when using approximate or stochastic gradients. This dramatically improves the scalability of our method to large datasets. We also prove that our unbiased estimator has computational cost that scales independently of the size of the dataset using approximate and stochastic gradients.
\item We provide a theoretical comparison between UBUBU and other well-known sampling methods, in terms of the number of gradient evaluations per effective sample size. We are able to prove that the computational complexity is state-of-the-art, with much less restrictive assumptions. A summary is provided in Table \ref{table:comp}.  
\item On a variety of applications, we illustrate in numerical experiments the considerable advantage of our unbiased estimator over state-of-the-art MCMC methods for Bayesian computation. These significant computational savings in combination with our theoretical guarantees illustrate for the first time that unbiased estimation can be a powerful alternative to Metropolis correction.
\end{enumerate}}
\begin{table}[h!]
\begin{center}
\begin{tabular}{ |c|c|c|c| } 
\hline
\textbf{Algorithm} & \textbf{Gradient Evaluations} & \textbf{Conditions} & \textbf{Reference} \\
\hline
MALA & $\O(d^{3/7})$ & $h=\O(d^{-3/7})$, \ warm start, {strongly Hessian Lipschitz} &\cite{chen2023does} \\ 
HMC  & $\O(d^{1/4})$ & $h=\O(d^{-1/4})$, \ warm start, {strongly Hessian Lipschitz}  & \cite{chen2023does} \\ 
RHMC  & $\O(d^{1/4})$ & $h=\O(d^{-1/4})$, \ warm start, Gaussian target  &  \cite{apers2022hamiltonian} \\ 
\UBUBU & $\O(d^{1/4})$ & {$h_0=\O(d^{-1/4})$}, \ {strongly Hessian Lipschitz}   &  {this work}\\ 
\hline
\end{tabular}
\end{center}
\caption{Dimension dependency of gradient evaluations per effective sample for different algorithms for $m$-strongly convex and $M$-$\nabla$Lipschitz potentials, in comparison to $\UBUBU$.}
\label{table:comp}
\end{table}

\begin{figure}
 \centering
 \includegraphics[width=\linewidth]{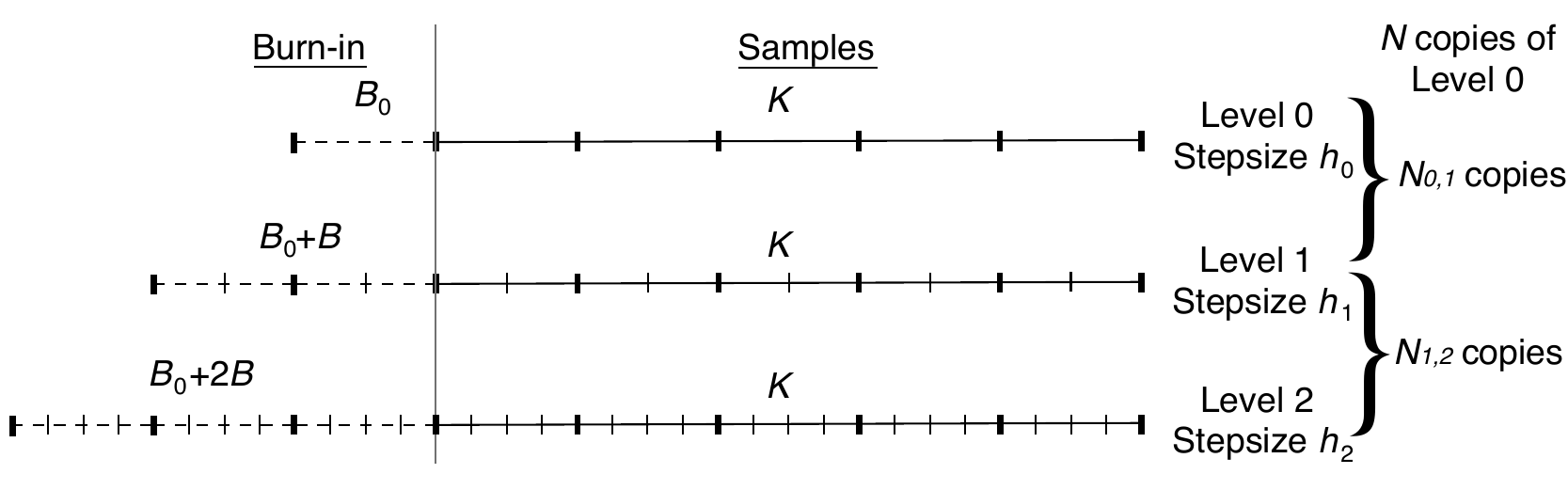}
 \caption{Elimination of bias by increasing burn-in lengths at higher discretization levels.}
 \label{fig:bias.elimination}
 \end{figure}


  \subsection{Organization}
  This article is organized as follows.
  In Section \ref{sec:back}, we provide the necessary background material related to this work, including a discussion of splitting methods for kinetic Langevin dynamics, in particular the $\UBU$ discretization, as well as the extension to stochastic gradients.
  Section \ref{sec:unbiased} is devoted to introducing  our unbiased algorithms. We first provide some simple conditions for creating unbiased estimators with finite variance based on telescopic sums, together with a central limit theorem for such estimators. We then present our method using exact gradients and discuss necessary assumptions for unbiasedness and finite variance including showing that the variance of the estimator is finite.
  {Numerical experiments are provided in Section \ref{sec:num} on  a range of high-dimensional problems, such as a Gaussian target, an MNIST classification problem and a Poisson regression model. Finally, we provide detailed proofs of all theorems and additional figures from numerical experiments in the Appendix.}

\section{Background \& preliminary material} 
\label{sec:back}

In this section, we provide the essential background material on kinetic Langevin dynamics {(see \eqref{eq:kinetic_langevin})} and a splitting-type scheme called $\UBU$. We then discuss the extension to stochastic gradients. 



\subsection{Splitting methods}\label{sec:splitting_methods}

{Discretization methods for \eqref{eq:kinetic_langevin}} with a high order of accuracy in both the weak and strong senses can be constructed by {\em splitting} \cite{SkIz2002,Me2007design,leimkuhler2013rational}, in which the SDE is broken into parts that can be either be solved analytically or which are easier to handle numerically. An accurate splitting method was introduced in \cite{AlamoSanz,BUBthesis} and was also studied in \cite{sanz2021wasserstein}. This splitting method only requires one gradient evaluation per iteration but has strong order two. The method is based on splitting the SDE \eqref{eq:kinetic_langevin} as follows 
\[
\begin{pmatrix}
dx \\
dv
\end{pmatrix} = \underbrace{\begin{pmatrix}
0 \\
-\nabla U(x)dt
\end{pmatrix}}_{\mathcal{B}} +\underbrace{\begin{pmatrix}
vdt \\
-\gamma v dt + \sqrt{2\gamma}dW_t
\end{pmatrix}}_{\mathcal{U}},
\]
which can be integrated exactly over a step of size $h$. Given $\gamma > 0$, let $\eta = \exp{\left(-\gamma h/2\right)}$, and for ease of notation, we define the following operators 
\begin{equation}\label{eq:Bdef}
\mathcal{B}(x,v,h) = (x,v - h\nabla U(x)),
\end{equation}
and
\begin{equation}\label{eq:Udef}
\begin{split}
\mathcal{U}(x,v,h/2,\xi^{(1)},\xi^{(2)}) &= \Big(x + \frac{1-\eta}{\gamma}v + \sqrt{\frac{2}{\gamma}}\left(\mathcal{Z}^{(1)}\left(h/2,\xi^{(1)}\right) - \mathcal{Z}^{(2)}\left(h/2,\xi^{(1)},\xi^{(2)}\right) \right),\\
& \eta v + \sqrt{2\gamma}\mathcal{Z}^{(2)}\left(h/2,\xi^{(1)},\xi^{(2)}\right)\Big),
\end{split}
\end{equation}
where 
\begin{equation}\label{eq:Z12def}
\begin{split}
\mathcal{Z}^{(1)}\left(h/2,\xi^{(1)}\right) &= \sqrt{\frac{h}{2}}\xi^{(1)},\\
\mathcal{Z}^{(2)}\left(h/2,\xi^{(1)},\xi^{(2)}\right) &= \sqrt{\frac{1-\eta^{2}}{2\gamma}}\Bigg(\sqrt{\frac{1-\eta}{1+\eta}\cdot \frac{4}{\gamma h}}\xi^{(1)} + \sqrt{1-\frac{1-\eta}{1+\eta}\cdot\frac{4}{\gamma h}}\xi^{(2)}\Bigg).
\end{split}
\end{equation}
The $\mathcal{B}$ operator indicated here is as given previously, whereas $\mathcal{U}$ as defined above is the exact solution in the weak sense of the remainder of the dynamics when  $\xi^{(1)}, \xi^{(2)} \sim \mathcal{N}\left(0,I_{d}\right)$ are independent random vectors. Different orders of composition of $\mathcal{B}$ and $\mathcal{U}$ can be taken to define different numerical integrators of kinetic Langevin dynamics, two such methods considered in \cite{AlamoSanz,BUBthesis} are BUB, a half step in $\mathcal{B}$, followed by a full step in $\mathcal{U}$ and a further half step in $\mathcal{B}$ and UBU, a half step in $\mathcal{U}$ followed by a full $\mathcal{B}$ step, followed by a half $\mathcal{U}$ step.

The Markov kernel for an $\UBU$ step with stepsize $h$ will be denoted by $P_{h}$, which can be described by \eqref{eq:PhUBU} as follows.
\begin{equation}\label{eq:PhUBU}
\begin{split}
\left(\xi^{(i)}_{k+1}\right)^{4}_{i = 1}, \quad &\xi^{(i)}_{k+1} \sim \mathcal{N}(0_{d},I_{d}) \text{ for all } i = 1,...,4.\\
\left(x_{k+1},v_{k+1}\right)&=\mathcal{UBU}\left(x_k,v_k,h,\xi^{(1)}_{k+1},\xi^{(2)}_{k+1},\xi^{(3)}_{k+1},\xi^{(4)}_{k+1}\right)
\\
&=\mathcal{U}\left(\mathcal{B}\left(\mathcal{U}\left(x_{k},v_{k},h/2,\xi^{(1)}_{k+1},\xi^{(2)}_{k+1}\right),h\right),h/2,\xi^{(3)}_{k+1},\xi^{(4)}_{k+1}\right).
\end{split}
\end{equation}
We have found that the strong second-order property and generally high accuracy of $\UBU$ makes it suitable for unbiased estimation, as described in Section \ref{sec:unbiased}.

The BAOAB method is an alternative splitting scheme that is known to be second-order weakly accurate and has small bias (see \cite{ BouRabeeOwhadi2010,leimkuhler2013rational,leimkuhler2013jcp,lemast2016}).
$\BAOAB$ is exact for Gaussian targets and has a robustness property for large values of the friction parameter $\gamma$ (see \cite{leimkuhler2023contractiona}), but its strong order is one. Theorem 3.3 of \cite{telatovich2017strong} claims that the stochastic velocity Verlet (SVV) method is, like $\UBU$, also strongly second-order accurate. Despite their strengths as raw sampling schemes, both BAOAB and SVV exhibited worse performance than UBU in our preliminary numerical experiments in the setting of unbiased estimation.  For this reason,  we focus on UBU in this paper. Nevertheless, it is important to note that the unbiased estimation approach of this paper is by no means limited to the UBU integrator, and its performance could be further improved by more accurate integrators developed in the future. {It can also easily be applied to many discretizations of other gradient-based stochastic processes used for sampling, for example, Hamiltonian Monte Carlo and overdamped Langevin dynamics.}

\subsection{Extension to stochastic gradients}

In this work, we also consider extending splitting methods with the use of stochastic gradients. 
We use the following definition from \cite{leimkuhler2023contractionb}.

\begin{definition} \label{def:stochastic_gradient}
A \textit{stochastic gradient approximation} of a potential $U$ is defined by a function $\mathcal{G}:\R^d \times \Omega \to \R^d$ and a probability distribution $\rho$ on a Polish space $\Omega$, such that for every $x\in \R^d$, $\mathcal{G}(x, \cdot)$ is measurable on $(\Omega,\mathcal{F})$, and for $\omega\sim \rho$,
\[\E(\mathcal{G}(x,\omega)) = \nabla U(x).\]
The function $\mathcal{G}$ and the distribution $\rho$  together define the stochastic gradient, which we denote as  $(\mathcal{G}, \rho)$.
\end{definition}

Replacing the exact gradients with such stochastic gradients in the $\mathcal{B}$ step yields
\begin{equation}\label{eq:BGdef}
\mathcal{B}_{\mathcal{G}}(x,v,h,\omega) = (x,v - h\mathcal{G}(x,\omega)),
\end{equation}
and we can use this inside $\BAOAB$ and $\UBU$ to obtain stochastic gradient variants.

\cite{leimkuhler2023contractionb} has proven convergence bounds for $\BAOAB$ with stochastic gradients in Wasserstein distance that are applicable to some widely used stochastic gradient schemes (random sampling with replacement, control variate gradient estimator).
\section{Unbiased multilevel Monte Carlo methods}
\label{sec:unbiased} 
In this section, we introduce and motivate our proposed algorithm, which we refer to as Unbiased $\UBU$ ($\UBUBU$). 
We first describe the basic unbiased Monte Carlo scheme and introduce some essential assumptions. We then give relevant results which help to motivate our estimator,  including a central limit theorem, a non-asymptotic bound on the variance with exact gradients, and other related results. Finally, we state our algorithm.

Suppose that for each $h\in (0, h_{\max}]$ (stepsize parameter), $Q_h$ is a Markov kernel on some Polish state space $\Lambda$ with stationary distribution $\mu_{h}$ such that $\mu_h$ converges to $\mu$ in distribution as $h\to 0$ (for example, these might be discretizations of a diffusion with different time stepsizes).
Assume that we are interested in computing the expectation $\mu(f)$ of a function $f$ satisfying ${\mu_h(f^2)<\infty}$ for every $h \in (0, h_{\max}]$ and $\mu(f^2)<\infty$. \cite{ruzayqat2022unbiased} suggested a multilevel estimation method  based on stepsizes 
$h_0\in (0, h_{\max}]\text{ and } h_l=h_0 \cdot 2^{-l} \text{ for }l=1,2,\ldots$,
using a telescopic sum of the form 
\[\mu(f)=\mu_{h_0}(f)+\sum_{j=1}^{\infty} (\mu_{h_j}(f)-\mu_{h_{j-1}}(f)).\]
Unbiased estimators of each term in the sum can be constructed via coupling.
A challenge with this approach is that obtaining an unbiased estimator for $\mu_{h_0}(f)$ already requires two chains to be coupled based on the approach proposed in the papers \citep{Chada2021unbiased,glynn2014exact,heng2019unbiased, jacob2020unbiased}. Estimating the expectations $\mu_{h_j}(f)-\mu_{h_{j-1}}(f)$ is even more challenging, requiring the coupling of four chains. The nature of the couplings means that it is not straightforward to use splitting methods such as $\UBU$ or $\BAOAB$ 
(as Markov kernels from different starting points need to be coupled closely in total variation distance, and this is difficult unless the distributions are Gaussian). 
To overcome such issues, we propose a different telescoping sum for estimating $\mu(f)$,
\begin{equation}\label{eq:telescopicsum2}\mu(f)=\tilde{\mu}_{h_0}(f)+\sum_{l=0}^{\infty} (\tilde{\mu}_{h_{l+1}}(f)-\tilde{\mu}_{h_{l}}(f)).\end{equation}
Here $\tilde{\mu}_{h_{l}}$ are created using some empirical averages, which will be defined in the rest of this section for exact, stochastic, and approximate gradients.

Suppose that $D_0$ is a random variable satisfying that $\E(D_0)=\tilde{\mu}_0(f)$. 
Let $\{D_0^{(r)}\}_{r=1}^{N}$ be $N$ i.i.d. copies of $D_0$, and we define
$S_0=\frac{1}{N}\sum_{r=1}^{N} D_0^{(r)}.$
Then it is clear that $\E(S_0)=\E(D_0)=\tilde{\mu}_{h_0}(f)$. Let $D_{l,l+1}$ be a random variable such that
\[\E D_{l,l+1}=\tilde{\mu}_{h_{l+1}}(f)-\tilde{\mu}_{h_{l}}(f).\]
Let $c_{0,1}\ge c_{1,2}\ge c_{2,3}\ge  \ldots$ be positive constants such that $c_{l,l+1}\to 0$ as $l\to \infty$, and let
\begin{equation}\label{eq:Nldef}
\begin{split}
L(N)&=\min\left\{l\in \mathbb{N}: c_{l,l+1} N \le 1\right\},\\
N_{l,l+1}&=\lceil c_{l,l+1} N\rceil\text{ for }l\le L(N),\\
N_{l,l+1}&\sim \text{Bernoulli}(c_{l,l+1} N) \text{ for } l>L(N).
\end{split}
\end{equation}
For each $l\ge 1$, let $\{D_{l,l+1}^{(r)}\}_{r=1}^{N_{l,l+1}}$ be $N_{l,l+1}$ i.i.d. copies of $D_{l,l+1}$, and
\begin{equation}\label{eqref:Sllp1def}S_{l,l+1}=\frac{1}{\E (N_{l,l+1})}\sum_{r=1}^{N_{l,l+1}} D_{l,l+1}^{(r)}=\left\{\begin{matrix}\frac{1}{N_{l,l+1}}\sum_{r=1}^{N_{l,l+1}} D_{l,l+1}^{(r)} \text{ for }0\le l\le L(N), \\ \frac{\Ind[N_{l,l+1}=1]}{\E(N_{l,l+1})}D_{l,l+1}^{(1)} \text{ for }l>L(N).\end{matrix}\right.
\end{equation}
It is clear from the definitions and Wald's equation that $\E S_{l,l+1}=\E D_{l,l+1}=\tilde{\mu}_{h_{l+1}}(f)-\tilde{\mu}_{h_{l}}(f)$.
Using the definition of $L(N)$ and $N_{l,l+1}$, we have $N_{L(N),L(N)+1}=1$, and hence
\begin{equation}\label{eq:SLNLNp1}
S_{L(N),L(N)+1}=D_{L(N),L(N)+1}^{(1)}.
\end{equation}

Our first estimator is defined as
\begin{equation}\label{eq:Sdef}
S = S_0 + \sum^{\infty}_{l=0}S_{l,l+1},
\end{equation}
where the terms $S_0, S_{0,1}, S_{1,2},\ldots$ are independent.

The random $D_{l,l+1}$ variable will play a key role in our approach, as it is going to link two different discretization levels with stepsizes $h_l$ and $h_{l+1}$. $\Var(S)$ depends on $\Var(D_{l,l+1})$, which is determined by how closely we couple the two discretizations. This is closely related to the strong order of the discretizations, determining how close they are to the underlying diffusion.
It is possible to improve estimator \eqref{eq:Sdef} slightly by the use of Richardson extrapolation \cite{richardson1911approximate} {(see \cite{MBG15} within the context of Multilevel Monte Carlo)}. 
The idea is that 
when $h$ is sufficiently small, for $Q_h$ defined in terms of an SDE discretization, the differences $\mu_{h}(f)-\mu(f)$ tend to follow a certain asymptotic behaviour in $h$, which can be characterized by an asymptotic expansion \cite{leimkuhler2013rational,lemast2016}. For symmetric splittings like $\BAOAB$ it is known that $\mu_{h}(f)-\mu(f)=c_{f,\mu} h^2 (1+\O(h))$ for some constant $c_{f,\mu}$ depending on $f$ and $\mu$.  The same property can be established for $\UBU$, using similar arguments.
Based on this observation taking into account that such behaviour may only be valid at small stepsizes, and using \eqref{eq:SLNLNp1}, our refined estimator is defined as
\begin{align}\label{eq:SRichardsondef}
S(c_R) &= S_0 + \sum^{L(N)-1}_{l=0}S_{l,l+1} + \frac{D_{L(N),L(N)+1}^{(1)}}{1-c_R}+\sum_{l=L(N)+1}^{\infty} \overline{S}_{l,l+1},\\
\nonumber\overline{S}_{l,l+1}&=\frac{\Ind[N_{l,l+1}=1]}{\E (N_{l,l+1})} \left[D_{l,l+1}^{(1)}-D_{L(N),L(N)+1}^{(1)}\cdot c_R^{l-L(N)}\right],
\end{align}
where $c_{R}\in [0, \phi_N^{-1/2})$ can be any number (we state the recommended choice of this in our algorithms). Our first estimator $S$ is a special case since $S(0)=S$.

The key assumptions we make on the variances are as follows:
\begin{assumption}\label{ass:var}
$f:\Lambda\to \R$ is a measurable function. $(\tilde{\mu}_{h_l})_{l\ge 0}$ is a sequence of distributions satisfying that $\tilde{\mu}_{h_l}(f)\to \mu(f)$ as $l\to \infty$. The random variable $D_0$ satisfies that $\E(D_0)={\tilde{\mu}}_{h_0}(f)$, $\Var(D_0)<\infty$, for every $l\geq 0$, the random variable $D_{l,l+1}$ satisfies that $\E(D_{l,l+1})=\tilde{\mu}_{l+1}(f)-\tilde{\mu}_{l+1}(f)$ and $\E(D_{l,l+1}^2)\le V_{D}\phi_D^{-l}$ for some finite constants $V_D>0$, $\phi_D>2$. 
\end{assumption}
\begin{assumption}\label{ass:numbsamp}
The constants $c_{l,l+1}$ controlling $N_{l,l+1}$ satisfy 
\[\underline{c}_N \phi_N^{-l}\le c_{l,l+1}\le  \ol{c}_N \phi_N^{-l},\]
for some finite constants $0<\underline{c}_N\le \ol{c}_N$, $\phi_N>2$.
\end{assumption}
\begin{assumption}\label{ass:comp}
The computational cost of generating a sample from $D_{l,l+1}$ is $\O(2^l (K+lB+B_0))$ for some finite constants $B$, $B_0$, and generating a sample from $D_0$ has a finite computational cost.
\end{assumption}
\begin{assumption}\label{ass:independence}
For $1\le l \le L(N)-1, 1\le r\le N_{l,l+1}$, the random variables $D_{l,l+1}^{(r)}$ are all independent from each other, and they are also independent from the collection of random variables $\{D_{l,l+1}^{(1)}\}_{l\ge L(N)}$. 
\end{assumption}
\begin{remark}
We do not assume that the random variables in the set $\left\{D_{l,l+1}^{(1)}\right\}_{l\ge L(N)}$ are independent. Some variance reduction may be achieved by coupling them in \eqref{eq:SRichardsondef}.
\end{remark}

\begin{restatable}{proposition}{propunb}
\label{prop:unb}
Suppose that Assumptions \ref{ass:var}, \ref{ass:numbsamp}, \ref{ass:comp} and \ref{ass:independence} hold, and that $2<\phi_N<\phi_D$. Then $S$ as defined in \eqref{eq:Sdef} is an unbiased estimator of $\mu(f)$ that has finite variance \[\Var(S)\le 
 \frac{\Var(D_0)}{N}+\frac{V_D}{N\underline{c}_N\left(1-\left(\frac{\phi_N}{\phi_D}\right)^{1/2}\right)^2},\] and finite expected computational cost.  

Similarly, for any $0\le c_R< \frac{1}{\phi_N^{1/2}}$, $S(c_R)$ as defined in \eqref{eq:SRichardsondef} is also an unbiased estimator of $\mu(f)$ with finite variance  
\[\Var(S(c_R))\le \frac{\Var(D_0)}{N}+
\frac{2V_D}{N\underline{c}_N\left(1-\left(\frac{\phi_N}{\phi_D}\right)^{1/2}\right)^2}+
\frac{1}{N^2}\cdot \frac{2V_D \ol{c}_N \phi_N^2 c_R^2}{\underline{c}_N^2(1-\phi_N c_R^2)},\]
and finite expected computational cost.
\end{restatable}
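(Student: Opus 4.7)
The plan is to establish unbiasedness via a telescoping argument, obtain the variance bounds by decomposition combined with either Minkowski's inequality or the identity $\Var(A+B)\le 2\Var(A)+2\Var(B)$, and to handle the finite-cost claim by a geometric-series argument based on Assumption \ref{ass:comp}. For unbiasedness of $S$, I would first check absolute convergence: Assumptions \ref{ass:var}, \ref{ass:numbsamp} together with independence of $\Ind[N_{l,l+1}=1]$ from $D_{l,l+1}^{(1)}$ yield $\E S_{l,l+1}^2 \le V_D (\phi_N/\phi_D)^l/(\ul{c}_N N)$, so $|\E S_{l,l+1}|$ is summable since $\phi_N<\phi_D$. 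Fubini then permits exchanging $\E$ with $\sum$, and the telescoping gives $\E S = \lim_l \tilde\mu_{h_l}(f) = \mu(f)$ by Assumption \ref{ass:var}. For $S(c_R)$, the identity $\frac{1}{1-c_R} = 1 + \frac{c_R}{1-c_R}$ lets the extra $D_{L(N),L(N)+1}^{(1)}/(1-c_R)$ cancel in expectation with the contributions $D_{L(N),L(N)+1}^{(1)} c_R^{l-L(N)}$ inside $\overline{S}_{l,l+1}$ (using $\E(\Ind[N_{l,l+1}=1]/\E N_{l,l+1}) = 1$), restoring the same telescoping sum.

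For the variance of $S$, independence of $S_0$ from the tail (Assumption \ref{ass:independence}) reduces the problem to bounding $\Var(\sum_{l\ge 0}S_{l,l+1})$. Since $\{D_{l,l+1}^{(1)}\}_{l\ge L(N)}$ may be dependent, I would apply Minkowski's inequality in $L^2$:
\[\Big\|\sum_{l\ge 0}\bigl(S_{l,l+1}-\E S_{l,l+1}\bigr)\Big\|_{L^2} \le \sum_{l\ge 0}\sqrt{\E S_{l,l+1}^2} \le \sqrt{\frac{V_D}{\ul{c}_N N}}\cdot\frac{1}{1-(\phi_N/\phi_D)^{1/2}},\]
which, after squaring and adding $\Var(S_0) = \Var(D_0)/N$, yields the first variance bound.

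The main obstacle is the Richardson case, where the shared variable $D_{L(N),L(N)+1}^{(1)}$ couples the explicit $D/(1-c_R)$ term to every $\overline{S}_{l,l+1}$ with $l>L(N)$. The key algebraic step is the rewriting
\[S(c_R) = S_0 + \sum_{l\ge 0} S_{l,l+1} + D_{L(N),L(N)+1}^{(1)}\, Y_2, \quad Y_2 := -\sum_{l>L(N)}\Big(\tfrac{\Ind[N_{l,l+1}=1]}{\E N_{l,l+1}} - 1\Big)c_R^{l-L(N)},\]
where $Y_2$ is zero-mean with independent summands (the Bernoullis at different levels $l>L(N)$ are independent of one another and of $D_{L(N),L(N)+1}^{(1)}$). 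Applying $\Var(A+B)\le 2\Var(A)+2\Var(B)$ with $A := S_0 + \sum_{l\ge 0}S_{l,l+1}$ and $B := D_{L(N),L(N)+1}^{(1)} Y_2$: $\Var(A)$ gives twice the second paragraph's bound (accounting for the factor $2V_D$ in the middle term), and $\Var(B) = \E\bigl((D_{L(N),L(N)+1}^{(1)})^2\bigr)\Var(Y_2)$ by independence. Next, $\Var(Y_2) = \sum_{l>L(N)} c_R^{2(l-L(N))}(1-p_l)/p_l$ with $p_l = c_{l,l+1}N$; bounding $1/p_l \le \phi_N^l/(\ul{c}_N N)$ from Assumption \ref{ass:numbsamp}, summing the geometric series (which converges precisely when $c_R^2\phi_N<1$, i.e.\ $c_R<\phi_N^{-1/2}$), and exploiting the two-sided envelope $\ul{c}_N N \le \phi_N^{L(N)} \le \phi_N \ol{c}_N N$ implied by the definition of $L(N)$, together with $\phi_D > \phi_N$, to clear the $\phi_N^{L(N)}$ and $\phi_D^{-L(N)}$ factors, produces the stated third term.

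Finally, finite expected computational cost in both cases follows directly from Assumption \ref{ass:comp}: the per-sample cost at level $l$ is $\O(2^l(K+lB+B_0))$ and the expected number of samples per level is $\E N_{l,l+1} \le \ol{c}_N \phi_N^{-l} N$, so the total expected cost is bounded by $\ol{c}_N N \sum_{l\ge 0} (2/\phi_N)^l(K+lB+B_0)$ plus the $\O(N)$ cost at level $0$, which is finite since $\phi_N>2$.
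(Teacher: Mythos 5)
Your algebraic rewriting $S(c_R) = S + D^{(1)}_{L(N),L(N)+1}Y_2$ with $Y_2$ a mean-zero sum of independent, rescaled Bernoullis is precisely the decomposition the paper uses, and your handling of $\Var(Y_2)$ and the factorisation $\Var(B)=\E[(D^{(1)}_{L(N),L(N)+1})^2]\Var(Y_2)$ are sound. However, applying $\Var(A+B)\le 2\Var A+2\Var B$ with $A=S$ doubles the $\Var(D_0)/N$ term, producing $2\Var(D_0)/N$ where the proposition states $\Var(D_0)/N$. The paper avoids this overcount by first invoking Assumption \ref{ass:independence} to split $\Var(S(c_R))=\Var(S_0)+\sum_{l<L(N)}\Var(S_{l,l+1})+\Var(\text{Richardson tail})$ exactly, and only then applying the crude $2+2$ inequality to the tail. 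Re-organise your argument to separate $S_0$ (and, if you like, the independent $l<L(N)$ block) before introducing the factor-two loss, and the stated coefficient is recovered.

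Two smaller points. In the Minkowski step for $\Var(S)$, what Minkowski actually delivers is $\sum_l\sqrt{\Var(S_{l,l+1})}$, not $\sum_l\sqrt{\E S_{l,l+1}^2}$: the latter does not satisfy the geometric bound you state, since for $l<L(N)$ it carries the constant $(\E D_{l,l+1})^2$ undivided by $N_{l,l+1}$. Replace it by $\Var(S_{l,l+1})\le\E(D_{l,l+1}^2)/\E(N_{l,l+1})$, which holds at every $l$, and your Minkowski route becomes a valid and slightly cleaner alternative to the paper's split of the variance into an independent $l<L(N)$ block plus a Cauchy--Schwarz bound on the tail. Finally, the two-sided envelope implied by Assumption \ref{ass:numbsamp} and the definition of $L(N)$ is $\ul{c}_N N\le\phi_N^{L(N)}<\ol{c}_N\phi_N N$; pushing this through your computation of $2\Var(B)$ gives a term of order $1/N$, namely $\frac{2V_D\ol{c}_N\phi_N^2 c_R^2}{\ul{c}_N^2 N(1-\phi_N c_R^2)}$, not the $1/N^2$ appearing in the proposition. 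The paper reaches $1/N^2$ via the inequality $\phi_N^{L(N)}\le\ol{c}_N\phi_N$, which drops a factor of $N$; your argument would not, and should not, reproduce that.
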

\begin{proof}
See Section C of the Appendix.
\end{proof}

We show below that a Central Limit Theorem (CLT) holds for these estimators.
\begin{restatable}{theorem}{thmCLT}
\label{thm:CLT}
Under the assumptions of Proposition \ref{prop:unb}, we have that, as $N\to \infty$,
\[\sqrt{N} (S-\mu(f)) \Rightarrow \mathcal{N}(0,\sigma^2_S) \quad \text{ and }\quad \sqrt{N} (S(c_R)-\mu(f)) \Rightarrow\mathcal{N}(0,\sigma^2_S),\]
where 
\begin{equation}\label{eq:sigma2S}\sigma^2_S:=\Var(D_0)+\sum_{l=0}^{\infty} \frac{\Var(D_{l,l+1})}{c_{l,l+1}}.\end{equation}
\end{restatable}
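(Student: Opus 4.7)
The plan is to prove weak convergence by the standard truncation argument for sums of independent random variables whose tail variance is uniformly small in $N$.

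First, starting from the decomposition \eqref{eq:Sdef} and the telescoping identity $\mu(f) = \tilde{\mu}_{h_0}(f) + \sum_{l=0}^{\infty}(\tilde{\mu}_{h_{l+1}}(f)-\tilde{\mu}_{h_l}(f))$ (which converges absolutely under Assumption \ref{ass:var} since $|\E D_{l,l+1}|\le V_D^{1/2}\phi_D^{-l/2}$), I would write
\[
\sqrt{N}\bigl(S-\mu(f)\bigr) = \sqrt{N}\bigl(S_0-\tilde{\mu}_{h_0}(f)\bigr) + \sum_{l=0}^{\infty}\sqrt{N}\bigl(S_{l,l+1}-\E S_{l,l+1}\bigr),
\]
with all summands independent by Assumption \ref{ass:independence}. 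The classical CLT gives $\sqrt{N}(S_0-\tilde{\mu}_{h_0}(f))\Rightarrow\mathcal{N}(0,\Var(D_0))$. For each fixed $l$, since Assumption \ref{ass:numbsamp} forces $L(N)\to\infty$, eventually $l\le L(N)-1$ so $N_{l,l+1}=\lceil c_{l,l+1}N\rceil$ is deterministic with $N_{l,l+1}/N\to c_{l,l+1}$, and applying CLT to $\sqrt{N_{l,l+1}}(S_{l,l+1}-\E D_{l,l+1})$ and Slutsky yields $\sqrt{N}(S_{l,l+1}-\E S_{l,l+1})\Rightarrow\mathcal{N}(0,\Var(D_{l,l+1})/c_{l,l+1})$.

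Next, for any fixed truncation level $L$, the independence of the first $L+1$ contributions lifts the marginal limits to a joint convergence, so
\[
\sqrt{N}(S_0-\tilde{\mu}_{h_0}(f))+\sum_{l=0}^{L-1}\sqrt{N}(S_{l,l+1}-\E S_{l,l+1}) \Rightarrow \mathcal{N}(0,\sigma_L^2), \qquad \sigma_L^2\uparrow \sigma_S^2.
\]
For the tail $R_N^{(L)}:=\sum_{l\ge L}\sqrt{N}(S_{l,l+1}-\E S_{l,l+1})$, a direct calculation in both sample-size regimes (the $l\le L(N)$ case is standard averaging; for $l>L(N)$, the Bernoulli scaling $\E\Ind[N_{l,l+1}=1]=c_{l,l+1}N$ gives $\Var(S_{l,l+1})\le \E(D_{l,l+1}^2)/(c_{l,l+1}N)$) yields the uniform bound $N\Var(S_{l,l+1})\le \E(D_{l,l+1}^2)/c_{l,l+1}$. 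Combined with Assumptions \ref{ass:var}, \ref{ass:numbsamp} and $\phi_N<\phi_D$,
\[
\E\bigl(R_N^{(L)}\bigr)^2 \le \frac{V_D}{\ul{c}_N}\sum_{l\ge L}(\phi_N/\phi_D)^l \xrightarrow[L\to\infty]{} 0,
\]
uniformly in $N$. Weak convergence then follows by the standard approximation lemma (e.g.\ Billingsley, Theorem 3.2).

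For the Richardson-extrapolated estimator $S(c_R)$, I would apply the same decomposition, isolating the two novel contributions: (i) the term $\sqrt{N}\,D_{L(N),L(N)+1}^{(1)}/(1-c_R)$, and (ii) the tail $\sqrt{N}\sum_{l>L(N)}\ol{S}_{l,l+1}$. Since $L(N)\ge \log(N\ul{c}_N)/\log\phi_N$ and $\phi_D>\phi_N$, one obtains $N\cdot\phi_D^{-L(N)}\le (N\ul{c}_N)^{1-\log\phi_D/\log\phi_N}\to 0$, which, together with Cauchy--Schwarz applied to the mixed terms in (ii), shows that the entire Richardson modification contributes $o_{L^2}(1)$ and hence does not alter the Gaussian limit, as per Slutsky.

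The main obstacle is managing the two different sample-size regimes (deterministic $\lceil c_{l,l+1}N\rceil$ for $l\le L(N)$ and Bernoulli for $l>L(N)$) under a single uniform-in-$N$ variance bound while $L(N)$ itself depends on $N$. The essential observation that resolves this is that the per-level bound $N\Var(S_{l,l+1})\le V_D\ul{c}_N^{-1}(\phi_N/\phi_D)^l$ is geometric and independent of $N$, so truncation in $l$ decouples from the asymptotics in $N$.
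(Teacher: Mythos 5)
Your approach is essentially the same as the paper's: truncate the infinite sum at a fixed level $L$, show the truncated sum converges to a Gaussian limit via the classical CLT and independence, and show that the tail has second moment going to zero uniformly in $N$ as $L \to \infty$, then close via a standard approximation result. The paper carries this out at the level of characteristic functions (using Durrett's Lemma 3.3.19 and the L\'evy--Cram\'er continuity theorem) while you use Billingsley's Theorem 3.2 directly; these are interchangeable technical routes, and the key estimate --- the $N$-uniform geometric tail bound --- is the same.

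There is one genuine gap in your tail estimate. You pass from the per-level bound $N\Var(S_{l,l+1}) \le \E(D_{l,l+1}^2)/c_{l,l+1}$ to
\[
\E\bigl(R_N^{(L)}\bigr)^2 \le \frac{V_D}{\ul{c}_N}\sum_{l\ge L}\left(\frac{\phi_N}{\phi_D}\right)^{l}
\]
as if the $S_{l,l+1}$ for all $l\ge L$ were independent, so that the variance of the tail equals the sum of the per-level variances. But Assumption~\ref{ass:independence} only guarantees independence for levels $l < L(N)$; the remark immediately afterwards says the collection $\{D_{l,l+1}^{(1)}\}_{l\ge L(N)}$ is \emph{not} assumed independent (indeed in the algorithm these are built from the single coupling $\nu_{L(N):l_{\max}}$). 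So the cross-covariances among the terms $l\ge L(N)$ cannot be dropped. The fix is what the paper does in the proof of Proposition~\ref{prop:unb}: split $R_N^{(L)}$ into the independent block $L\le l < L(N)$ (sum the variances there) and the possibly dependent block $l\ge L(N)$, where you apply Cauchy--Schwarz to bound the variance of the sum by $\bigl(\sum_{l\ge L(N)}\sqrt{\Var(S_{l,l+1})}\bigr)^2$. This produces a bound of the form $\frac{V_D}{\ul{c}_N}\bigl(1-\sqrt{\phi_N/\phi_D}\bigr)^{-2}(\phi_N/\phi_D)^{L(N)} \le C(\phi_N/\phi_D)^{L}$ for $N$ large (since $L(N)\ge L$), so your uniform-in-$N$ conclusion survives with a different constant, but the intermediate inequality as you wrote it does not follow from the stated assumptions. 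The Richardson part is fine modulo this: after noting $S(c_R)-S$ is a single random prefactor times $D^{(1)}_{L(N),L(N)+1}$, your $N\phi_D^{-L(N)}\to 0$ computation (off by a harmless constant factor of $\ul{c}_N$) and Slutsky give the same limit.
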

\begin{proof}
See Section C of the Appendix.
\end{proof}

\subsection{$\UBUBU$ with exact gradients}\label{sec:UBUBU_exact_gradient}

Now, we will specify the way $D_0$ and $D_{l,l+1}$ are defined based on $\UBU$ discretization of \eqref{eq:kinetic_langevin} with exact gradients, as defined in \eqref{eq:PhUBU}. Let $\mu_0$ be an initial distribution on $\Lambda$ that we can readily sample from, for example, a Dirac-$\delta$ at the maximum-a-posteriori (MAP) estimator. Let 
$R_0=P_{h_0}\text{ and } R_l=P_{h_l}^{2^{l}} \text{ for }l=1,2,\ldots$. 

These Markov kernels correspond to the same amount of time $h_0$ in the timescale of the limiting diffusion (and clearly, $R_l$ still has $\mu_{h_l}$ as its stationary distribution). 
Consider $B_0$ burn-in steps with kernel $R_0$ at level $0$, and $B_l=B_0+l B$ steps with kernel $R_l$ at level $l$. 
Define the approximate versions of $\mu_{h_l}$ as
\begin{equation}\label{eq:tildemu.hl}
\tilde{\mu}_{h_l}=\frac{1}{K} \sum_{i=1}^{K} \mu_{0} R_l^{B_l+i}.
\end{equation}
Estimates with respect to this can be computed by taking $B_l$ burn-in steps according to $R_l$ (equivalently $2^l B_l$ burn-in steps according to $P_{h_l}$), and then $K$ additional steps that are used for computing an empirical average. In this way, we can compute expectations with respect to $\tilde{\mu}_{h_l}$ without the use of couplings. Moreover, given that at the diffusion time scale, the burn-in time tends to infinity as $l$ grows, it is reasonable to expect that under suitable assumptions, $\tilde{\mu}_{h_l}$ converges to $\mu$ as $l\to \infty$. 

Let $D_0$ be the empirical average of a function $f$ based on $K$ samples a from Markov chain with kernel $R_0$, with burn-in $B_0$, initiated from $\mu_0$, i.e. for the Markov chain
$z_{-B_0}^{(0)}\sim \mu_0$, $z_{-B_0+1}^{(0)}\sim R_0(z_{-B_0}^{(0)},\cdot), \ldots, z_K^{(0)}\sim R_0(z_{K-1}^{(0)},\cdot)$. Let $\nu_{0}$ denote the joint distribution of $z_{-B_0}^{(0)},\ldots, z_{K}^{(0)}$, and define
\begin{equation}\label{eq:D0def}
    D_0= \frac{1}{K}\sum_{i=1}^{K} f(z_{i}^{(0)}).
\end{equation}
Let $\{D_0^{(r)}\}_{r=1}^{N}$ be $N$ i.i.d. copies of $D_0$, and  define
\begin{equation}\label{eq:S0def}
    S_0=\frac{1}{N}\sum_{r=1}^{N} D_0^{(r)}.
\end{equation}
Then it is clear that $\E(S_0)=\E(D_0)=\tilde{\mu}_{h_0}(f)$.
For $l\ge 0$, let $z_{-B_l}^{(l,l+1)},\ldots, z_{K}^{(l,l+1)}, {z'}_{-B_{l+1}}^{(l,l+1)}$,$\ldots$, ${z'}_{K}^{(l,l+1)}$ be $\Lambda$ valued random variables defined on the same probability space (i.e. coupled) such that
\begin{itemize}
\item $z_{-B_l}^{(l,l+1)},\ldots, z_{K}^{(l,l+1)}$ is a Markov chain with kernel $R_l$ initiated as $z_{-B_l}^{(l,l+1)}\sim \mu_0$, and 
\item ${z'}_{-B_{l+1}}^{(l,l+1)},\ldots, {z'}_{K}^{(l,l+1)}$ is a Markov chain with kernel $R_{l+1}$ initiated ${z'}_{-B_{l+1}}^{(l,l+1)}\sim \mu_0$.
\end{itemize}
Let
\begin{equation}\label{eq:Ddef}
    D_{l,l+1}= \frac{1}{K}\sum_{i=1}^{K} [f({z'}_{i}^{(l,l+1)})-f(z_{i}^{(l,l+1)})].
\end{equation}
From the definitions, it follows that 
\[\E D_{l,l+1}=\tilde{\mu}_{h_{l+1}}(f)-\tilde{\mu}_{h_{l}}(f),\]
hence $D_{l,l+1}$ is an unbiased estimator of the difference $\tilde{\mu}_{h_{l+1}}(f)-\tilde{\mu}_{h_{l}}(f)$.

When these Markov chains are discretizations of the same diffusion, it is natural to create synchronous couplings by using the same Brownian noise to generate the Gaussian random variables used during the periods $z_{-B_l},\ldots, z_{K}$ and $z'_{-B_l},\ldots, z'_{K}$. Such couplings can significantly reduce the variance of $D_{l,l+1}$. 
Let $\mathcal{B}$ and $\mathcal{U}$ be as in (\ref{eq:Bdef}-\ref{eq:Udef}). Further we define $\mathcal{U}^{2}$ to be
\begin{equation}\label{eq:U2def}
\mathcal{U}^{2}(x,v,h,\xi^{(1)},\xi^{(2)},\xi^{(3)},\xi^{(4)}) = \mathcal{U}\left(\mathcal{U}\left(x,v,h/2,\xi^{(1)},\xi^{(2)}\right),h/2,\xi^{(3)},\xi^{(4)}\right).
\end{equation}
As $\mathcal{U}$ is an exact solution in the weak sense to its respective component in the splitting, this is an exact solution in the weak sense which uses Brownian increments $\left(\xi^{(1)},\xi^{(2)}\right)$ in the first half step $h/2$ and $\left(\xi^{(3)},\xi^{(4)}\right)$ in the second half step $h/2$. The $\mathcal{U}^{2}$ operator is an exact solution over stepsize $h$.

A coupling can be constructed between discretization levels so that the two discretization levels share Brownian motion in the exact integration of the $\mathcal{U}$ steps. This is done by using the Brownian increments from two respective $\mathcal{U}$ solutions at the higher level and concatenating them using the $\mathcal{U}^{2}$ operator at the lower level. Next, the stochastic integrals in the two levels are coupled by sharing the same Brownian noise. The Markov kernel $P_{h,h/2}$ for the two discretization levels $h,h/2$ is defined as follows. {First let $\left(\xi^{(i)}_{k+1}\right)^{8}_{i = 1} \sim \mathcal{N}(0_{d},I_{d}) \text{ for all } i = 1,...,8$, then\\
\begin{equation}\label{eq:Phh2UBU}
\begin{split}
&\left(x'_{k+1/2},v'_{k+1/2}\right) =\mathcal{U}\left(\mathcal{B}\left(\mathcal{U}\left(x'_{k},v'_{k},h/4,\xi_{k+1}^{(1)},\xi_{k+1}^{(2)}\right),h/2\right), h/4,\xi_{k+1}^{(3)},\xi_{k+1}^{(4)}\right)\\
&\left(x'_{k+1},v'_{k+1}\right) =\mathcal{U}\left(\mathcal{B}\left(\mathcal{U}\left(x'_{k+1/2},v'_{k+1/2},h/4,\xi_{k+1}^{(5)},\xi_{k+1}^{(6)}\right),h/2\right),h/4,\xi_{k+1}^{(7)},\xi_{k+1}^{(8)}\right)\\
&\left(x_{k+1},v_{k+1}\right) =\\
&\mathcal{U}^{2}\left(\mathcal{B}\left(\mathcal{U}^{2}\left(x_{k},v_{k},h/2,\xi_{k+1}^{(1)},\xi_{k+1}^{(2)},\xi_{k+1}^{(3)},\xi_{k+1}^{(4)}\right),h\right),h/2,\xi_{k+1}^{(5)},\xi_{k+1}^{(6)},\xi_{k+1}^{(7)},\xi_{k+1}^{(8)}\right).
\end{split}
\end{equation}}
This Markov chain acts on the state space $\R^d\times \R^d \times \R^d \times \R^d$, moving from $(x_k,v_k,x_k',v_k')$ to $(x_{k+1}, v_{k+1}, x_{k+1}', v_{k+1}')$ via the steps in \eqref{eq:Phh2UBU}. When looking at the individual components, $(x_k,v_k) \to (x_{k+1}, v_{k+1})$ corresponds to one $\UBU$ step at stepsize $h$, while  $(x_k',v_k') \to (x_{k+1}', v_{k+1}')$ corresponds to two $\UBU$ steps at stepsize $h/2$. A key property here is that the stochastic integrals between two steps are synchronously coupled, which ensures that these two chains approximate the same underlying diffusion (in the strong sense). Hence, they are expected to remain close, which was observed in our numerical simulations.

We now create a coupling between levels $l$ and $l+1$, denoted by $\nu_{l,l+1}$.
\begin{algorithm}[H]
\renewcommand{\thealgorithm}{}
\floatname{algorithm}{}
\addtocounter{algorithm}{-1}
     \footnotesize 
     \begin{algorithmic}[1]
     \State For given initial distribution $\mu_0$ on $\Lambda$, we define $z_{-B_{l}}^{(l,l+1)}\sim \mu_0$ and $z'^{(l,l+1)}_{-B_{l+1}}\sim \mu_0$ as independent random variables.
     \State We let $z'^{(l,l+1)}_{-B_{l+1}},\ldots, z'^{(l,l+1)}_{-B_{l}}$ be a Markov chain evolving according to $R_{l+1}=(P_{h_{l+1}})^{2^{l+1}}$.
     \State Let $(z_{-B_{l}}^{(l,l+1)},z'^{(l,l+1)}_{-B_{l}}), (z_{-B_{l}+1}^{(l,l+1)},z'^{(l,l+1)}_{-B_{l}+1}),\ldots,  (z_{K}^{(l,l+1)},z'^{(l,l+1)}_{K})$ be a Markov chain evolving according to $R_{l, l+1}=(P_{h_l,h_{l+1}})^{2^{l}}$.
    \State Let $\nu_{l,l+1}$ denote the joint distribution of $z_{-B_l}^{(l,l+1)},\ldots, z_{K}^{(l,l+1)}, z'^{(l,l+1)}_{-B_{l+1}},\ldots, z'^{(l,l+1)}_{K}$. 
\end{algorithmic}

\caption{$\nu_{l,l+1}$ coupling}
\label{alg:nullp1}
  
\end{algorithm}

The motivation for this $\nu_{l,l+1}$ coupling is that if two coupled chains are driven by the same noise and approximate the same diffusion, they are expected to be close most of the time. Given a sufficiently long burn-in, they will likely stay close during the iterations $1,2,\ldots, K$ used for computing the differences in their empirical averages, reducing the variance of $D_{l,l+1}$. Let $c_N>0$ and $\phi_N>2$ be constants, and let 
\begin{equation}\label{eq:cllp1exact}
c_{l,l+1}=c_N\phi_N^{-l}\text{ for }l\in \mathbb{N}.
\end{equation}
Let $L(N)$ and $N_{l,l+1}$ be defined according to \eqref{eq:Nldef}, and set $l_{\max}=\max\{l: N_{l,l+1}>0\}$. Then for $l\in\{L(N),\ldots, l_{\max}\}$, we have $N_{l,l+1}\le 1$. The $D_{l,l+1}^{(1)}$ random variables at these levels will not be independent, but we define them instead based on a sequence of random variables $\{z_{-B_l}^{(l)},\ldots, z_K^{(l)}\}_{L(N)\le l\le l_{\max}+1}$, such that $\{z_{-B_l}^{(l)},\ldots, z_K^{(l)}, z_{-B_{l+1}}^{(l+1)},\ldots, z_K^{(l+1)}\}$ is distributed as $\nu_{l,l+1}$ for every $L(N)\le l\le l_{\max}$. This is possible to implement by synchronously coupling all steps to be driven by the same Brownian motion. 

In our coupled Markov chain $P_{h,h/2}$ in \eqref{eq:Phh2UBU}, we have used the double $U$ step
$\mathcal{U}^{2}(x,v,h,\xi^{(1)},\xi^{(2)},\xi^{(3)},\xi^{(4)}) = \mathcal{U}\left(\mathcal{U}\left(x,v,h/2,\xi^{(1)},\xi^{(2)}\right),h/2,\xi^{(3)},\xi^{(4)}\right)$.
Since the $U$ step is exact, this can be equivalently written as another $U$ step 
\begin{equation}\label{eq:implicit_U_eq}
\mathcal{U}^{2}(x,v,h,\xi^{(1)},\xi^{(2)},\xi^{(3)},\xi^{(4)}) = \mathcal{U}(x,v,h,{\xi^{(1)}}',{\xi^{(2)}}'),
\end{equation}
where ${\xi^{(1)}}', {\xi^{(2)}}'$ are independent standard Gaussians, that are defined {implicitly via the linear equation \eqref{eq:implicit_U_eq} by} a deterministic mapping from $\xi^{(1)},\xi^{(2)},\xi^{(3)},\xi^{(4)}$, i.e. $({\xi^{(1)}}', {\xi^{(2)}}')=\mathcal{M}_{h/2\to h}(\xi^{(1)},\xi^{(2)},\xi^{(3)},\xi^{(4)})$.
By first the Gaussian random vectors $\xi_{k+1}^{(1)},\ldots, \xi_{k+1}^{(4)}$ for the level $l_{\max}+1$, and then using the transformation $\mathcal{M}_{h_{l_{\max}+1}\to h_{l_{\max}}}, \ldots \mathcal{M}_{h_{L(N)+1}\to h_{L(N)}}$ recursively, we create a synchronous coupling of random variables $\{z_{-B_l}^{(l)},\ldots, z_K^{(l)}\}_{L(N)\le l\le l_{\max}+1}$, which we call $\nu_{L(N):l_{\max}}$.

The reason for using $\nu_{L(N):l_{\max}}$ instead of independent couplings at levels $l\ge L(N)$ is that this leads to variance reduction in the estimator \eqref{eq:SRichardsondef}. We call the overall estimator $S(c_R)$ based on formula \eqref{eq:SRichardsondef} with $D_{l,l+1}$ defined based on coupling construction $\nu_{l,l+1}$ as \emph{Unbiased $\UBU$} (or $\UBUBU$, for short). {The steps of this estimator are provided in Section B of the Appendix.}

Now, we will state our theoretical results for this algorithm. To prove unbiasedness and finite variance for our estimator $S(c_R)$, we require several assumptions, which we state below. These include assumptions on the smoothness and strong convexity of our potential, as well as restrictions on various parameters of the algorithm.

\begin{restatable}[$M$-$\nabla$ Lipschitz]{assumption}{AssumLip}
\label{assum:Lip}
$U:\R^{d} \to \R$ is twice continuously differentiable and there exists $M>0$ such that for all $x,y \in \R^d$
$$
\| \nabla U(x) - \nabla U(y)\| \leq M\|x-y\|.
$$
\end{restatable}

\begin{restatable}[$m$-strong convexity]{assumption}{AssumConvex}
\label{assum:convex}
$U:\R^{d} \to \R$ is continuously differentiable and there exists $m>0$ such that for all $x,y \in \R^d$
$$
\langle \nabla U(x) - \nabla U(y),x-y \rangle \geq m|x-y|^2.
$$
\end{restatable}
The strongly Hessian Lipschitz property relies on the following tensor norm from \cite{chen2023does}.
\begin{definition}\label{def:strongHessianLipschitznorm}
    For $A \in \mathbb{R}^{d \times d \times d}$, let
    \[
    \|A\|_{\{1,2\}\{3\}} = \sup_{x\in \mathbb{R}^{d\times d}, y\in \mathbb{R}^d}\left\{\left.\sum^{d}_{i,j,k=1}A_{ijk}x_{ij}y_{k} \right| \sum_{i,j=1}^{d}x^{2}_{ij}\leq 1, \sum^{d}_{k=1}y^{2}_{k}\leq 1\right\}.
    \]
\end{definition}
\begin{remark}
The $\|A\|_{\{1,2\}\{3\}}$ norm in Definition \ref{def:strongHessianLipschitznorm} can be equivalently written as 
\begin{equation}\label{eq:strongHessianLipschitz:alternative:def}   
\|A\|_{\{12\}\{3\}}=\left\|\sum_{i_1} A_{i_1,\cdot,\cdot}^T\cdot A_{i_1,\cdot,\cdot}\right\|^{1/2},
\end{equation}
where $A_{i_1,\cdot,\cdot}=\left(A_{i_1,i_2,i_3}\right)_{1\le i_2\le d, 1\le i_3\le d}$ is a $d\times d$ matrix, see the proof of Lemma 7 of \cite{paulin2024}. 
\end{remark}

\begin{restatable}[$M^{s}_{1}$-strongly Hessian Lipschitz]{assumption}{AssumHessLipschitz}
\label{assum:Hess_Lipschitz}
$U:\mathbb{R}^{d} \to \mathbb{R}$ is three times continuously differentiable and $M^{s}_{1}$-strongly Hessian Lipschitz if there exists $M^{s}_{1} > 0$ such that
    \[
    \|\nabla^{3}U(x)\|_{\{1,2\}\{3\}} \leq M^{s}_{1}
    \]
    for all $x \in \mathbb{R}^{d}$.
\end{restatable}
\begin{remark}
In Section I of the Appendix, we show that Bayesian multinomial regression satisfies this assumption.
\end{remark}
\begin{restatable}[$1$-Lipschitzness of $f$]{assumption}{AssumfLipschitz}
\label{assum:Lipschitz}
$f$ is a 1-Lipschitz function with respect to the Euclidean distance on $\R^{2d}$, that only depends on $x$, not $v$ (i.e. $f(x,v)=f(x,v')$ for any $x,v,v'\in \R^d$). 
\end{restatable}
\begin{restatable}[Distance of initial distribution from target]{assumption}{AssumInitialdist}
\label{assum:initialdist}
The initial distribution on $\Lambda=\R^{2d}$ satisfy that $\mathcal{W}_2(\pi,\mu_0)\le c_{\mu_0}\sqrt{\frac{d}{m}}$,  for some $c_{\mu_0}>0$.
\end{restatable}
\begin{remark}\label{remark:initdist}
It is easy to show that under Assumption \eqref{assum:convex}, for $\mu_0=\delta_{x^*}\times \mathcal{N}(0_d,I_d)$, and for $\mu_0=\mathcal{N}(x^*,(\nabla^2 U(x^*))^{-1})\times \mathcal{N}(0_d,I_d)$ (Gaussian approximation), this condition holds with $c_{\mu_0}=2$ {(see Section I of the Supplementary Material)}.
\end{remark}


\begin{restatable}{theorem}{ThmExactGradUBUBU}
\label{thm:exactgradUBUBU}
Suppose that Assumptions \ref{assum:Lip}, \ref{assum:convex}, \ref{assum:Hess_Lipschitz}, \ref{assum:Lipschitz}, \ref{assum:initialdist} hold, and in addition, 
\begin{align*}\gamma &\geq \sqrt{8M}, \quad h_0 \leq 
\frac{1}{\gamma}\cdot \frac{m}{264M}, \quad B\ge \frac{16\log(4)\gamma}{mh_0}, \quad
B_0\ge \frac{16 \gamma}{m h_0}\log\left(\frac{c_{\mu_0}+1}{ \sqrt{M}\gamma h_0^2}\right).\end{align*}
Suppose that $c_{R}\in [0,\phi_N^{-1/2})$, and $2<\phi_N<16$.
Then for any $N\ge 1$, the $\UBUBU$ estimator $S(c_R)$ has finite expected computational cost, $\E S(c_R)=\pi(f)$, and it has finite variance. 
Moreover, it satisfies a CLT as $N\to \infty$, and the asymptotic variance $\sigma^2_S$ defined in \eqref{eq:sigma2S} can be bounded as
\[\sigma^2_S\le \frac{C(m,M,M^{s}_1,\gamma,c_N,\phi_N)
}{K h_0} \left(1+\frac{1}{h_0 K}+d h_0^4\right).\]
\end{restatable}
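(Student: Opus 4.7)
The plan is to verify the hypotheses of Proposition \ref{prop:unb} and Theorem \ref{thm:CLT}, from which unbiasedness of $S(c_R)$, finite expected cost, finite variance, and the CLT all follow. With $c_{l,l+1}=c_N\phi_N^{-l}$ as in \eqref{eq:cllp1exact}, Assumption \ref{ass:numbsamp} holds with $\ul{c}_N=\ol{c}_N=c_N$, Assumption \ref{ass:independence} is immediate from the parallel-copy construction, and Assumption \ref{ass:comp} follows because one realisation of $D_{l,l+1}$ needs $2^l(K+B_l)=O(2^l(K+lB+B_0))$ elementary $\UBU$ steps. The bulk of the proof is to establish Assumption \ref{ass:var} with rate $\phi_D>\phi_N$; since the theorem restricts to $\phi_N<16$, the target is $\phi_D=16$, matching the strong order-two accuracy of $\UBU$.

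The central estimate is $\E(D_{l,l+1}^2)=O(16^{-l})$. The approach is to introduce the two exact kinetic Langevin diffusions $Z^{(z)}_t$ and $Z^{(z')}_t$, started at time $-B_l h_0$ from $z^{(l,l+1)}_{-B_l}$ and $z'^{(l,l+1)}_{-B_l}$ respectively and driven by the same Brownian motion as the synchronously coupled $\UBU$ chains, and to apply a triangle inequality
\[
\|z_i-z'_i\|_{L^2}\le \|z_i-Z^{(z)}_{ih_0}\|_{L^2}+\|Z^{(z)}_{ih_0}-Z^{(z')}_{ih_0}\|_{L^2}+\|Z^{(z')}_{ih_0}-z'_i\|_{L^2}.
\]
Under Assumptions \ref{assum:Lip}, \ref{assum:convex} and \ref{assum:Hess_Lipschitz}, with $\gamma\ge\sqrt{8M}$ and $h_0$ small enough, the strong order-two property of $\UBU$ together with contractivity of the exact dynamics bounds the first and third terms uniformly in $i$ by $C(m,M,M^s_1,\gamma)\sqrt{d}\,h_l^2$ and $C\sqrt{d}\,h_{l+1}^2$; the strongly Hessian Lipschitz assumption is what secures the $\sqrt{d}\,h^2$ dimension dependence. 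The middle term is handled by $L^2$-contractivity of synchronously coupled kinetic Langevin at rate $\kappa=\Theta(m/\gamma)$,
\[
\|Z^{(z)}_{ih_0}-Z^{(z')}_{ih_0}\|_{L^2}\le e^{-\kappa(B_l+i)h_0}\|z_{-B_l}-z'_{-B_l}\|_{L^2},
\]
with initial gap $O(\sqrt{d/m})$ by Assumption \ref{assum:initialdist}. The choice $Bh_0\ge 16\log(4)\gamma/m$ yields $e^{-\kappa B h_0}\le 1/4$, so this term decays at rate at least $4^{-l}$; the choice of $B_0$ forces $e^{-\kappa B_0 h_0}\sqrt{d/m}\le \sqrt{d}\,h_0^2$, so the middle term is dominated by the strong-error contributions. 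Combining yields $\E\|z_i-z'_i\|^2\le C\,d\,h_l^4$.

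To pass from this pointwise bound to a bound on $\E(D_{l,l+1}^2)$, using that $f$ is $1$-Lipschitz in $x$ and independent of $v$, an MCMC asymptotic-variance argument for the coupled kernel $R_{l,l+1}$ (whose spectral gap is proportional to $\kappa h_0$), together with exponential decay of $\Cov(\|z_i-z'_i\|,\|z_j-z'_j\|)$, delivers
\[
\E(D_{l,l+1}^2)\le \frac{C(m,M,M^s_1,\gamma)\,d\,h_l^4}{Kh_0},
\]
which is Assumption \ref{ass:var} with $\phi_D=16>\phi_N$. The convergence $\tilde{\mu}_{h_l}(f)\to\pi(f)$ also required by Assumption \ref{ass:var} follows from the same ingredients: $\mathcal{W}_2(\mathrm{Law}(z_i^{(l)}),\pi)$ is bounded by $O(\sqrt{d}\,h_l^2)$ plus $e^{-\kappa B_l h_0}\mathcal{W}_2(\mu_0,\pi)$, both vanishing as $l\to\infty$.

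With Proposition \ref{prop:unb} and Theorem \ref{thm:CLT} now applicable, unbiasedness, finite expected cost, finite variance and the CLT are immediate. The explicit asymptotic variance bound is then obtained by summing
\[
\sum_{l\ge 0}\frac{\Var(D_{l,l+1})}{c_{l,l+1}}\le \frac{Cdh_0^4}{Kh_0\,c_N}\sum_{l\ge 0}(\phi_N/16)^l = O\!\left(\frac{dh_0^3}{K}\right),
\]
together with $\Var(D_0)=O(1/(Kh_0))+O(1/(K^2h_0^2))$, which follows from a standard geometric-ergodicity-based MCMC asymptotic-variance bound for the $1$-Lipschitz $f$ under the $m$-strongly log-concave target with chain stepsize $h_0$, the second term being a finite-burn-in correction. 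These combine into $\sigma_S^2\le \frac{C}{Kh_0}(1+\frac{1}{Kh_0}+dh_0^4)$, as claimed. The main obstacle is expected to be the uniform-in-time, $\sqrt{d}\,h^2$ strong error estimate for $\UBU$ against the exact diffusion, which crucially uses the strongly Hessian Lipschitz assumption, and then upgrading the pointwise bound on $\E\|z_i-z'_i\|^2$ to a Markov-chain variance bound for the empirical average $D_{l,l+1}$ through covariance decay along the coupled kernel $R_{l,l+1}$.
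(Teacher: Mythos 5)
Your high-level strategy matches the paper's: verify the hypotheses of Proposition \ref{prop:unb} and Theorem \ref{thm:CLT} with $\phi_D = 16$ obtained from the strong-order-two property of $\UBU$, then tally the variance series. The main body of your sketch --- a triangle inequality bounding $\|z_i - z'_i\|_{L^2}$ by a strong-error term of size $\sqrt{d}\,h_l^2$ plus a contraction term killed off by the burn-in, followed by a covariance-decay argument to pass from that pointwise bound to a bound on $\E(D_{l,l+1}^2)$ --- is precisely what Propositions \ref{prop:global_strong_error}, \ref{prop:boundDllp1} and Corollary \ref{corollary:global_strong_error} implement. One organizational difference: the paper compares each discrete chain to an exact diffusion started from a \emph{stationarity-coupled} initial point (which is what the local strong-error hypothesis, stated at $z'\sim\pi$, feeds into directly), whereas you compare each to the diffusion started from the discretization's own non-stationary initial condition; your version works but requires a bit more care because the moment bounds used in the local-error expansion are established at stationarity. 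Your stated $\E(D_{l,l+1}^2)\le Cd h_l^4/(Kh_0)$ also omits the logarithmic short-lag correction that appears in Proposition \ref{prop:varDllp1exact}, though this is dominated by the $1/h_0$ term and does not alter the final bound.

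The genuine gap is the dimension-free estimate $\Var(D_0)=O\bigl(\frac{1}{Kh_0}(1+\frac{1}{Kh_0})\bigr)$, which you dispatch as a ``standard geometric-ergodicity-based MCMC asymptotic-variance bound.'' That phrase hides the real difficulty: a generic bound of the form $\Var_{\pi_{h_0}}(f)/(c(h_0)K)$ still requires a dimension-free handle on the variance of a Lipschitz $f$ under the \emph{biased} stationary law $\pi_{h_0}$ (together with a burn-in correction depending on $\mu_0$), which does not come for free. The paper's Proposition \ref{prop:D0var} supplies this through Theorem 2 of \cite{Joulin2010}, an Ollivier--Joulin coarse Ricci-curvature bound: it computes the local diffusion coefficient $\sigma(z)^2 = O(dh_0)$ of a single $\UBU$ step and lower-bounds the local dimension $n(z)=\Omega(d)$ by applying the Gaussian Poincar\'e inequality to the Jacobian of the $\mathcal{UBU}$ map with respect to its Gaussian inputs, so that the ratio $\sigma(z)^2/n(z)=O(h_0)$ is dimension-free. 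Without identifying this mechanism (or a substitute such as a transferred Poincar\'e inequality for $\pi_{h_0}$ combined with a $W_2$ bias bound), your $\Var(D_0)$ claim is unsupported, and it is the nontrivial ingredient that removes the explicit $d$-dependence from this term of the asymptotic variance.
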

\begin{proof}
See Section E of the Appendix.
\end{proof}

\begin{remark}
In particular, when setting $h_0=\O(d^{-1/4})$, and $K>1/h_0$, the bound simplifies to 
$\sigma^{2}_{S}\le\frac{C(\gamma, m,M,M^{s}_{1})}{N K h_0}$. This indicates that the overall number of gradient evaluations per effective sample in this setting is $\O(d^{1/4})$, which matches the best available bounds for HMC in \cite{chen2023does}, without the warm start assumption required in that paper ({which is not satisfied by the typical implementable initializations considered in Remark \ref{remark:initdist}}).

{To reach an accuracy of $\epsilon > 0$ in the RMSE of the unbiased estimator, it requires $\mathcal{O}(d^{1/4}/\epsilon^2)$ gradient evaluations (under the strongly-Hessian Lipschitz assumption). For i.i.d. samples, it would be $\mathcal{O}(1/\epsilon^2)$ and for samples from HMC under a warm start assumption, one would expect $\mathcal{O}(d^{1/4}/\epsilon^2)$ gradient evaluations according to \cite{chen2023does}. This is in the empirical average, not the invariant measure (hence why it is not polylogarithmic in $\epsilon > 0$ for HMC), and it is not to be confused with the additional polynomial dependence on $\epsilon^{-1}$, which is usually required for unadjusted algorithms \cite{dalalyan2017theoretical}. Bypassing the additional polynomial dependence on $\epsilon^{-1}$ is due to the multilevel strategy, as is typical in the multilevel Monte Carlo literature \cite{MBG15} and hence has the theoretical advantages of Metropolis-adjusted methods in terms of the dependence on the error tolerance \cite{Dwivedi2019}.} 

{If we do not make the strongly-Hessian Lipschitz assumption, but just strong convexity and $\nabla$Lipschitz, the overall number of gradient evaluations per effective sample reduces to $\mathcal{O}(d^{1/2}/\epsilon^{2})$.}

\end{remark}

The following proposition shows dimension-free bounds for product distributions. We are going to use an assumption on the initial distribution $\mu_0$.
\begin{assumption}\label{assum:product}
    Suppose that $\mu_0$ and the target distribution $\pi$ are of product form
\begin{align*}
\mu_{0}(dx,dv)&= \prod^{d}_{i=1}\mu_{0,i}(dx_i,dv_i) \quad \text{ for all }l\ge 0,\quad \pi(d x,d v) = \prod^{d}_{i=1}\tilde{\pi}_{i}(d x_i)\frac{e^{-v_i^2/2}dv_i}{\sqrt{2\pi}},
\end{align*}
for $x = (x_{1},...,x_{d}) \in \R^{d}$, $v = (v_{1},...,v_{d}) \in \R^{d}$, and that
\[\max_{1\le i\le d}\mathcal{W}_2(\pi_i,\mu_{0,i})\le c_{\mu_0}\sqrt{\frac{1}{m}},\]
for some finite constant $c_{\mu_0}$, where $\pi_{i}(d x_i,d v_i)=\tilde{\pi}_{i}(d x_i)\frac{e^{-v_i^2/2}}{\sqrt{2\pi}} dv_i$ is the joint distribution of $(x_i,v_i)$ according to the target $\pi$.
\end{assumption}

\begin{restatable}{proposition}{PropProductExactGradUBUBU}\label{prop:ProductExactGradUBUBU}
Suppose that Assumption \ref{assum:product} holds, and denote the potential $U$ as $U(x) = \sum^{d}_{i=1}U_{i}(x_{i})$. Suppose that Assumptions \ref{assum:Lip}, \ref{assum:convex}, and \ref{assum:Hess_Lipschitz} hold for each component $(U_i)_{1\le i\le d}$, and that
\begin{align*}\gamma \geq \sqrt{8M}, \quad h_0 \leq 
\frac{1}{\gamma}\cdot \frac{m}{264M}, \quad B\ge \frac{16\log(4)\gamma}{mh_0}, \quad
B_0\ge \frac{16 \gamma}{m h_0}\log\left(\frac{c_{\mu_0}+1}{ \sqrt{M}\gamma h_0^2}\right).
\end{align*}
Suppose that $f$ is of the form
\begin{equation}\label{eq:fprodwdef}
f(x,v) = g(\langle w^{(1)}, x \rangle,\ldots \langle w^{(r)}, x \rangle), 
\end{equation}
where $g:\R^r \to \R$ is 1-Lipschitz, and $w^{(1)},\ldots,w^{(r)} \in \R^{d}$. Suppose that $c_{R}\in [0, \phi_N^{-1/2})$ and $2<\phi_N<16$. Then for any $N\ge 1$, the $\UBUBU$ estimator $S(c_R)$ has finite expected computational cost, $\E S(c_R)=\pi(f)$, and it has finite variance. Moreover, it satisfies a CLT as $N\to \infty$, and the asymptotic variance can be bounded as 
\[
\sigma^2_S\le \frac{C(m,M,M^{s}_{1},\gamma,r,c_N,\phi_N)
}{K h_0} \sum_{1\le i\le r}\|w^{(i)}\|^{2}.
\]
\end{restatable}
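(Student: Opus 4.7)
The plan is to exploit the product structure to reduce the claim to a coordinate-wise application of the estimates underlying Theorem \ref{thm:exactgradUBUBU} in dimension one. Under Assumption \ref{assum:product}, the SDE \eqref{eq:kinetic_langevin} with $U(x)=\sum_{i=1}^{d}U_i(x_i)$ decomposes into $d$ independent one-dimensional kinetic Langevin systems driven by independent components of the Brownian motion. Crucially, the splitting operators $\mathcal{B}$ in \eqref{eq:Bdef} and $\mathcal{U}$ in \eqref{eq:Udef} act componentwise in this setting (the $\mathcal{B}$ step because $\nabla U$ is separable, the $\mathcal{U}$ step because it is linear in $(x,v)$ with per-coordinate independent Gaussian innovations). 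Hence the $\UBU$ kernel $P_{h}$, the two-level coupled kernel $P_{h,h/2}$ in \eqref{eq:Phh2UBU}, the couplings $\nu_{l,l+1}$ in Algorithm \ref{alg:nullp1}, and the synchronous coupling $\nu_{L(N):l_{\max}+1}$ all factorise into products of one-dimensional analogues. I would first verify this factorisation carefully, so that the $j$-th coordinate of the coupled trajectories is distributed exactly as a one-dimensional $\UBUBU$ coupling applied to potential $U_j$ with initial distribution $\mu_{0,j}$.

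Next, by $1$-Lipschitzness of $g$ and Cauchy--Schwarz, for any $z=(x,v)$ and $z'=(x',v')$,
\[
\bigl|f(z)-f(z')\bigr|^{2}\;\le\;\sum_{i=1}^{r}\bigl\langle w^{(i)},\,x-x'\bigr\rangle^{2}.
\]
Writing $\Delta_{k,j}:= x'^{(l,l+1)}_{k,j}-x^{(l,l+1)}_{k,j}$, the product structure of $\nu_{l,l+1}$ implies that $(\Delta_{k,j})_{j=1}^{d}$ are mutually independent across $j$ (though of course correlated in $k$). Consequently
\[
\mathrm{Var}\bigl(\langle w^{(i)},x'_{k}-x_{k}\rangle\bigr)\;=\;\sum_{j=1}^{d}(w^{(i)}_{j})^{2}\,\mathrm{Var}(\Delta_{k,j}),
\]
with an analogous identity for the cross-time covariances appearing in $\mathrm{Var}(D_{l,l+1})$.

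I would then invoke the one-dimensional versions of the building-block estimates that feed into Theorem \ref{thm:exactgradUBUBU}, applied to each coordinate $j$ with the scalar potential $U_j$ (which satisfies Assumptions \ref{assum:Lip}, \ref{assum:convex}, \ref{assum:Hess_Lipschitz} with the stated constants $m,M,M^{s}_{1}$, and for which the hypotheses on $\gamma,h_0,B,B_0$ continue to be satisfied because they only depend on $m,M,\gamma$, not on~$d$). This provides scalar bounds of the form $\mathrm{Var}(\Delta_{k,j})\le C(m,M,M^{s}_{1},\gamma) h_l^{4}$ and corresponding $h_l^{4}$ control on the squared bias $(\E\Delta_{k,j})^{2}$, uniformly in $k\in\{1,\ldots,K\}$ and $j$. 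Summing over $j$ turns the scalar variance into $\|w^{(i)}\|^{2}$ times constants; summing over $i\in\{1,\ldots,r\}$ produces $\sum_{i=1}^{r}\|w^{(i)}\|^{2}$ in place of the factor $d$. Dividing by $K$ through $D_{l,l+1}=K^{-1}\sum_{k}[\cdot]$ and using the contractivity-based control of correlations along the chain gives a second-moment bound $\E(D_{l,l+1}^{2})\le V_D\phi_D^{-l}$ with $\phi_D$ arbitrarily close to $16$ (by the strong order two of $\UBU$, so that $h_l^{4}=16^{-l}h_0^{4}$) and $V_D$ of the stated form.

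With Assumptions \ref{ass:var}--\ref{ass:independence} verified in this way (under $2<\phi_N<16\le \phi_D$), Proposition \ref{prop:unb} yields unbiasedness, finite expected computational cost and finite variance of $S(c_R)$, while Theorem \ref{thm:CLT} delivers the CLT with the asymptotic variance $\sigma^{2}_{S}$ in \eqref{eq:sigma2S} inheriting the bound $C(m,M,M^{s}_{1},\gamma,r,c_N,\phi_N)(Kh_0)^{-1}\sum_{i=1}^{r}\|w^{(i)}\|^{2}$. The main obstacle I anticipate is purely bookkeeping: isolating from the proof of Theorem \ref{thm:exactgradUBUBU} the one-dimensional strong-error and bias estimates (together with their uniform-in-$K$ dependence via the geometric ergodicity at diffusion time-scale), and checking that the burn-in choice $B_0,B$ still suffices coordinatewise. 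Once this is done, the reduction to $\sum_i\|w^{(i)}\|^2$ via independence across coordinates and Lipschitzness of $g$ is essentially algebraic.
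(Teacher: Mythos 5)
Your product-structure reduction is correct and matches the paper's strategy: with a separable potential the $\UBU$ operators, the coupled kernel \eqref{eq:Phh2UBU}, and the couplings $\nu_{l,l+1}$ all factor across coordinates, the one-dimensional strong-error and bias estimates from Corollary \ref{corollary:global_strong_error} apply coordinatewise, and the $D_0$ bound goes through since $f$ is $\sum_s\|w^{(s)}\|$-Lipschitz.

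However, there is a genuine gap in the passage from your pointwise bound $|f(z)-f(z')|^{2}\le\sum_{i}\langle w^{(i)},x-x'\rangle^{2}$ to a dimension-free bound on $\Var(D_{l,l+1})$. That inequality controls the second moment $\E[|f(z'_k)-f(z_k)|^2]\le\sum_i\E[\langle w^{(i)},\Delta_k\rangle^2]$, and writing $\E[\langle w^{(i)},\Delta_k\rangle^2]=\Var(\langle w^{(i)},\Delta_k\rangle)+(\E\langle w^{(i)},\Delta_k\rangle)^2$, the variance part is indeed $\sum_j(w^{(i)}_j)^2\Var(\Delta_{k,j})=\O(\|w^{(i)}\|^2 h_l^4)$ as you observe, but the squared bias $(\sum_j w^{(i)}_j\E\Delta_{k,j})^2\le\|w^{(i)}\|^2\sum_j(\E\Delta_{k,j})^2$ can scale like $d\|w^{(i)}\|^2 h_l^4$, since each per-coordinate bias is $\O(h_l^2)$ and the contributions accumulate across $j$. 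Your variance identity $\Var(\langle w^{(i)},\Delta_k\rangle)=\sum_j(w^{(i)}_j)^2\Var(\Delta_{k,j})$ is correct but cannot be applied directly to $\Var(f(z'_k)-f(z_k))$ because $g$ is merely Lipschitz, not linear, so neither route as written eliminates the factor of $d$. The paper closes this gap with the Efron--Stein inequality applied to the function of the $d$ independent coordinate pairs, which gives $\Var(f(z'_k)-f(z_k))\le 2\sum_j(\sum_s|w^{(s)}_j|)^2\E[\|\Delta_{k,j}\|^2_{a,b}]\le 2r\sum_s\|w^{(s)}\|^2\max_j\E[\|\Delta_{k,j}\|^2_{a,b}]$; this is dimension-free because the per-coordinate second moment is $\O(h_l^4)$ without any cross-coordinate bias accumulation. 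The same Efron--Stein step is built into Proposition \ref{prop:boundDllp1independent} to control the cross-time covariances of $D_{l,l+1}$. You need to insert this variance decomposition (or an equivalent Poincar\'e-type argument for functions of independent coordinates) for the argument to close; everything else in your outline is sound.
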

\begin{proof}
See Section E of the Appendix.
\end{proof}
\begin{remark}
These bounds are independent of the dimension $d$. This is not surprising as the different components evolve independently according to the kinetic Langevin diffusion \eqref{eq:kinetic_langevin}, and we do not introduce any dependencies in the $\UBUBU$ algorithm. This is in contrast with Metropolized methods, where the accept/reject steps introduce dependencies in the evolution of the components. The results could be generalized to potentials which are separable into independent groups of coordinates, i.e. $U(x)=\sum_{i=1}^{s} U_i(x_{G_i})$, where $G_1,\ldots, G_{s}$ is a partition of $[d]$, and the size of each group $|G_{i}|$ is small. They could also be generalised to potentials with sparse interactions, see recent results in \cite{chen2024convergence}.
\end{remark}

\subsection{$\UBUBU$ with stochastic gradients}

In many applications, particularly in data science and machine learning, gradient computations are computationally expensive due to large datasets and the need to iterate through the entire dataset at each gradient evaluation. A common approach for reducing the cost of the gradient-based methods is to use stochastic gradient approximations based on subsampling the dataset to compute unbiased estimates  (see \cite{johnson2013accelerating,baker2019control,quiroz2018speeding, Brosse2018,Teh16,VZT16}). 

In these applications the potential $U: \R^{d} \to \R$ is typically of the form
\begin{equation}\label{eq:decom_potential}
    U(x) = U_{0}(x) + \sum^{N_D}_{i=1}U_{i}(x),
\end{equation}
where $x\in \R^{d}$, the dataset is of size $N_D \in \mathbb{N}$. $U_{0}$ can be chosen as the negative log density of the prior distribution or some other term that does not require accessing the data. In our examples, $U_{0}$ can be taken to be a quadratic function, for example, a quadratic matching the Hessian at the minimizer (which can be computed before sampling).

We remark that one of the most efficient samplers in the big data regime is the Zig-Zag sampler \cite{Bierkins2019} whose complexity is independent of the data size according to a limiting argument (although as stated in \cite{Bierkins2019}, some logarithmic factors were ignored). 
\cite{cornish2019scalable} is another recent paper that proposes a Metropolis-Hastings-type MCMC algorithm based on subsampling that only accesses $\O(1)$ or even $\O(1/\sqrt{N_{D}})$ data points per step. Although this method was shown to have state-of-the-art performance on a 10-dimensional logistic regression example, its efficiency on high-dimensional models has not yet been demonstrated.


In this section, we will develop a version of UBUBU using stochastic gradients. We are going to use random variables of the form $\omega \in [N_D]^{N_b}$, which is a random selection of $N_b$ indices to be selected uniformly on $[N_D]=\{1,\ldots,N_{D}\}$, i.i.d. with replacement \cite{baker2019control}. We denote the distribution of $\omega$ here as $\mathcal{SWR}(N_D,N_{b})$.

\begin{definition}\label{def:subsampled_sg}
The sub-sampled stochastic gradient of $U$ at $x$ with respect to $\hat{x}$ is 
\begin{equation}\label{eq:SG}
    \mathcal{G}(x,\omega|\hat{x}) = \nabla U_{0}(x) + \sum^{N_D}_{i=1}\nabla U_{i}(\hat{x}) + \frac{N_D}{N_{b}} \sum_{i \in \omega}[\nabla U_{i}(x) - \nabla U_{i}(\hat{x})],
\end{equation}
where $\omega \sim \SWR(N_D,N_{b})$.
\end{definition}
$\mathcal{G}(\cdot|\hat{x}):\R^{d} \times \Omega \to \R^{d}$ is an unbiased estimator of $\nabla U(x)$ in the sense of Definition \ref{def:stochastic_gradient}. We can use this estimator in  $\UBU$ by replacing the $\mathcal{B}$ step with
\begin{equation}\label{eq:Bstochstep}\mathcal{B}_{\mathcal{G}}(x,v,h,\omega|\hat{x}) = (x,v - h\mathcal{G}(x,\omega|\hat{x})).\end{equation}

Let $x^{*} \in \R^{d}$ be the minimizer of the potential $U$, then the selection $\hat{x}=x^{*}$ at each step corresponds to the control variate gradient estimator, see \cite{baker2019control}.
When approximating the step $\mathcal{B}$ in $\UBU$ using this control variate approach, we can only achieve strong order $1/2$.

Another possibility is to update $\hat{x}$ every $\tau = \lceil N_D/N_b\rceil$ iterations with the latest position where the gradient was evaluated (this is not $x_k$ for UBU as the gradients are evaluated after moving forward by a $\mathcal{U}$ step with stepsize $h/2$). We refer to this as the stochastic variance reduced gradient (SVRG) approach (see \citep{johnson2013accelerating,Hu21optimal}). The overall computational cost of this approach is approximately twice that of the control variate approach (due to the need for a full gradient evaluation). Since the gradient is reevaluated every $\tau$ iterations, when $h$ is small, the position $\hat{x}$ becomes closer to the positions $x$ that are considered, and the approximate dynamics provide a better approximation of the underlying diffusion \eqref{eq:kinetic_langevin}. We will show that the SVRG discretization has strong order $3/2$, which is better than the control variate estimator, hence we will use it within our unbiasing scheme. The evolution of SVRG steps can be written as follows, {let $\left(\xi^{(i)}_{k+1}\right)^{4}_{i = 1} \sim \mathcal{N}(0_{d},I_{d}) \text{ for all } i = 1,...,4$ and $\omega_{k+1}\sim \SWR(N_D,N_{b})$, then}
\begin{equation}\label{eq:PhUBUG}
\begin{split}
&\left(\ol{x}_{k},\ol{v}_{k}\right)=
\mathcal{U}\left(x_{k},v_{k},h/2,\xi^{(1)}_{k+1},\xi^{(2)}_{k+1}\right)\\
&\hat{x}_k=\ol{x}_{\lfloor k/\tau\rfloor \tau}\\
&\left(x_{k+1},v_{k+1}\right)=
\mathcal{U}\left(\mathcal{B}_{\mathcal{G}}\left(\left.\ol{x}_{k},\ol{v}_{k},h,\omega_{k+1}\right|\hat{x}_{k}\right),h/2,\xi^{(3)}_{k+1},\xi^{(4)}_{k+1}\right).\\
\end{split}
\end{equation}
Let $P_{h}^{SVRG}$ denote the time inhomogenous Markov kernel describing the evolution of $(x_k,\hat{x}_k,v_{k})$ according to the SVRG steps \eqref{eq:PhUBUG}. 

{It turns out that it can be advantageous to never compute gradients at level $0$, and use a full Gaussian approximation. This is especially relevant in scenarios where the target is very close to Gaussian. We will refer to this setting as full Gaussian approximation at level $0$.} 

{For level $0$ and initialisation purposes}, let 
\begin{equation}\label{eq:Gaussian_approximation}
    \mu_G=\mathcal{N}(x^*,(H^*)^{-1})\times \mathcal{N}(0_d,I_d)\quad \text{ with  }\quad H^*=\nabla^2 U(x^*),
\end{equation}
and define
\begin{align}
\label{eq:Hstarstep}&\mathcal{H}_{*}(x,v,h) = \left(\begin{matrix}x^*\\ 0_d\end{matrix}\right)+\exp\left(h\left(\begin{matrix}0_d & I_d\\ -H_* & 0_d\end{matrix}\right)\right)\left(\begin{matrix}x-x^*\\ v\end{matrix}\right),\\
\label{eq:Ostep}
&\mathcal{O}(x,v,h/2,\xi^{(1)},\xi^{(2)}) = \Big(x, \eta v + \sqrt{2\gamma}\mathcal{Z}^{(2)}\left(h/2,\xi^{(1)},\xi^{(2)}\right)\Big),\\
\label{eq:O2step}
&\mathcal{O}^{2}(x,v,h,\xi^{(1)},\xi^{(2)},\xi^{(3)},\xi^{(4)}) = \mathcal{O}\left(\mathcal{O}\left(x,v,h/2,\xi^{(1)},\xi^{(2)}\right),h/2,\xi^{(3)},\xi^{(4)}\right).
\end{align}
with $\mathcal{H}_{*}(x,v,h)$ corresponding the solution of the Hamiltonian dynamics on target $\mu_{G}\times \mathcal{N}(0_{d},I_d)$ initiated in $(x,v)$ at time $h$. It follows from \eqref{eq:Z12def} that $\sqrt{2\gamma}\mathcal{Z}^{(2)}\left(h/2,\xi^{(1)},\xi^{(2)}\right)\sim \mathcal{N}(0_d, (1-\eta^2)I_d)$, so this $\mathcal{O}$ steps keeps the target invariant. We are going to use the OHO scheme as part of our algorithm.
{The OHO scheme is defined in more detail in Section F of the Appendix. Our motivation for using it is that a Gaussian approximation at the top level is computationally cheaper than stochastic or full gradients. We specifically use the OHO scheme for Gaussian targets (as opposed to  UBU with a Gaussian target) as it is simple to derive an analytical formula for the iterates and there is a natural coupling to the UBU scheme (see \eqref{eq:Phh2OHOUBUG}). Further, using a Gaussian approximation provides favourable complexity results in terms of scalability with the dataset size $N_D$, which we discuss later, and this can also be seen in practice. We proceed with OHO as follows, let $\left(\xi^{(i)}_{k+1}\right)^{4}_{i = 1} \sim \mathcal{N}(0_{d},I_{d}) \text{ for all } i = 1,...,4$, then we define}
\begin{equation}\label{eq:PhOHO}
\begin{split}
&\left(\ol{x}_{k},\ol{v}_{k}\right)=
\mathcal{O}\left(x_{k},v_{k},h/2,\xi^{(1)}_{k+1},\xi^{(2)}_{k+1}\right)\\
&\left(x_{k+1},v_{k+1}\right)=
\mathcal{O}\left(\mathcal{H}_{*}\left(\ol{x}_{k},\ol{v}_{k},h\right),h/2,\xi^{(3)}_{k+1},\xi^{(4)}_{k+1}\right).
\end{split}
\end{equation}
Let $P_{h}^{OHO}$ denote the time homogeneous Markov kernel describing the evolution of $(x_k,v_{k})$ according to the OHO steps \eqref{eq:PhOHO}. 

Two chains evolving according to SVRG with step size $h$ and SVRG with step size $h/2$ can be coupled as follows. {First $\left(\xi^{(i)}_{k+1}\right)^{8}_{i = 1} \sim \mathcal{N}(0_{d},I_{d}) \text{ for all } i = 1,...,8$, $\omega'_{k+1/2}, \omega'_{k+1}\sim \SWR(N_D,N_{b}), \omega_{k+1}\sim \frac{1}{2}\delta_{\omega_{k+1/2}'}+\frac{1}{2}\delta_{\omega_{k+1}'}$, then}
\begin{align}\label{eq:Phh2UBUG}
&\left(\ol{x}_{k},\ol{v}_{k}\right) =\mathcal{U}^{2}\left(x_{k},v_{k},h/2,\xi_{k+1}^{(1)},\xi_{k+1}^{(2)},\xi_{k+1}^{(3)},\xi_{k+1}^{(4)}\right)\\
\nonumber&\hat{x}_k=\ol{x}_{\lfloor k/\tau\rfloor \tau}\\
\nonumber&\left(x_{k+1},v_{k+1}\right) =
\mathcal{U}^{2}\left(\mathcal{B_{G}}\left(\left.
\ol{x}_{k},\ol{v}_{k},h,\omega_{k+1}\right|\hat{x}_{k}\right),h/2,\xi_{k+1}^{(5)},\xi_{k+1}^{(6)},\xi_{k+1}^{(7)},\xi_{k+1}^{(8)}\right),\\
\nonumber&\left(\ol{x}'_{k},\ol{v}'_{k}\right)=
\mathcal{U}\left(x'_{k},v'_{k},h/4,\xi^{(1)}_{k+1},\xi^{(2)}_{k+1}\right)\\
\nonumber&\hat{x}'_k=\ol{x}'_{\lfloor 2k/\tau\rfloor \tau/2}\\
\nonumber&\left(x'_{k+1/2},v'_{k+1/2}\right)=
\mathcal{U}\left(\mathcal{B_{G}}\left(\left.\ol{x}'_{k},\ol{v}'_{k},h/2,\omega'_{k+1/2},v\right|\hat{x}'_{k}\right), h/4,\xi_{k+1}^{(3)},\xi_{k+1}^{(4)}\right)\\
\nonumber&\left(\ol{x}'_{k+1/2},\ol{v}'_{k+1/2}\right)=\mathcal{U}\left(x'_{k+1/2},v'_{k+1/2},h/4,\xi_{k+1}^{(5)},\xi_{k+1}^{(6)}\right)\\
\nonumber&\hat{x}'_{k+1/2}=\ol{x}'_{\lfloor (2k+1)/\tau\rfloor \tau/2}\\
\nonumber&\left(x'_{k+1},v'_{k+1}\right)= 
\mathcal{U}\left(\mathcal{B_{G}}\left(\left.\ol{x}'_{k+1/2},\ol{v}'_{k+1/2},h/2,\omega'_{k+1}\right|\hat{x}'_{k+1/2}\right),h/4,\xi_{k+1}^{(7)},\xi_{k+1}^{(8)}\right)
\end{align}
Let $P_{h,h/2}^{SVRG}$ denote the time inhomogenous Markov kernel describing the evolution of $(x_k,\hat{x}_k,v_{k},x_k',\hat{x}'_k,v_k')$ according to the SVRG steps \eqref{eq:Phh2UBUG}. 

Finally, we will also need to couple one chain with step size $h$ running OHO on the Gaussian approximation $\mu_G$, and another chain running SVRG on the target with step size $h/2$. {First $\left(\xi^{(i)}_{k+1}\right)^{8}_{i = 1} \sim \mathcal{N}(0_{d},I_{d}) \text{ for all } i = 1,...,8$, $\omega'_{k+1/2}, \omega'_{k+1}\sim \SWR(N_D,N_{b})$ then}

\begin{equation}\label{eq:Phh2OHOUBUG}
\begin{split}
&\left(\ol{x}_{k},\ol{v}_{k}\right) =\mathcal{O}^{2}\left(x_{k},v_{k},h/2,\xi_{k+1}^{(1)},\xi_{k+1}^{(2)},\xi_{k+1}^{(3)},\xi_{k+1}^{(4)}\right)\\
&\left(x_{k+1},v_{k+1}\right) =
\mathcal{O}^{2}\left(\mathcal{H}_*\left(
\ol{x}_{k},\ol{v}_{k},h\right),h/2,\xi_{k+1}^{(5)},\xi_{k+1}^{(6)},\xi_{k+1}^{(7)},\xi_{k+1}^{(8)}\right),\\
&\left(\ol{x}'_{k},\ol{v}'_{k}\right)=
\mathcal{U}\left(x'_{k},v'_{k},h/4,\xi^{(1)}_{k+1},\xi^{(2)}_{k+1}\right)\\
&\hat{x}'_k=\ol{x}'_{\lfloor 2k/\tau\rfloor \tau/2}\\
&\left(x'_{k+1/2},v'_{k+1/2}\right)=
\mathcal{U}\left(\mathcal{B_{G}}\left(\left.\ol{x}'_{k},\ol{v}'_{k},h/2,\omega'_{k+1/2},v\right|\hat{x}'_{k}\right), h/4,\xi_{k+1}^{(3)},\xi_{k+1}^{(4)}\right)\\
&\left(\ol{x}'_{k+1/2},\ol{v}'_{k+1/2}\right)=\mathcal{U}\left(x'_{k+1/2},v'_{k+1/2},h/4,\xi_{k+1}^{(5)},\xi_{k+1}^{(6)}\right)\\
&\hat{x}'_{k+1/2}=\ol{x}'_{\lfloor (2k+1)/\tau\rfloor \tau/2}\\
&\left(x'_{k+1},v'_{k+1}\right)= 
\mathcal{U}\left(\mathcal{B_{G}}\left(\left.\ol{x}'_{k+1/2},\ol{v}'_{k+1/2},h/2,\omega'_{k+1}\right|\hat{x}'_{k+1/2}\right),h/4,\xi_{k+1}^{(7)},\xi_{k+1}^{(8)}\right)\\
\end{split}
\end{equation}
Let $P_{h,h/2}^{OHO/SVRG}$ denote the time inhomogenous Markov kernel describing the evolution of $(x_k,v_{k},x_k',\hat{x}'_k,v_k')$ according to the steps \eqref{eq:Phh2OHOUBUG}. 



We now create a coupling between levels $0$ and $1$, denoted by $\nu_{0,1}^{SG}$.
\begin{algorithm}[H]
\renewcommand{\thealgorithm}{}
\floatname{algorithm}{}
\addtocounter{algorithm}{-1}
     \footnotesize 
     \begin{algorithmic}[1]
     \State Define $z^{(0,1)}_{-B_{1}}\sim \mu_G$ and let $z'^{(0,1)}_{-B_{1}}= z^{(0,1)}_{-B_{1}}$.
     \State Let $(z^{(0,1)}_{-B_{1}},z'^{(0,1)}_{-B_{1}},\hat{x}'^{(0,1)}_{-B_1}),(z^{(0,1)}_{-B_{1}+1},z'^{(0,1)}_{-B_{1}+1},\hat{x}'^{(0,1)}_{-B_1+1}),\ldots, (z^{(0,1)}_{K},z'^{(0,1)}_{K},\hat{x}'^{(0,1)}_{K})$ be a Markov chain with kernel $R_{0,1}^{OHO/SVRG}=P^{OHO/SVRG}_{h_{0},h_1}$ 
     (satisfying that $z^{(0,1)}_{k}\sim \mu_G$ for all $k$).
    \State Let $\nu_{0,1}$ denote the joint distribution of $z_{-B_0}^{(0,1)},\ldots, z_{K}^{(0,1)}, z'^{(0,1)}_{-B_{1}},\ldots, z'^{(0,1)}_{K}$. 
\end{algorithmic}
\caption{$\nu_{0,1}^{SG}$ coupling}
\label{alg:nu01SG} 
\end{algorithm}
We now create a coupling between levels $l$ and $l+1$ for $l\ge 1$, denoted by $\nu_{l,l+1}^{SG}$.
\begin{algorithm}[H]
\renewcommand{\thealgorithm}{}
\floatname{algorithm}{}
\addtocounter{algorithm}{-1}
     \footnotesize 
     \begin{algorithmic}[1]
     \State Define $z^{(l,l+1)}_{-B_{l+1}}\sim \mu_G$ and let $z'^{(l,l+1)}_{-B_{l+1}}= z^{(l,l+1)}_{-B_{l+1}}$.
     \State Let $\left(z^{(l,l+1)}_{-B_{l+1}},z'^{(l,l+1)}_{-B_{l+1}}, \hat{x}'^{(l,l+1)}_{-B_{l+1}}\right),\ldots, \left(z^{(l,l+1)}_{-B_{l}},z'^{(l,l+1)}_{-B_{l}},\hat{x}'^{(l,l+1)}_{-B_{l}}\right)$ be a Markov chain with kernel $R_{l,l+1}^{OHO/SVRG}=\left(P^{OHO/SVRG}_{h_{l},h_{l+1}}\right)^{2^{l}}$ (satisfying that $z^{(l,l+1)}_{-B_{l}}\sim \mu_G$).
     \State Let $\left(z_{-B_{l}}^{(l,l+1)},z'^{(l,l+1)}_{-B_{l}},\hat{x}'^{(l,l+1)}_{-B_{l}}\right), \left(z_{-B_{l}+1}^{(l,l+1)},z'^{(l,l+1)}_{-B_{l}+1},\hat{x}'^{(l,l+1)}_{-B_{l}+1}\right),\ldots,  \left(z_{K}^{(l,l+1)},z'^{(l,l+1)}_{K},\hat{x}'^{(l,l+1)}_{K}\right)$ be a Markov chain evolving according to $R_{l, l+1}^{SVRG}=(P_{h_l,h_{l+1}}^{SVRG})^{2^{l}}$.
    \State Let $\nu_{l,l+1}^{SG}$ denote the joint distribution of $z_{-B_l}^{(l,l+1)},\ldots, z_{K}^{(l,l+1)}, z'^{(l,l+1)}_{-B_{l+1}},\ldots, z'^{(l,l+1)}_{K}$. 
\end{algorithmic}
\caption{$\nu_{l,l+1}^{SG}$ coupling}
\label{alg:nu02SG} 
\end{algorithm}
\begin{figure}
 \centering
 \includegraphics[width=\linewidth]{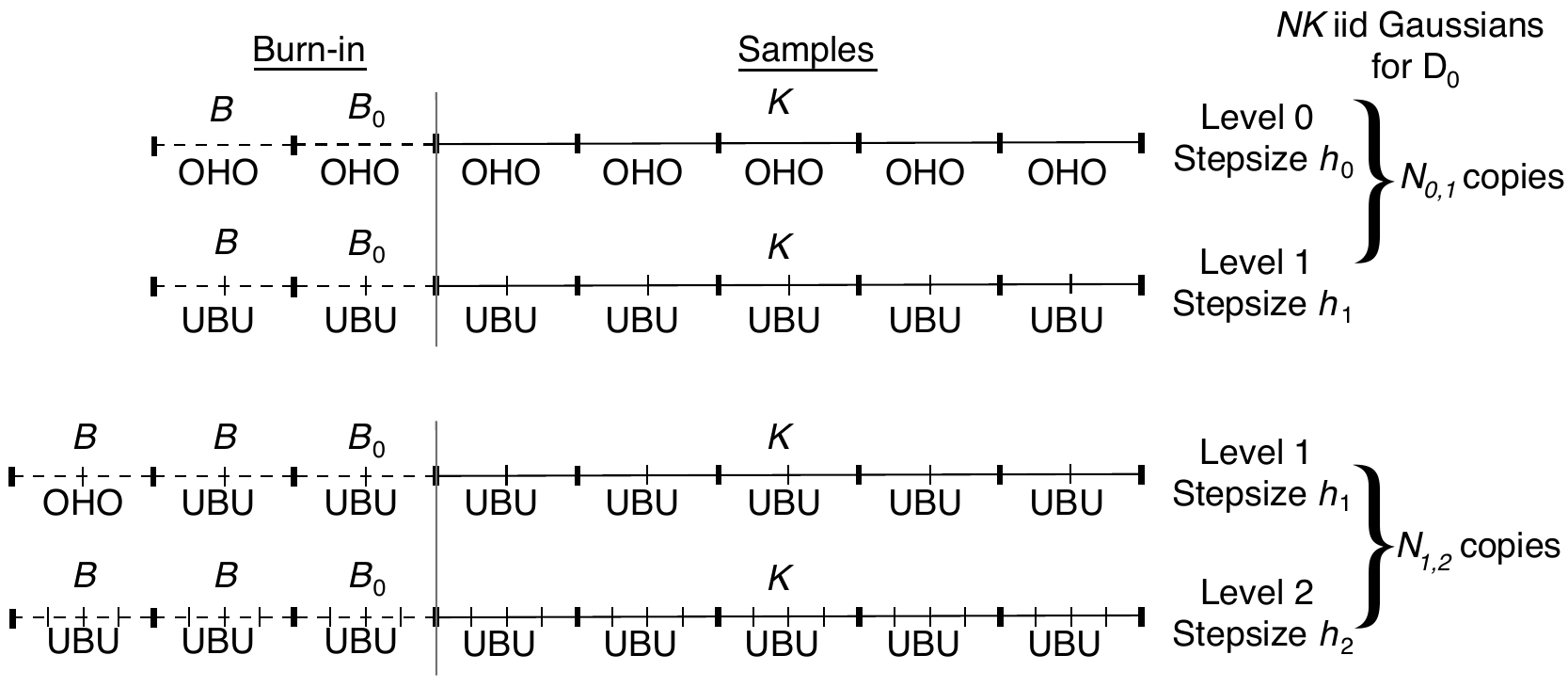}
 \caption{Coupling scheme for UBUBU-SG.}
 \label{fig:bias.elimination.sg}
 \end{figure}
Figure \ref{fig:bias.elimination.sg} illustrates our couplings between different levels using OHO/UBU discretizations. Given some constants $c_N>0$, $\phi_N>2$, we let
\begin{align}\label{eq:cllp1SG}
c_{l,l+1}&=c_N \phi_N^{-l}\,\text{ for }\,l\in \mathbb{N}.
\end{align}

Similarly to the exact gradient case, we can also define a joint coupling of levels $L(N), \ldots, l_{\max}$ in this stochastic gradient case. 
The idea is that first {generate} the Gaussian random vectors $\xi_{k+1}^{(1)},\ldots, \xi_{k+1}^{(4)}$ for the level $l_{\max}+1$, and then using the transformation $\mathcal{M}_{h_{l_{\max}+1}\to h_{l_{\max}}}, \ldots \mathcal{M}_{h_{L(N)+1}\to h_{L(N)}}$ recursively, we can generate a synchronous coupling of all the Gaussian random variables in the algorithm.
For the stochastic gradient noise terms $\omega_{k}$, we first define that at level $l_{\max}+1$, denoted by $\omega_{k}^{(l_{\max}+1)}$. From each $\omega_{2k}^{(l_{\max}+1)}$ and $\omega_{2k+1}^{(l_{\max}+1)}$, we generate $\omega_{k}^{(l_{\max})}$ to be one of them with equal probability $0.5$. We proceed recursively in the same way all the way to $\omega_{k}^{(L(N))}$.
Using the Gaussian random variables and $\omega_k^{(l)}$, we can define the synchronous coupling $\left\{z_{-B_l}^{(l)},\ldots, z_K^{(l)}\right\}_{L(N)\le l\le l_{\max}+1}$, called $\nu^{SG}_{L(N):l_{\max}}$.

Our stochastic gradient-based method (UBUBU-SG) proceeds as stated in Section B of the Appendix. 
We recommend setting the Richardson extrapolation parameter $c_{R}=\frac{1}{2\sqrt{2}}$ in this case (as SVRG has strong order 3/2). 

In order to show variance bounds for this algorithm, we make the following assumption.
\begin{restatable}[$\nabla$Lipschitz property]{assumption}{AssumLipSG}
\label{assum:LipSG}
For every $1\le i\le N_D$, $U_i:\R^{d} \to \R$ is twice differentiable and there exists a $\tilde{M}>0$ such that for all $x,y \in \R^d$,
$$
\| \nabla U_i(x) - \nabla U_i(y)\| \leq \tilde{M}\|x-y\|,
$$
for every $1\le i\le N_D$ and moreover,
$
\| \nabla U(x) - \nabla U(y)\| \leq M\|x-y\| \quad \text{for}\quad M=N_D\tilde{M}.
$
\end{restatable}

The next theorem states our bounds on the asymptotic variance for this algorithm.
\begin{restatable}{theorem}{ThmStochGradUBUBU}
\label{thm:stochgradUBUBU}
{Let us consider UBUBU with stochastic gradients}. Suppose that Assumptions \ref{assum:Lipschitz},
{and Assumptions \ref{assum:convex}, \ref{assum:LipSG} and \ref{assum:Hess_Lipschitz} hold with constants $N_{D}\Tilde{m}$, $N_{D}\Tilde{M}$ and $N_{D}\Tilde{M}^{s}_{1}$ respectively}. In addition we assume $\gamma \geq \sqrt{8M}$,
\begin{align*}  h_0 \leq \frac{C(\Tilde{\gamma},\Tilde{m},\Tilde{M},\tau/N_{D},N_{b})}{N^{3/2}_{D}}, \quad B\ge \frac{16\log(2^{3/2})\Tilde{\gamma}}{\Tilde{m}h_0N^{1/2}_{D}}, \quad B_0\ge \frac{16\Tilde{\gamma}}{\Tilde{m}h_0N^{1/2}_{D}}\log\left(\frac{1}{N^{9/4}_{D}h_0^{3/2}}\right).\end{align*}
Suppose that $c_{R}\in [0, \phi_N^{-1/2})$ and $2<\phi_N<8$. Then for any $N\ge 1$, the $\UBUBU$ estimator $S(c_R)$ has finite expected computational cost, $\E S(c_R)=\pi(f)$, and it has finite variance. 
Moreover, it satisfies a CLT as $N\to \infty$, and the asymptotic variance $\sigma^2_S$ defined in \eqref{eq:sigma2S} can be bounded as
\[\sigma^2_S\le \frac{1}{\Tilde{m}N_{D}K} + C(\Tilde{\gamma},\Tilde{m},\Tilde{M},\Tilde{M}^{s}_{1},\tau/N_{D},N_{b},\phi_{N})\frac{d^{2}}{c_{N}N^{2}_{D}}.
\]
\end{restatable}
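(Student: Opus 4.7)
The plan is to proceed analogously to the proof of Theorem \ref{thm:exactgradUBUBU}, by casting the stochastic gradient version of the estimator in the abstract framework of Proposition \ref{prop:unb} and Theorem \ref{thm:CLT}, and then bounding each term in the expression $\sigma^2_S=\Var(D_0)+\sum_{l\ge 0}\Var(D_{l,l+1})/c_{l,l+1}$. First I would verify Assumptions \ref{ass:var}--\ref{ass:independence}: the computational cost assumption \ref{ass:comp} is immediate from the construction of the couplings $\nu_{l,l+1}^{SG}$ (each chain at level $l$ runs for $B_0+lB+K$ super-steps of length $h_0$, which costs $\O(2^l(K+lB+B_0))$ SVRG steps), while Assumption \ref{ass:numbsamp} holds by definition of $c_{l,l+1}=c_N\phi_N^{-l}$, and independence across levels is built into the algorithm. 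The unbiasedness of $D_0$ for $\mu_G(f)$ follows from i.i.d.\ sampling from $\mu_G$, and the unbiasedness of $D_{l,l+1}$ for $\tilde{\mu}_{h_{l+1}}(f)-\tilde{\mu}_{h_l}(f)$ (with $\tilde{\mu}_{h_0}=\mu_G$) is by construction.

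Next, I would bound $\Var(D_0)$. Since the $z^{(0)}_i$ are i.i.d.\ from $\mu_G=\mathcal{N}(x^*,(H^*)^{-1})\times \mathcal{N}(0_d,I_d)$, and $f$ is $1$-Lipschitz depending only on $x$, the Gaussian Poincar\'e inequality applied to the $x$-marginal gives $\Var_{\mu_G}(f)\le \|(H^*)^{-1}\|_{\mathrm{op}}\le 1/m = 1/(N_D\tilde m)$, so $\Var(D_0)=\Var_{\mu_G}(f)/K\le 1/(\tilde m N_D K)$, which yields the first term of the stated bound. The convergence $\tilde{\mu}_{h_l}(f)\to\mu(f)$ required in Assumption \ref{ass:var} will follow from geometric contraction of the SVRG kernel $P_h^{SVRG}$ in a suitable Wasserstein metric (which can be imported from \cite{leimkuhler2023contractionb} under the $m$-strong convexity and $M$-gradient Lipschitz assumptions), combined with the fact that the burn-in $B_l=B_0+lB$ grows linearly in $l$ so that the remaining burn-in bias at level $l$ shrinks geometrically; together with the vanishing discretization bias of the SVRG-UBU integrator this gives $\tilde{\mu}_{h_l}(f)\to\mu(f)$.

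The central task is to bound $\Var(D_{l,l+1})$ for every $l$, and this is where I expect the main difficulty to lie. For $l\ge 1$ I would analyze the synchronous coupling $P_{h_l,h_{l+1}}^{SVRG}$ from \eqref{eq:Phh2UBUG}: the two chains share Brownian increments through the $\mathcal{U}$--$\mathcal{U}^{2}$ pairing, and the subsampling indices $\omega_{k+1}$ are coupled so that one of the two SVRG draws at the finer level coincides with the coarser level's draw. Strong order $3/2$ of the SVRG--UBU scheme, which follows from combining the strong order $2$ of $\UBU$ (for the deterministic part) with the $\O(h^{3/2})$ strong error contributed by the SVRG noise (proved via a standard Gr\"onwall argument on the mean-squared deviation, using $\tilde M$-$\nabla$Lipschitzness of each $U_i$ and the control-variate structure of $\hat x_k$), should give $\E\|z^{(l,l+1)}_i-z'^{(l,l+1)}_i\|^2\le C d\, h_l^{3}\cdot(\text{terms involving }\tau,N_b)$ uniformly over $i\in\{1,\ldots,K\}$ once the burn-in period has brought both chains close to their respective stationary distributions; $1$-Lipschitzness of $f$ then converts this into a bound on $\Var(D_{l,l+1})$. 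For the level $l=0$ coupling $\nu_{0,1}^{SG}$, the OHO chain sits in $\mu_G$-stationarity while the SVRG chain targets $\tilde{\mu}_{h_1}$; the distance between the drifts $-H^*(x-x^*)$ and $-\nabla U(x)$ is controlled by the strongly Hessian Lipschitz Assumption \ref{assum:Hess_LipschitzSG}, yielding a bound of the same qualitative form.

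Finally, summing the contributions $\Var(D_{l,l+1})/c_{l,l+1}$ against the geometric weights $c_N^{-1}\phi_N^{l}$ requires the ratio of the per-level variance decay to $\phi_N^l$ to be summable; the condition $\phi_N<8$ is exactly what makes the resulting geometric series converge, paralleling the condition $\phi_N<16$ for the exact-gradient case where strong order $2$ gave $\phi_D=16$. Combining these bounds with the stepsize choice $h_0\le C/N_D^{3/2}$ and the burn-in choices on $B$ and $B_0$ stated in the theorem (which are calibrated so that the geometric burn-in error at each level is dominated by the variance term) gives the claimed bound $\sigma^2_S\le \frac{1}{\tilde m N_D K}+C\frac{d^2}{c_N N_D^2}$. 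The hardest step in practice will be the strong-order $3/2$ analysis of SVRG under the synchronous and subsampling-matched coupling, since it requires tracking the propagation of both the Brownian and the subsampling noise across the recursive UBU composition while maintaining dimension dependence no worse than $d^2$; the Hessian Lipschitz assumption is precisely what controls the leading-order commutator error that would otherwise inflate the bound.
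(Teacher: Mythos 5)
Your proposal follows the same overall architecture as the paper's proof: fit the SG estimator into the abstract framework of Proposition~\ref{prop:unb} and Theorem~\ref{thm:CLT}, bound $\Var(D_0)$ via the Gaussian Poincar\'e inequality applied to $\mu_G$, bound $\Var(D_{l,l+1})$ via a strong-error analysis of the synchronously coupled SVRG--UBU chains, and sum against $c_{l,l+1}^{-1}$ using $\phi_N<8$ to match the strong order~$3/2$. The paper indeed obtains $\Var(D_0)\le 1/(\tilde m N_D K)$ exactly as you describe, and sets $\phi_{D}=8$ for levels $l\ge 1$ (via Propositions~\ref{prop:non-asymptotic-SG} and~\ref{prop:SVRG_Dl_l+1}), with the level-0 coupling treated separately through Proposition~\ref{prop:CV_SVRG_Dl_l+1}; so your plan is correct in outline.

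Two places where you understate the technical content. First, the strong order~$3/2$ bound for SVRG--UBU is not a ``standard Gr\"onwall argument'': the SVRG error is controlled by $\max_j\|\bar x_{j+1}-\bar x_j\|_{L^2}$ via Lemma~\ref{lem:mean}, and the paper needs a dedicated displacement lemma (Lemma~\ref{lem:consecutive_y}) that expresses this step-to-step increment recursively in terms of $\|z_{k-1}-Z^{k-1}\|_{L^2,a,b}$ and a noise term of size $h\sqrt d$; the self-referential bound is then closed in Proposition~\ref{prop:non-asymptotic-SG} using Lemma~\ref{lemma:recursion}, not Gr\"onwall. This is precisely what keeps the per-level error at $O(\sqrt d\,h_l^{3/2})$ rather than $O(d\,h_l^{3/2})$. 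Second, you plan to import Wasserstein contraction from \cite{leimkuhler2023contractionb}, but the paper needs contraction in its specific $\|\cdot\|_{a,b}$ norm (Proposition~\ref{prop:Wasserstein}) plus $L^4$ drift bounds (Proposition~\ref{prop:full_gradient_L4}) to control $\|\bar x_j-x^*\|_{L^4}$, neither of which comes for free from the cited reference. Finally, a small observation you gloss over: with $h_0\lesssim N_D^{-3/2}$ the $l\ge1$ levels contribute only $O(d/(c_N N_D^2))$, so the stated $d^2$ dependence is driven by the $l=0$ term (OHO/Gaussian vs.\ SVRG), which is where Lemma~\ref{lem:diffusions_difference} and the strongly Hessian Lipschitz constant enter; ``same qualitative form'' hides that the $l=0$ term actually dominates dimension-wise.
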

\begin{proof}
See Section G of the Appendix.
\end{proof}
\begin{remark}
With the choice $c_N=\mathcal{O}\left(\frac{1}{N_D}\right)$ and $K=\mathcal{O}\left(1\right)$, we get a bound $\sigma_{S}^2\le \mathcal{O}\left(\frac{d^2}{\tilde{m}N_D}\right)$, which, except for the dimension dependence, is similar to the variance of a 1-Lipschitz function according to the target. Hence, obtaining an effective sample only requires evaluating a full gradient once per $\mathcal{O}(N_D)$ iteration, so there is no increase in computational cost as the dataset size $N_D$ increases. The dimension dependency $\mathcal{O}(d^2)$ in our bound is likely not sharp as we have not observed any dimension dependency in our simulations.
\end{remark}
\subsection{$\UBUBU$ with approximate gradients} 
Stochastic gradients are not the only possible approach for computing accurate approximations of the gradient. In case the potential is close to a Gaussian (which is typical in the big data regime due to the Bernstein-von-Mises theorem), the following approximation can be quite accurate.
\begin{definition}\label{def:approx_grad}
The quadratic approximate gradient of $U$ at $x$ with respect to $\hat{x}$ is defined by  
\begin{equation}\label{eq:approx_grad}
    \mathcal{Q}(x|\hat{x}) = \nabla U(\hat{x}) +\nabla^2 U(x^*) (x-\hat{x}),
\end{equation}
where $x^*$ is the minimizer of $U$.
\end{definition}

When using this approximation for the gradient, the $\mathcal{B}$ step becomes $\mathcal{B}_{\mathcal{Q}}(x,v,h|\hat{x}) = (x,v - h\mathcal{\mathcal{Q}}(x|\hat{x}))$.
The $\UBU$ iterations in this case become
\begin{equation}\label{eq:PhUBUQ}
\begin{split}
&\left(\ol{x}_{k},\ol{v}_{k}\right)=
\mathcal{U}\left(x_{k},v_{k},h/2,\xi^{(1)}_{k+1},\xi^{(2)}_{k+1}\right),\\
&\hat{x}_k=\ol{x}_{\lfloor k/\tau\rfloor \tau}\\
&\left(x_{k+1},v_{k+1}\right)=
\mathcal{U}\left(\mathcal{B_{Q}}\left(\left.\ol{x}_{k},\ol{v}_{k},h\right|\hat{x}_k\right),h/2,\xi^{(3)}_{k+1},\xi^{(4)}_{k+1}\right),
\end{split}
\end{equation}
{where $\left(\xi^{(i)}_{k+1}\right)^{4}_{i = 1} \sim \mathcal{N}(0_{d},I_{d}) \text{ for all } i = 1,...,4$.}
Let $P_{h}^{A}$ denote the time inhomogenous Markov kernel describing the evolution of $(x_k,\hat{x}_k,v_{k})$ according to the approximate gradient steps \eqref{eq:PhUBUQ}. 

The reference point $\hat{x}$ is updated after every $\tau$ iterations for some $\tau\ge 1$. We only need to evaluate the full gradient once per $\tau$ iterations, and use an approximation based on the Hessian at the minimizer otherwise. Since the Hessian $H^*=\nabla^2 U(x^*)$ only has to be computed once, this does not affect overall efficiency when the number of samples $N$ is sufficiently high. For many potentials of interest, the approximation steps in \eqref{eq:approx_grad} can be computed at a much smaller cost than the gradient of $U$. Moreover, when thinning is used (such at levels $l=1$ and higher), multiple steps according to \eqref{eq:PhUBUQ} can be combined into one using the fact that this is a linear system, further reducing the number of matrix-vector products required.

We follow a similar strategy as in the UBUBU-SG case (see Figure \ref{fig:bias.elimination.sg}). We use Gaussian samples at level $0$, and  couplings involving both OHO and UBU discretizations. 
At level $0$, we obtain i.i.d. samples from the Gaussian approximation $\mu_G=\mathcal{N}(x^*,(H^*)^{-1})\times \mathcal{N}(0_d,I_d)$. {Couplings between subsequent levels involve both OHO and UBU discretizations in the same way as UBUBU with stochastic gradients, but the stochastic gradient approximations are replaced with approximate gradient approximations described through \eqref{eq:PhUBUQ}.}



Our results for this algorithm are stated in Theorem \ref{thm:AgradUBUBUL4}.

\begin{restatable}{theorem}{ThmAGradUBUBULfour}
\label{thm:AgradUBUBUL4}
Considering UBUBU-Approx method, suppose that Assumption 
\ref{assum:Lipschitz} holds,  {and Assumptions \ref{assum:Lip}, \ref{assum:convex} and \ref{assum:Hess_Lipschitz} hold with constants $N_{D}\Tilde{M}$, $N_{D}\Tilde{m}$ and $N_{D}\Tilde{M}^{s}_{1}$ respectively}, and in addition $\gamma \geq \sqrt{8M}$,
\begin{align*} h_0 \leq \frac{C(\Tilde{\gamma},\Tilde{m},\Tilde{M},\tau/N_{D})}{N^{3/2}_{D}}, \quad B\ge \frac{16\log(2)\Tilde{\gamma}}{\Tilde{m}h_0 N^{1/2}_{D}}, \quad 
B_0\ge \frac{16\Tilde{\gamma}}{\Tilde{m}N^{1/2}_{D}h_0}\log\left(\frac{1}{ N^{3}_{D} h_0^{2}}\right).\end{align*}
Suppose that $c_{R}\in [0, \phi_N^{-1/2})$ and $2<\phi_N<4$.
Then for any $N\ge 1$, $S(c_R)$ has finite expected computational cost, $\E S(c_R)=\pi(f)$, and it has finite variance. 
Moreover, it satisfies a CLT as $N\to \infty$, and the asymptotic variance $\sigma^2_S$ defined in \eqref{eq:sigma2S} can be bounded as
\[\sigma^2_S\le \frac{1}{\Tilde{m}N_{D}K} + \frac{C(\Tilde{\gamma},\Tilde{m},\Tilde{M},\Tilde{M}^{s}_{1},\tau/N_{D},\phi_{N})d^{2}}{c_{N}N^{2}_{D}}.\]
\end{restatable}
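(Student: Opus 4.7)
The plan is to verify the hypotheses of Proposition \ref{prop:unb} and Theorem \ref{thm:CLT} applied to the UBUBU-Approx estimator, and then read off the quantitative variance bound. These require four ingredients: (a) $\tilde{\mu}_{h_l}(f)\to \mu(f)$ together with $\Var(D_0)<\infty$ and $\E D_{l,l+1}^2 \le V_D\phi_D^{-l}$ for some $\phi_D>2$; (b) the two-sided bound on $c_{l,l+1}$; (c) the $\O(2^l(K+lB+B_0))$ cost per level-$l$ sample; and (d) the independence structure at levels $l<L(N)$. Items (b) and (d) are immediate from \eqref{eq:cllp1approx} and the construction of $\nu_{l,l+1}^{A}$ and $\nu_{L(N):l_{\max}}^{A}$, while (c) is clear because each elementary UBU-Approx step is $\O(1)$ after the one-time Hessian computation, and thinning $2^l$ at level $l$ gives the stated cost.

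First I bound $\Var(D_0)$. Because the algorithm draws $NK$ i.i.d. samples directly from $\mu_G=\mathcal{N}(x^*,(H^*)^{-1})\times\mathcal{N}(0_d,I_d)$ at level $0$, the level-$0$ summand decomposes into $N$ i.i.d. blocks $D_0^{(r)}=\frac{1}{K}\sum_{i=(r-1)K+1}^{rK}f(z_i^{(0)})$. Since $f$ is $1$-Lipschitz and depends only on $x$, Gaussian concentration and Assumption \ref{assum:convexSG} give $\Var_{\mu_G}(f)\le \|(H^*)^{-1}\|_{\mathrm{op}}\le 1/m=1/(\tilde{m}N_D)$, hence $\Var(D_0)\le 1/(\tilde{m}N_D K)$, which supplies the first term in the target bound.

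The main technical work is a synchronous-coupling bound on $\E D_{l,l+1}^2$ via three substeps. (i) Establish an $L^2$ contraction for $P_h^A$ in the same twisted quadratic norm used in the exact-gradient proof of Theorem \ref{thm:exactgradUBUBU}: the approximate gradient $\mathcal{Q}(x|\hat x)=\nabla U(\hat x)+H^*(x-\hat x)$ is globally $M$-Lipschitz in $x$, and the stepsize condition $h_0=\O(N_D^{-3/2})$ is strong enough to absorb any perturbation from replacing the true Hessian by $H^*$. (ii) Derive a strong-order-$1$ consistency bound for one UBU-Approx step at stepsize $h_l$ versus two UBU-Approx steps at stepsize $h_{l+1}$. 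Writing the local error as $\nabla U(x)-\mathcal{Q}(x|\hat x)=\int_0^1[\nabla^2 U(\hat x+t(x-\hat x))-H^*](x-\hat x)\,dt$ and invoking Assumption \ref{assum:Hess_LipschitzSG} bounds this by $M_1^s\max(\|x-x^*\|,\|\hat x-x^*\|)\|x-\hat x\|$; combined with moment bounds $\E\|x_k-x^*\|^2=\O(d/m)$ that follow from the burn-in and the $L^2$ contraction, and with $\|x_k-\hat x_k\|=\O(h\sqrt{\tau\,d/m})$ in $L^2$ because $\hat x$ is refreshed every $\tau$ steps, one obtains a local error of order $h_l^2$ with explicit $d/N_D$ factors. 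Telescoping via the contraction gives $\E D_{l,l+1}^2\le C\,d^2 h_l^2/N_D^\alpha$ for a sufficiently large $\alpha$, and dividing by $c_{l,l+1}=c_N\phi_N^{-l}$ yields a geometric series with ratio $\phi_N/4<1$ whose sum is $\O(d^2/(c_N N_D^2))$, matching the second term in the claimed bound. (iii) The level-$0$ coupling $\nu_{0,1}^{A}$ requires a variant of the same argument for the mixed kernel $P_{h_0,h_1}^{OHO/A}$: the OHO chain leaves $\mu_G$ exactly invariant, so the analysis reduces to controlling the distance between the Hamiltonian-plus-friction dynamics on the Gaussian proxy and the approximate-gradient UBU dynamics on $\pi$, which is again governed by $M_1^s$ and the Hessian-Lipschitz estimate, provided the burn-in $B_0$ is taken logarithmic in $1/h_0$ as prescribed.

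The main obstacle will be substep (ii): establishing a genuine strong-order-$1$ $L^2$ bound for UBU-Approx with periodic refresh of $\hat x$. The approximate gradient is \emph{not} a uniformly small perturbation of $\nabla U$, so classical strong-order theory does not apply directly; one must carefully exploit that the chain remains concentrated near $x^*$ after the prescribed burn-in and that $\hat x$ is $\O(\sqrt{\tau})$ steps old, so the Hessian-Lipschitz estimate can be closed self-consistently. Once this moment-and-consistency bound is secured, Proposition \ref{prop:unb} and Theorem \ref{thm:CLT} combine to give unbiasedness, finite variance, finite expected computational cost, and the CLT with the stated asymptotic variance.
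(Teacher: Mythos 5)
Your high‐level plan is the same as the paper's: invoke Proposition \ref{prop:unb} and Theorem \ref{thm:CLT} after bounding $\Var(D_0)$ via the exact Gaussian samples at level $0$ (your bound $1/(\tilde m N_D K)$ matches) and bounding $\E D_{l,l+1}^2$ via synchronous coupling and a strong-order consistency estimate for UBU-Approx, with the $l=0$ level handled separately through the OHO/approximate-gradient coupling and the Gaussian-versus-true-posterior diffusion distance. The geometry of the sum — $h_l^2$ variance, $\phi_N<4$, $h_0\sim N_D^{-3/2}$, giving $\O(d^2/(c_N N_D^2))$ — is also consistent with the paper's Corollaries \ref{cor:var_agubu_1} and \ref{cor:var_agubu_2}.

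There is, however, a genuine gap in your substep (ii). You write the local gradient error as a product, $M_1^s\max(\|x-x^*\|,\|\hat x-x^*\|)\,\|x-\hat x\|$, and then propose to close the argument with $L^2$ moment bounds on $\|x_k-x^*\|$ and $\|x_k-\hat x_k\|$. But to bound the $L^2$ norm of that product of two \emph{dependent} random quantities one needs Cauchy--Schwarz, which requires uniform-in-$k$ $L^4$ bounds on both factors — the $L^2$ bounds you cite do not suffice. Establishing such $L^4$ bounds is, in fact, the bulk of the technical work the paper does for this theorem: it requires the $L^4$ Lyapunov drift inequality for the $\mathcal{BU}$ update (the analogue of Proposition \ref{prop:full_gradient_L4}, whose proof is several pages), and, crucially, an \emph{interpolation argument} (Corollary \ref{cor:AG_min_distance}) to transfer that drift estimate from the exact-gradient chain to the approximate-gradient chain, since the approximate gradient is not a small perturbation of $\nabla U$ and the drift inequality does not carry over directly. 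You acknowledge the difficulty ("one must carefully exploit that the chain remains concentrated near $x^*$") but stop short of identifying the $L^4$ machinery and the interpolation device that make the argument close. A secondary imprecision: you estimate $\|x_k-\hat x_k\|_{L^2}=\O(h\sqrt{\tau\,d/m})$, whereas the paper's Lemma \ref{lem:mean:approx:L4} telescopes one-step displacements and obtains a factor $(\tau-1)$, not $\sqrt{\tau}$; for $\tau\sim N_D$ this changes the $N_D$-power in the intermediate bound, so you would need to recheck that the stepsize threshold $h_0\le C\,N_D^{-3/2}$ still absorbs the slack.
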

\begin{proof}
See Section H of the Appendix.
\end{proof}

\begin{remark}
To control the asymptotic variance of Theorem \ref{thm:stochgradUBUBU} and Theorem \ref{thm:AgradUBUBUL4} for large $d$ we would need to set $h_{0} < \mathcal{O}(d^{-2})$; the dimension dependency in this bound might not be sharp, and we did not observe such limitations in our simulations. UBU iterations with AG and SVRG gradient approximations no longer form a time homogeneous Markov chain (unless the state space is extended), so it is challenging to establish $\mathcal{O}(1/K)$ scaling in the bound on $\sigma^2_S$, like in Theorem \ref{thm:exactgradUBUBU}. If we select $h_{0} \sim \mathcal{O}(1/N^{3/2}_{D})$, then for large $N_{D}$, the total computational cost of the approximate and stochastic gradient methods scales like $\mathcal{O}(N)$ due to  Proposition \ref{prop:unb}.  This is a significant improvement over UBUBU with exact gradients, which has a computational cost of $\O(N_D N)$. A comparision is provided below in Table \ref{table:comp2}.
\end{remark}
\begin{table}[h!]
\begin{center}

\begin{tabular}{ |c|c| } 
\hline
\textbf{Algorithm} & \textbf{Computational Cost} \\
\hline
\UBUBU \ (Exact gradients) & $\O(N_D N)$ \\ 
\UBUBU \ (stochastic gradients)  & $\O(N)$  \\ 
\UBUBU \ (approximate gradients) & $\O(N)$ \\ 
\hline
\end{tabular}

\end{center}
\caption{Comparison of the computational cost of the various $\UBUBU$ methods in terms of $N$ and $N_D$.}
\label{table:comp2}
\end{table}
\begin{remark}
{Although we have used Gaussian approximation at level $0$ in Theorem \ref{thm:stochgradUBUBU} and \ref{thm:AgradUBUBUL4} as this allows us to obtain better computational complexity in terms of $N_D$, one could also consider using UBU discretizations with SVRG or approximate gradients starting from level $0$. This might be advantageous when the Gaussian approximation is not accurate. One could also consider different initial distributions. It is straightforward to adapt the proofs of Theorem \ref{thm:stochgradUBUBU} to show that even in such situations, under appropriate assumptions on the burn-in times, the UBUBU-SG and UBUBU-AG methods produce unbiased estimators with finite variance. The computational complexity would have polylogarithmic dependency on $N_{D}$ in such scenarios (rather than no dependency on $N_{D}$).}
\end{remark}
\section{Numerical results}
\label{sec:num}
In this section, we provide numerical examples to demonstrate the effectiveness of our unbiased estimator $\UBUBU$ with exact, approximate and stochastic gradients. We test this on a range of problems, including (i) a Gaussian example,
(ii) a multinomial regression problem on the MNIST dataset, and (iii) a Poisson regression model for soccer scores. These computations serve to highlight the comparisons of our method with RHMC, which we view as the gold standard. 
{We briefly describe the latter in Section I of the Appendix.
}
For RHMC, we have used a partial refreshment parameter of $\alpha=0.7$, which typically performed $50\%-70\%$ better than doing full velocity refreshment ($\alpha=0$). We choose parameters $E_L$ (expected number of leapfrog steps) and $h$ (stepsize) such that the acceptance rate is in the range $0.65-0.8$ (as recommended in \cite{Beskos2013}), and that
$E_L h\approx \frac{1}{\sqrt{m}}$ ($m$ is the minimal eigenvalue of the Hessian at the mode), in line with the theoretical results for optimal convergence of the continuous time RHMC process \cite{Lu2022PDMP}. We found that the effective sample sizes (ESS) obtained in all of our experiments are in line with the continuous convergence rates of \cite{Lu2022PDMP} scaled by the stepsize $h$, so we do not think that other parameter choices can significantly improve the performance of RHMC.


Our numerical experiments with unbiased estimators are specific to the $\UBU$ splitting method, as was the analysis. We also ran some preliminary numerical experiments with an unbiased version of $\BAOAB$, but found that $\UBUBU$ was more efficient in all cases.
We estimated the ESS values based on at least 60 independent runs of each simulation. For UBUBU, the number of parallel chains $N$ was chosen in the range $N\in [64,256]$. We set $c_N=1/16$, $\phi_N=4$ for UBUBU with exact gradients, $c_N=1/16$, $\phi_N=4$ for UBUBU-SG, and $C_N=1/16$, $\phi_N=2\sqrt{2}$ for UBUBU-Approx. The friction parameter $\gamma$ was set as $\gamma=\sqrt{m}$ in all experiments, where $m$ is the minimal eigenvalue of the Hessian of the log-posterior at the mode (MAP).

The effective sample sizes values were estimated by first computing the variances of the estimators by adding together the variances of the individual terms $D_{l,l+1}^{(r)}$, which were estimated based on independent samples pooled together from all parallel runs. The variance of the last term $\ol{S}_{L(N),L(N)+1}+\ldots+\ol{S}_{l_{\max},l_{\max}+1}$ was estimated based on the values of it from independent parallel runs (one sample each). 
We also had to estimate the variances of the test functions according to the posterior distribution, this was done by computing the expectations $\E_{\pi}(f^2)$ and $\E_{\pi}(f)$ separately based on the samples from UBUBU, and using $\Var_{\pi}(f)=\E_{\pi}(f^2)-(\E_{\pi}(f))^2$. In order to estimate the errors of the effective sample size (ESS) values, we have implemented a bootstrap method, where the independent experiment's results were resampled with replacement, and generate bootstrap standard deviations \cite{diciccio1996bootstrap}.

The Python code of our simulation, based on JAX \cite{jax2018github}, is available at \texttt{\url{https://github.com/paulindani/UBUBU_JAX}}.

\subsection{Gaussian target}
Here we consider a Gaussian target in $d$ dimensions whose precision matrix has eigenvalues $$1,1+\frac{\kappa-1}{d-1},1+\frac{2(\kappa-1)}{d-1},\ldots, \kappa.$$

Theorem 4 of \cite{Lee21lower} has shown that for some Gaussian targets with condition number $\kappa$, the inverse spectral gap of HMC taking $K$ leapfrog steps per iteration is at least $\O\left(\frac{\kappa \sqrt{d}}{K\sqrt{\log(d)}}\right)$. More recently, it has been shown that randomizing the integration time can substantially improve the performance of HMC \cite{RHMC}. In continuous time, sharp convergence results have been obtained for RHMC in \cite{Lu2022PDMP}.  Moreover, for Gaussians with condition number $\kappa$, RHMC can approximate the target distribution with $\O(\sqrt{\kappa} d^{1/4})$ queries under a warm-start assumption \cite{apers2022hamiltonian}.  In our preliminary experiments, RHMC significantly outperformed HMC on high-dimensional problems, so we only consider RHMC here.
 Figure \ref{fig:Gaussian-dim-dependence-norm} shows the number of gradient evaluations per ESS for the norm test function $f(x)=\|x\|$ as a function of the dimension $d$.
As we can see UBUBU does not show any dimension dependence, while the number of gradient evaluations per ESS scales as $\O(d^{1/4})$ for RHMC. In our experiments, UBUBU is 16-18 times more efficient than RHMC for $d=10^5$. {We have additional experiments for this example in Section B of the Appendix.}


 \begin{figure}[h!]
 \centering
 \includegraphics[width=0.495\linewidth]{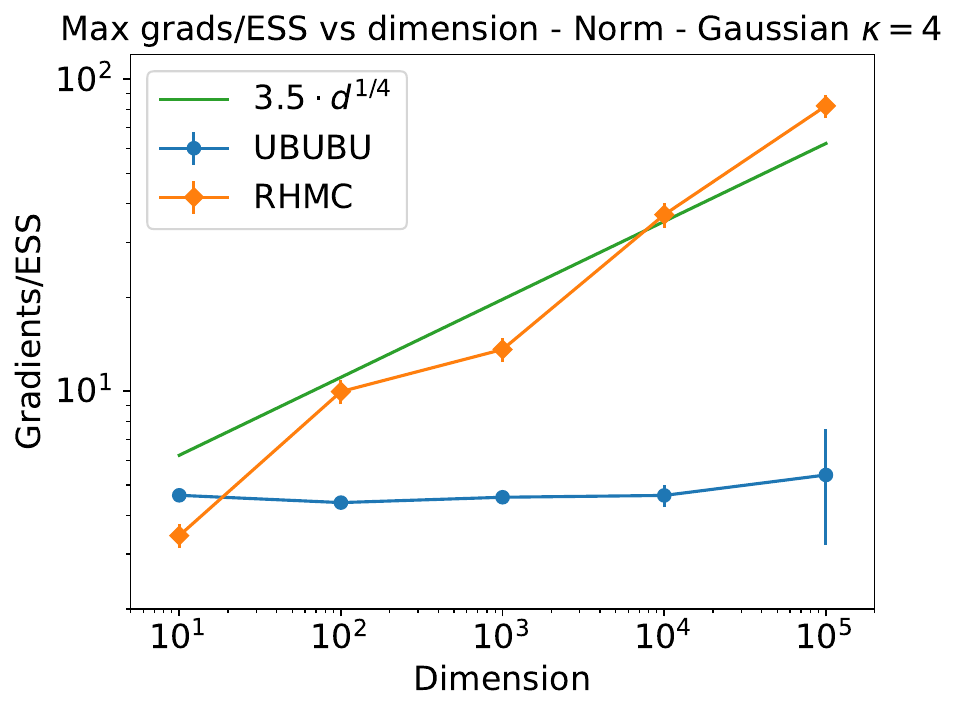} 
 \includegraphics[width=0.495\linewidth]{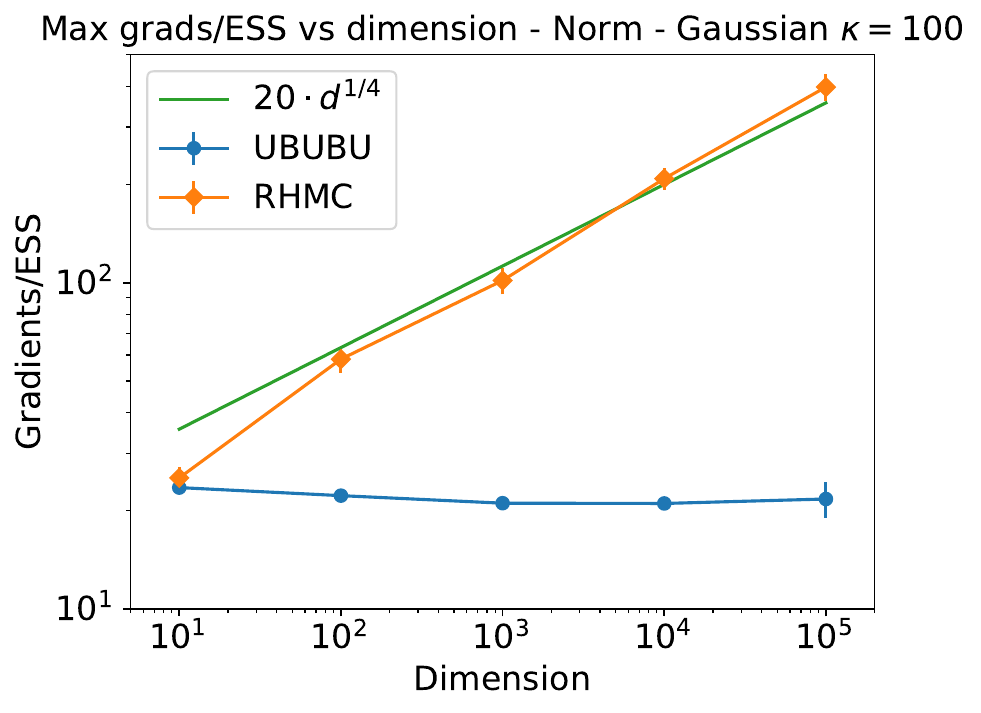} 
 \caption{Dimension dependence of gradients/ESS for test function $\|x\|$ for Gaussian targets.}
 \label{fig:Gaussian-dim-dependence-norm}
 \end{figure}


It is important to consider the dimension dependence of the variance of the original unbiased kinetic Langevin estimator based on Euler–Maruyama discretization presented in \cite{ruzayqat2022unbiased}. 
Due to the different estimator proposed there, the number of samples $N_{l,l+1}$ is random for every $l$, and the variance of the term equivalent to  $S_{l,l+1}=\frac{1}{N_{l,l+1}}\sum_{i=1}^{N_{l,l+1}}D_{l,l+1}^{(r)}$ will be proportional to $\E(D_{l,l+1}^2)$, not $\Var(D_{l,l+1})$ as in our case. For functions like the norm $f(x,v)=\|x\|$, in general, using the strong order one property of the Euler–Maruyama scheme (\cite{sanz2021wasserstein}),  $\E(D_{l,l+1})=\O(\sqrt{d} h_l)$ and $\E(D_{l,l+1}^2)=\O(d h_l^2)$. So the asymptotic variance of the final estimator is $\O(1+d h_0^2)$, and by choosing $h_0=\O(d^{-1/2})$, we expect that this will require $\O(d^{1/2})$ gradient evaluations per effective sample.

\subsection{Bayesian multinomial regression}

Our second numerical example is to consider a Bayesian multinomial regression (BMR) problem. BMR is a generalized linear regression model which estimates probabilities for $r$ different categories of dependent variable $y$ using a set of explanatory variables $x$. Here, provided $m$ classes, we let $q=(q^{1},\ldots q^{m})\in \R^{d}$ with $d=m d_o$ and $q^i\in \R^{d_o}$. The likelihood associated with the problem is given as
\begin{equation}
\label{eq:mr_like}
p(y^j|q) = \frac{\exp(\langle x^j, q^{y^j}\rangle)}{\sum_{1\le k\le m} \exp(\langle x^j, q^{k}\rangle)}.
\end{equation}
Our focus is on estimating a posterior distribution, where the posterior potential is given as
$U(q) = - \log(p_0(q)) - \sum^{N_{D}}_{k=1}\log\left(p(y^j|q) \right)$.
Here we chose $p_0$ as a Gaussian prior $p_0(q)=\frac{\exp(-\|q\|^2/(2\sigma_0^2))}{(\pi\sigma_0^2)^{d/2}}$.
In Section I of the Appendix, we show that the gradient-Lipschitz and strongly Hessian Lipschitz conditions (Assumptions \ref{assum:Lip} and \ref{assum:Hess_Lipschitz}) hold for this example.
We are interested in applying our BMR model to the MNIST dataset \cite{MNIST} about classifying handwritten digits from 0 to 9. 
The dataset contains 60,000 training data points and 10,000 test data
points where the images are of size 28 by 28 pixels. The covariate vectors $x^{j}$ are obtained by flattening the images into vectors taking values on the interval $[0,1]$, and adding a 1 in the end for the intercept term. Hence $d_0=28^2+1=785$, $m=10$, and $d=d_0m=7850$. We set the prior variance $\sigma_0^2=0.1$ (this was tested to provide good prediction performance).

For our numerical simulations, we will present two different scenarios: one without preconditioning, and one with preconditioning. By preconditioning, we mean that we obtain samples from a transformed potential $U(Ax)$ for some matrix $A$, which may have a better condition number than the original potential. It is easy to see that if $X$ follows a distribution with density proportional to $\exp(-U(x))$, then $X'=A^{-1}X$ has a density proportional to $\exp(-U(Ax))$. In addition to the coordinate test functions, we have also evaluated the efficiency of these methods for the posterior predictive probability of digits $0,1\ldots, 9$ on the test dataset consisting of 1000 images. We have only considered digits whose probability according to the model with parameters set at the MAP (maximum-a-posteriori) falls in the interval $[0.1,0.9]$ - there were 2210 such instances, these were used as our test functions. When a class has very low or very high probability, the posterior variance is very small, which makes calculating ESS values challenging.
 


{Our numerical simulations are presented in Section B in the Appendix. }
We have summarized our results in Table \ref{table:compeffMNIST}. When using preconditioning, the UBUBU approach significantly outperforms RHMC. The best performance is obtained by preconditioned UBUBU-Approx. {The last column shows the overall throughput in terms of minimal ESS/seconds amongst all components. Our JAX-based implementation ran on an RTX 5090 GPU, using \texttt{vmap} command to exploit parallel computational capacity. UBUBU-based methods significantly outperform RHMC in ESS/second. Preconditioned UBUBU-SG has a relatively low ESS/second, possibly due to the overhead of additional memory use.}


\begin{table}[H]
{
\begin{tabular}{ |c|c|c|c| } 
\hline
Algorithm & Test functions & Max grads/ESS ($\pm$ sd) & Min ESS/sec\\ 
\hline
RHMC & Coordinates & 365.94 ($\pm  26.64$) &  17.76\\
UBUBU & Coordinates & 332.06 ($\pm 2.69$)& 69.72\\
Preconditioned RHMC & Coordinates &24.39 ($\pm 1.05$) & 220.27 \\
Preconditioned UBUBU & Coordinates & 3.446 ($\pm 0.115$)& 6325.43 \\
Preconditioned UBUBU-SG & Coordinates &  0.827($\pm 0.021$)& 2567.71 \\
Preconditioned UBUBU-Approx & Coordinates &  0.192($\pm 0.0027$)&42758.04 \\
\hline
RHMC & Test set prediction & 245.80 ($\pm 19.65$)& 26.45\\
UBUBU & Test set prediction & 215.69 ($\pm 2.82$)& 107.34\\
Preconditioned RHMC & Test set prediction & 29.18 ($\pm 1.69$) &184.14  \\
Preconditioned UBUBU & Test set prediction & 3.62($\pm 0.094$) &6014.28 \\
Preconditioned UBUBU-SG & Test set prediction& 2.64 ($\pm 0.141$) &  803.716\\
Preconditioned UBUBU-Approx & Test set prediction& 0.916($\pm 0.012$) & 8961.91\\
\hline
\end{tabular}}
\caption{Computational efficiency for MNIST dataset. Standard deviations evaluated by bootstrapping.}
\label{table:compeffMNIST}
\end{table}

\subsection{Poisson regression model}
\label{subsec:BPM}

Our final example is a  Poisson regression model for predicting soccer scores taken from \cite{koopman2015foot}.

Let $g=1,\ldots, G$ be the index of games. Let $S_g^{H}$ denote the number of goals scored by the home team at game $g$, and let $S_g^{A}$ denote the number of goals scored by the away team. The independent Poisson model \cite{maher1982modelling} assumes that these scores are distributed as 
$S_g^{H}\sim \text{Poisson}(\lambda_{g}^{H}), \quad S_g^{A}\sim \text{Poisson}(\lambda_{g}^{A})$, conditionally independently given the rates $\lambda_{g}^{H}$ and  $\lambda_{g}^{A}$. In our implementation, the rates are connected to the linear predictors $\eta_{g}^{H}$ and $\eta_{g}^{A}$ using the function $\text{softplus}(x)=\log(1+\exp(x))$, 
i.e. $\lambda_{g}^{A}=\text{softplus}(\eta_{g}^{A}), \quad \lambda_{g}^{H}=\text{softplus}(\eta_{g}^{H}).$
This function is Lipschitz and also gradient Lipschitz, which is desirable given our theory. Although this is less frequently used in the literature than the log link function, it was shown to be more robust and less sensitive to outliers \citep{wiemann2021using, weiss2022softplus}. 
The linear predictors are modelled based on a random effect model with time-dependent attacking and depending strengths for each team. Let $w(g)$ denote the week of game $g$, then we set
\begin{align}\label{eq:etaAHdef}
\eta_{g}^{H}= a_{\text{home.team(g)},w(g)}+d_{\text{away.team(g)},w(g)},\quad \eta_{g}^{A}= a_{\text{away.team(g)},w(g)}+d_{\text{home.team(g)},w(g)}.
\end{align}
Let $\mathbf{a}$ be all attacking strengths of all teams over the whole period, and $\mathbf{d}$ denote all defending strengths. Then the log-likelihood $ \log(p(\mathbf{a},\mathbf{d}))$ is of the form
\begin{align*}
C(S_1^H,\ldots, S_{G}^H,S_{1}^A,\ldots, S_G^H)+\sum_{g=1}^{G}\left(-\lambda_g^H+ S_g^H\log(\lambda_g^H)-\lambda_g^A+ S_g^H\log(\lambda_g^H)\right),
\end{align*}
which can be written as a function of $\mathbf{a}$ and $\mathbf{d}$ using \eqref{eq:etaAHdef}, and the linear predictors. 
We used a Gaussian random walk prior for the attacking/defending strengths $a_{\text{team},w}$ and $d_{\text{team},w}$, together with a weak Gaussian prior on every attacking and defending strength. Let $\mathcal{T}$ denote the set of teams during the whole period considered (teams change from season to season due to relegation/promotion), then the overall log prior is of the form
\begin{align*}\log p_0(\mathbf{a},\mathbf{d})&=C(\sigma,\sigma_0)
-\sum_{\text{team}\in \mathcal{T}}\left(\sum_{w=w(1)}^{w(G)}\frac{a_{\text{team},w}^2}{2\sigma_0^2}+\sum_{w=w(1)}^{w(G)-1}  \frac{(a_{\text{team},w+1}-a_{\text{team},w})^2}{2\sigma^2}\right)\\
&-\sum_{\text{team}\in \mathcal{T}}\left(\sum_{w=w(1)}^{w(G)}\frac{d_{\text{team},w}^2}{2\sigma_0^2}+\sum_{w=w(1)}^{w(G)-1}  \frac{(d_{\text{team},w+1}-d_{\text{team},w})^2}{2\sigma^2}\right)
,\end{align*}
We set $\sigma^2=0.01$ (this means a strong correlation for about two years), and $\sigma_0^2=10$ (weakly informative prior). We considered 20 years of Premier League data (7600 games) from 19/08/2000 until 26/07/2020. Our model has $d=89526$ parameters, and the condition number of the Hessian at the mode is $\kappa\approx 4\cdot 10^3$. 

We have implemented RHMC, UBUBU and UBUBU-Approx with $\tau=10$ for this model. In the UBUBU-Approx algorithm, the target at level $0$ was chosen as the Gaussian approximation (with mean $x^*$, and precision matrix $\nabla^2 U(x^*)$), meaning that gradient evaluations were only used from level $1$ onwards. The test functions were chosen as $f(x)=x_1, \ldots, f(x)=x_d$. {Our numerical simulations are presented in Section B in the Appendix. 
However, we have summarized our results for this dataset in Table \ref{table:compeffPoisson}.
As we can see, UBUBU uses approximately 14 times fewer gradient evaluations per effective sample than RHMC, and UBUBU-Approx uses approximately 5000 times fewer gradient evaluations than RHMC. In the case of UBUBU-Approx, the vast majority of the runtime is spent on generating the Gaussian samples at level 0 (these are generated using Cholesky decomposition, and then solving sparse lower triangular matrix-based linear systems). Generating a Gaussian sample this way takes approximately 20 times longer than evaluating a gradient of the log posterior. Further speedups might be possible as the sparse solvers on GPUs become more mature.
\begin{table}[H]
{
\begin{tabular}{ |c|c|c|c| } 
\hline
Algorithm & Test functions & Max grads/ESS ($\pm sd$) & Min ESS/sec\\ 
\hline
RHMC & Coordinates & 1649.74 ($\pm 68.66$) & 57.26\\
UBUBU & Coordinates & 116.73 ($\pm 1.41$) & 1082.61\\
Approx. UBUBU & Coordinates & 0.328 ($\pm 0.0098$) & 9985.09\\
\hline
\end{tabular}}
\caption{Computational efficiency for Poisson regression model. Standard deviations evaluated by bootstrapping.}
\label{table:compeffPoisson}
\end{table}
}

\section{Conclusion}
\label{sec:conc}
In this article, we presented a new unbiased estimator which can exploit high strong-order numerical integrators for underdamped Langevin dynamics. We refer to our estimator as $\UBUBU$ which does not rely on the Metropolis acceptance/reject step. Our estimator is constructed using a telescoping sum for different discretization levels \cite{rhee2015unbiased,glynn2014exact}. We were able to gain various theoretical insights, which include showing unbiasedness and finite variance, a central limit theorem, and asymptotic and non-asymptotic bounds on the variance for three algorithms, based on exact, stochastic, and approximate gradients. We have studied the behaviour of our algorithm for product target distributions and shown that for a large class of test functions, it has dimension-independent computational complexity. For stochastic gradients, we also considered the dependency on the size of the data in the big data limit and shown that our method is very efficient in such situations. The proof of these results relies on Wasserstein contraction results for the UBU dynamics. We provided numerical experiments verifying our theory and demonstrating the performance gains over other well-known methods such as randomized HMC. {We have considered a range of models including an MNIST multinomial regression and a Poisson regression model.} 

In terms of future work, there are various directions which could be taken up. One of them is related to exploiting higher-order schemes, which were provided in \cite{Foster2023,LyonsHighOrder}. Numerical results indicate strong orders of up to 4. \cite{LyonsHighOrder} has proven strong order $3/2,5/2$ and $3$ under gradient Lipschitz, Hessian Lipschitz and third-order Lipschitz assumptions, respectively. However, the dimensional dependence obtained under each of these assumptions has not been shown to improve on the UBU scheme in \cite{sanz2021wasserstein}. Furthermore, such splitting schemes typically require more than one gradient evaluation per step, unlike our strategy.
In a different direction, one could consider integrators adapted to potentials that do not have the gradient-Lipschitz property (such as in the case of sparsity-inducing priors \cite{park2008bayesian} or log link functions).
Other potential directions are nested expectations \cite{UnbiasedInt} or the setting where one does not assume convexity \cite{Eberle2019,majka2020,chakreflection,schuh2024}.


\begin{funding}
The authors acknowledge the support of the Engineering and Physical Sciences Research Council Grant EP/S023291/1 
(MAC-MIGS Centre for Doctoral Training).
NKC is supported by an EPSRC-UKRI AI for Net Zero Grant: ``{Enabling CO2 Capture And Storage Projects Using AI}", (Grant EP/Y006143/1).
NKC is supported by a City University of Hong Kong Start-up Grant, project number: 7200809. DP is supported by a Nanyang Technological University Start-up Grant, project number: 024968-00001.

\end{funding}

\begin{appendix}

\section{Outline of results and notation}
\label{supp:appendix:outline}

The beginning of the supplementary material is devoted to providing a road-map for our results. 
{In Section \ref{supp:app:alg} we state our two main algorithms of UBUBU and SG-UBUBU. We also provide the specific form of coupling for the approximate-gradient UBUBU, while presenting our figures associated with the numerical experiments in the main file.}
In Section \ref{supp:sec:Appendix:proofs:multilevel}, which follows, we provide variance estimates of the full gradient multilevel $\UBUBU$ method. The approach we use is to bound $\Var{\left(D_{0}\right)}$ using Theorem 2 of \cite{Joulin2010} and to use the strong error estimates of \cite{sanz2021wasserstein} for $\UBU$ to estimate $\Var{\left(D_{l,l+1}\right)}$. \cite{Joulin2010} requires Ricci curvature of the $\UBU$ Markov chain and extending \cite{sanz2021wasserstein} to global strong error estimates in Section \ref{supp:sec:local_to_strong}  requires Wasserstein convergence. We provide this in Section \ref{supp:Sec:ProofConvergenceUBU} in the full gradient setting using the methods of \cite{leimkuhler2023contractiona}. We provide $L^{4}$ Lyapunov drift inequalities in the full gradient setting. We can then bound the average distance to the minimizer non-asymptotically, the key result needed to get complexity bounds in the big data setting. We also provide the proof of the central limit theorem of the estimator in Section \ref{supp:sec:Appendix:proofs:multilevel}.

In Section \ref{supp:sec:Appendix_UBUBU_variance_bnd_exact} we provide variance bounds and estimates on our estimator UBUBU with exact gradients. 
In Section \ref{supp:sec:initial_conditions} we describe the initialization and the OHO scheme for the approximate and stochastic gradient methods and some estimates of the distance between the initial measure and the target measure. We then use the techniques of \cite{Hu21optimal} to provide global strong error estimates of the SVRG method. We combine and extend the techniques of \cite{sanz2021wasserstein} and \cite{Hu21optimal} to prove new non-asymptotic stochastic gradient error bounds for the $\UBU$ integrator. 
From this we extend the results of Section \ref{supp:sec:Appendix_UBUBU_variance_bnd_exact} to providing estimates of the variance of our multilevel estimator in the SVRG stochastic gradient setting in Section \ref{supp:sec:Appendix_UBUBU_variance_bnd_svrg}. 

We further develop bounds for our new approximate gradient $\UBU$ method in Section \ref{supp:sec:Appendix_UBUBU_variance_bnd_approx_grad} using the same techniques, in the approximate gradient setting. In general, Appendices \ref{supp:sec:Appendix_UBUBU_variance_bnd_svrg}, \ref{supp:sec:Appendix_UBUBU_variance_bnd_approx_grad} follow similarly where one requires bounds on the variance of the quantity $D_0$ and $D_{l,l+1}$. However, we use an interpolation argument to improve the results in Section \ref{supp:sec:Appendix_UBUBU_variance_bnd_approx_grad} as opposed to the methods used in Section \ref{supp:sec:Appendix_UBUBU_variance_bnd_svrg}. We also use some classical results from the theory of ODEs to establish bounds between continuous diffusions to establish the variance of $D_{0,1}$ in Section \ref{supp:sec:Appendix_UBUBU_variance_bnd_svrg} and \ref{supp:sec:Appendix_UBUBU_variance_bnd_approx_grad}.
Finally, we provide some auxiliary results in Section \ref{supp:app:disc_bounds}.

\begin{itemize}
\item We let $z_{0:k}=\left(z_0,z_1,\ldots, z_k\right)$ denote a sequence of variables.
\item Let $0_d$ denote the $d$-dimensional vector of zeros. 
\item Let $I_d$ denote the $d$-dimensional identity matrix. 
\item Let $C$ denote an absolute constant (whose value may differ in each proposition or theorem).
\item Let $C(\mathrm{var}_1,\ldots,\mathrm{var}_n)$ denote a constant that is a function of variables $\mathrm{var}_1,\ldots,\mathrm{var}_n$ (this function may differ in each proposition or theorem).
\item Let $G$, $SG$ and $A$ denote an abbreviation for gradient, stochastic gradient and ``approximate gradient".
\item We let $l \in \mathbb{R}^+$ denote the level of discretization with respect to our discretized ULD, with stepsize $h_l$ defined at each level.
\item Let $D_0$ denote the empirical average of samples at level $0$. 
\item Let $D_{l,l+1}$ denote the difference of empirical averages of samples at levels $l+1$ and $l$, which are generated jointly via a synchronous coupling.
\item $N$ denotes the number of samples taken at level $0$.
\item \textcolor{black}{$N_{l,l+1}$ denotes the number of samples taken from the coupling of levels $l$ and $l+1$. }
\item $N_D$ is the size of the dataset (number of terms in potential $U(x)=U_0(x)+\sum_{i=1}^{N_D} U_i(x)$).
\item Let $z_k=(x_k,v_k)$ denotes step $k$ in a numerical discretization of kinetic Langevin dynamics with time step $h$ (specified each time this notation is used). Similarly, $Z_t$ is the solution of the continuous kinetic Langevin dynamics initialized at the invariant measure with synchronously coupled Brownian motion. $Z^k=Z_{kh}$ denotes the value of the continuous time process at the same time as $z_k$.
\item $\hat{x}_k$ denotes the point where the last gradient is evaluated for SVRG
\item $\|z\|_{L^{2}} := \left(\mathbb{E}\|z\|^{2}\right)^{1/2}$ and $\|z\|_{L^{2},a,b} := \left(\mathbb{E}\|z\|^{2}_{a,b}\right)^{1/2}$.
\end{itemize}

\newpage

\section{Algorithms and additional numerics}
\label{supp:app:alg}

\subsection{Full gradient UBUBU}
\par\noindent
\begin{algorithm}[H]
     \footnotesize
     \begin{algorithmic}[1]
    \State \textbf{Input:} \begin{itemize}
 \item Maximum stepsize $h_0$.
 \item Friction parameter $\gamma>0$.
  \item Initial distribution $\mu_0$ on $\R^d\times \R^d$ for $l\ge 0$.
  \item Potential function $U:\R^d\to \R$ of target distribution.
  \item Burn-in length parameters $B_0$ and $B$.
  \item Number of samples parameter $K$.
  \item Number of parallel chains parameters $N$, $c_N$ and $\phi_N$.
  \item Richardson extrapolation parameter  $c_{R}\in [0,\phi_N^{-1/2})$ (default value $c_R=\frac{1}{4}$).
  \item Test function $f$.
 \end{itemize}
    \State \textbf{Averages from level $0$:} 
    \For {$r=1,\ldots, N$}    
    \State Sample $z_{-B_0}^{(0,r)},\ldots, z_K^{(0,r)}$ from $\nu_0$.
    \State Compute $D_0^{(r)}$ using the samples $z_{1}^{(0,r)},\ldots, z_K^{(0,r)}$.
    \EndFor     
\State Compute $S_{0}:=\frac{1}{N}\sum_{i=1}^{N}D_0^{(r)}$.
\State \textbf{Generate number of chains}:
\State Sample $N_{l,l+1}$, let $l_{\max}=\max\{l: N_{l,l+1}>0\}$. 
\State \textbf{Averages of differences $D_{l,l+1}$ from  $l=0,\ldots, l_{\max}$:} 
\For {$l=0,\ldots, L(N)-1$}
\For {$r=1,\ldots, N_{l,l+1}$}
\State Sample
$z_{-B_l}^{(l,l+1,r)},\ldots, z_{K}^{(l,l+1,r)}, z'^{(l,l+1,r)}_{-B_{l+1}},\ldots, z'^{(l,l+1,r)}_{K}$
according to $\nu_{l,l+1}$.
\State Compute $D_{l,l+1}^{(r)}$ using   $z_{1}^{(r,l,l+1)},\ldots, z_{K}^{(r,l,l+1)}, z'^{(r,l,l+1)}_{1},\ldots, z'^{(r,l,l+1)}_{K}$.
\EndFor
\EndFor
\State Sample
$\left\{z_{-B_l}^{(l)},\ldots, z_{K}^{(l)}\right\}_{l(N)\le l\le l_{\max}+1}$
according to $\nu_{L(N):l_{\max}+1}$.
\For {$l=L(N),\ldots, l_{\max}$}
\State If $N_{l,l+1}=1$, let $\left({z'}_{1}^{(1,l,l+1)},\ldots, {z'}_{K}^{(1,l,l+1)}\right):=\left(z_{1}^{(l+1)},\ldots, z_{K}^{(l+1)}\right)$.
\State If $N_{l,l+1}=1$, let $\left(z_{1}^{(1,l,l+1)},\ldots, z_{K}^{(1,l,l+1)}\right):=\left(z_{1}^{(l)},\ldots, z_{K}^{(l)}\right)$.
\State If $N_{l,l+1}=1$, compute $D_{l,l+1}^{(1)}$. 
\State Compute $S_{l,l+1}$.
\EndFor
\State Compute $S(c_R)$.
 \State \textbf{Output:} 
 \State Unbiased estimator $S(c_R)$, 
 \State Samples  $z_{1}^{(0,r)},\ldots, z_K^{(0,r)}$ for parallel chains $1\le r\le N$,
 \State Samples $z_{1}^{(l,l+1,r)},\ldots, z_{K}^{(l,l+1,r)}, z'^{(r,l,l+1)}_{1},\ldots, z'^{(r,l,l+1)}_{K}$ for $0\le l\le l_{\max}$,  chains $1\le r\le N_{l,l+1}$.     
\end{algorithmic}

 	\caption{Unbiased-$\UBU$ ($\UBUBU$)}
 	\label{supp:alg:UBUBU}
  
 \end{algorithm}
\newpage
\subsection{Stochastic gradient UBUBU}
\par\noindent
\begin{algorithm}
     \footnotesize 
     \begin{algorithmic}[1]
    \State \textbf{Input:} \begin{itemize}
 \item Maximum stepsize $h_0$.
 \item Friction parameter $\gamma>0$.
 \item Individual potential terms $(U_i)_{0\le i\le N_D}$.
  \item Minimizer $x^*$ of Potential $U(x)$ and its Hessian $H^*=\nabla^2 U(x^*)$. 
  \item Batch size parameter $N_{b}$ (related to $\tau=\lceil N_D/N_b\rceil$).
  \item Burn-in length parameters $B_0$ and $B$.
  \item Number of samples parameter $K$.
  \item Number of parallel chains parameters $N$, $c_N$ and $\phi_N$.
  \item Richardson extrapolation parameter  $c_{R}\in [0, \phi_N^{-1/2})$ (default value $c_R=\frac{1}{2\sqrt{2}}$).
  \item Test function $f$.
 \end{itemize}
    \State \textbf{Samples from Gaussian approximation at level $0$:} 
    \State Sample $N K$ i.i.d. samples $z^{(0)}_{1},\ldots, z^{(0)}_{NK}$ from $\mu_{G}$. 
    \State Compute $S_{0}:=\frac{1}{NK}\sum_{i=1}^{NK} f(z^{(0)}_{i})$.
\State \textbf{Generate number of chains}:
\State Sample $(N_{l,l+1})_{l\ge 0}$, let $l_{\max}=\max\{l: N_{l,l+1}>0\}$. 
\State \textbf{Averages of differences $D_{l,l+1}$ from  $l=0,\ldots, l_{\max}$:} 
\For {$l=0,\ldots, L(N)-1$}
\For {$r=1,\ldots, N_{l,l+1}$}
\State Sample $z^{(l,l+1,r)}_{-B_{l}},\ldots, z_{K}^{(l,l+1,r)}, z'^{(l,l+1,r)}_{-B_{l+1}},\ldots, z'^{(l,l+1,r)}_{K}$
from $\nu_{l,l+1}^{SG}$.
\State Compute $D_{l,l+1}^{(r)}$ using   $z_{1}^{(r,l,l+1)},\ldots, z_{K}^{(r,l,l+1)}, z'^{(r,l,l+1)}_{1},\ldots, z'^{(r,l,l+1)}_{K}$.
\EndFor
\State Compute $S_{l,l+1}$.
\EndFor
\State Sample
$\left\{z_{-B_l}^{(l)},\ldots, z_{K}^{(l)}\right\}_{l(N)\le l\le l_{\max}+1}$
according to $\nu^{SG}_{L(N):l_{\max}+1}$.
\For {$l=L(N),\ldots, l_{\max}$}
\State If $N_{l,l+1}=1$, let $\left({z'}_{1}^{(1,l,l+1)},\ldots, {z'}_{K}^{(1,l,l+1)}\right):=\left(z_{1}^{(l+1)},\ldots, z_{K}^{(l+1)}\right)$.
\State If $N_{l,l+1}=1$, let $\left(z_{1}^{(1,l,l+1)},\ldots, z_{K}^{(1,l,l+1)}\right):=\left(z_{1}^{(l)},\ldots, z_{K}^{(l)}\right)$.
\State If $N_{l,l+1}=1$, compute $D_{l,l+1}^{(1)}$.
\State Compute $S_{l,l+1}$.
\EndFor
\State Compute $S(c_R)$.
 \State \textbf{Output:} 
 \State Unbiased estimator $S(c_R)$, 
 \State Samples  $z_{1}^{(0)},\ldots, z_{NK}^{(0)}$,
 \State Samples $z_{1}^{(l,l+1,r)},\ldots, z_{K}^{(l,l+1,r)}, z'^{(r,l,l+1)}_{1},\ldots, z'^{(r,l,l+1)}_{K}$ for $0\le l\le l_{\max}$,  chains $1\le r\le N_{l,l+1}$.
\end{algorithmic}

 	\caption{Unbiased-$\UBU$ with stochastic gradients ($\UBUBUSG$)}
 	\label{supp:alg:UBUBUSG}
  
 \end{algorithm}

\newpage

\subsection{Approximate gradient UBUBU}

Two chains evolving according to approximate gradients with step sizes $h$ and $h/2$ can be coupled as follows. {First $\left(\xi^{(i)}_{k+1}\right)^{8}_{i = 1} \sim \mathcal{N}(0_{d},I_{d}) \text{ for all } i = 1,...,8$, then}
\begin{equation}\label{supp:eq:Phh2UBUQ}
\begin{split}
&\left(\ol{x}_{k},\ol{v}_{k}\right)=
\mathcal{U}^{2}\left(x_{k},v_{k},h/2,\xi_{k+1}^{(1)},\xi_{k+1}^{(2)},\xi_{k+1}^{(3)},\xi_{k+1}^{(4)}\right),\\
&\hat{x}_k=\ol{x}_{\lfloor k/\tau\rfloor \tau}\\
&\left(x_{k+1},v_{k+1}\right) =\mathcal{U}^{2}\left(\mathcal{B_{Q}}\left(\left.\ol{x}_{k},\ol{v}_{k},h\right|\hat{x}_{k}\right),h/2,\xi_{k+1}^{(5)},\xi_{k+1}^{(6)},\xi_{k+1}^{(7)},\xi_{k+1}^{(8)}\right).\\
&\left(\ol{x}'_{k},\ol{v}'_{k}\right)=
\mathcal{U}\left(x'_{k},v'_{k},h/2,\xi^{(1)}_{k+1},\xi^{(2)}_{k+1}\right),\\
&\hat{x}'_k=\ol{x}'_{\lfloor 2k/\tau\rfloor \tau/2}\\
&\left(x'_{k+1/2},v'_{k+1/2}\right)=\mathcal{U}\left(\mathcal{B_{Q}}\left(\left.\ol{x}'_{k},\ol{v}'_{k},h/2,v\right|\hat{x}'_{k}\right), h/4,\xi_{k+1}^{(3)},\xi_{k+1}^{(4)}\right),\\
&\left(\ol{x}'_{k+1/2},\ol{v}'_{k+1/2}\right)=
\mathcal{U}\left(x'_{k+1/2},v'_{k+1/2},h/2,\xi^{(5)}_{k+1},\xi^{(6)}_{k+1}\right),\\
&\hat{x}'_{k+1/2}=\ol{x}'_{\lfloor (2k+1)/\tau\rfloor \tau/2}\\
&\left(x'_{k+1},v'_{k+1}\right)= \mathcal{U}\left(\mathcal{B_{Q}}\left(\left.\ol{x}'_{k+1/2},\ol{v}'_{k+1/2},h/2\right|\hat{x}'_{k+1/2}\right),h/4,\xi_{k+1}^{(7)},\xi_{k+1}^{(8)}\right),\\
\end{split}
\end{equation}
Let $P_{h,h/2}^{A}$ denote the time inhomogenous Markov kernel describing the evolution of $(x_k,\hat{x}_k,v_{k},x_k',\hat{x}'_k,v_k')$ according to the coupled approximate gradient steps \eqref{supp:eq:Phh2UBUQ}. 

As with stochastic gradients, will also need to couple one chain with step size $h$ running OHO on the Gaussian approximation $\mu_G$, and another chain based on UBU with approximate gradients on the target with step size $h/2$.
{First $\left(\xi^{(i)}_{k+1}\right)^{8}_{i = 1} \sim \mathcal{N}(0_{d},I_{d}) \text{ for all } i = 1,...,8$, then}
\begin{equation}\label{supp:eq:Phh2OHOUBUQ}
\begin{split}
&\left(\ol{x}_{k},\ol{v}_{k}\right) =\mathcal{O}^{2}\left(x_{k},v_{k},h/2,\xi_{k+1}^{(1)},\xi_{k+1}^{(2)},\xi_{k+1}^{(3)},\xi_{k+1}^{(4)}\right)\\
&\left(x_{k+1},v_{k+1}\right) =
\mathcal{O}^{2}\left(\mathcal{H}_*\left(
\ol{x}_{k},\ol{v}_{k},h\right),h/2,\xi_{k+1}^{(5)},\xi_{k+1}^{(6)},\xi_{k+1}^{(7)},\xi_{k+1}^{(8)}\right),\\
&\left(\ol{x}'_{k},\ol{v}'_{k}\right)=
\mathcal{U}\left(x'_{k},v'_{k},h/4,\xi^{(1)}_{k+1},\xi^{(2)}_{k+1}\right)\\
&\hat{x}'_k=\ol{x}'_{\lfloor 2k/\tau\rfloor \tau/2}\\
&\left(x'_{k+1/2},v'_{k+1/2}\right)=
\mathcal{U}\left(\mathcal{B_{Q}}\left(\left.\ol{x}'_{k},\ol{v}'_{k},h/2\right|\hat{x}'_{k}\right), h/4,\xi_{k+1}^{(3)},\xi_{k+1}^{(4)}\right)\\
&\left(\ol{x}'_{k+1/2},\ol{v}'_{k+1/2}\right)=\mathcal{U}\left(x'_{k+1/2},v'_{k+1/2},h/4,\xi_{k+1}^{(5)},\xi_{k+1}^{(6)}\right)\\
&\hat{x}'_{k+1/2}=\ol{x}'_{\lfloor (2k+1)/\tau\rfloor \tau/2}\\
&\left(x'_{k+1},v'_{k+1}\right)= 
\mathcal{U}\left(\mathcal{B_{Q}}(\ol{x}'_{k+1/2},\ol{v}'_{k+1/2},h/2|\hat{x}'_{k+1/2}),h/4,\xi_{k+1}^{(7)},\xi_{k+1}^{(8)}\right)\\
\end{split}
\end{equation}
Let $P_{h,h/2}^{OHO/A}$ denote the time inhomogenous Markov kernel describing the evolution of $(x_k,v_{k},x_k',\hat{x}'_k,v_k')$ according to the steps \eqref{supp:eq:Phh2OHOUBUQ}. 

We define $\nu_{0,1}^A$ as joint distribution of $z_{-B_0}^{(0,1)},\ldots, z_{K}^{(0,1)}, z'^{(0,1)}_{-B_{1}},\ldots, z'^{(0,1)}_{K}$ that is similar to the $\nu_{0,1}^{SG}$ coupling for UBUBU-SG, but using inhomogenous Markov kernels $P_{h,h/2}^{OHO/A}$ instead of $P_{h,h/2}^{OHO/SVRG}$. Similarly, we let 
$\nu_{l,l+1}^{A}$ denote the joint distribution of $z_{-B_l}^{(l,l+1)},\ldots, z_{K}^{(l,l+1)}, z'^{(l,l+1)}_{-B_{l+1}},\ldots, z'^{(l,l+1)}_{K}$, defined analogously to $\nu_{l,l+1}^{SG}$ for UBUBU-SG, but using $P_{h,h/2}^{OHO/A}$ and $P_{h,h/2}^{A}$ in place of $P_{h,h/2}^{OHO/SVRG}$ and $P_{h,h/2}^{SVRG}$.
Similarly, we define $\nu_{l(N):l_{\max}}^{A}$ as the synchronous coupling of levels $L(N),\ldots, l_{\max}+1$ analogously to $\nu_{l(N):l_{\max}}$ and $\nu_{l(N):l_{\max}}^{SG}$ - here we only need to couple the Gaussian variables driving the dynamics, which can be done in the same way as in the deterministic gradient case.

We choose $c_{l,l+1}$ as
\begin{equation}\label{supp:eq:cllp1approx}
c_{l,l+1}=c_N\phi_N^{-l}\text{ for }l\in \mathbb{N}.
\end{equation}

The UBUBU-Approx method follows similar steps as in Algorithm \ref{supp:alg:UBUBUSG}, but it uses the couplings $\nu_0^{A}$ and $\nu_{l,l+1}^{A}$ instead of $\nu_0^{SG}$, and $\nu_{l,l+1}^{SG}$. 
In terms of input, unlike in Algorithm \ref{supp:alg:UBUBUSG}, we do not use individual potential terms $U_i(x)$ and batch size $N_b$, but require gradient calculation frequency $\tau$. We recommend setting the Richardson extrapolation parameter $c_{R}=\frac{1}{2}$ in this case (as this approximate gradient scheme has strong order 1).

\subsection{Gaussian target}



Figure \ref{supp:fig:Gaussian-dim-dependence} shows the maximum number gradient evaluations per effective sample (ESS) among all components $f(x)=x_i$ for $1\le i\le d$ as a function of the dimension $d=10,10^2,\ldots, 10^5$, for condition number $\kappa\in \{4,100\}$. 

Figure \ref{supp:fig:Gaussian} presents the histograms of the number of gradient evaluations per effective sample size (ESS) amongst test functions $f(x)=x_1,\ldots, f(x)=x_d$, when comparing UBUBU with RHMC. This experiment is for a specific dimensions size of $d=10^5$ and condition numbers $\kappa \in \{4,100\}$.  As we can observe, UBUBU outperforms RHMC in terms of gradient evaluations per ESS. 

  \begin{figure}[h!]
  \centering
  \includegraphics[width=0.495\linewidth]{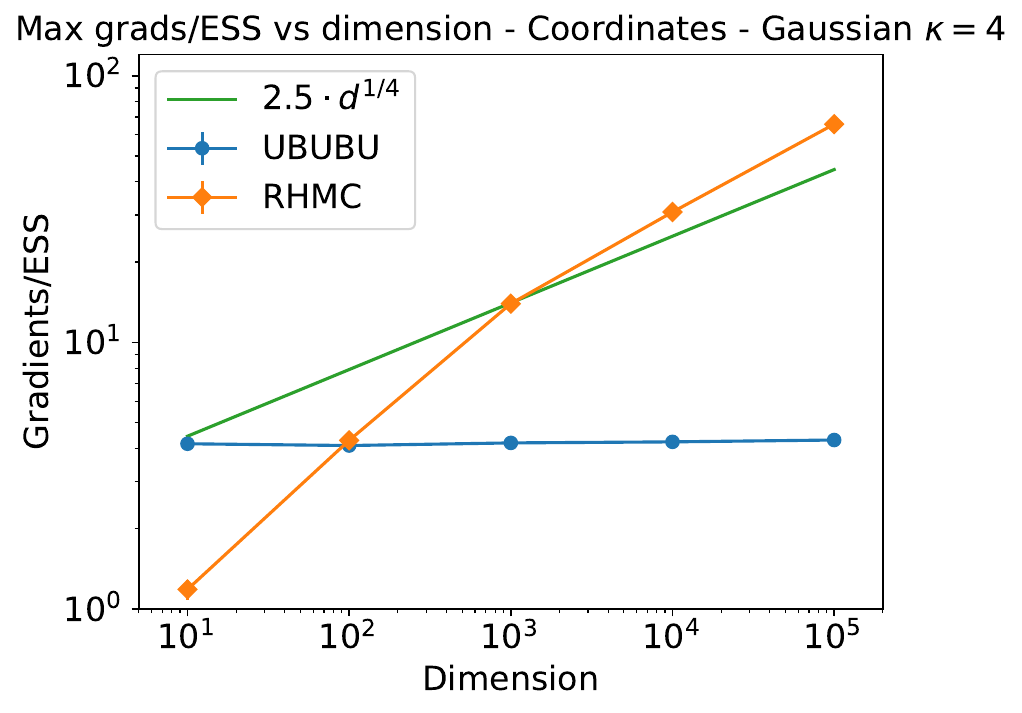} 
  \includegraphics[width=0.495\linewidth]{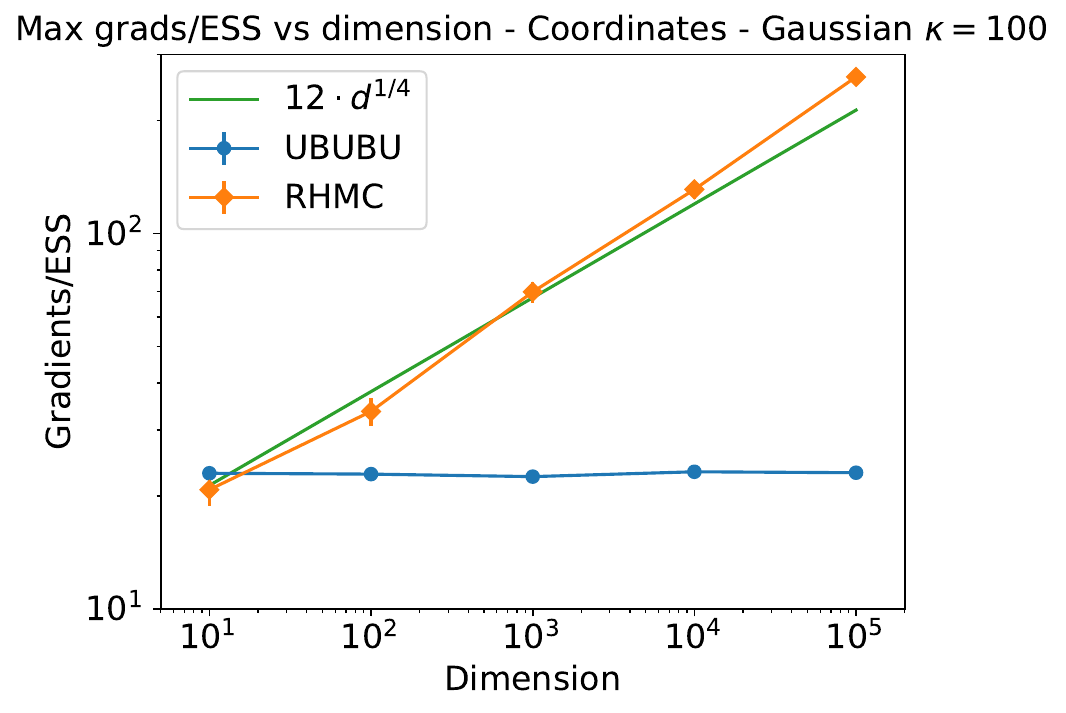} 
  \caption{Dimensional dependence of gradients/ESS over all components for Gaussian targets. Error bars represent bootstrap confidence intervals.}
  \label{supp:fig:Gaussian-dim-dependence}
  \end{figure}

  \begin{figure}[h!]
  \centering
  \includegraphics[width=0.495\linewidth]{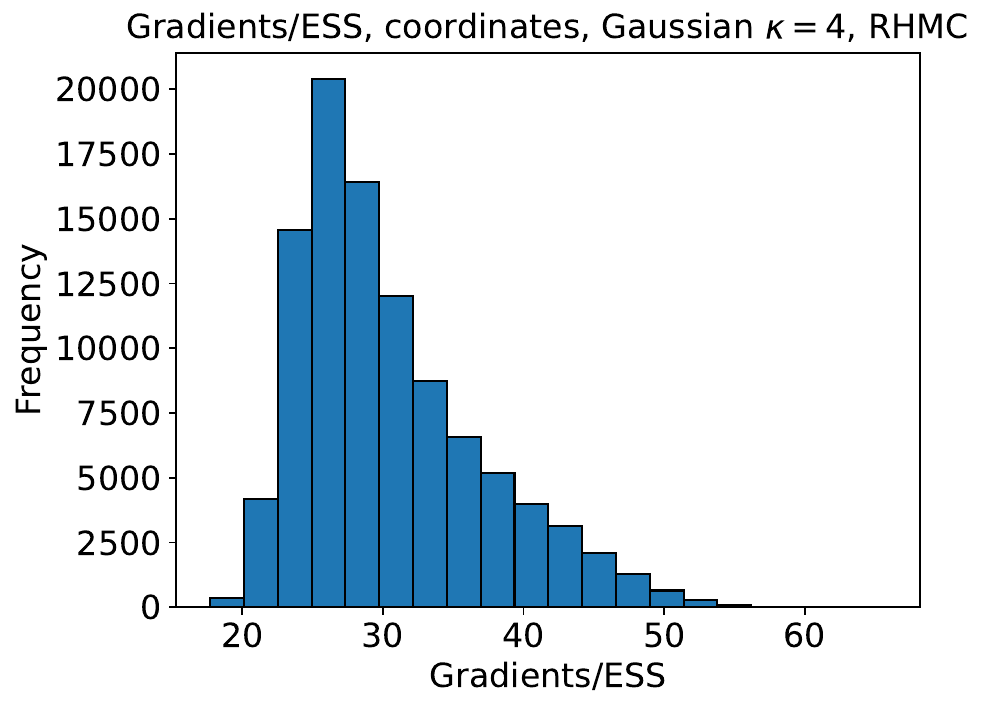} 
  \includegraphics[width=0.495\linewidth]{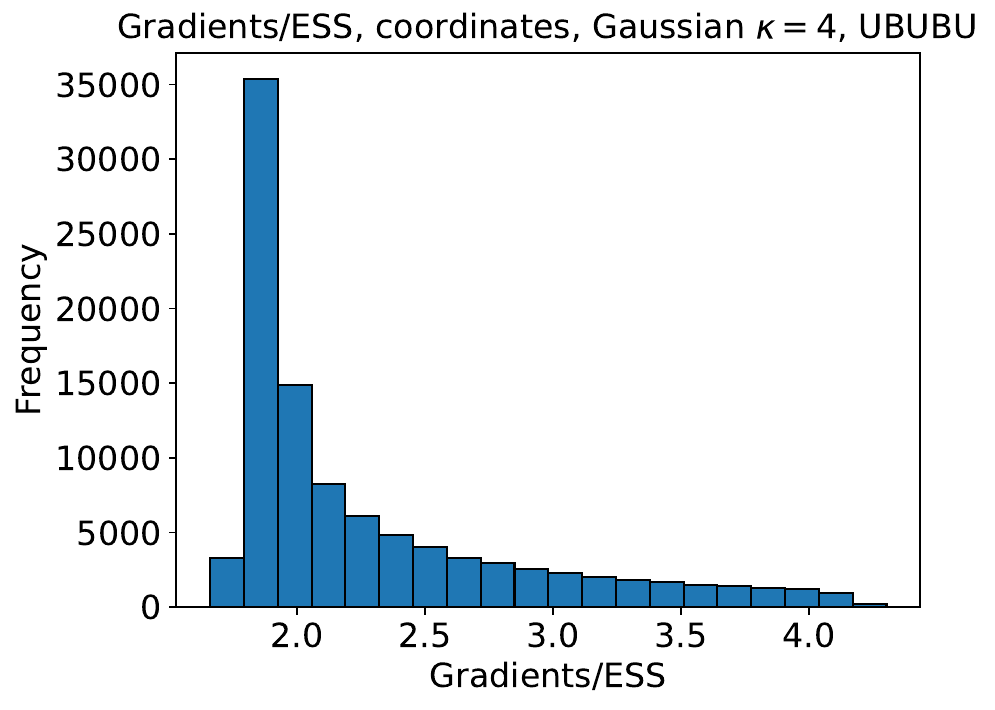}\\
  \includegraphics[width=0.495\linewidth]{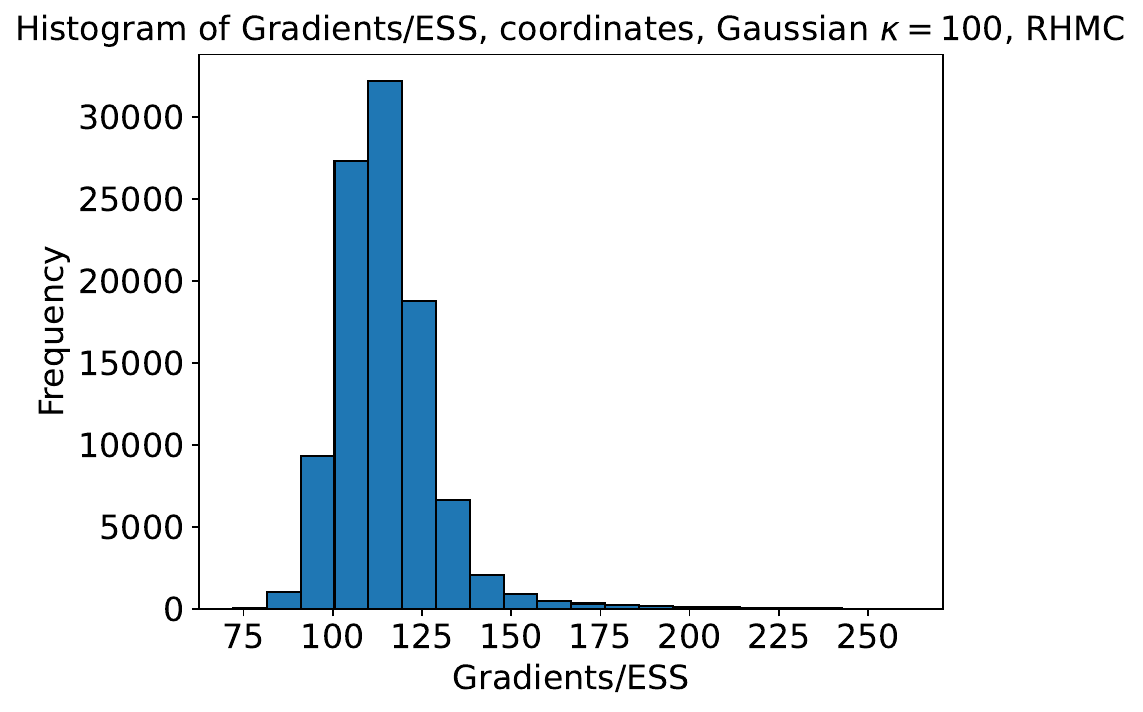} 
  \includegraphics[width=0.495\linewidth]{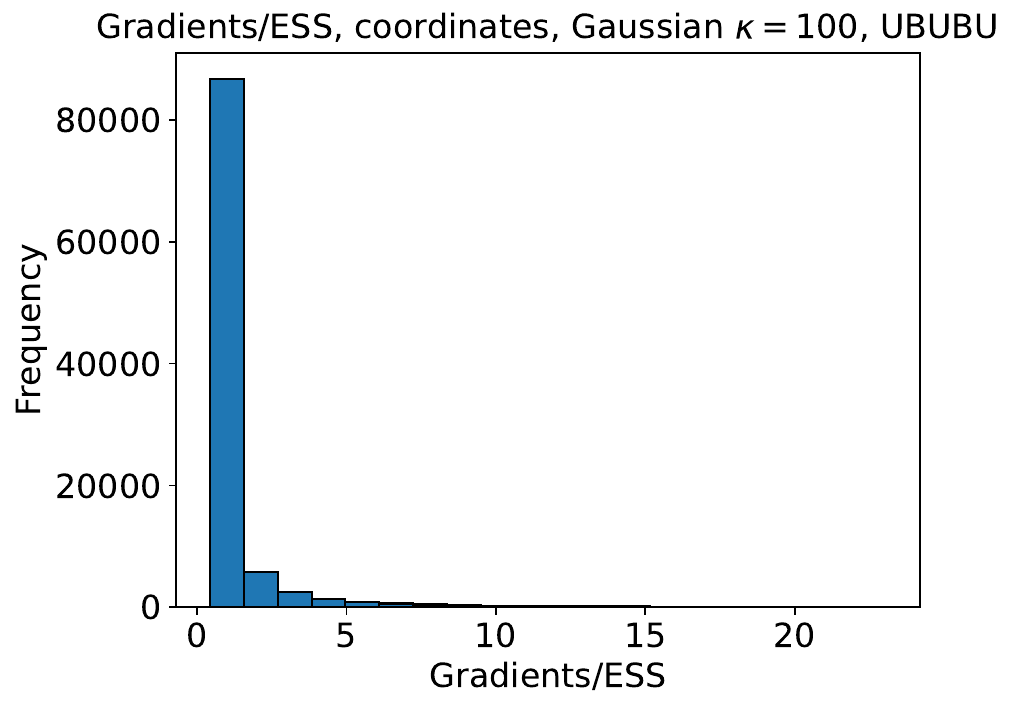}
  \caption{Gradients/ESS over all components for the 100000 dimensional Gaussian targets.}
  \label{supp:fig:Gaussian}
  \end{figure}

\subsection{Bayesian multinomial regression}

For our numerical simulations, we will present two different scenarios: one without preconditioning (Figure \ref{supp:fig:mnist_grad_per_ess1}) and one with preconditioning (Figure \ref{supp:fig:mnist_grad_per_ess2}). In both figures, we evaluated the efficiency of the methods in terms of gradient evaluations per ESS for the coordinate test functions $f(x)=x_1,\ldots, f(x)=x_d$. To compare the posterior distribution with a Gaussian approximation, we have selected a component with a relatively large third derivative. Figure \ref{supp:fig:mnist_comparison} illustrates the potential function and the Gaussian approximation with precision $\nabla^2 U(x^*)$ along the line $x^*+te_i$. Here $e_i=(0,\ldots,0,1,0,\ldots,0)$ is the unit vector of the chosen component ($i=7491$ in our implementation), and $t$ is chosen to cover up to 3 times the standard deviation difference from $x^*_i$. The distribution of this component has a significant skewness, and the density values can differ by up $40\%$ even in the bulk of the distribution. 
 \begin{figure}[H]
 \centering
 \includegraphics[width=0.47\linewidth]{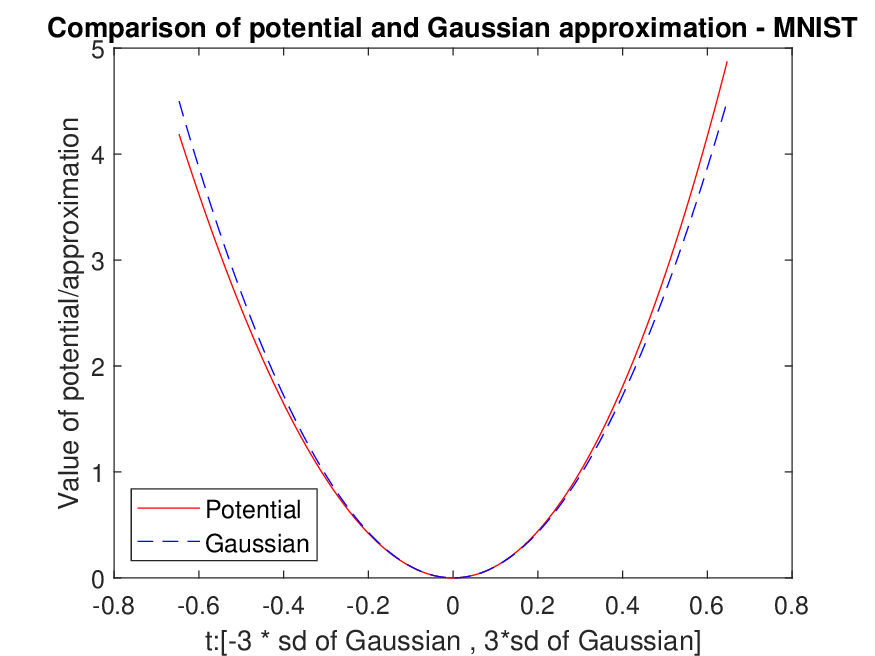} 
 \includegraphics[width=0.47\linewidth]{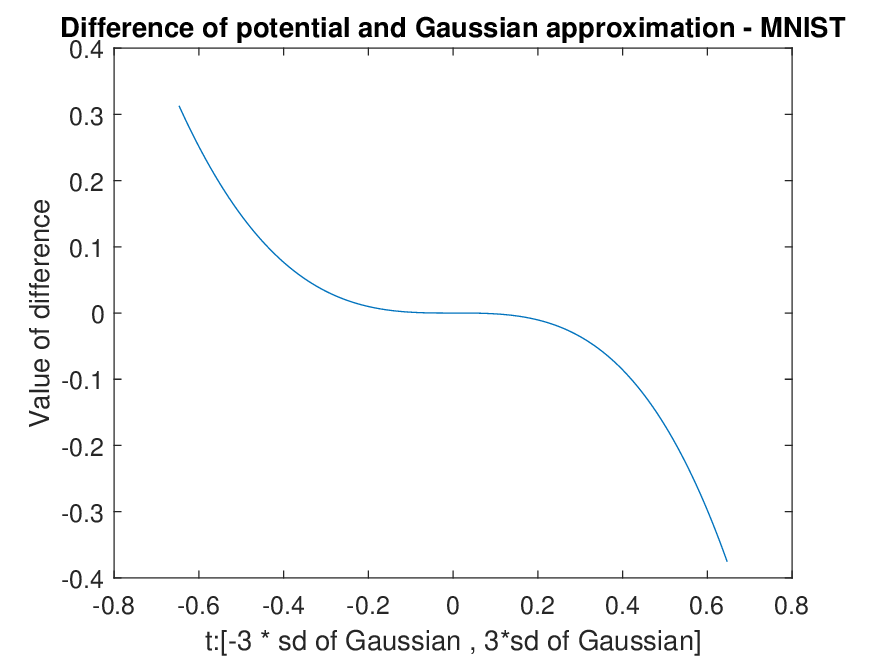}
 \caption{MNIST example. Left: Comparison between potential and quadratic approximation. Right: Difference between the potential and quadratic approximation.}
 \label{supp:fig:mnist_comparison}
 \end{figure}

In the first scenario (no preconditioning), the condition number of the Hessian at the mode $\nabla^2 U(x^*)$ is $\kappa\approx 7.2 \times 10^3$. We included simulation results with RHMC and UBUBU. As we can see, UBUBU performs similarly to RHMC in this case.

 \begin{figure}[H]
 \centering
 \includegraphics[width=0.495\linewidth]{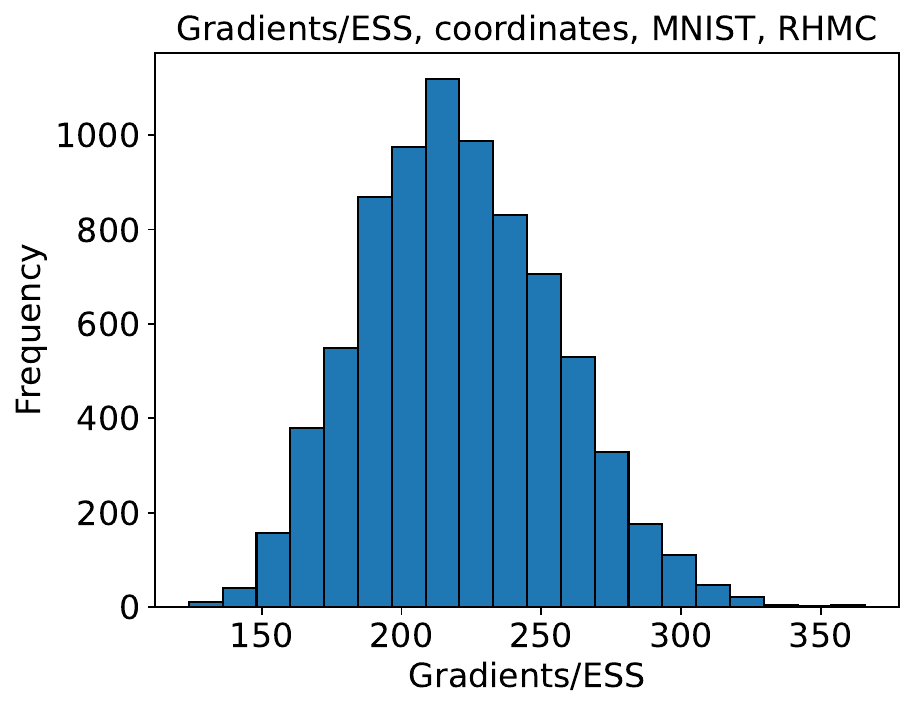} 
 \includegraphics[width=0.495\linewidth]{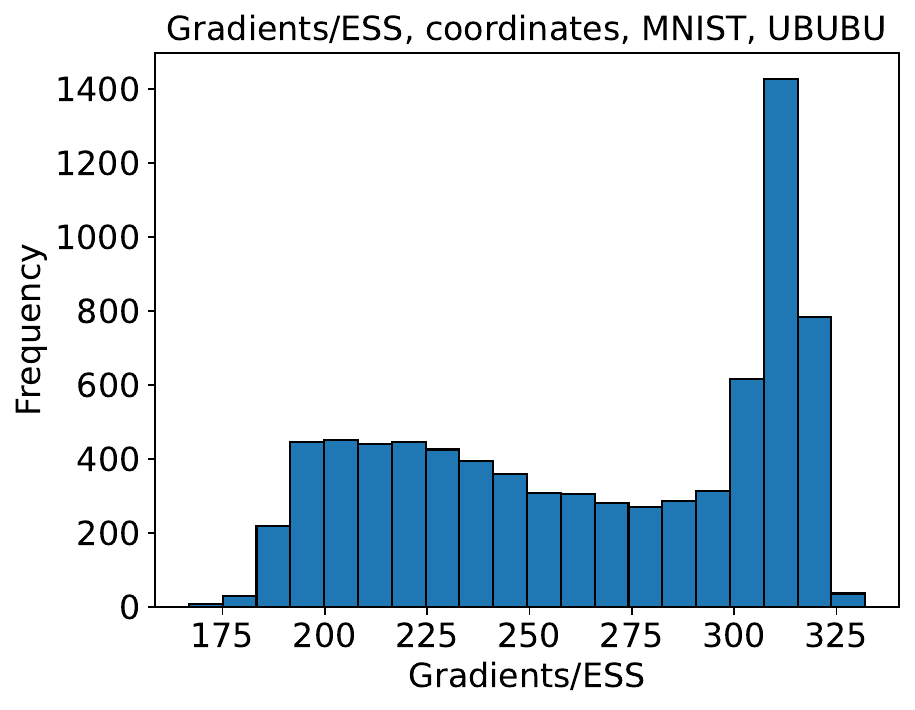}\\
 \caption{Gradients/ESS over all components for MNIST dataset without preconditioning.}
 \label{supp:fig:mnist_grad_per_ess1}
 \end{figure}

 \begin{figure}[H]
 \centering
 \includegraphics[width=0.495\linewidth]{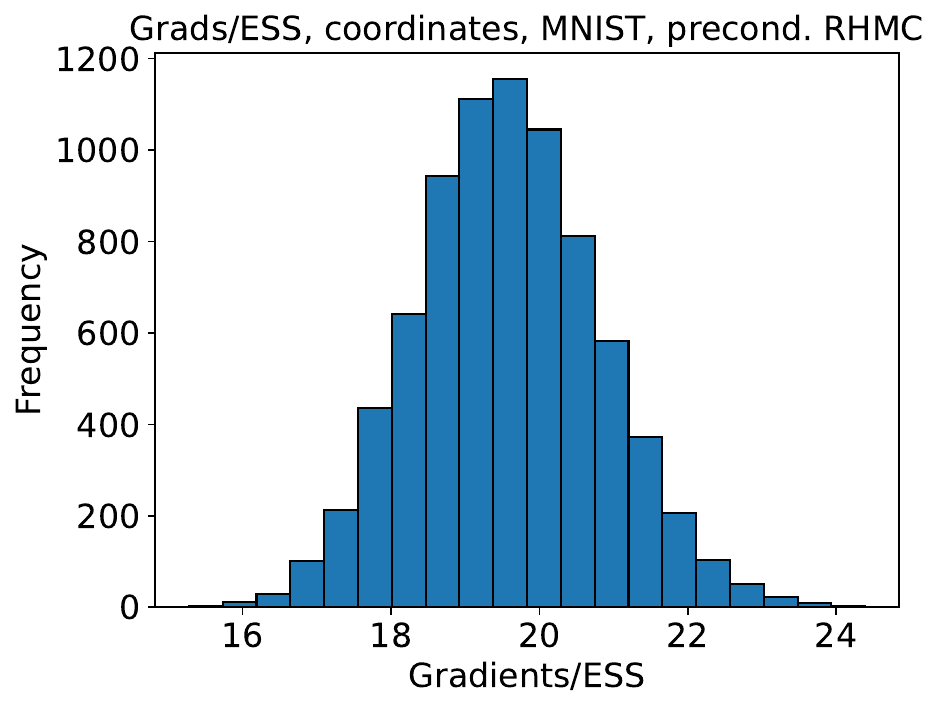} 
 \includegraphics[width=0.495\linewidth]{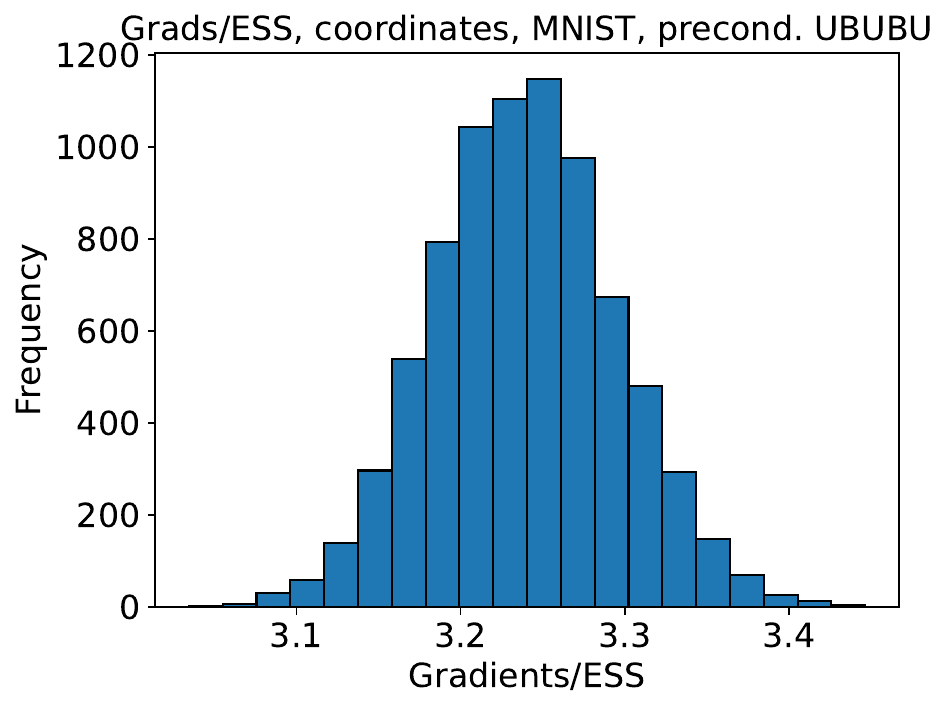}\\
\includegraphics[width=0.495\linewidth]{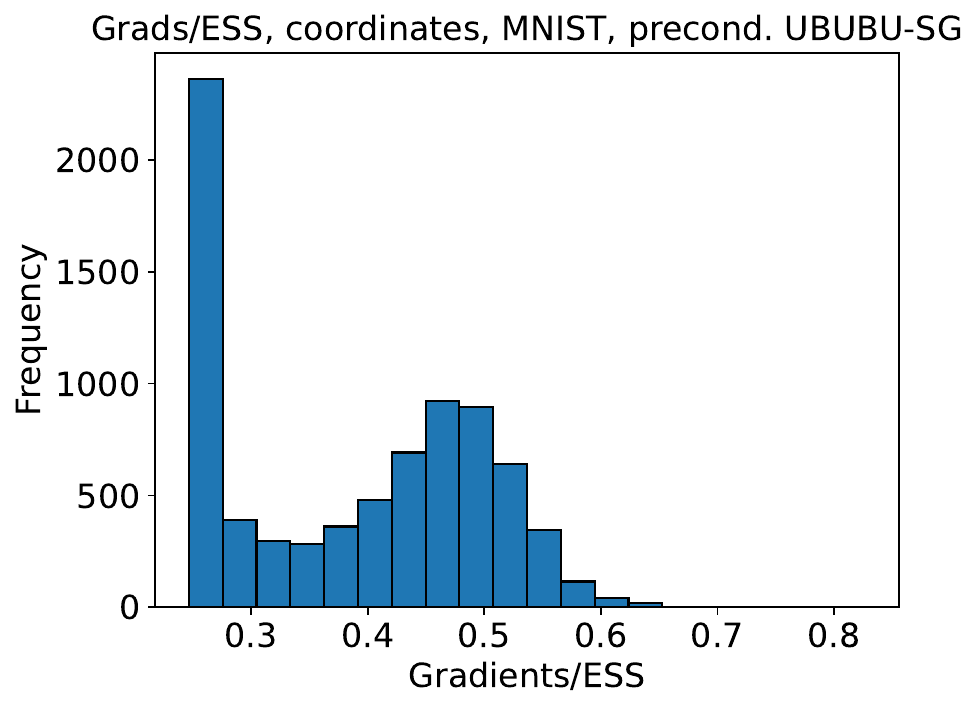} 
 \includegraphics[width=0.495\linewidth]{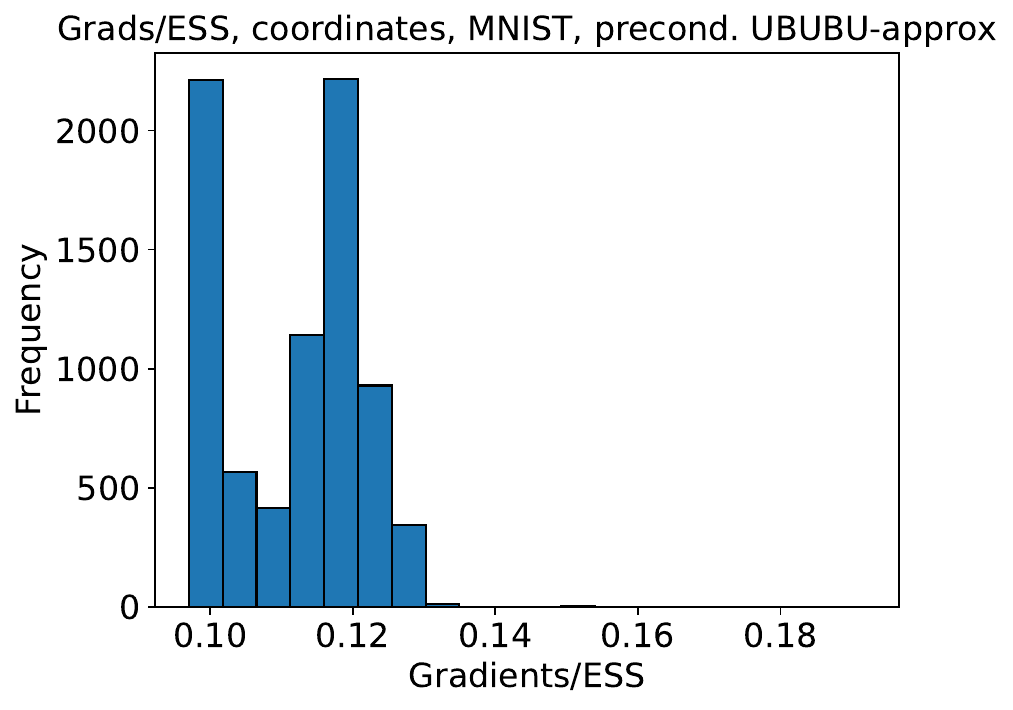}
 \caption{Gradients/ESS over all components for MNIST dataset with preconditioning.}
 \label{supp:fig:mnist_grad_per_ess2}
 \end{figure}

Figure \ref{supp:fig:mnist_predict_grad_per_ess} presents experiments comparing RHMC and UBUBU on these test functions. The experiments show that for these test functions, when compared to compared to RHMC, UBUBU-approx has 32 times improvement in gradients/ESS, and a 49 times improvement in ESS/sec.

 \begin{figure}[H]
 \centering
 \includegraphics[width=0.495\linewidth]{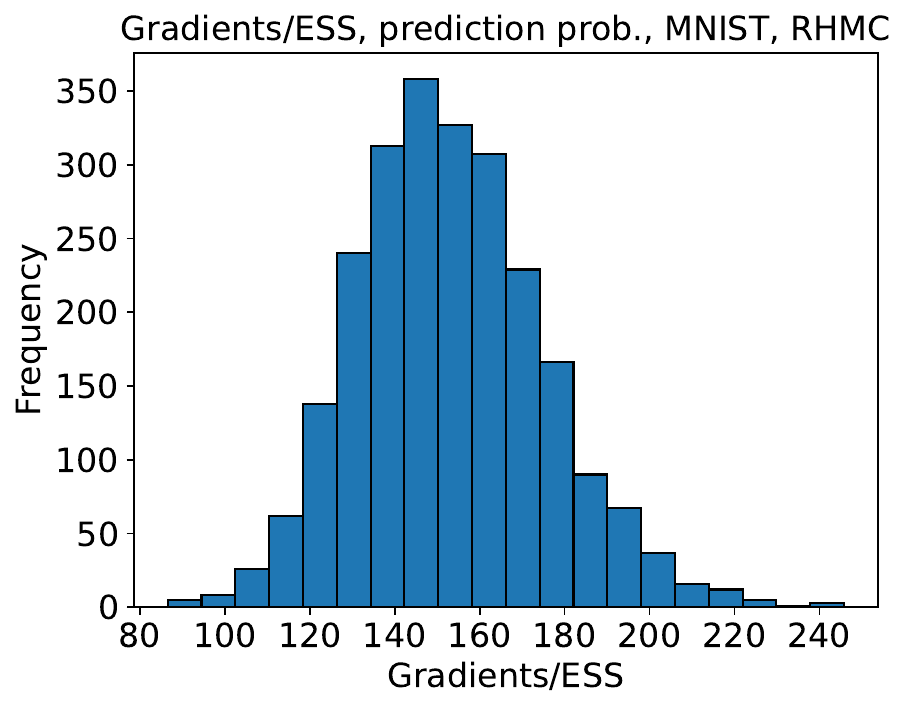} 
 \includegraphics[width=0.495\linewidth]{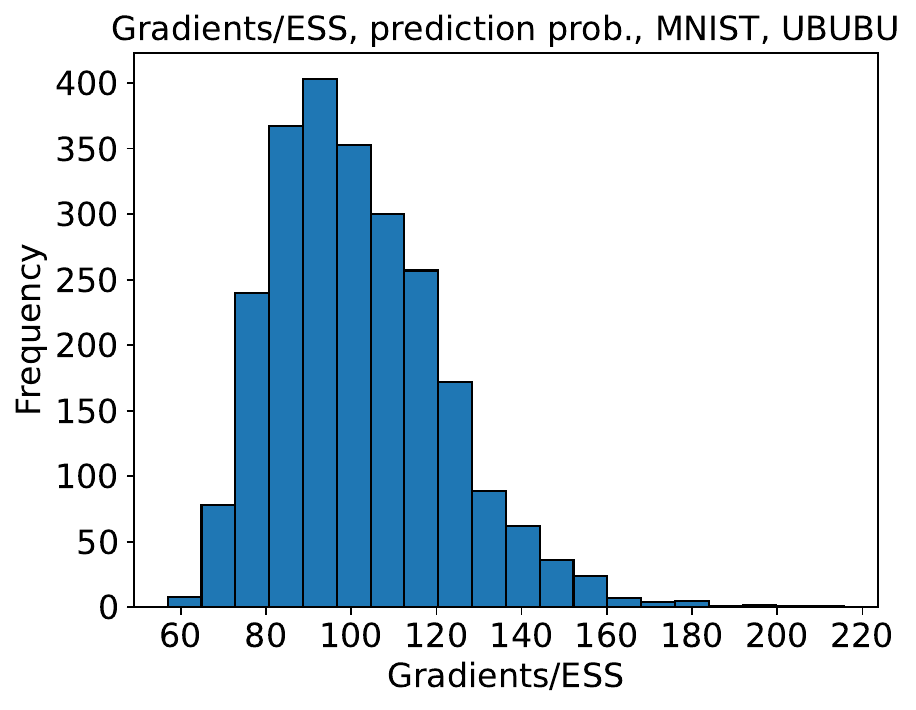}\\
 \includegraphics[width=0.495\linewidth]{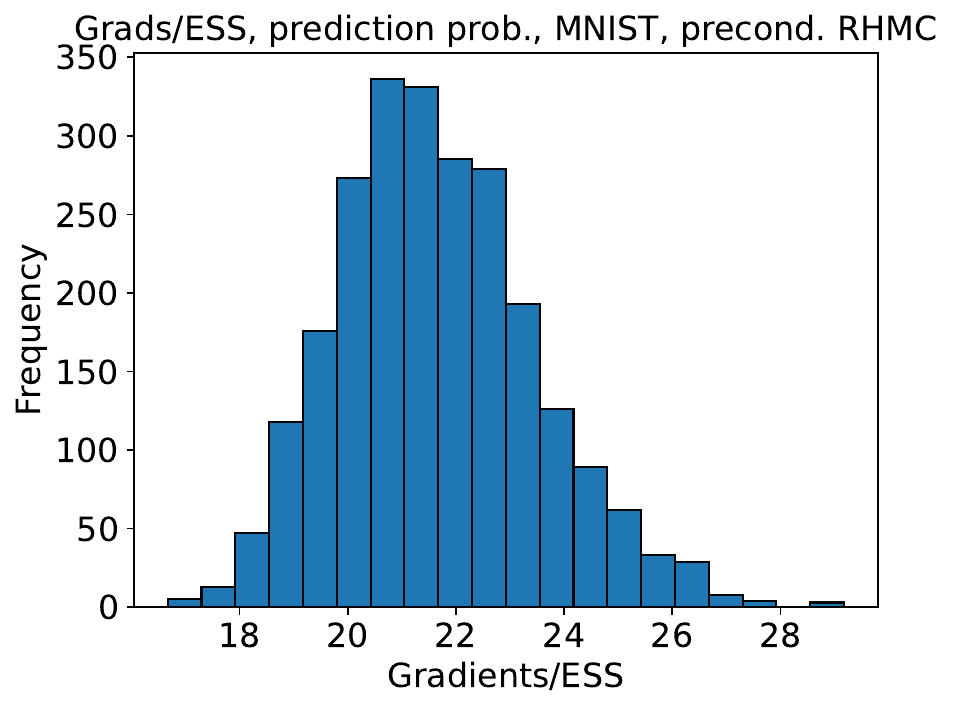} 
 \includegraphics[width=0.495\linewidth]{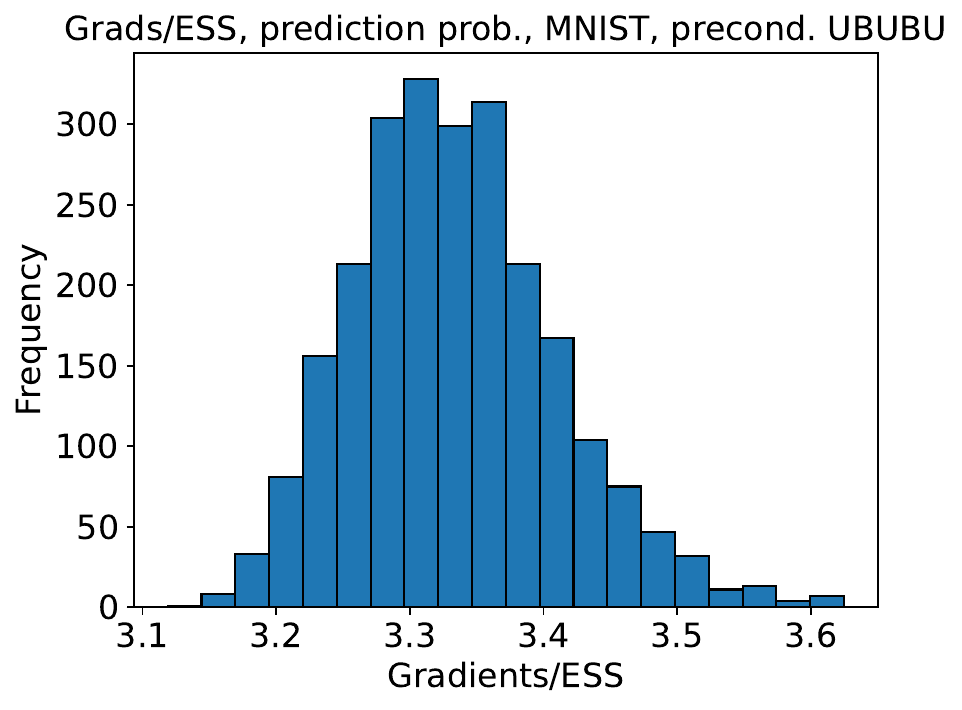}\\
 \includegraphics[width=0.495\linewidth]{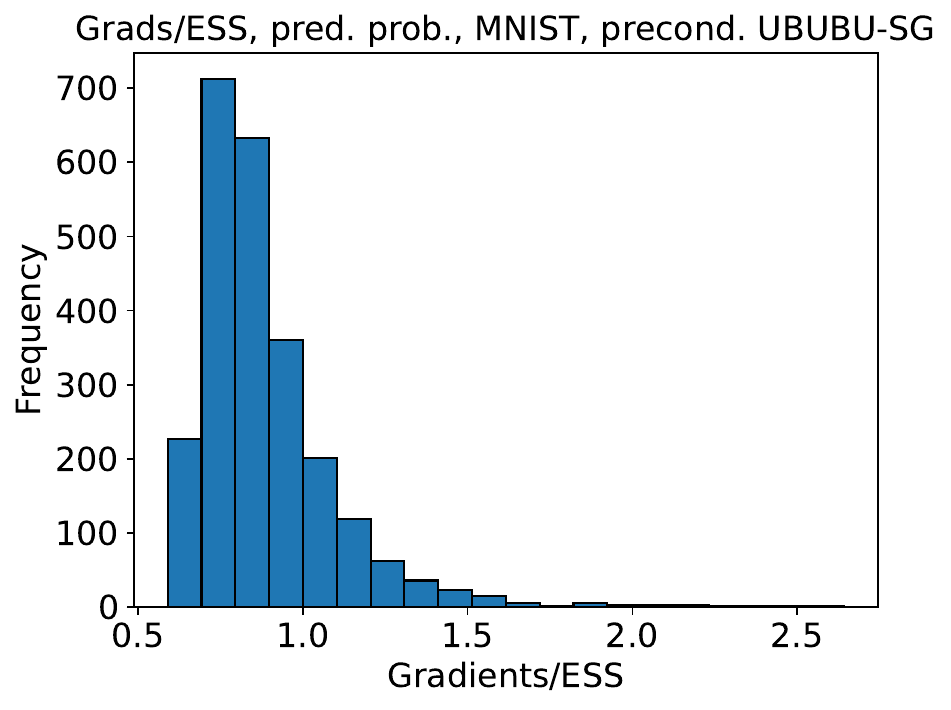} 
 \includegraphics[width=0.495\linewidth]{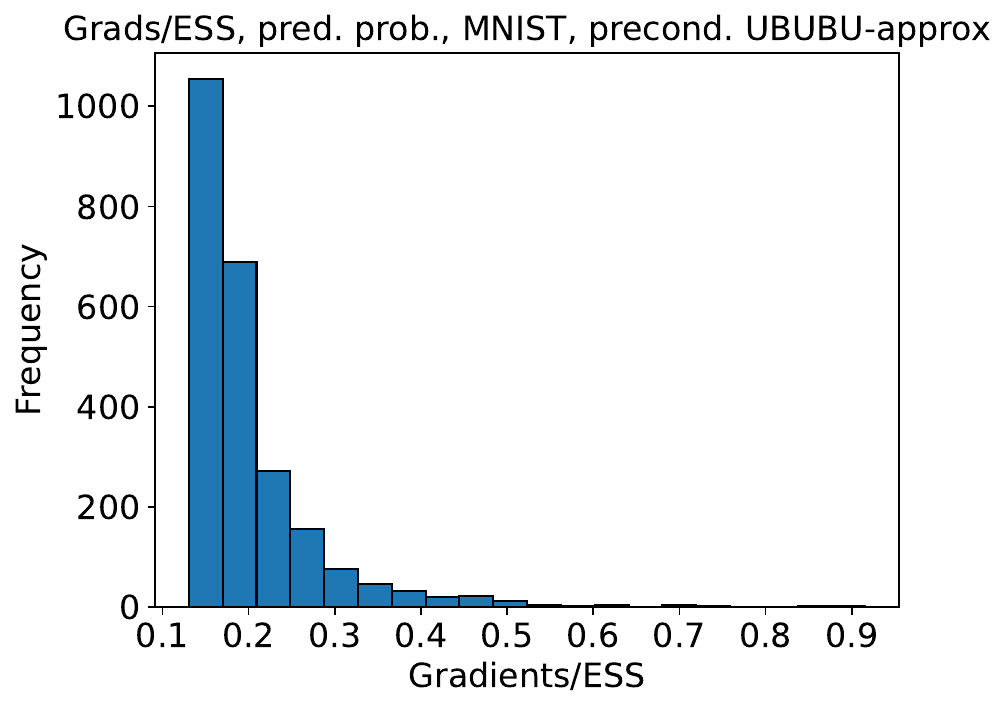}\\
 
 \caption{Gradients/ESS for 2410 probabilities on the test dataset for MNIST}
 \label{supp:fig:mnist_predict_grad_per_ess}
 \end{figure}


\subsection{Poisson regression model}
\label{supp:subsec:BPM}
Our numerical simulations are presented in Figure \ref{supp:fig:poisson_grad_per_ess}.
As we can see, UBUBU uses approximately 14 times fewer gradient evaluations per effective sample than RHMC, and UBUBU-Approx uses about 5000 times fewer gradient evaluations than RHMC. 
 \begin{figure}[h!]
 \centering
 \includegraphics[width=0.495\linewidth]{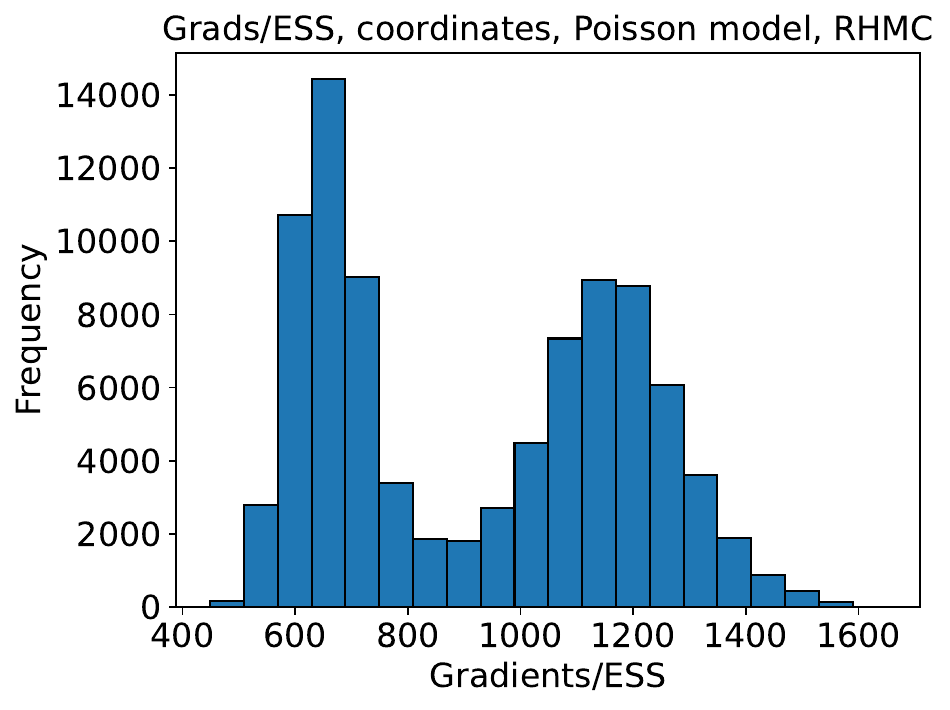}
 \includegraphics[width=0.495\linewidth]{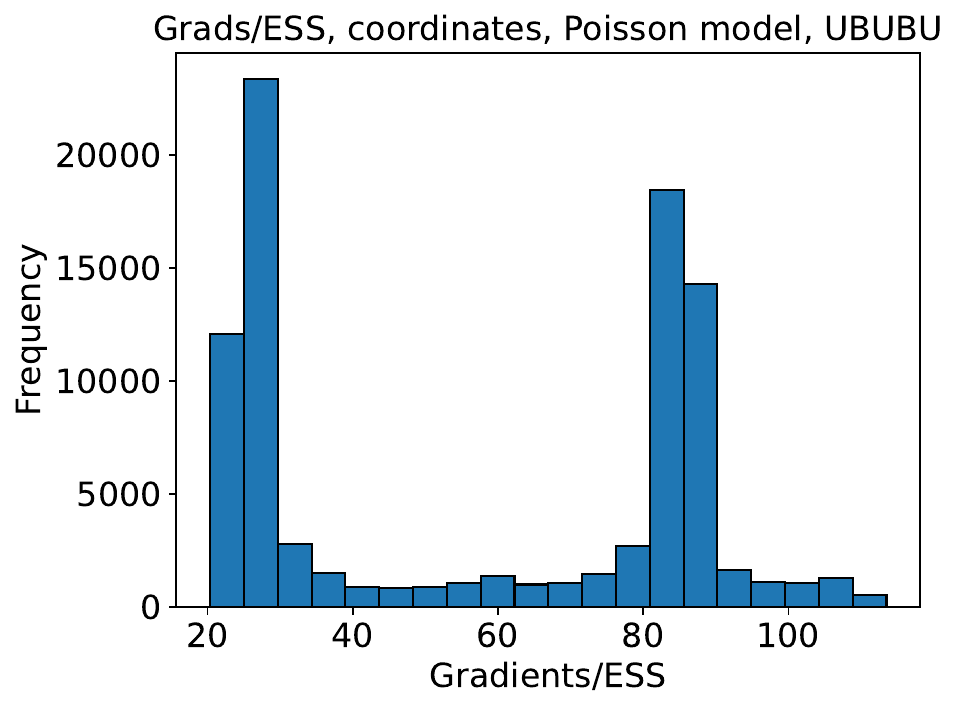}\\
 \includegraphics[width=0.495 \linewidth]{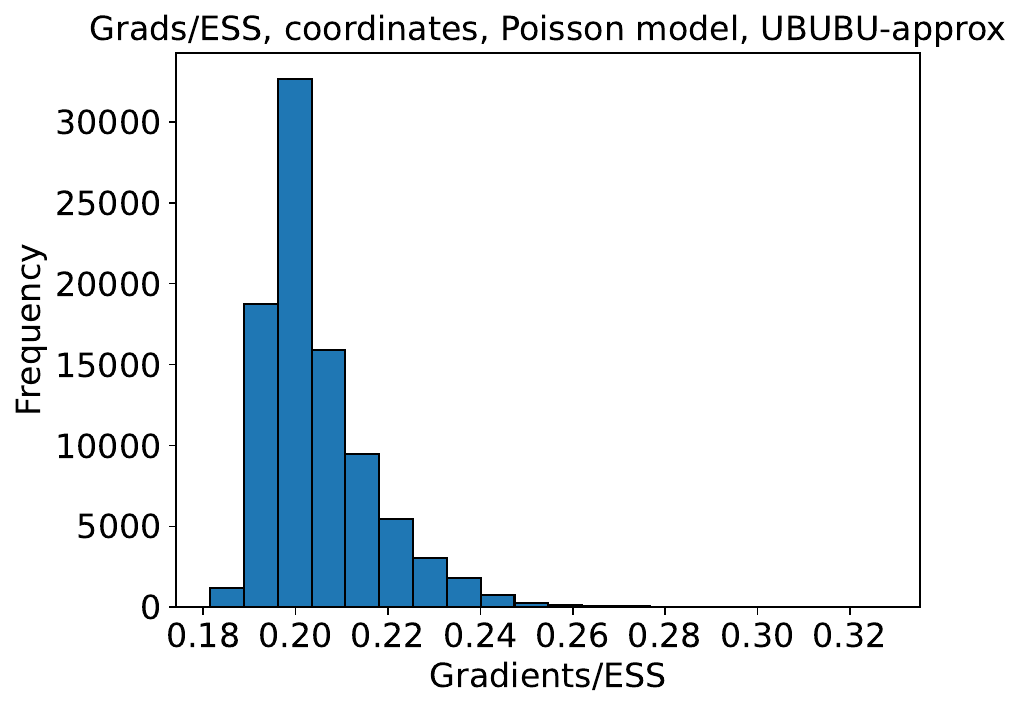}
 \caption{Gradients/ESS over all components of a Poisson regression model for soccer scores.}
 \label{supp:fig:poisson_grad_per_ess}
 \end{figure}

\section{Unbiased multilevel estimators}\label{supp:sec:Appendix:proofs:multilevel}
In this section, we provide the theoretical results for the following two estimators:
\begin{equation}\label{supp:eq:Sdef}
S = S_0 + \sum^{\infty}_{l=0}S_{l,l+1},
\end{equation}

\begin{align}\label{supp:eq:SRichardsondef}
S(c_R) &= S_0 + \sum^{L(N)-1}_{l=0}S_{l,l+1} + \frac{D_{L(N),L(N)+1}^{(1)}}{1-c_R}+\sum_{l=L(N)+1}^{\infty} \overline{S}_{l,l+1},\\
\nonumber\overline{S}_{l,l+1}&=\frac{\Ind[N_{l,l+1}=1]}{\E (N_{l,l+1})} \left[D_{l,l+1}^{(1)}-D_{L(N),L(N)+1}^{(1)}\cdot c_R^{l-L(N)}\right].
\end{align}

We restate the following key assumptions we make on the variances as follows:

\begin{assumption}\label{supp:ass:var}
$f:\Lambda\to \R$ is a measurable function. $(\tilde{\mu}_{h_l})_{l\ge 0}$ is a sequence of distributions satisfying that $\tilde{\mu}_{h_l}(f)\to \mu(f)$ as $l\to \infty$. The random variable $D_0$ satisfies that $\E(D_0)=\mu_{h_0}(f)$, $\Var(D_0)<\infty$, for every $l\geq 0$, the random variable $D_{l,l+1}$ satisfies that $\E(D_{l,l+1})=\tilde{\mu}_{l+1}(f)-\tilde{\mu}_{l+1}(f)$ and $\E(D_{l,l+1}^2)\le V_{D}\phi_D^{-l}$ for some finite constants $V_D>0$, $\phi_D>2$. 
\end{assumption}
\begin{assumption}\label{supp:ass:numbsamp}
The constants $c_{l,l+1}$ controlling $N_{l,l+1}$ satisfy 
\[\ul{c}_N \phi_N^{-l}\le c_{l,l+1}\le  \ol{c}_N \phi_N^{-l},\]
for some finite constants $0<\ul{c}_N\le \ol{c}_N$, $\phi_N>2$.
\end{assumption}
\begin{assumption}\label{supp:ass:comp}
The computational cost of generating a sample from $D_{l,l+1}$ is $\O(2^l (K+lB+B_0))$ for some finite constants $B$, $B_0$, and generating a sample from $D_0$ has a finite computational cost.
\end{assumption}
\begin{assumption}\label{supp:ass:independence}
For $1\le l \le L(N)-1, 1\le r\le N_{l,l+1}$, the random variables $D_{l,l+1}^{(r)}$ are all independent from each other, and they are also independent from the collection of random variables $\{D_{l,l+1}^{(1)}\}_{l\ge L(N)}$. 
\end{assumption}

\begin{proposition}
\label{supp:prop:unb}
Suppose that Assumptions \ref{supp:ass:var}, \ref{supp:ass:numbsamp}, \ref{supp:ass:comp} and \ref{supp:ass:independence} hold, and that $2<\phi_N<\phi_D$. Then $S$ as defined in \eqref{supp:eq:Sdef} is an unbiased estimator of $\mu(f)$ that has finite variance \[\Var(S)\le 
 \frac{\Var(D_0)}{N}+\frac{V_D}{N\ul{c}_N\left(1-\left(\frac{\phi_N}{\phi_D}\right)^{1/2}\right)^2},\] and finite expected computational cost.  

Similarly, for any $0\le c_R< \frac{1}{\phi_N^{1/2}}$, $S(c_R)$ as defined in \eqref{supp:eq:SRichardsondef} is also an unbiased estimator of $\mu(f)$ with finite variance  
\[\Var(S(c_R))\le \frac{\Var(D_0)}{N}+
\frac{2V_D}{N\ul{c}_N\left(1-\left(\frac{\phi_N}{\phi_D}\right)^{1/2}\right)^2}+
\frac{1}{N^2}\cdot \frac{2V_D \ol{c}_N \phi_N^2 c_R^2}{\ul{c}_N^2(1-\phi_N c_R^2)},\]
and finite expected computational cost.
\end{proposition}

\begin{proof}[Proof of Proposition \ref{supp:prop:unb}]
From Assumption \ref{supp:ass:comp}, and the definition of $S$, it follows that the expected computational cost of $S$ is upper bounded as follows: 
\begin{align*}&\O\left(N+\sum_{l=0}^{\infty}\E(N_{l,l+1}) 2^l (K+lB+B_0)\right)\\
&\le \O\left( N\left(1+\ol{c}_N\sum_{l=0}^{\infty} \left(\frac{2}{\phi_N}\right)^{l} (K+lB+B_0)\right)\right)<\infty.\end{align*}
From Assumptions \ref{supp:ass:var} and \ref{supp:ass:independence}, we have that
\begin{align*}&\Var(S)\le \frac{\Var(D_0)}{N}+\sum_{l=0}^{L(N)-1} \frac{\E(D_{l,l+1}^2)}{N_{l,l+1}} + \Var\left(\sum_{l=L(N)}^{\infty} \frac{D_{l,l+1}^{(1)} \Ind[N_{l,l+1}=1]}{\E(N_{l,l+1})}\right)\\
\intertext{ by the Cauchy-Schwarz inequality }
&\le \frac{\Var(D_0)}{N}+\sum_{l=0}^{L(N)-1} \frac{\E(D_{l,l+1}^2)}{N_{l,l+1}}
+ \sum_{l,l'=L(N)}^{\infty} \left(\frac{\E\left(D_{l,l+1}^2\right)}{\E(N_{l,l+1})}\right)^{1/2}\left(\frac{\E\left(D_{l',l'+1}^{2}\right)}{\E(N_{l',l'+1})}\right)^{1/2}\\
&\le \frac{\Var(D_0)}{N}+\frac{V_D}{\ul{c}_N N}\sum_{l=0}^{L(N)-1} \left(\frac{\phi_D}{\phi_N}\right)^{-l}+\frac{V_D}{\ul{c}_N N}\sum_{l,l'=L(N)}^{\infty} \left(\frac{\phi_D}{\phi_N}\right)^{-l/2-l'/2}\\
&\le \frac{\Var(D_0)}{N}+ \frac{V_D}{\ul{c}_N N}\sum_{l,l'=0}^{\infty} \left(\frac{\phi_D}{\phi_N}\right)^{-l/2-l'/2}=\frac{\Var(D_0)}{N}+\frac{V_D}{N\ul{c}_N\left(1-\left(\frac{\phi_N}{\phi_D}\right)^{1/2}\right)^2} <\infty.
\end{align*}
By Jensen's inequality, and Assumption \ref{supp:ass:var},  $\E(|S_0|+\sum_{l=0}^{\infty} |S_{l,l+1}|)<\infty$, hence by the dominated convergence theorem, 
\[\E(S)=\E(S_0)+\sum_{l=0}^{\infty} \E(S_{l,l+1})=\tilde{\mu}_{h_0}(f)+\sum_{l=0}^{\infty}\tilde{\mu}_{h_{l+1}}(f)-\tilde{\mu}_{h_{l}}(f)=\mu(f),\]
which concludes the proof for $S$.

For $S(c_R)$, the computational cost is the same as for $S$. Hence the computational cost has finite expectation. For the variance, we have
\begin{align*}\Var(S(c_R))&\le \frac{\Var(D_0)}{N}+\sum_{l=0}^{L(N)-1} \frac{\E(D_{l,l+1}^2)}{N_{l,l+1}} +\Var\left(\frac{D^{(1)}_{L(N),L(N)+1}}{1-c_R}+\sum_{l=L(N)+1}^{\infty} \overline{S}_{l,l+1}\right).
\end{align*}
The last term can be bounded as
\begin{align*}
&\Var\left(\frac{D^{(1)}_{L(N),L(N)+1}}{1-c_R}+\sum_{l=L(N)+1}^{\infty} \overline{S}_{l,l+1}\right)
\\
&=\Var\left(D^{(1)}_{L(N),L(N)+1}\left(\frac{c_R}{1-c_R}-\sum_{l=L(N)+1}^{\infty}\frac{\mathbbm{1}[N_{l,l+1}=1]}{\E(N_{l,l+1})} c_R^{l-L(N)}\right)
+\sum_{l=L(N)}^{\infty} S_{l,l+1}\right)
\\
&\le 2 \cdot \Var\left(D^{(1)}_{L(N),L(N)+1}\left(\sum_{l=L(N)+1}^{\infty}\left(\frac{\mathbbm{1}[N_{l,l+1}=1]}{\E(N_{l,l+1})}-1\right) c_R^{l-L(N)}\right)\right)
\\
&+2\cdot\Var\left(\sum_{l=L(N)}^{\infty} \frac{D_{l,l+1}^{(1)} \Ind[N_{l,l+1}=1]}{\E(N_{l,l+1})}\right).
\end{align*}
Similarly to the previous argument bounding  $\Var(S)$, we have that
\[\sum_{l=0}^{L(N)-1} \frac{\E(D_{l,l+1}^2)}{N_{l,l+1}} +2\cdot\Var\left(\sum_{l=L(N)}^{\infty} \frac{D_{l,l+1}^{(1)} \Ind[N_{l,l+1}=1]}{\E(N_{l,l+1})}\right)\le 
\frac{2V_D}{N\ul{c}_N\left(1-\left(\frac{\phi_N}{\phi_D}\right)^{1/2}\right)^2}.
\]
For the remaining term, using the independence of $N_{l,l+1}$ from the other random variables, 
\begin{align*}
&2 \cdot \Var\left(D^{(1)}_{L(N),L(N)+1}\left(\sum_{l=L(N)+1}^{\infty}\left(\frac{\mathbbm{1}[N_{l,l+1}=1]}{\E(N_{l,l+1})}-1\right) c_R^{l-L(N)}\right)\right)\\
&\le 2\cdot \E\left(D_{L(N),L(N)+1}^2\right) \left(\sum_{l=L(N)+1}^{\infty}\E\left(\left(\frac{\mathbbm{1}[N_{l,l+1}=1]}{\E(N_{l,l+1})}-1\right)^2\right) c_R^{2l-2L(N)}\right)\\
&\le \frac{2V_D}{\ul{c}_N N} \left(\frac{\phi_D}{\phi_N}\right)^{-L(N)}\cdot \sum_{l=L(N)+1}^{\infty}\left(\left(\frac{1}{\E(N_{l,l+1})}-1\right) c_R^{2l-2L(N)}\right)\\
&\le \frac{2V_D}{\ul{c}_N N} \left(\frac{\phi_D}{\phi_N}\right)^{-L(N)}\cdot \frac{\phi_N^{L(N)}}{\ul{c}_N N} \sum_{l=L(N)+1}^{\infty}
(\phi_N c_R^{2})^{l-L(N)}
\intertext{ by the definition of $L(N)$, and Assumption \ref{supp:ass:numbsamp}, $\phi_N^{-(L(N)-1)} \ol{c}_N\ge 1$, hence $\phi_N^{L(N)}\le \ol{c}_N \phi_N$,}
&\le \frac{2V_D}{\ul{c}_N N^2} \cdot \frac{\ol{c}_N \phi_N}{\ul{c}_N} \cdot \frac{\phi_N c_R^2}{1-\phi_N c_R^2}.
\end{align*}
After some rearrangement, we obtain that
\begin{align*}&\Var(S(c_R))\le \frac{\Var(D_0)}{N}+
\frac{2V_D}{N\ul{c}_N\left(1-\left(\frac{\phi_N}{\phi_D}\right)^{1/2}\right)^2}+
\frac{1}{N^2}\cdot \frac{2V_D \ol{c}_N \phi_N^2 c_R^2}{\ul{c}_N^2(1-\phi_N c_R^2)}.
\end{align*}
Finally, unbiasedness can be shown as before using the dominated convergence theorem.
\end{proof}

We show below that a central limit theorem holds for these estimators.

\begin{theorem}\label{supp:thm:CLT}
Under the assumptions of Proposition \ref{supp:prop:unb}, we have that, as $N\to \infty$,
\[\sqrt{N} (S-\mu(f)) \Rightarrow \mathcal{N}(0,\sigma^2_S) \quad \text{ and }\quad \sqrt{N} (S(c_R)-\mu(f)) \Rightarrow\mathcal{N}(0,\sigma^2_S),\]
where 
\begin{equation}\label{supp:eq:sigma2S}\sigma^2_S:=\Var(D_0)+\sum_{l=0}^{\infty} \frac{\Var(D_{l,l+1})}{c_{l,l+1}}.\end{equation}
\end{theorem}

\begin{proof}[Proof of Theorem \ref{supp:thm:CLT}]
First, we prove the result for $H:=\sqrt{N} (S-\mu(f))$.
For $\ol{l}\ge 0$, let 
\begin{align*}H^{\ol{l}}&:=\sqrt{N}(S_0-\E(S_0)) + \sum^{\ol{l}}_{l=0}(S_{l,l+1}-\E(S_{l,l+1}))\\
&=\frac{1}{\sqrt{N}}\sum_{r=1}^{N} \left(D_0^{(r)}-\E(D_0^{(r)})\right)+\sum_{l=0}^{\ol{l}} \frac{\sqrt{N}}{\E (N_{l,l+1})}\sum_{r=1}^{N_{l,l+1}} \left[D_{l,l+1}^{(r)}-\E( D_{l,l+1})\right]\\
&:=H_0+\sum_{l=0}^{\ol{l}} H_{l,l+1}.
\end{align*}
Then by using independence, and the fact that $\left(\frac{\sqrt{N}}{\E (N_{l,l+1})}\right) / \left(\frac{1}{\sqrt{N}}\right)\to \frac{1}{c_{l,l+1}}$, by the proof of the central limit theorem (see Sections 3.3-3.4 of \cite{Durrett5th}), for every $t\in \R$, $H_0$ and $H_{l,l+1}$ satisfies 
\begin{align*}
\E(e^{i t H_0})&\to e^{-t^2 \mathcal{V}_0/2}\text{ as }N\to \infty\text{ for }\mathcal{V}_0=\Var(D_{0}),\\
\E(e^{i t H_l})&\to e^{-t^2 \mathcal{V}_{l,l+1}/2}\text{ as }N\to \infty\text{ for }\mathcal{V}_{l,l+1}= \frac{\Var(D_{l,l+1})}{c_{l,l+1}}  .
\end{align*}
Using independence, we can multiply these together to obtain that for any $t\in \R$,
\[\E\left(e^{i t H^{\ol{l}}}\right)\to e^{-t^2 \left(\mathcal{V}_0+\sum_{l=0}^{\ol{l}} \mathcal{V}_{l,l+1}\right) /2}\text{ as }N\to \infty.\]
By Lemma 3.3.19 of \cite{Durrett5th}, it follows that for a random variable $X$ with $\E(X)=0$ and $\E(X^2)<\infty$, we have
\[|\E(e^{itX})-1|\le \frac{t^2\E(X^2)}{2}.\]
Suppose that $N$ is sufficiently large such that $L(N)\ge \ol{l}$, then for $X=\sqrt{N}(S-\mu(f))-H^{\ol{l}}$, we have 
\begin{align*}&\E(X^2)=\Var(X)\le N\sum_{l=\ol{l}+1}^{L(N)-1} \frac{\E(D_{l,l+1}^2)}{N_{l,l+1}}+N\cdot \Var\left(\sum_{l=L(N)}^{\infty} D_{l,l+1}^{(1)}\cdot \frac{\Ind[N_{l,l+1}=1]}{\E(N_{l,l+1})}\right)\\
&\intertext{ using the same bound as in the proof of Proposition \ref{supp:prop:unb}, and the assumption $L(N)\ge \ol{l}$,}
&\le  \frac{V_D}{\ul{c}_N}\sum_{l=\ol{l}+1}^{\infty} \left(\frac{\phi_D}{\phi_N}\right)^{-l}+ 
\frac{V_D}{\ul{c}_N}\sum_{l,l'=L(N)}^{\infty} \left(\frac{\phi_D}{\phi_N}\right)^{-l/2-l'/2}\\
&\le \left(\frac{\phi_N}{\phi_D}\right)^{\ol{l}} \frac{V_D}{\ul{c}_N} \left(\frac{1}{1-\frac{\phi_N}{\phi_D}}+\frac{1}{\left(1-\left(\frac{\phi_N}{\phi_D}\right)^{1/2}\right)^2}\right).
\end{align*}
For $N$ sufficiently large, $L(N)>\ol{l}$, so $H^{\ol{l}}$ and $X=H-H^{\ol{l}}$ are independent, thus
$\E(e^{itH})=\E(e^{itH^{\ol{l}}})\cdot \E(e^{it(H-H^{\ol{l}})})$, and
\begin{align*}&\limsup_{N\to \infty} \left|\E(e^{itH})-e^{-t^2 \left(\mathcal{V}_0+\sum_{l=0}^{\ol{l}} \mathcal{V}_{l,l+1}\right) /2}\right|\\
&\le 
e^{-t^2 \left(\mathcal{V}_0+\sum_{l=0}^{\ol{l}} \mathcal{V}_{l,l+1}\right) /2}\cdot  \frac{t^2 \E(X^2)}{2}\le  \frac{t^2}{2} \left(\frac{\phi_N}{\phi_D}\right)^{\ol{l}} \frac{V_D}{\ul{c}_N} \left(\frac{1}{1-\frac{\phi_N}{\phi_D}}+\frac{1}{\left(1-\left(\frac{\phi_N}{\phi_D}\right)^{1/2}\right)^2}\right).
\end{align*}
By letting $\ol{l}\to \infty$, it follows that $\limsup_{N\to \infty} \left|\E(e^{itH})-e^{-t^2\sigma^2_S}\right|=0$, hence the convergence follows by 
the L\'evy-Cram\'er continuity theorem (see Theorem 3.3.17 of \cite{Durrett5th}). 

The proof for $S(c_R)$ follows the same lines, except that the variances of the terms for $l\ge L(N)$ need to be controlled separately using the same bounds as in the proof of Proposition \ref{supp:prop:unb}, we omit the details.
\end{proof}

\section{Convergence results}\label{supp:Sec:ProofConvergenceUBU}
The first set of results we prove are provided below for the convergence of the UBU scheme. Proving contraction of a coupling has been a popular method for establishing convergence rates both in the continuous time setting and for the discretization for Langevin dynamics (underdamped/kinetic) and Hamiltonian Monte Carlo (see for example \cite{bou2023mixing,bou2022couplings,bou2020coupling,dalalyan2017theoretical,dalalyan2020sampling,deligiannidis2021randomized,durmus2017nonasymptotic,eberle2019couplings,monmarche2022hmc,PM21,monmarche2020almost,sanz2021wasserstein,schuh2022global} and many more).

Our approach to obtain convergence rates is based on proving contraction for a  synchronous coupling. We need an appropriate metric to attain convergence, and contraction of the $\UBU$ scheme. We introduce the Wasserstein distance in this metric.

\begin{definition}[Weighted Euclidean norm]
\label{supp:def:weightednorm}
For $z = (x,v) \in \R^{2d}$ we introduce the weighted Euclidean norm
\[
\left|\left| z \right|\right|^{2}_{a,b} = \left|\left| x \right|\right|^{2} + 2b \left\langle x,v \right\rangle + a \left|\left| v \right|\right|^{2},
\]
for $a,b > 0$ with $b^{2}<a$. 
\end{definition}

\begin{remark}
Using the assumption $b^2<a$, we can show that this is equivalent to the Euclidean norm on $\R^{2d}$. Under the condition $b^2\le a/4$, we have
\begin{equation}\label{supp:eq:normequiv}
\frac{1}{2}\min(a,1) \|z\|^2\leq \frac{1}{2}||z||^{2}_{a,0} \leq ||z||^{2}_{a,b} \leq \frac{3}{2}||z||^{2}_{a,0}\leq \frac{3}{2}\max(a,1) \|z\|^2.\end{equation}
\end{remark}

\begin{definition}[$p$-Wasserstein distance]
\label{supp:def:wass}
Let us define $\mathcal{P}_p(\R^{2d})$
to be the set of probability measures which have $p$-th moment for $p\in[1,\infty)$ (i.e. $\E (\|Z\|^p)<\infty$). Then the $p$-Wasserstein distance in norm $\|\cdot \|_{a,b}$ between two measures $\mu,\nu\in \mathcal{P}_p(\R^{2d})$ is defined as
\begin{equation}
\label{supp:eq:wass_dist}
\mathcal{W}_{p,a,b}(\nu,\mu) =\Big(\inf_{\xi \in \Gamma(\nu,\mu)} \int_{\R^{2d}} \|z_1 - z_2\|^p_{a,b}d\xi(z_1,z_2)\Big)^{1/p},
\end{equation}
where $\| \cdot \|_{a,b}$ is the norm introduced before and that $\Gamma(\nu,\mu)$  is the set of measures with respective marginals of $\nu$ and $\mu$.
\end{definition}

Before we proceed, we need to introduce the concept of Wasserstein convergence, which most of the results rely upon. 
\begin{lemma}[Wasserstein convergence]
\label{supp:lem:wass_conv}
Let $1 \leq p \leq \infty$, $\mu,\nu\in \mathcal{P}_p(\R^{2d})$, and $a,b>0$ with $b^2 <a$. Let us assume that $(z_k)_{k\ge 0} = (x_k,v_k)_{k\ge 0}$ and $(\tilde{z}_k)_{k\ge 0} = (\tilde{x}_k,\tilde{v}_k)_{k\ge 0}$ are two Markov chains with state space $\Lambda$ and kernel $P_h$ defined on the same probability space (a coupling) such that $z_0\sim \nu$,  $\tilde{z}_0\sim \mu$, and $\E(\|z_0-\tilde{z}_0\|^p)=\left[\mathcal{W}_{p,a,b}(\nu,\mu)\right]^p$. If the following contractive property holds,
\begin{equation}\label{supp:eq:contractivity_cond}
\left[\E(\| \tilde{z}_{k+1} - z_{k+1} \|^p_{a,b}|z_{0:k},\tilde{z}_{0:k})\right]^{1/p} \leq (1-c(h))\| \tilde{z}_k - z_k\|_{a,b}\quad \text{ for every }k\ge 0,
\end{equation}
then we have
$$
\mathcal{W}_{p,a,b}\left(\nu P^n_{h} ,\mu P^n_{h} \right) \leq (1-c(h))^n \mathcal{W}_{p,a,b}(\nu,\mu)\quad \text{ for every }n\ge 0.
$$
\end{lemma}
\begin{remark}
The existence of an optimal coupling satisfying that $\E(\|z_0-\tilde{z}_0\|_{a,b}^p)=\left[\mathcal{W}_{p,a,b}(\nu,\mu)\right]^p$ follows by Theorem 4.1 of \cite{villani2009optimal}.
\end{remark}
\begin{proof}
By induction, we have $\E(\| \tilde{z}_{n} - z_{n} \|^p_{a,b}|z_{0},\tilde{z}_{0})\le (1-c(h))^n \|z_{0}-\tilde{z}_{0}\|_{a,b}^p$, and the result follows by taking expectations and using Definition \ref{supp:eq:wass_dist}.
\end{proof}
Now, we present our first proposition, a convergence result of the $\UBU$ scheme with full gradients.
\begin{restatable}{proposition}{propWasserstein}
\label{supp:prop:Wasserstein}
Suppose that $U$ is $m$-strongly convex and $M$-$\nabla$Lipschitz.
Let \begin{equation}
\label{supp:eq:abeq}
a=\frac{1}{M},\quad b=\frac{1}{\gamma}, \quad c_{2}(h)=\frac{mh}{4\gamma}, \quad c(h)=\frac{mh}{8\gamma}.
\end{equation}
Let $P_{h}$ denote the transition kernel for a step of $\UBU$ with stepsize $h$. For all $\gamma \geq \sqrt{8M}$, $h < \frac{1}{2\gamma}$, $1 \leq p \leq \infty$, $\mu,\nu \in \mathcal{P}_{p}(\R^{2d})$, \eqref{supp:eq:wass_dist} holds. Hence for all  $n \in \mathbb{N}$,
\[
\mathcal{W}_{p,a,b}\left(\nu P^n_{h} ,\mu P^n_{h} \right) \leq \left(1 - c_{2}(h)\right)^{n/2} \mathcal{W}_{p,a,b}\left(\nu,\mu\right) \leq \left(1 - c(h)\right)^{n} \mathcal{W}_{p,a,b}\left(\nu,\mu\right).
\]
Further to this, $P_{h}$ has a unique invariant measure $\pi_{h}$ satisfying that $\pi_{h} \in \mathcal{P}_{p}(\R^{2d})$ for all $1 \leq p \leq \infty$.
\end{restatable}
\begin{remark}
We are going to use the same choices of $a$ and $b$ as stated in \eqref{supp:eq:abeq} everywhere in the paper.
\end{remark}
\begin{restatable}{corollary}{CorContinuousContraction}\label{supp:CorContinuousContraction}
\label{supp:cor:continuous_contraction}
Suppose that $U$ is an $m$-strongly convex $M$-$\nabla$Lipschitz potential, $\gamma \geq \sqrt{8M}$, $1 \leq p \leq 2$, $\mu,\nu \in \mathcal{P}_{p}(\R^{2d})$. Suppose that $(X_0,V_0)\sim \mu$, then  the solution of the continuous kinetic Langevin dynamics exists in the strong sense for any $t\ge 0$, and the corresponding Markov kernel $P_t^{\mathrm{cont}}$ satisfies 
\begin{equation}\label{supp:eq:continuous_contraction}
\mathcal{W}_{p,a,b}\left(\nu P_t^{\mathrm{cont}} ,\mu P_t^{\mathrm{cont}} \right) \leq \exp\left(-\frac{mt}{8\gamma}\right) \mathcal{W}_{p,a,b}\left(\nu,\mu\right) \quad \text{ for }\quad a=\frac{1}{M}, b=\frac{1}{\gamma}.
\end{equation}
\end{restatable}
\begin{remark}
    One can improve the restriction on $\gamma$ slightly by writing the potential as a perturbation of a quadratic as in \cite{schuh2022global}. Due to the restrictions on the stepsize $h$ and the friction parameter $\gamma$ in Proposition \ref{supp:prop:Wasserstein}, $c(h)=\O\left(\frac{m}{M}\right)$ for all allowed parameter choices. In general, for $\nabla$Lipschitz, strongly-convex potentials, it may be impossible to prove contraction using such a quadratic form argument and synchronous coupling for $\gamma \leq \O(\sqrt{M})$ as explained in \cite{monmarche2020almost}. In the continuous time dynamics, $\gamma=\O(\sqrt{m})$ seems to yield the fastest convergence rate, as explained in \cite{lu2022explicit}.  In Example \ref{supp:ex:Gaussian_example_UBU}, we show that for Gaussian targets, $\UBU$ has an accelerated convergence rate $c(h)=\O(\sqrt{\frac{m}{M}})$ with the choice $\gamma=\O(\sqrt{m})$ and $h=\O(1/\sqrt{M})$.
\end{remark}

\begin{proof}[Proof of Proposition \ref{supp:prop:Wasserstein}]
We follow the approach of \cite{PM21}[Corollary 20]. It is sufficient to prove contraction of a synchronous coupling of Markov chains in an appropriate norm, we will use the  $\|\cdot\|_{a,b}$ norm of Definition \ref{supp:def:weightednorm} with $a=\frac{1}{M}$, $b=\frac{1}{\gamma}$. Based on the assumptions, we have $b^{2}<a/4$.  Hence, \eqref{supp:eq:normequiv} holds.
 
We aim to show that contraction occurs in this norm for two Markov chains simulated by the same discretization $z_{n} = (x_{n},v_{n}) \in \R^{2d}$ and $\Tilde{z}_{n} = (\Tilde{x}_{n},\Tilde{v}_{n}) \in \R^{2d}$ that are synchronously coupled (i.e. share the same Gaussian random variables $\xi^{(1)},\ldots, \xi^{(4)}$), that is, 
\begin{equation}\label{supp:eq:cont_1}
  ||\tilde{z}_{k+1} - z_{k+1}||^{2}_{a,b} < \big(1 - c\left(h\right)\big)^2||\Tilde{z}_{k} - z_{k}||^{2}_{a,b}.  
\end{equation}
 Let $c_2(h)=1-(1-c(h))^2$, $z^{\Delta}_{j} = \Tilde{z}_{j} - z_{j}$ for $j \in \mathbb{N}$, then \eqref{supp:eq:cont_1} is equivalent to showing that 
\begin{equation}\label{supp:eq:contraction_matrix_form}
 \left(z^{\Delta}_{k}\right)^{T}\left(\left(1 - c_2\left(h\right)\right)\M- \PP^{T}\mathcal{M}\PP\right )z^{\Delta}_{k} > 0,    \quad \textnormal{where} \quad \mathcal{M} = \begin{pmatrix}
    I_d & b I_d \\
    b I_d & a I_d
\end{pmatrix},
\end{equation}
and $z^{\Delta}_{k+1} = \PP z^{\Delta}_{k}$ ($\PP$ depends on $z_{k}$ and $\Tilde{z}_{k}$, but we omit this in the notation).

Proving contraction for a general scheme is equivalent to showing that the matrix $\HH :=  \left(1 - c_2(h)\right)\M - \PP^{T}\M \PP \succ 0$ is positive definite. The matrix $\HH$ is symmetric and hence of the block form
\begin{equation}\label{supp:eq:contraction_matrix}
\HH = \begin{pmatrix}
    A & B \\
    B^T & C
\end{pmatrix},    
\end{equation}
where $A$, $B$, $C$ are $d\times d$ matrices, then  
\begin{equation}\HH\succ 0 \quad \Leftrightarrow \quad A \succ 0 \quad \text{ and } \quad C - BA^{-1}B \succ 0,
\end{equation}
as shown in Theorem 7.7.7 of \cite{matrixanalysis}. Further it is straightforward to show that if $A$, $B$ and $C$ commute then 
\begin{equation}\label{supp:eq:posdefcondition2}\HH\succ 0 \quad \Leftrightarrow \quad A \succ 0 \quad \text{ and } \quad AC - B^{2} \succ 0.
\end{equation}
Considering two synchronously coupled trajectories of the $\UBU$ scheme, such that they have common noise and consider the difference process $x^{\Delta}:= \left(\Tilde{x}_{j} - x_{j}\right)$, $v^{\Delta} = \left(\Tilde{v}_{j} - v_{j}\right)$ and $z^{\Delta} = \left(x^{\Delta}, v^{\Delta}\right)$, where $z^{\Delta}_{j} = \left(x^{\Delta}_{j}, v^{\Delta}_{j}\right)$ for $j = k,k +1$ for $k \in \mathbb{N}$. Let $\eta = \exp{\{-\gamma h/2\}}$, and 
\[Q = \int^{1}_{0}\nabla^{2}U\left(\Tilde{x}_{k} + t(x_{k} - \Tilde{x}_{k}\right)dt.\]
By convexity, we have $m I_d\preceq Q\preceq M I_d$.
Using the definition of the $\UBU$ scheme, we can show that 
$z^{\Delta}_{k+1} = \PP z^{\Delta}_{k}$ and $\HH :=  \left(1 - c_2(h)\right)\M - \PP^{T}\M \PP=\begin{pmatrix}
    A & B \\
    B^T & C
\end{pmatrix}$  has elements of the form
\begin{align*}
    A &= -c_2(h)I_d +Q \left(2 b h \eta+\frac{2 h \left(1-\eta\right)}{\gamma}\right)  + Q^2 \left(-a h^2 \eta^{2}-\frac{h^2 \left(1-\eta\right)^2}{\gamma^2}-\frac{2 b h^2 \eta \left(1-\eta\right)}{\gamma}\right)\\
    B &= \left(\left(1-\eta^{2}\right)\left(b - \frac{1}{\gamma}\right)-bc_2(h)\right)I_d+ Q^2 \left(-\frac{a h^2 \eta^{2} \left(1-\eta\right)}{\gamma}-\frac{2 b h^2 \eta \left(1-\eta\right)^2}{\gamma^2}-\frac{h^2
   \left(1-\eta\right)^3}{\gamma^3}\right)\\
    &+Q \left(a h \eta^{3}+\frac{h \left(\eta+1\right) \left(1-\eta\right)^2}{\gamma^2}+\frac{h \left(1-\eta\right)^2}{\gamma^2}+\frac{bh \eta^{2} \left(1-\eta\right)}{\gamma}+\frac{bh \eta \left(1-\eta\right)}{\gamma}+\frac{bh \eta \left(1-\eta^{2}\right)}{\gamma}\right)\\
    C &=  \left(a (1 - \eta^{4}) -\frac{2 b \eta^{2} \left(1-\eta^{2}\right)}{\gamma}-\frac{\left(1-\eta^{2}\right)^2}{\gamma^2}
 -ac_{2}\left(h\right)\right)I_d  \\&+ Q^2 \left(-\frac{a h^2 \eta^{2} \left(1-\eta\right)^2}{\gamma^2}-\frac{2 b h^2 \eta \left(1-\eta\right)^3}{\gamma^3}-\frac{h^2 \left(1-\eta\right)^4}{\gamma^4}\right)\\
 &+Q \left(\frac{2 a h \eta^{3} \left(1-\eta\right)}{\gamma}+\frac{2 b h \eta^{2} \left(1-\eta\right)^2}{\gamma^2}+\frac{2 b h \eta \left(\eta+1\right) \left(1-\eta\right)^2}{\gamma^2}+\frac{2 h \left(\eta+1\right) \left(1-\eta\right)^3}{\gamma^3}\right).
  \end{align*}

We will now check that $\HH\succ 0$ using \eqref{supp:eq:posdefcondition2}. By firstly considering $A$ we wish to show that all its eigenvalues are positive which can be precisely stated as
 \begin{align*}
 P_{A}(\lambda) &\geq  -c_{2}\left(h\right) + \frac{2h\lambda}{\gamma} + \left(-\frac{1}{M} - \frac{2h}{\gamma}\right)h^2 \lambda^2\\
 &\geq \frac{7h\lambda}{4\gamma} + \left(-\frac{1}{M} - \frac{1}{\gamma^2}\right)h^2 \lambda^2 > 0,
 \end{align*}
 where $\lambda$ is an eigenvalue of $Q$ ($m \leq \lambda \leq M$),  $P_{A}(\lambda)$ denotes the eigenvalue of $A$ according to the same eigenvector ($Q, A, B, C$ are all symmetric and have the same eigenvectors here). We  used our assumptions that $\gamma^{2} \geq M$, $1 - \eta \leq h\gamma/2$, and $h<\frac{1}{2\gamma}$. Hence, we have $A \succ 0$.

Now it remains to prove that $AC - B^{2} \succ 0$, now we have that $AC - B^{2}$ is a polynomial of $Q$, which we denote $P_{AC - B^{2}}(Q)$ and hence has eigenvalues dictated by the eigenvalues $\lambda$ of $Q$. Because the terms are more complicated than the previous discretizations, we choose a convenient way of expanding the expression, which can obtain positive definiteness. That is to expand the expression in terms of $a$. Therefore one can show that $P_{AC-B^{2}}(\lambda) = c_{0} + c_{1}a + c_{2}a^{2}$, where  
\begin{align*}
    &c_{1} + c_{2} a = \frac{h^2 c_2(h)\lambda^2 \eta^4}{\gamma^2}-\frac{2 h^2 c_2(h) \lambda^2
   \eta^2}{\gamma^2}-\frac{h^2 \lambda^2 \eta^4}{\gamma^2}+\frac{2 h^2 \lambda^2 \eta^2}{\gamma^2}+\frac{2 h c_2(h) \lambda \eta^4}{\gamma}-\frac{2 h \lambda \eta^4}{\gamma}\\
   &+c_2(h)
   \eta^4+\frac{h^2 c_2(h) \lambda^2}{\gamma^2}-\frac{h^2
   \lambda^2}{\gamma^2}-\frac{2 h c_2(h) \lambda}{\gamma}+\frac{2 h
   \lambda}{\gamma}+ c_2(h)^2 - c_2(h)\\
   &+  a\left(-\eta^2 h^2 \lambda^2 + \eta^2 h^{2}c_2(h)\lambda^2\right) \\
    &\geq \frac{h\lambda}{\gamma}\left(1-c_2(h)\right)\left(\frac{7}{4}(1-\eta^{4}) - \frac{4h\lambda}{\gamma} - h\gamma \right).
\end{align*}

Furthermore, we have that 
\begin{align*}
    &c_{0} = \frac{h^2 (1-c_2(h))\lambda^2 \eta^4}{\gamma^4}-\frac{2 h^2 (1-c_2(h)) \lambda^2 \eta^2}{\gamma^4}+\frac{2 h (1-c_2(h)) \lambda \eta^4}{\gamma^3}+\frac{c_2(h)(1-\eta^4)}{\gamma^2}\\
    &-\frac{c_2(h)^2}{\gamma^2}+\frac{h^2 \lambda^2(1-c_2(h))}{\gamma^4}-\frac{2 h \lambda(1-c_2(h))}{\gamma^3}\\
    &> \frac{h\lambda}{\gamma^{3}}\left(1-c_2(h)\right)\left(\frac{h\lambda}{\gamma^{3}}\left(1-\eta^{2}\right)^{2} - 2\left(1-\eta^{4}\right)\right),
\end{align*}
where now we combine this with the previous estimate 
\begin{align*}
P_{AC - B^{2}}(\lambda) &> \frac{h\left(1-c_2(h)\right)}{\gamma}\left( \frac{7}{4}(1-\eta^{4}) - \frac{4h\lambda}{\gamma} - h\gamma - \frac{2\lambda\left(1-\eta^{4}\right)}{\gamma^{2}}\right) > 0,   
\end{align*}
which is true when $\gamma \geq \sqrt{8M}$ and we have used the fact that $1-\eta^{4} \geq h\gamma$. Hence $AC - B^{2} \succ 0$ and our contraction results hold. All computations can be checked using \texttt{Mathematica}.
The first claim follows by Lemma \ref{supp:lem:wass_conv} using \eqref{supp:eq:cont_1}. The existence of a unique invariant distribution $\pi_{h}\in \mathcal{P}_{p}(\R^{2d})$ follows by the same argument as in \cite{PM21}[Corollary 20].
\end{proof}

\begin{proof}[Proof of Corollary \ref{supp:cor:continuous_contraction}]
By the triangle inequality, we have that for $a=\frac{1}{M}$, $b=\frac{1}{\gamma}$, any $n\in \mathbb{N}$ such that $n>t\cdot 2\gamma$,
\begin{align*}
&\mathcal{W}_{p,a,b}\left(\nu P_t^{\mathrm{cont}} ,\mu P_t^{\mathrm{cont}} \right) \\
&\leq 
\mathcal{W}_{p,a,b}\left(\nu P_{t/n}^{n} ,\mu P_{t/n}^n \right)
+\mathcal{W}_{p,a,b}\left(\nu P_t^{\mathrm{cont}}, \nu P_{t/n}^{n} \right)+\mathcal{W}_{p,a,b}\left(\mu P_t^{\mathrm{cont}}, \mu P_{t/n}^{n} \right).
\end{align*}
The first term can be bounded using Proposition \ref{supp:prop:Wasserstein}, and the upper bound can be shown to converge to $\exp\left(-\frac{mt}{8\gamma}\right)$ as $n\to \infty$. The second and third terms can be shown to converge to $0$ as $n\to \infty$ using the strong convergence of the $\UBU$ discretization towards the diffusion (strong order 1 under these assumptions), which was established in Section 7.7 of \cite{sanz2021wasserstein}, and the claim of the corollary now follows.
\end{proof}

\begin{example}\label{supp:ex:Gaussian_example_UBU}
Considering the anisotropic Gaussian distribution on $\R^{2}$ with a $m$-strongly convex and $M$-$\nabla$ Lipschitz potential $U: \R^{2} \mapsto \R$ given by
 \[
 U(x,y) = \frac{1}{2}mx^{2} + \frac{1}{2}My^{2}. 
\]
For the BU scheme the transition matrix for the difference chain of synchronously coupled chains is given by the matrix
 \[
P = \begin{pmatrix} I -h\left(\frac{1-\eta^{2}}{\gamma}\right)Q & \frac{1-\eta^{2}}{\gamma}I\\
-h\eta^{2} Q & \eta^{2} I 
\end{pmatrix}\text{, where }Q = \begin{pmatrix}
    m & 0 \\
    0 & M \end{pmatrix},
\]
with eigenvalues 
\[\frac{1 + \eta^{2} - h\frac{1-\eta^{2}}{\gamma}\lambda \pm \sqrt{-4\eta^{2} + \left(1 + \eta^{2} - h\frac{1-\eta^{2}}{\gamma}\lambda\right)^2}}{2},\]
for $\lambda = m,M$. For stability and contraction, we require that 

\begin{equation}\label{supp:eq:condUBU}
\lambda_{\max}:=\max_{\lambda\in \{m,M\}}\left|\frac{1 + \eta^{2} - h\frac{1-\eta^{2}}{\gamma}\lambda \pm \sqrt{-4\eta^{2} + \left(1 + \eta^{2} - h\frac{1-\eta^{2}}{\gamma}\lambda\right)^2}}{2}\right| < 1. 
\end{equation}
From this, we can compute the stepsize restrictions and the best convergence rate as, by Gelfand's formula, the asymptotic contraction rate exactly equals $1-\lambda_{\max}$. Due to the convexity of the absolute value function it is necessary that $\frac{1}{2}|1+\eta^{2}-h\frac{1-\eta^{2}}{\gamma}M|<1$, therefore $h < \sqrt{\frac{8}{M}}$, when $h < \frac{1}{2\gamma}$. In the moderate to high friction regime, the contraction rate can be written as
\[
c = \frac{1-\eta^{2} + h\frac{1-\eta^{2}}{\gamma}m - \sqrt{-4h\left(\frac{1-\eta^{2}}{\gamma}\right)m + \left(1 - \eta^{2} + h\frac{1-\eta^{2}}{\gamma}m\right)^2}}{2}
\]
which can be shown to be $\O(mh/\gamma)$ for $\gamma \geq \O(\sqrt{M})$ and $h < \O(\frac{1}{\gamma})$ for appropriate constants. In the low friction regime, we set $\gamma$ such that $-4\eta^{2} + \left(1+\eta^{2} - h\frac{1-\eta^{2}}{\gamma}m\right)^2 = 0$, noting that the solution to   this yields $\gamma$ to be $\O(\sqrt{m})$.  In this case, the eigenvalues of $P$ are
\[
\eta, \qquad \frac{1}{2}\left(1+\eta^{2} -h\frac{1-\eta^{2}}{\gamma}M \pm \sqrt{-4\eta^{2} + \left(1+\eta^{2} -h\frac{1-\eta^{2}}{\gamma}M\right)^2}\right),
\]
with modulus $\eta$ when $\left(1+\eta^{2} - h\frac{1-\eta^{2}}{\gamma}M\right)^{2} < 4\eta^{2}$. This restriction implies that $h$ is $\O(1/\sqrt{M})$.
The contraction rate is therefore given by
\[
c = 1-\eta \geq \frac{h\gamma}{4} = \O\left(\sqrt{\frac{m}{M}}\right),
\]
where $h$ is $\O(1/\sqrt{M})$. We have the corresponding contraction rate results for $\UBU$ as well due to the fact that $(\UBU)^{n} = \text{U}(\text{BU})^{n-1}\text{U}$ and $\text{U}$ is Lipschitz.
\end{example}


A key ingredient to establishing some variance bounds for the inexact gradient methods is to establish non-asymptotic bounds on the fourth moment of the distance to the minimizer. To do this we use a Lyapunov function similar to the one used for kinetic Langevin dynamics in \cite{eberle2019couplings} and inspired by \cite{mattingly2002ergodicity}. Related Lyapunov functions have also been used in \cite{durmus2021uniform} for discretized kinetic Langevin dynamics and \cite{Karoni2023} for optimizers based on Langevin dynamic methods. These bounds provide novel drift conditions in $L^{4}$ for $\UBU$ scheme and can be extended to the case of stochastic gradients.

The following lemma will be useful for the argument.
\begin{lemma}[Convexity bound]\label{supp:lem:convexity_type}
For all $x \in \R^{d}$ and for a $m-$strongly convex, $M$-$\nabla$Lipschitz potential $U:\R^{d} \to \R$ with minimizer $x^{*} \in \R^{d}$ such that $\nabla U(x^{*}) = 0$, we have 
\begin{align*}
 \left(x -x^{*}\right)\cdot \left(\nabla U(x)-\nabla U(x^{*})\right)/2 \geq \lambda \left(U(x) - U(x^{*}) + \gamma^{2}\|x-x^{*}\|^{2}/4\right)
\end{align*}
for
\begin{equation}\label{supp:eq:lambdadef}
    \lambda=\min\left(\frac{1}{4},\frac{m}{\gamma^2}\right).
\end{equation}
\end{lemma}
\begin{proof}
By convexity, it follows that 
$\left(x -x^{*}\right)\cdot \left(\nabla U(x)-\nabla U(x^{*})\right)/4\ge  (U(x) - U(x^{*}))/4$, and by $m$-strong convexity, we have $\left(x -x^{*}\right)\cdot \left(\nabla U(x)-\nabla U(x^{*})\right)/4\ge m \|x-x^{*}\|^{2}/4$. We obtain the result by adding up these two inequalities.
\end{proof}





\begin{proposition} \label{supp:prop:full_gradient_L4}
    Consider the $\UBU$ scheme with the underlying potential $U:\R^{d} \to \R$ is $M$-$\nabla$Lipschitz and $m$-strongly convex. Denote $x^{*} \in \R^{d}$ to be the minimizer of $U$ such that $\nabla U(x^{*}) = 0$ and $(x_{k},v_{k},\overline{x}_{k})_{k\in \mathbb{N}}$ to be defined by \eqref{supp:eq:disc_v}-\eqref{supp:eq:disc_x} the iterates of the full gradient $\UBU$ scheme and the points of gradient evaluation within each iteration. Further assume that $h < \min{\left( 1,\frac{1}{2\gamma}, \frac{\lambda}{8 \gamma (4+\lambda)}\right)}$ and $\gamma^{2} \geq M$, then we have 
\begin{align*}
    &\mathbb{E}\left[\|\overline{x}_{k}-x^{*}\|^{4} \mid x_{0},v_{0}\right]\leq \frac{4}{m^{2}}\Bigg[4\left(1-\frac{c_4(h)}{2}\right)^{k}\left(\gamma^{4}\|x_{0}-x^{*}\|^{4} + \|v_{0}\|^{4} +122\gamma^{2}h^{2}d^{2}\right)\\
  &+2\frac{\frac{(6h\gamma d + 160h\gamma(1+\lambda^{2}))^{2}}{4c_4(h)}  + 24h^{2}\gamma^{2}d^{2}}{c_4(h)}\Bigg],  
\end{align*}
where 
\begin{equation}\label{supp:eq:c4hdef}c_4(h) := h\lambda\gamma -8h^{2}\gamma^{2}(4 + \lambda).
\end{equation}
\end{proposition}
\begin{proof}
Using the fact that $(\mathcal{UBU})^{n} = \mathcal{U}(\mathcal{BU})^{n-1}\mathcal{BU}$ we can consider convergence of $\mathcal{BU}$,  We have that the $\mathcal{BU}$ function can be written as the update rule
\begin{align}
\label{supp:eq:BUsteps:exactx}    
    \overline{x}_{k+1} &= \overline{x}_{k} + \frac{1-\eta^{2}}{\gamma}\left(\overline{v}_{k} - h\nabla U(\overline{x}_{k})\right) + \sqrt{\frac{2}{\gamma}}\left(\mathcal{Z}^{(1)}\left(h,\xi^{(1)}_{k+1}\right) - \mathcal{Z}^{(2)}\left(h,\xi^{(1)}_{k+1},\xi^{(2)}_{k+1}\right)\right),\\
\label{supp:eq:BUsteps:exactv}
    \overline{v}_{k+1} &= \eta^{2}\left(\overline{v}_{k} - h\nabla U(\overline{x}_{k})\right) + \sqrt{2\gamma} \mathcal{Z}^{(2)}\left(h,\xi^{(1)}_{k+1},\xi^{(2)}_{k+1}\right),
\end{align}
where we used the notation $\left(\overline{x}_{k}\right)_{k \in \mathbb{N}}$ because this is the point of the gradient evaluation at each step of UBU and is the same as the $\left(\overline{x}_{k}\right)_{k \in \mathbb{N}}$ in \eqref{supp:eq:disc_y}. As a reminder,
\begin{align*}
\mathcal{Z}^{(1)}\left(h,\xi^{(1)}_{k+1}\right) &= \sqrt{h}\xi^{(1)}_{k+1}\\
\mathcal{Z}^{(2)}\left(h,\xi^{(1)}_{k+1},\xi^{(2)}_{k+1}\right) &= \sqrt{\frac{1-\eta^{4}}{2\gamma}}\left(\sqrt{\frac{1-\eta^{2}}{1+\eta^{2}}\cdot \frac{2}{\gamma h}}\xi^{(1)}_{k+1} + \sqrt{1-\frac{1-\eta^{2}}{1+\eta^{2}}\cdot\frac{2}{\gamma h}}\xi^{(2)}_{k+1}\right).
\end{align*}

We choose our Lyapunov function $\mathcal{V}:\mathbb{R}^{2d} \to \mathbb{R}$, defined for $(x,v) \in \mathbb{R}^{2d}$ by 
\begin{equation}\label{supp:eq:Vxvdef}
\mathcal{V}(x,v):=U(x) - U(x^{*}) + \frac{1}{4}\gamma^{2}\left( \|x-x^{*} + \gamma^{-1}v\|^{2} + \|\gamma^{-1}v\|^{2}-\lambda\|x-x^{*}\|^{2}\right).
\end{equation}
It is easy to check that for all $(x,v) \in \mathbb{R}^{2d}$, $\|x-x^{*} + \gamma^{-1}v\|^{2} + \|\gamma^{-1}v\|^{2}\ge \frac{1}{2}\|x-x^{*}\|^2$ and hence using $\eqref{supp:eq:lambdadef}$, 
\begin{equation}\mathcal{V}(x,v)\ge \left(\frac{m}{2}+\frac{1}{16}\gamma^2\right)\|x-x^*\|^2.\end{equation}

In order to have control over fourth moments $\E [\|\ol{x}_k-x^*\|^4]$, we start with
\begin{align*}
&\mathbb{E}\left[\mathcal{V}(\overline{x}_{k+1},\overline{v}_{k+1})^{2} \mid \overline{x}_{k},\overline{v}_{k}\right] =\\
    &\mathbb{E}\left[\left(U(\overline{x}_{k+1}) - U(x^{*}) + \frac{1}{4}\gamma^{2}\left( \|\overline{x}_{k+1}-x^{*} + \gamma^{-1}\overline{v}_{k+1}\|^{2} + \|\gamma^{-1}\overline{v}_{k+1}\|^{2}-\lambda\|\overline{x}_{k+1}-x^{*}\|^{2}\right)\right)^{2}\mid \overline{x}_{k},\overline{v}_{k}\right],
\end{align*}
and using \cite{nesterov2018lectures}[Lemma 1.2.3] we have
\begin{align*}
&U(\overline{x}_{k+1}) - U(x^{*}) \leq U(\overline{x}_{k}) - U(x^{*}) + \left[\nabla U(\overline{x}_{k}) \cdot \left(\overline{x}_{k+1}-\overline{x}_{k}\right)\right] + \frac{M}{2}\left\|\overline{x}_{k+1}-\overline{x}_{k}\right\|^{2}
\end{align*}
and
\begin{align*}
&\mathbb{E}\left[\mathcal{V}(\overline{x}_{k+1},\overline{v}_{k+1})^{2} \mid \overline{x}_{k},\overline{v}_{k}\right] \leq \mathbb{E}\Bigg[\Bigg(U(\overline{x}_{k}) - U(x^{*}) + \left[\nabla U(\overline{x}_{k}) \cdot \left(\overline{x}_{k+1}-\overline{x}_{k}\right)\right] + \frac{M}{2}\left\|\overline{x}_{k+1}-\overline{x}_{k}\right\|^{2} \\
&+ \frac{1}{4}\gamma^{2}\left( \|\overline{x}_{k+1}-x^{*} + \gamma^{-1}\overline{v}_{k+1}\|^{2} + \|\gamma^{-1}\overline{v}_{k+1}\|^{2}-\lambda\|\overline{x}_{k+1}-x^{*}\|^{2}\right)\Bigg)^{2}\mid \overline{x}_{k},\overline{v}_{k}\Bigg].
 \end{align*}
Now, we can decompose the right-hand side in the form
\[
\mathbb{E}\left(\left(r(\overline{x}_{k},\overline{v}_{k}) + \mathbf{s}(\overline{x}_{k},\overline{v}_{k}) \cdot (\xi^{(1)}_{k+1},\xi^{(2)}_{k+1}) + (\xi^{(1)}_{k+1},\xi^{(2)}_{k+1})^{T} \mathcal{T} (\xi^{(1)}_{k+1},\xi^{(2)}_{k+1})\right)^{2} \mid \overline{x}_{k},\overline{v}_{k}\right),
\]
for $r: \R^{2d} \to \R$, $\mathbf{s}:\R^{2d} \to \R^{2d}$ and $\mathcal{T} \in \R^{2d \times 2d}$. We then have
\begin{align*}
&\mathbb{E}\left[\mathcal{V}(\overline{x}_{k+1},\overline{v}_{k+1})^{2} \mid \overline{x}_{k},\overline{v}_{k}\right] \leq r^{2}(\overline{x}_{k},\overline{v}_{k}) + \mathbb{E}_{\xi^{(1)}_{k+1},\xi^{(2)}_{k+1}}\Big(\left(\mathbf{s}(\overline{x}_{k},\overline{v}_{k}) \cdot (\xi^{(1)}_{k+1},\xi^{(2)}_{k+1})\right)^{2} \\
&+ \left((\xi^{(1)}_{k+1},\xi^{(2)}_{k+1})^{T} \mathcal{T} (\xi^{(1)}_{k+1},\xi^{(2)}_{k+1})\right)^{2} + 2r(\overline{x}_{k},\overline{v}_{k})(\xi^{(1)}_{k+1},\xi^{(2)}_{k+1})^{T} \mathcal{T} (\xi^{(1)}_{k+1},\xi^{(2)}_{k+1})\Big),
\end{align*}
using the fact that $\xi^{(1)}_{k+1}$ and $\xi^{(2)}_{k+1}$ are independently distributed and have zero first and third moments. The terms $r, \mathbf{s}$ and $\mathcal{T}$ are given by 
\begin{align*}
   &r(\overline{x}_{k},\overline{v}_{k}) = \mathcal{V}(\overline{x}_{k},\overline{v}_{k}) -\frac{h\gamma}{2}\nabla U(\overline{x}_{k}) \cdot (\overline{x}_{k}-x^{*} + \gamma^{-1}\overline{v}_{k}) +\frac{1-\eta^{2}}{\gamma}\overline{v}_{k} \cdot \nabla U(\overline{x}_{k}) -\frac{\lambda(1-\eta^{2})\gamma}{2}\langle \overline{x}_{k}-x^{*},\overline{v}_{k}\rangle \\
&- \frac{1-\eta^{4}}{4}\|\overline{v}_{k}\|^{2}- \frac{h\eta^{4}}{2}\overline{v}_{k} \cdot \nabla U(\overline{x}_{k})-h\frac{1-\eta^{2}}{\gamma}\nabla U(\overline{x}_{k}) \cdot \nabla U(\overline{x}_{k}) + h^{2}\frac{(1+\eta^{4})}{4}\|\nabla U(\overline{x}_{k})\|^{2}\\
&+ \left(\frac{M}{2}-\frac{\gamma^{2}\lambda}{4}\right)\left(\frac{1-\eta^{2}}{\gamma}\right)^{2}\|\overline{v}_{k}-h\nabla U(\overline{x}_{k})\|^{2} +h\frac{\lambda(1-\eta^{2})\gamma}{2}\langle \overline{x}_{k}-x^{*},\nabla U(\overline{x}_{k})\rangle,
\\
&\mathbf{s}(\overline{x}_{k},\overline{v}_{k})\cdot \left(\xi^{(1)}_{k+1},\xi^{(2)}_{k+1} \right) = ((\sqrt{h}-a_{1})\xi^{(1)}_{k+1} - a_{2}\xi^{(2)}_{k+1})\cdot \left(\frac{M\sqrt{2\gamma}}{2\gamma}\frac{1-\eta^{2}}{\gamma} (\overline{v}_{k}-h\nabla U(\overline{x}_{k})) + \sqrt{\frac{2}{\gamma}}\nabla U(\overline{x}_{k})\right)\\
&+ \frac{\sqrt{2\gamma h}}{4}\gamma\left(\overline{x}_{k}-x^{*} + \gamma^{-1}\overline{v}_{k}-\frac{h}{\gamma}\nabla U(\overline{x}_{k})\right)\cdot \xi^{(1)}_{k+1} + \frac{\eta^{2} \sqrt{2\gamma}}{4} (\overline{v}_{k}-h\nabla U(\overline{x}_{k})) \cdot \left(a_{1}\xi^{(1)}_{k+1} + a_{2}\xi^{(2)}_{k+1}\right)\\
&-\frac{\lambda \gamma \sqrt{2\gamma}}{4}\left(\overline{x}_{k}-x^{*} + \frac{1-\eta^{2}}{\gamma}\left(\overline{v}_{k} - h \nabla U(\overline{x}_{k})\right) \cdot \left((\sqrt{h}-a_{1})\xi^{(1)}_{k+1} - a_{2}\xi^{(2)}_{k+1} \right)\right),\\
&(\xi^{(1)}_{k+1},\xi^{(2)}_{k+1})^{T}\mathcal{T}(\xi^{(1)}_{k+1},\xi^{(2)}_{k+1})= \left( \frac{M}{\gamma} - \frac{\lambda\gamma}{2}\right)\left\|\left( \sqrt{h}-a_{1}\right)\xi^{(1)}_{k+1} - a_{2}\xi^{(2)}_{k+1}\right\|^{2} \\
&+\frac{h\gamma}{2}\left\|\xi^{(1)}_{k+1}\right\|^{2} + \frac{\gamma}{2}\left\| a_{1}\xi^{(1)}_{k+1} + a_{2}\xi^{(2)}_{k+1}\right\|^{2},
\end{align*}
where we have defined $\mathcal{Z}^{(2)}(h,\xi^{(1)}_{k+1},\xi^{(2)}_{k+1}) := a_{1}\xi^{(1)}_{k+1} + a_{2}\xi^{(2)}_{k+1}$ and $\mathcal{Z}^{(1)}(h,\xi^{(1)}_{k+1}) := \sqrt{h}\xi^{(1)}_{k+1}$ and $\mathcal{Z}^{(1)}(h,\xi^{(1)}_{k+1})-\mathcal{Z}^{(2)}(h,\xi^{(1)}_{k+1},\xi^{(2)}_{k+1}) = (\sqrt{h}-a_{1})\xi^{(1)}_{k+1} - a_{2}\xi^{(2)}_{k+1}$ with $|\sqrt{h}-a_{1}| \leq 2\sqrt{h}$, $|a_{2}| \leq \sqrt{h}$ and $|a_{1}|\leq \sqrt{h}$.

We start by bounding the deterministic component $r$:
\begin{align*}
r(\overline{x}_{k},\overline{v}_{k}) &= \mathcal{V}(\overline{x}_{k},\overline{v}_{k}) -\frac{h\gamma}{2}\nabla U(\overline{x}_{k}) \cdot (\overline{x}_{k}-x^{*} + \gamma^{-1}\overline{v}_{k}) +\frac{1-\eta^{2}}{\gamma}\overline{v}_{k} \cdot \nabla U(\overline{x}_{k}) \\
&-\frac{\lambda(1-\eta^{2})\gamma}{2}\langle \overline{x}_{k}-x^{*},\overline{v}_{k}\rangle - \frac{1-\eta^{4}}{4}\|\overline{v}_{k}\|^{2}- \frac{h\eta^{4}}{2}\overline{v}_{k} \cdot \nabla U(\overline{x}_{k})+ \O(h^{2})
\end{align*}
where the higher-order terms are given by
\begin{align*}
&-h\frac{1-\eta^{2}}{\gamma}\nabla U(\overline{x}_{k}) \cdot \nabla U(\overline{x}_{k}) + \left(\frac{M}{2}-\frac{\gamma^{2}\lambda}{4}\right)\left(\frac{1-\eta^{2}}{\gamma}\right)^{2}\|\overline{v}_{k}-h\nabla U(\overline{x}_{k})\|^{2} \\&+h\frac{\lambda(1-\eta^{2})\gamma}{2}\langle \overline{x}_{k}-x^{*},\nabla U(\overline{x}_{k})\rangle
+ h^{2}\frac{(1+\eta^{4})}{4}\|\nabla U(\overline{x}_{k})\|^{2}.
\end{align*}
Using Lemma \ref{supp:lem:convexity_type} we have
\begin{align*}
&r(\overline{x}_{k},\overline{v}_{k}) \leq \mathcal{V}(\overline{x}_{k},\overline{v}_{k}) + \O(h^{2})\\
&-h\gamma\lambda\left(U(\overline{x}_{k}) - U(x^{*}) + \frac{\gamma^{2}}{4}\|\overline{x}_{k}-x^{*}\|^{2} + \frac{1-\eta^{2}}{2h}\langle \overline{x}_{k}-x^{*},\overline{v}_{k}\rangle + \frac{1-\eta^{4}}{4h\gamma\lambda}\|\overline{v}_{k}\|^{2}\right) \\
&\leq \left(1-h\lambda\gamma\right)\mathcal{V}(\overline{x}_{k},\overline{v}_{k}) + h\gamma\lambda\left(\frac{1-\eta^{4}}{4h\gamma\lambda} - \frac{1}{2\lambda}\right)\|\overline{v}_{k}\|^{2} \\&+ h\gamma\lambda\left(\frac{1-\eta^{2}}{2h} - \frac{\gamma}{2}\right)\langle \overline{x}_{k}-x^{*},\overline{v}_{k}\rangle + \O(h^{2})\\
&\leq \left(1-h\lambda\gamma\right)\mathcal{V}(\overline{x}_{k},\overline{v}_{k})  + h\gamma\lambda\left(\frac{1-\eta^{2}}{2h} - \frac{\gamma}{2}\right)\langle \overline{x}_{k}-x^{*},\overline{v}_{k}\rangle + \O(h^{2}),
\end{align*}
where we have used 
\[
\left(-\frac{h}{2} + \frac{1-\eta^{2}}{\gamma} -\frac{h\eta^{4}}{2}\right) \overline{v}_{k} \cdot \nabla U(\overline{x}_{k}) \leq \frac{h^{2}\gamma}{2}\left|\overline{v}_{k}\cdot \nabla U(\overline{x}_{k})\right|,
\]
due to the fact that for all $0< x < 1$, $0 \leq -x+2(1-e^{-x}) - xe^{-2x}\leq x^{2}$ and $0< h\gamma <1$. We group this term into higher-order terms and
use the fact that $1-\eta^{2} \geq h\gamma - \frac{(h\gamma)^{2}}{2}$ to arrive at
\[
h\gamma\lambda\left(\frac{1-\eta^{2}}{2h} - \frac{\gamma}{2}\right)\langle \overline{x}_{k}-x^{*},\overline{v}_{k}\rangle \leq h\gamma\lambda\left(\frac{\gamma}{2} - \frac{1-\eta^{2}}{2h}\right)\left|\langle \overline{x}_{k}-x^{*},\overline{v}_{k}\rangle\right|\leq \lambda \frac{h^{2}\gamma^{3}}{4}\left|\langle \overline{x}_{k}-x^{*},\overline{v}_{k}\rangle\right|.
\]
We again group this into the higher-order terms.  Assuming $h < 1$, we find that the second-order terms are bounded by 
\begin{align*}
&Mh^{2}\left(\|\overline{v}_{k}\|^{2} + h^{2}M^{2}\|\overline{x}_{k}-x^{*}\|^{2}\right) + h^{2}\frac{\gamma^{2}\lambda}{2}M\|\overline{x}_{k}-x^{*}\|^{2} + \frac{h^{2}M^{2}}{2}\|\overline{x}_{k}-x^{*}\|^{2}\\
&+ \frac{h^{2}\gamma}{2}\left(\sqrt{M}\|\overline{v}_{k}\|^{2} + M^{3/2}\|\overline{x}_{k}-x^{*}\|^{2}\right) + \lambda \frac{h^{2}\gamma^{3}}{4}\left(\gamma\|\overline{x}_{k}-x^{*}\|^{2} + \frac{1}{\gamma}\|\overline{v}_{k}\|^{2}\right) .
\end{align*}
Assuming that $\lambda \leq \frac{1}{4}$ we have, for all $x,v \in \R^{d}$,
\[
8\mathcal{V}(x,v) \geq \|v\|^{2} \qquad 16\mathcal{V}(x,v) \geq  \gamma^{2}\|x-x^{*}\|^{2}
\]
and  using $h < \frac{1}{2\sqrt{M}}$, the $\O(h^{2})$ terms are bounded by 
\begin{align*}
&h^{2}\left(\gamma^{2} + \frac{\gamma^{2}\lambda}{4} + \frac{\gamma^{2}}{2}\right)\|\overline{v}_{k}\|^{2} + \gamma^{2}h^{2}\left(\frac{M}{4}+ \frac{M\lambda}{2} + \frac{M}{2} + \frac{\gamma^{2}\lambda}{4}  + \frac{M^{2}}{2\gamma^{2}}\right)\|\overline{x}_{k}-x^{*}\|^{2}\\
&\leq 8h^{2}\gamma^{2}\left(4+\lambda\right)\mathcal{V}(\overline{x}_{k},\overline{v}_{k}).
\end{align*}
Therefore 
\[r(\overline{x}_{k},\overline{v}_{k}) \leq \left(1-h\lambda\gamma + 8h^{2}\gamma^{2}\left(4 + \lambda \right)\right)\mathcal{V}(\overline{x}_{k},\overline{v}_{k}).\]

Now let us define $c_4(h) := h\lambda\gamma -8h^{2}\gamma^{2}\left(4 + \lambda\right)$, then we have that
\[
r^{2}(\overline{x}_{k},\overline{v}_{k}) \leq (1-c_4(h))^{2}\mathcal{V}^{2}(\overline{x}_{k},\overline{v}_{k})
\]
and
\begin{align*}
&2r(\overline{x}_{k},\overline{v}_{k})\mathbb{E}_{\xi^{(1)}_{k+1},\xi^{(2)}_{k+1}}\left[(\xi^{(1)}_{k+1},\xi^{(2)}_{k+1})^{T} \mathcal{T} (\xi^{(1)}_{k+1},\xi^{(2)}_{k+1}) \right]\leq \\
& 2(1-c_4(h))\mathcal{V}(\overline{x}_{k},\overline{v}_{k})\mathbb{E}_{\xi^{(1)}_{k+1},\xi^{(2)}_{k+1}}\left[(\xi^{(1)}_{k+1},\xi^{(2)}_{k+1})^{T} \mathcal{T} (\xi^{(1)}_{k+1},\xi^{(2)}_{k+1}) \right].
\end{align*}
From the fact that $\lambda \gamma/2 \leq M/\gamma$ (due to Lemma \ref{supp:lem:convexity_type}) and $\gamma^{2} \geq 8M$ we have the estimates
\begin{align*}
&\mathbb{E}_{\xi^{(1)}_{k+1},\xi^{(2)}_{k+1}}\left[(\xi^{(1)}_{k+1},\xi^{(2)}_{k+1})^{T} \mathcal{T} (\xi^{(1)}_{k+1},\xi^{(2)}_{k+1}) \right] = \mathbb{E}_{\xi^{(1)}_{k+1},\xi^{(2)}_{k+1}}\Bigg[\left( \frac{M}{\gamma} - \frac{\lambda\gamma}{2}\right)\left\|\left( \sqrt{h}-a_{1}\right)\xi^{(1)}_{k+1} - a_{2}\xi^{(2)}_{k+1}\right\|^{2} \\
&+\frac{h\gamma}{2}\left\|\xi^{(1)}_{k+1}\right\|^{2} + \frac{\gamma}{2}\left\| a_{1}\xi^{(1)}_{k+1} + a_{2}\xi^{(2)}_{k+1}\right\|^{2}\Bigg] \leq 3h\gamma d\\
&\mathbb{E}_{\xi^{(1)}_{k+1},\xi^{(2)}_{k+1}}\left[\left((\xi^{(1)}_{k+1},\xi^{(2)}_{k+1})^{T} \mathcal{T} (\xi^{(1)}_{k+1},\xi^{(2)}_{k+1})\right)^{2} \right]\\
&\leq\mathbb{E}_{\xi^{(1)}_{k+1},\xi^{(2)}_{k+1}}\left[\left(2h\gamma \|\xi^{(1)}_{k+1}\|^{2} +  2h\gamma \|\xi^{(2)}_{k+1}\|^{2}\right)^{2}\right] \leq 24h^{2}\gamma^{2}d^{2}.
\end{align*}

Therefore the remaining term we need to bound is $\mathbb{E}_{\xi^{(1)}_{k+1},\xi^{(2)}_{k+1}}\left(\mathbf{s}(\overline{x}_{k},\overline{v}_{k}) \cdot (\xi^{(1)}_{k+1},\xi^{(2)}_{k+1})\right)^{2} = \|\mathbf{s}(\overline{x}_{k},\overline{v}_{k})\|^{2}$, where
\begin{align*}
&\mathbf{s}(\overline{x}_{k},\overline{v}_{k})\cdot \left(\xi^{(1)}_{k+1},\xi^{(2)}_{k+1} \right) \\
&= ((\sqrt{h}-a_{1})\xi^{(1)}_{k+1} - a_{2}\xi^{(2)}_{k+1})\cdot \left(\frac{M\sqrt{2\gamma}}{2\gamma}\frac{1-\eta^{2}}{\gamma} (\overline{v}_{k}-h\nabla U(\overline{x}_{k})) + \sqrt{\frac{2}{\gamma}}\nabla U(\overline{x}_{k})\right)\\
&+ \frac{\sqrt{2\gamma h}}{4}\gamma\left(\overline{x}_{k}-x^{*} + \gamma^{-1}\overline{v}_{k}-\frac{h}{\gamma}\nabla U(\overline{x}_{k})\right)\cdot \xi^{(1)}_{k+1} + \frac{\eta^{2} \sqrt{2\gamma}}{4} (\overline{v}_{k}-h\nabla U(\overline{x}_{k})) \cdot \left(a_{1}\xi^{(1)}_{k+1} + a_{2}\xi^{(2)}_{k+1}\right)\\
&-\frac{\lambda \gamma \sqrt{2\gamma}}{4}\left(\overline{x}_{k}-x^{*} + \frac{1-\eta^{2}}{\gamma}\left(\overline{v}_{k} - h \nabla U(\overline{x}_{k})\right) \cdot \left((\sqrt{h}-a_{1})\xi^{(1)}_{k+1} - a_{2}\xi^{(2)}_{k+1} \right)\right),
\end{align*}
using that $\gamma^{2}\|x-x^{*}\|^{2} \leq 16\mathcal{V}(x,v)$ and $\|v\|^{2} \leq 8\mathcal{V}(x,v)$ for all $x,v \in \R^{d}$ we have 
\[
\|\mathbf{s}(\overline{x}_{k},\overline{v}_{k})\|^{2} = \|s_{1}(\overline{x}_{k},\overline{v}_{k})\|^{2} + \|s_{2}(\overline{x}_{k},\overline{v}_{k})\|^{2},
\]
where
\begin{align*}
&\|s_{1}(\overline{x}_{k},\overline{v}_{k})\|^{2} \leq h\Bigg(\left( 2\frac{M(1-\eta^{2})}{\gamma\sqrt{2\gamma}} + \frac{\eta^{2}\sqrt{2\gamma}}{4} + 2\frac{\lambda \sqrt{2\gamma}(1-\eta^{2})}{4}\right)\|\overline{v}_{k}\| \\
&+ \left(2\frac{hM^{2}(1-\eta^{2})}{\gamma\sqrt{2\gamma}} + 2\sqrt{\frac{2}{\gamma}}M + \frac{\sqrt{2\gamma}hM}{4} + \frac{\eta^{2} \sqrt{2\gamma} hM}{4} + 2\frac{\lambda \sqrt{2\gamma} (1-\eta^{2})hM}{4}\right) \|\overline{x}_{k}-x^{*}\| \\
&+ \frac{\sqrt{2}\gamma^{3/2}}{4}\|\overline{x}_{k}-x^{*} + \gamma^{-1}\overline{v}_{k}\|\Bigg)^{2}\\
&\leq h\left((2\sqrt{\gamma} + \lambda \sqrt{\gamma})\sqrt{\mathcal{V}(\overline{x}_{k},\overline{v}_{k})} + \left(\frac{2\gamma^{3/2}}{\sqrt{8}} + \frac{\lambda \gamma^{3/2}}{32}\right)\frac{4}{\gamma}\sqrt{\mathcal{V}(\overline{x}_{k},\overline{v}_{k})} + \frac{5}{2}\sqrt{\gamma}\sqrt{\mathcal{V}(\overline{x}_{k},\overline{v}_{k})}\right)^{2}\\
&\leq 110h\gamma\left(1 + \lambda^{2}\right)\mathcal{V}(\overline{x}_{k},\overline{v}_{k})
\end{align*}
for $\gamma^{2} \geq \sqrt{8M}$ and $h<\frac{1}{2\gamma}$
and
\begin{align*}
&\|s_{2}(\overline{x}_{k},\overline{v}_{k})\|^{2} \leq h \Bigg(\left( 2\frac{M(1-\eta^{2})}{\gamma\sqrt{2\gamma}} + \frac{\eta^{2}\sqrt{2\gamma}}{4} + 2\frac{\lambda \sqrt{2\gamma}(1-\eta^{2})}{4}\right)\|\overline{v}_{k}\| \\
&+ \left(2\frac{hM^{2}(1-\eta^{2})}{\gamma\sqrt{2\gamma}} + 2\sqrt{\frac{2}{\gamma}}M +  \frac{\eta^{2} \sqrt{2\gamma} hM}{4} + 2\frac{\lambda \sqrt{2\gamma} (1-\eta^{2})hM}{4}\right) \|\overline{x}_{k}-x^{*}\|\Bigg)^{2}\\
&\leq h\left((2\sqrt{\gamma} + \lambda \sqrt{\gamma})\sqrt{\mathcal{V}(\overline{x}_{k},\overline{v}_{k})} + \left(\frac{2\gamma^{3/2}}{\sqrt{8}} + \frac{\lambda \gamma^{3/2}}{32}\right)\frac{4}{\gamma}\sqrt{\mathcal{V}(\overline{x}_{k},\overline{v}_{k})}\right)^{2}\\
&\leq 50h\gamma\left(1 + \lambda^{2}\right)\mathcal{V}(\overline{x}_{k},\overline{v}_{k}).
\end{align*}
Therefore $\mathbb{E}_{\xi^{(1)}_{k+1},\xi^{(2)}_{k+1}}\left[\left( \mathbf{s}(\overline{x}_{k},\overline{v}_{k}) \cdot (\xi^{(1)}_{k+1},\xi^{(2)}_{k+1})\right)^{2}\right] \leq 160h\gamma(1+\lambda^{2})\mathcal{V}(\overline{x}_{k},\overline{v}_{k})$. Combining estimates, we have the drift inequality
\begin{align*}
&\mathbb{E}\left[\mathcal{V}(\overline{x}_{k+1},\overline{v}_{k+1})^{2} \mid \overline{x}_{k},\overline{v}_{k}\right] \leq (1-c_4(h))^{2}\mathcal{V}^{2}(\overline{x}_{k},\overline{v}_{k}) + 6h\gamma d\left(1-c_4(h)\right)\mathcal{V}(\overline{x}_{k},\overline{v}_{k}) \\
&+  160h\gamma(1+\lambda^{2})\mathcal{V}(\overline{x}_{k},\overline{v}_{k})  + 24h^{2}\gamma^{2}d^{2}.
\end{align*}
We will now use the quadratic property that states, for  $b_{1},b_{2} > 0$,
\[
b_{2}x^{2} + \frac{b^{2}_{1}}{4b_{2}} \geq b_{1}x,
\]
for all $x \in \R$ and therefore
\[
c_4(h)\mathcal{V}^{2}(\overline{x}_{k},\overline{v}_{k}) + \frac{(6h\gamma d + 160h\gamma(1+\lambda^{2}))^{2}}{4c_4(h)} \geq  6h\gamma d\left(1-c_4(h)\right)\mathcal{V}(\overline{x}_{k},\overline{v}_{k}) +  160h\gamma(1+\lambda^{2})\mathcal{V}(\overline{x}_{k},\overline{v}_{k}).
\]
Therefore for $c_4(h) < \frac{1}{2}$ (which is satisfied when $h < \frac{1}{2\gamma}$ and $\lambda < 1$, which is satisfied as $\lambda \leq M/2\gamma^{2} \leq 1$ for $\gamma^{2} \geq \frac{M}{2}$) we have
\begin{align}\label{supp:eq:detVsquarebnd}
&\mathbb{E}\left[\mathcal{V}(\overline{x}_{k+1},\overline{v}_{k+1})^{2} \mid \overline{x}_{k},\overline{v}_{k}\right] \leq \left(1-\frac{c_4(h)}{2}\right)\mathcal{V}^{2}(\overline{x}_{k},\overline{v}_{k}) +\frac{(6h\gamma d + 160h\gamma(1+\lambda^{2}))^{2}}{4c_4(h)}  + 24h^{2}\gamma^{2}d^{2},
\end{align}
then globally, we have
\begin{align*}
&\frac{m^{2}}{4}\mathbb{E}\left[\|\overline{x}_{k}-x^{*}\|^{4}\mid y_{0},v_{0}\right] \leq \mathbb{E}\left[\mathcal{V}(\overline{x}_{k},\overline{v}_{k})^{2} \mid y_{0},v_{0}\right]  \\ &\leq \left(1-\frac{c_4(h)}{2}\right)^{k}\mathcal{V}^{2}(\overline{x}_{0},\overline{v}_{0}) +2\frac{\frac{(6h\gamma d + 160h\gamma(1+\lambda^{2}))^{2}}{4c_4(h)}  + 24h^{2}\gamma^{2}d^{2}}{c_4(h)}.
\end{align*}

Now, we have proved this for the iterates of $\mathcal{BU}$, where we wish to use the relation $(\mathcal{UBU})^{k} = \mathcal{U}(\mathcal{BU})^{k-1}\mathcal{BU}$.  In this case, we have that $\overline{x}_{k}$, the $(k+1)$-th point of approximate gradient/full gradient evaluation, is precisely the position after $\mathcal{U}(\mathcal{BU})^{k}$. It follows that
\begin{align*}
&\frac{m^{2}}{4}\mathbb{E}\left[\|\overline{x}_{k}-x^{*}\|^{4}\mid \overline{x}_{0},\overline{v}_{0}\right] \leq \left(1-\frac{c_4(h)}{2}\right)^{k}\mathcal{V}^{2}(\overline{x}_{0},\overline{v}_{0}) +2\frac{\frac{(6h\gamma d + 160h\gamma(1+\lambda^{2}))^{2}}{4c_4(h)}  + 24h^{2}\gamma^{2}d^{2}}{c_4(h)},
\end{align*}
where $(\overline{x}_{0},\overline{v}_{0}) = \mathcal{U}(x_{0},v_{0},h/2,\xi^{(1)}_{0},\xi^{(2)}_{0})$. It is easy to show that $\mathcal{V}(x,v) \leq \gamma^{2}\|x-x^{*}\|^{2} + \|v\|^{2}$ for all $(x,v) \in \mathbb{R}^{2d}$ using \cite{nesterov2018lectures}[Lemma 1.2.3] and that $\gamma^{2}\geq M$. Therefore 
\begin{align*}
&\mathbb{E}\left[\mathcal{V}^{2}(\mathcal{U}(x_{0},v_{0},h/2,\xi^{(1)}_{0},\xi^{(2)}_{0}))\mid x_{0},v_{0}\right] \leq \mathbb{E}\left[2\gamma^{4}\|\overline{x}_{0}-x^{*}\|^{4} + 2\|\overline{v}_{0}\|^{4} \mid x_{0},v_{0}\right]\\
&\leq 2\gamma^{4}\mathbb{E}\left\|x_{0}-x^{*} + \frac{1-\eta^{2}}{\gamma}v_{0} + \sqrt{\frac{2}{\gamma}}\left(\mathcal{Z}^{(1)}(h/2,\xi^{(1)}_{0}) - \mathcal{Z}^{(2)}(h/2,\xi^{(1)}_{0},\xi^{(2)}_{0})\right)\right\|^{4} \\
&+ 2\mathbb{E}\left\|\eta v_{0} + \sqrt{2\gamma}\mathcal{Z}^{(2)}(h/2,\xi^{(1)}_{0},\xi^{(2)}_{0})\right\|^{4}\\
&\leq 4\gamma^{4}\|x_{0}-x^{*}\|^{4} + 4\|v_{0}\|^{4} + \mathbb{E}\left[32\gamma^{2}\|\mathcal{Z}^{(1)}(h/2,\xi^{(1)}_{0})\|^{4} + 40\gamma^{2}\|\mathcal{Z}^{(2)}(h/2,\xi^{(1)}_{0},\xi^{(2)}_{0})\|^{4}\right]\\
&\leq 4\gamma^{4}\|x_{0}-x^{*}\|^{4} + 4\|v_{0}\|^{4} + 8\gamma^{2}h^{2}d^{2} +  480\gamma^{2}h^{2}d^{2}, 
\end{align*}
where we have used that $U(\overline{x}_{0})-U(x^{*}) \leq \frac{M}{2}\|\overline{x}_{0}-x^{*}\|^{2}$ in the first inequality and naive bounds on the fourth moments of the Gaussian increments. Hence, we arrive at the estimate
\begin{align*}
  &\mathbb{E}\left[\|\overline{x}_{k}-x^{*}\|^{4} \mid x_{0},v_{0}\right]\leq \frac{4}{m^{2}}\Bigg[4\left(1-\frac{c_4(h)}{2}\right)^{k}\left(\gamma^{4}\|x_{0}-x^{*}\|^{4} + \|v_{0}\|^{4} + 122\gamma^{2}h^{2}d^{2}\right)\\
  &+2\frac{\frac{(6h\gamma d + 160h\gamma(1+\lambda^{2}))^{2}}{4c_4(h)}  + 24h^{2}\gamma^{2}d^{2}}{c_4(h)}\Bigg],  
\end{align*}
for the $\UBU$ scheme with full gradients.

\end{proof}

\section{Variance bounds for $\UBUBU$ estimator with exact gradients}\label{supp:sec:Appendix_UBUBU_variance_bnd_exact}
\subsection{Variance bound of $D_{l,l+1}$}\label{supp:sec:local_to_strong}

To bound the variance of $D_{l,l+1}$ we use strong error estimates for the $\UBU$ integrator using the results of \cite{sanz2021wasserstein}.

In this analysis we define for random vectors $z_{1},z_{2} \in \R^{2d}$ the $L^{2}$ norm $\|z_1\|_{L^{2},a,b} = \mathbb{E}\left(\left\|z_1\right\|^{2}_{a,b}\right)^{1/2}$ and respective inner product $\langle z_1, z_2 \rangle_{L^{2},a,b} = \mathbb{E}\left(z_{1}^{T} \mathcal{M} z_{2}\right)$, where
\[
\mathcal{M} = \begin{pmatrix}
    I_d & bI_d \\
    b I_d & a I_d
\end{pmatrix}.
\]

\begin{assumption}[Local Strong Error \cite{sanz2021wasserstein}]\label{supp:ass:Local_Strong_Error_2}
Let $\phi\left(z,t,\left(W_s\right)^{t}_{s=0}\right)$ be the solution of the continuous kinetic Langevin dynamics with initial condition $z \in \R^{2d}$ up to time $t$, with Brownian motion $\left(W_s\right)^{t}_{s=0}$. Let $\psi_h\left(z,t,\left(W_s\right)^{t}_{s=0}\right)$ be the solution of a numerical discretization with initial condition $z \in \R^{2d}$ up to time $t$, with Brownian motion $\left(W_s\right)^{t}_{s=0}$ and stepsize $h$. Let $z' \sim \pi$, then we assume that
\[
\psi_h\left(z',h,\left(W_s\right)^{h}_{s=0}\right) - \phi\left(z',h,\left(W_s\right)^{h}_{s=0}\right) = \alpha_{h}\left(z',\left(W_s\right)^{h}_{s=0}\right) + \beta_{h}\left(z',\left(W_s\right)^{h}_{s=0}\right),
\]
where
\[
\left\|\alpha_{h}\left(z',\left(W_s\right)^{h}_{s=0}\right)\right\|_{L^{2},a,b} \leq C_1 h^{q+1/2},
\]
\[
\left\|\beta_{h}\left(z',\left(W_s\right)^{h}_{s=0}\right)\right\|_{L^{2},a,b} \leq C_2 h^{q+1},
\]
and
\begin{align*}
    &\left|\left\langle \psi_h\left(z',h,\left(W_s\right)^{h}_{s=0}\right) - \psi_h\left(z,h,\left(W_s\right)^{h}_{s=0}\right),\alpha_{h}\left(z',\left(W_s\right)^{h}_{s=0}\right)\right\rangle_{L^{2},a,b}\right| \\ &\leq C_{0} h\left\|z'-z\right\|_{L^{2},a,b}\left\|\alpha_{h}\left(z',\left(W_s\right)^{h}_{s=0}\right)\right\|_{L^2 ,a,b}.
\end{align*}
for some $C_{0},C_1,C_2 > 0$.
\end{assumption}

We restate Assumptions \ref{supp:assum:Lip}-\ref{supp:assum:initialdist} here for easier readability.

\begin{assumption}[$M$-$\nabla$ Lipschitz]
\label{supp:assum:Lip}
$U:\R^{d} \to \R$ is twice continuously differentiable and there exists $M>0$ such that for all $x,y \in \R^d$
$$
\| \nabla U(x) - \nabla U(y)\| \leq M\|x-y\|.
$$
\end{assumption}

\begin{assumption}[$m$-strong convexity]
\label{supp:assum:convex}
$U:\R^{d} \to \R$ is continuously differentiable and there exists $m>0$ such that for all $x,y \in \R^d$
$$
\langle \nabla U(x) - \nabla U(y),x-y \rangle \geq m|x-y|^2.
$$
\end{assumption}

\begin{assumption}[$M^{s}_{1}$-strongly Hessian Lipschitz]
\label{supp:assum:Hess_Lipschitz}
$U:\mathbb{R}^{d} \to \mathbb{R}$ is three times continuously differentiable and $M^{s}_{1}$-strongly Hessian Lipschitz if there exists $M^{s}_{1} > 0$ such that
    \[
    \|\nabla^{3}U(x)\|_{\{1,2\}\{3\}} \leq M^{s}_{1}
    \]
    for all $x \in \mathbb{R}^{d}$.
\end{assumption}

\begin{assumption}[$1$-Lipschitzness of $f$]
\label{supp:assum:Lipschitz}
$f$ is a 1-Lipschitz function with respect to the Euclidean distance on $\R^{2d}$, that only depends on $x$, not $v$ (i.e. $f(x,v)=f(x,v')$ for any $x,v,v'\in \R^d$). 
\end{assumption}

\begin{assumption}[Distance of initial distribution from target]
\label{supp:assum:initialdist}
The initial distribution on $\Lambda=\R^{2d}$ satisfy that $\mathcal{W}_2(\pi,\mu_0)\le c_{\mu_0}\sqrt{\frac{d}{m}}$,  for some $c_{\mu_0}>0$.
\end{assumption}

We make use of the following proposition, essentially due to \cite{sanz2021wasserstein}.
\begin{restatable}{proposition}{PropositionGlobalStrongError}
\label{supp:prop:GlobalStrongError}
    Suppose a numerical scheme approximating kinetic Langevin dynamics satisfies Assumption \ref{supp:ass:Local_Strong_Error_2}, with a potential which satisfies Assumptions \ref{supp:assum:Lip}-\ref{supp:assum:Hess_Lipschitz}, and $\psi_h(z,h,(W_s)_{0}^h)\sim P_h(z,\cdot)$ satisfies the Wasserstein contractivity condition \eqref{supp:eq:contractivity_cond} for $p=2$,    
    and some $a, b>0$, $b^2<a$.
    
    Let $\phi\left(z,t,\left(W_s\right)^{t}_{s=0}\right)$ be the solution of the continuous kinetic Langevin dynamics with initial condition $z \in \R^{2d}$ up to time $t$, with Brownian motion $\left(W_s\right)^{t}_{s=0}$. Let $\psi_h\left(z,t,\left(W_s\right)^{t}_{s=0}\right)$ be the solution of a numerical discretization with initial condition $z \in \R^{2d}$ up to time $t$, with Brownian motion $\left(W_s\right)^{t}_{s=0}$ and stepsize $h>0$ satisfying that 
    \begin{equation}\label{supp:eq:stepsizebnd}
        (1-c(h))^2 + C_{0}^2h^{2}<1.
    \end{equation}
    Then for any $k\ge 0$, any $z_0$ such that $\|z_0\|_{L^2,a,b}<\infty$, and $Z^0\sim \pi$, we have
    \begin{align*}
    &\left\|\psi_h\left(z_0,kh,\left(W_s\right)^{kh}_{s=0}\right)-\phi\left(Z^0,kh,\left(W_s\right)^{kh}_{s=0}\right)\right\|_{L^2,a,b} \\
    &\leq (1-R(h))^{k}\|z_0-Z^0\|_{L^2,a,b} + \sqrt{2}C_{1}\frac{h^{q+1/2}}{\sqrt{R(h)}} +\frac{2C_{2}h^{q+1}}{R(h)},
    \end{align*}
    where $R(h) = 1-\sqrt{(1-c(h))^2 + C_{0}^2 h^{2}}$.

    In particular, the discretization scheme admits a stationary distribution $\pi_h$, and its bias can be bounded as
    \begin{equation}\label{supp:eq:discretization:bias:bnd}
    \mathcal{W}_{2,a,b}(\pi_h,\pi)\leq \sqrt{2}C_{1}\frac{h^{q+1/2}}{\sqrt{R(h)}} +\frac{2C_{2}h^{q+1}}{R(h)}.
    \end{equation}
\end{restatable}

\begin{proof}
Introduce the notation
  \[
  Z^n := \phi\left(Z^0,nh, \left(W_s\right)^{nh}_{s=0}\right), \quad z_n:= \psi_h\left(z_0,nh,\left(W_s\right)^{nh}_{s=0}\right)
  \]
  for all $n \in \mathbb{N}$. Using the assumption $Z^0\sim \pi$, we also have $Z^n\sim \pi$, since the kinetic Langevin dynamics keeps $\pi$ invariant. By Assumption \ref{supp:ass:Local_Strong_Error_2}, we then have

 \begin{equation} \label{supp:eq:iteration}
      \begin{split}
      &\left\|z_k-Z^k\right\|_{L^2,a,b}=
      \left\|\psi_h\left(z_{k-1},h,\left(W_s\right)^{kh}_{(k-1)h}\right)-\phi\left(Z^{k-1},h,\left(W_s\right)^{kh}_{(k-1)h}\right)\right\|_{L^2,a,b}\\
    &=\bigg\|\psi_h\left(z_{k-1},h,\left(W_s\right)^{kh}_{(k-1)h}\right)-
    \psi_h\left(Z^{k-1},h,\left(W_s\right)^{kh}_{(k-1)h}\right)\\
    &+\psi_h\left(Z^{k-1},h,\left(W_s\right)^{kh}_{(k-1)h}\right)-
    \phi\left(Z^{k-1},h,\left(W_s\right)^{kh}_{(k-1)h}\right)\bigg\|_{L^2,a,b} \\     
    &=\bigg\|\psi_h\left(z_{k-1},h,\left(W_s\right)^{kh}_{(k-1)h}\right)-
    \psi_h\left(Z^{k-1},h,\left(W_s\right)^{kh}_{(k-1)h}\right)\\
    &+\alpha_{h} \left(Z^{k-1}, \left(W_s\right)^{kh}_{(k-1)h}\right)+ \beta_{h}\left(Z^{k-1}, \left(W_s\right)^{kh}_{(k-1)h}\right)    \bigg\|_{L^2,a,b} \\ 
     &\le \left\|\beta^{k-1}\right\|_{L^2,a,b}
     +\left\|\psi_h\left(z_{k-1},h,\left(W_s\right)^{kh}_{(k-1)h}\right)-
    \psi_h\left(Z^{k-1},h,\left(W_s\right)^{kh}_{(k-1)h}\right)+\alpha^{k-1}\right\|_{L^2,a,b},
  \end{split}
  \end{equation}
where $\alpha^{k-1}$ and $\beta^{k-1}$ are defined as
  \begin{align*}
      &\psi_{h} \left(Z^{k-1},h,\left(W_s\right)^{kh}_{(k-1)h}\right) - \phi\left(Z^{k-1},h,\left(W_s\right)^{kh}_{(k-1)h}\right) \\
      &= \alpha_{h} \left(Z^{k-1}, \left(W_s\right)^{kh}_{(k-1)h}\right)+ \beta_{h}\left(Z^{k-1}, \left(W_s\right)^{kh}_{(k-1)h}\right)\\
      &:= \alpha^{k-1} + \beta^{k-1}.      
  \end{align*}
Assumption \ref{supp:ass:Local_Strong_Error_2}, and the Wasserstein contractivity condition \eqref{supp:eq:contractivity_cond} then together imply
\begin{align*}
&\left\|\psi_h\left(z_{k-1},h,\left(W_s\right)^{kh}_{(k-1)h}\right)-
    \psi_h\left(Z^{k-1},h,\left(W_s\right)^{kh}_{(k-1)h}\right)+\alpha^{k-1}\right\|_{L^2,a,b}\\
    &=\Bigg(\left\|\alpha^{k-1}\right\|_{L^2,a,b}^2 +
    \left\|\psi_h\left(z_{k-1},h,\left(W_s\right)^{kh}_{(k-1)h}\right)-
    \psi_h\left(Z^{k-1},h,\left(W_s\right)^{kh}_{(k-1)h}\right)\right\|_{L^2,a,b}^2\\
    &+
    2 \left<\alpha^{k-1},\psi_h\left(z_{k-1},h,\left(W_s\right)^{kh}_{(k-1)h}\right)-
    \psi_h\left(Z^{k-1},h,\left(W_s\right)^{kh}_{(k-1)h}\right)\right>_{L^2,a,b}
    \Bigg)^{1/2}\\
    &\le \Bigg(\left\|\alpha^{k-1}\right\|_{L^2,a,b}^2 +
    (1-c(h))^2\left\|z_{k-1}-Z^{k-1}\right\|_{L^2,a,b}^2\\
    &+
    2 C_0 h\left\|\alpha^{k-1}\right\|_{L^2,a,b} \left\|z_{k-1}-Z^{k-1}\right\|_{L^2,a,b}
    \Bigg)^{1/2}\\
    &\le 
    \left(2\left\|\alpha^{k-1}\right\|_{L^2,a,b}^2 +
    ((1-c(h))^2+C_0^2 h^2)\left\|z_{k-1}-Z^{k-1}\right\|_{L^2,a,b}^2\right)^{1/2}
    \\
    &\le \left(2C_1^2h^{2q+1} +
    ((1-c(h))^2+C_0^2 h^2)\left\|z_{k-1}-Z^{k-1}\right\|_{L^2,a,b}^2
    \right)^{1/2}.
\end{align*}
Lemma 28 of \cite{sanz2021wasserstein} states that if a sequence of nonnegative real numbers $(a_n)_{n\ge 0}$ satisfies that 
$a_{n+1}\le \sqrt{(1-A)^2a_n^2+ B}+C$ with $A\in (0,1)$, $B\ge 0$, $C\ge 0$, then for every $n\ge 0$,
\[a_n\le (1-A)^n a_0 + \sqrt{\frac{B}{A}}+\frac{C}{A}.\]
Using this for $a_n=\left\|z_{n}-Z^{n}\right\|_{L^2,a,b}$, we have that
\begin{align*}
&\left\|z_k-Z^k\right\|_{L^2,a,b}
 \leq (1-R(h))^{k}\|z_0-Z^0\|_{L^2,a,b}+ \sqrt{2}\frac{C_{1}h^{q+1/2}}{\sqrt{R(h)}} +\frac{2C_{2}h^{q+1}}{R(h)},
\end{align*}
where $R(h)= 1-\sqrt{(1-c(h))^2 + C_{0}^2h^{2}}$, which is our first claim.

The existence of a stationary distribution $\pi_h$ follows from Lemma \ref{supp:lem:wass_conv}.
The bound on the bias follows by letting $k\to \infty$.
\end{proof}
We now are in a position to present our first result related to the variance of our unbiased scheme, which is a bound on the variance related to the global strong error or convergence. This is given below.
\begin{restatable}{proposition}{PropGlobalStrongError}
\label{supp:prop:global_strong_error}
Suppose a numerical scheme approximating kinetic Langevin dynamics satisfies the same assumptions as in Proposition \ref{supp:prop:GlobalStrongError}, and $f$ satisfies Assumption \ref{supp:assum:Lipschitz}.
If we have two chains at coarser and finer discretization levels $l$ and $l+1$ using stepsizes $h_l$ and $h_{l+1}=\frac{h_l}{2}$ satisfying \eqref{supp:eq:stepsizebnd} with synchronously coupled Brownian motions $\left(z_{k}\right)_{k \in \mathbb{N}}$ and $\left(z'_{k}\right)_{k \in \mathbb{N}}$,
such that $z_{0} \sim \pi_{0}$ and $z'_{0} \sim \pi'_{0}$, 
then we have
\begin{align*}
   &\Var\left(f(z'_{k}) - f(z_{k})\right) \leq\mathbb{E}\left[\left(f(z'_{k}) - f(z_{k})\right) ^2\right]\le \mathbb{E}\|z'_{k} - z_{k}\|^{2}_{a,b}\\
   &\leq \Bigg(\exp\left(-\frac{mkh_l}{8\gamma}\right) \left(\|z'_{0}-z_{0}\|_{L^{2},a,b}+\mathcal{W}_{2,a,b}(\pi_0,\pi)+\mathcal{W}_{2,a,b}(\pi_0',\pi)\right)
        \\
        &+(1-R(h_l))^k \mathcal{W}_{2,a,b}(\pi_0,\pi) +(1-R(h_{l+1}))^{2k} \mathcal{W}_{2,a,b}(\pi_0',\pi) \\
        &+ \sqrt{2}C_{1}\left(\frac{h^{q+1/2}_{l+1}}{\sqrt{R(h_{l+1})}} +\frac{h^{q+1/2}_{l}}{\sqrt{R(h_l)}}\right)+
        2C_{2}\left(\frac{h^{q+1}_{l+1}}{R(h_{l+1})}
        +\frac{h^{q+1}_{l}}{R(h_l)}\right)\Bigg)^2,
\end{align*}
where $R(h_i) = 1-\sqrt{(1-c(h_i))^2 + C_{0}^2 h^{2}_{i}}$ for $i = l,l+1$.
\end{restatable}
\begin{proof}[Proof of Proposition \ref{supp:prop:global_strong_error}] 
Consider the following variance bound:
  \begin{align*}
  &\Var\left(f(z'_{k}) - f(z_{k})\right) \leq\mathbb{E}\left[\left(f(z'_{k}) - f(z_{k})\right) ^2\right]\leq \mathbb{E}\|z'_{k} - z_{k}\|^{2}_{a,b}.
  \end{align*}  
  Let $\tilde{Z}_0\sim \pi$ be such that $\|\tilde{Z}_0-z_0\|_{L^2,a,b}=\mathcal{W}_{2,a,b}(\pi_0,\pi)$, and $\tilde{Z}'_0\sim \pi$ be such that $\|\tilde{Z}'_0-z_0'\|_{L^2,a,b}=\mathcal{W}_{2,a,b}(\pi_0',\pi)$ (the existence of optimal couplings was shown in Theorem 4.1 of \cite{villani2009optimal}). We use the estimate
  \begin{align*}
      &\sqrt{\mathbb{E}\|z'_{k} - z_{k}\|^2_{a,b}}=\|z'_{k} - z_{k}\|_{L^{2},a,b} \\
      &\leq \left\|z_{k}-\phi\left(\tilde{Z}_0 ,kh_{l},\left(W_s\right)^{kh_{l}}_{s=0}\right) \right\|_{L^{2},a,b}\\
      &+\left\|\phi \left(\tilde{Z}_0 ,kh_{l},\left(W_s\right)^{kh_{l}}_{s=0}\right) - \phi\left(\tilde{Z}_0'
     ,kh_{l},\left(W_s\right)^{kh_{l}}_{s=0}\right)\right\|_{L^{2},a,b}\\
      &+\left\|z'_{k} - \phi \left(\tilde{Z}_0',kh_{l},\left(W_s\right)^{kh_{l}}_{s=0}\right)\right\|_{L^{2},a,b}
      \\&=: \textnormal{(I)} + \textnormal{(II)} + \textnormal{(III)}.
  \end{align*}
  We split this into two global error terms (I) and (III) and a contraction term (II). We estimate the second term by Corollary \ref{supp:cor:continuous_contraction} as
  \begin{align*}
        \textnormal{(II)} &\leq \exp\left(-\frac{mkh_l}{8\gamma}\right) \|\tilde{Z}'_{0}-\tilde{Z}_{0}\|_{L^{2},a,b}\\
        &\le \exp\left(-\frac{mkh_l}{8\gamma}\right) \left(\|z'_{0}-z_{0}\|_{L^{2},a,b}+\mathcal{W}_{2,a,b}(\pi_0,\pi)+\mathcal{W}_{2,a,b}(\pi_0',\pi)\right).
  \end{align*}
By Proposition \ref{supp:prop:GlobalStrongError}, we have
\begin{align*}
    \textnormal{(I)} \leq (1-R(h_l))^k \mathcal{W}_{2,a,b}(\pi_0,\pi) +\sqrt{2}C_{1}\frac{2h^{q+1/2}_{l}}{\sqrt{R(h_l)}} +\frac{2C_{2}h^{q+1}_{l}}{R(h_l)}.
\end{align*}

The same argument can be applied to (III) to obtain
\begin{align*}
    \textnormal{(III)} \leq (1-R(h_{l+1}))^{2k} \mathcal{W}_{2,a,b}(\pi_0',\pi) +\sqrt{2}C_{1}\frac{2h^{q+1/2}_{l+1}}{\sqrt{R(h_{l+1})}} +\frac{2C_{2}h^{q+1}_{l+1}}{R(h_{l+1})}.
\end{align*}
Combining these we get the required result.
\end{proof}

Below are a number of useful remarks to highlight from the above theorem.

\begin{remark}
The local error, which arises from \cite{sanz2021wasserstein} is demonstrated through the bound on $\alpha_h + \beta_h$ from Assumption \ref{supp:ass:Local_Strong_Error_2}. This indicates there is an order of local strong order $q+1/2$. However, when we go to the global strong order, the order is only reduced by $1/2$ as it is order $q$. As stated in \cite{sanz2021wasserstein}, this is similar to the Euler–Maruyama scheme with local strong order $3/2$, but global strong order $1$ \cite{milstein2004stochastic}[Theorem 1.1].
\end{remark}

\begin{remark}
    Proposition \ref{supp:prop:global_strong_error} holds for $q = 2$ for the $\UBU$ scheme; \cite{sanz2021wasserstein} showed that the assumptions are true. For the $\UBU$ scheme we have for $\gamma^{2}\geq M$ and $h<\frac{1}{2\gamma}$ that $C_{2} \leq \sqrt{d}\left(\frac{7}{10}\gamma^{2} + \frac{M^{s}_1}{10\sqrt{M}}\right)$, $C_{1} = \frac{\sqrt{6dM\gamma}}{24}$ and $C_{0} \leq 4\sqrt{2M}$. These constants can be computed by following \cite{sanz2021wasserstein}[Section 7.6] where all computations are done with arbitrary $\gamma$, the constant $c$ we consider to be set to $1$ in their estimates. Constants $C_{1}$ and $C_{2}$ are estimated in the second and third step, whilst $C_{0}$ is estimated in the fourth step and fifth step. We remark that there is a missing term and a stronger assumption is needed in \cite{sanz2021wasserstein}[Section 7.6, fifth step] which has been corrected in \cite{paulin2024}. The additional term can be treated by the same argument as in the fourth step to arrive at the $C_{0}$ bound.
\end{remark}

\begin{restatable}{corollary}{CorollaryUBUGlobalStrongError}
\label{supp:corollary:global_strong_error}
Suppose that Assumptions \ref{supp:assum:Lip}, \ref{supp:assum:convex}, \ref{supp:assum:Hess_Lipschitz}, and \ref{supp:assum:initialdist} hold,  $\gamma \geq \sqrt{8M}$ and
  \begin{equation}\label{supp:eq:stepsizebndUBUBU}
        h_0\le \frac{1}{\gamma}\cdot \frac{m}{264M}.      
    \end{equation} 

Assume that the burn-in periods $B\ge \frac{16\log(4)\gamma}{mh_0}$, 
$B_0\ge \frac{16 \gamma}{m h_0}\log\left(\frac{c_{\mu_0}+1}{ \sqrt{M}\gamma h_0^2}\right)$. 
Then for every $l\ge 0$, $1\le k\le K$, the $\UBUBU$ samples satisfy
\begin{align*}
&\Var\left(f(z'^{(l,l+1)}_{k}) - f(z^{(l,l+1)}_{k})\right) \leq\mathbb{E}\left[\left(f(z'^{(l,l+1)}_{k}) - f(z^{(l,l+1)}_{k})\right) ^2 \right]\le \mathbb{E}\|z'^{(l,l+1)}_{k} - z^{(l,l+1)}_{k}\|^{2}_{a,b}\\
&\le C d \left(\left(\gamma^{{4}} + \frac{(M^{s}_1)^2}{M}\right)   \left(\frac{\gamma}{m}\right)^2 +  \frac{M\gamma^2}{m}\right)h_l^4.
\end{align*}
\end{restatable}

\begin{proof}[Proof of Corollary \ref{supp:corollary:global_strong_error}]
We have $(B_0+Bl)2^l$ burn-in steps at level $l$, and $(B_0+B(l+1))2^{l+1}$ burn-in steps at level $l+1$.
Let $\delta_*=\delta_{x^*}\times \delta_{0_d}$ be a distribution on $\Lambda$ that fixes $x=x^*$ and $v=0_d$. 
Using the assumptions, we have
\begin{align*}
&R(h_i)=1-\sqrt{(1-c(h_{i}))^2 + C_{0}^2h^{2}_{i}}
=1-\sqrt{\left(1-\frac{mh_{i}}{8\gamma}\right)^2 + C_{0}^2h^{2}_{i}}
\\
&=1-\sqrt{1-\frac{mh_{i}}{4\gamma} + \left(\left(\frac{m}{8\gamma}\right)^2+C_{0}^2\right)h^{2}_{i}}\ge 1-\sqrt{1-\frac{mh_{i}}{8\gamma} }\ge \frac{mh_{i}}{16\gamma},\\
&\mathcal{W}_{2,a,b}(\pi_0,\pi)=\mathcal{W}_{2,a,b}(\mu_0,\pi)\le c_{\mu_0}\sqrt{\frac{d}{m}},\\
&\mathcal{W}_{2,a,b}(\pi_0',\pi)\le \mathcal{W}_{2,a,b}\left(\mu_0 R_{l+1}^{B},\pi_{h_{l+1}}\right)+\mathcal{W}_{2,a,b}\left(\pi_{h_{l+1}},\pi\right),\\
&\le \mathcal{W}_{2,a,b}(\mu_0,\pi)+2\mathcal{W}_{2,a,b}\left(\pi_{h_{l+1}},\pi\right)
\le c_{\mu_0}\sqrt{\frac{d}{m}}+2\sqrt{2}C_{1}\frac{h_{l+1}^{q+1/2}}{\sqrt{R(h_{l+1})}} +\frac{4C_{2}h_{l+1}^{q+1}}{R(h_{l+1})},\\
\intertext{and}
&\|z'_{0}-z_{0}\|_{L^{2},a,b}\le \mathcal{W}_{2,a,b}(\pi_0',\delta_{*})
+\mathcal{W}_{2,a,b}(\pi_0,\delta_{*})\\
&\le \mathcal{W}_{2,a,b}(\pi_0',\pi)+\mathcal{W}_{2,a,b}(\mu_0,\pi)+2\mathcal{W}_{2,a,b}(\pi,\delta_{*})\\
&\le (3c_{\mu_0}+3)\sqrt{\frac{d}{m}}+2\sqrt{2}C_{1}\frac{h_{l+1}^{q+1/2}}{\sqrt{R(h_{l+1})}} +\frac{4C_{2}h_{l+1}^{q+1}}{R(h_{l+1})}.
\end{align*}
It is easy to check that \eqref{supp:eq:stepsizebndUBUBU} together with $C_0\le 4\sqrt{2M}$ implies that the condition \eqref{supp:eq:stepsizebnd} of Proposition \ref{supp:prop:global_strong_error} is satisfied, and we have
\begin{align*}
   &\Var\left(f(z'^{(l,l+1)}_{k}) - f(z^{(l,l+1)}_{k})\right) \leq\mathbb{E}\left[\left(f(z'^{(l,l+1)}_{k}) - f(z^{(l,l+1)}_{k})\right) ^2\right]\\
   &\leq \Bigg(\exp\left(-\frac{m(B_0+l B)h_0}{8\gamma}\right) \left(\|z'_{0}-z_{0}\|_{L^{2},a,b}+\mathcal{W}_{2,a,b}(\pi_0,\pi)+\mathcal{W}_{2,a,b}(\pi_0',\pi)\right)
        \\
        &+\exp\left(-\frac{m(B_0+l B)h_0}{16\gamma}\right)  (\mathcal{W}_{2,a,b}(\pi_0,\pi)+ \mathcal{W}_{2,a,b}(\pi_0',\pi)) \\
        &+ \sqrt{2}C_{1}\left(\frac{h^{q+1/2}_{l+1}}{\sqrt{R(h_{l+1})}} +\frac{h^{q+1/2}_{l}}{\sqrt{R(h_l)}}\right)+
        2C_{2}\left(\frac{h^{q+1}_{l+1}}{R(h_{l+1})}
        +\frac{h^{q+1}_{l}}{R(h_l)}\right)\Bigg)^2\\
   & \le \Bigg(\exp\left(-\frac{m(B_0+lB)h_0}{16\gamma}\right) ({7}c_{\mu_0}+3)\sqrt{\frac{d}{m}}+ 10\sqrt{2}C_{1}\left(\frac{h_l^{5/2}}{\sqrt{\frac{mh_{l}}{16\gamma}}}\right)+
        20 C_{2}\left(\frac{h^{3}_{l}}{\frac{mh_{l}}{16\gamma}}\right)\Bigg)^2\\
        \intertext{using the assumptions on $B_0$ and $B$}
    &\le C \left(C_1^2\frac{\gamma}{m}+C_2^2  \left(\frac{\gamma}{m}\right)^2\right) h_l^{4}\le C d \left(\left(\gamma^{{4}} + \frac{(M^{s}_1)^2}{M}\right)   \left(\frac{\gamma}{m}\right)^2 +  \frac{M\gamma^2}{m}\right)h_l^4. 
\end{align*}\end{proof}

\begin{proposition}
\label{supp:prop:boundDllp1}
Suppose that the assumptions of Proposition \ref{supp:prop:Wasserstein} hold for $h=h_l$.
Let $R_{l,l+1}=(P_{h_l,h_{l+1}})^{2^l}$ be the Markov kernel for two synchronously coupled $\UBU$ chains at discretization levels $l, l+1$. This chain is moving on state space $\Lambda^2$. Let $\ol{z}_1,\ldots, \ol{z}_K$ be a Markov chain with kernel $R_{l,l+1}$. Let $F:\Lambda^2\to \R$ be 1-Lipschitz in norm $\|\|_{a,b}$ on $\Lambda^2$, defined as $\|z_1,z_2 \|_{a,b}^2=\|z_1\|_{a,b}^2+\|z_2\|_{a,b}^2$.
Then we have 
\begin{align*}&\Var\left(\frac{\sum_{i=1}^{K}F(\ol{z}_i)}{K}\right)\le \frac{2}{K^2}\sum_{i=1}^{K}\sum_{k=0}^{K-i} \min\Bigg(\frac{\Var(F(\ol{z}_i))+\Var(F(\ol{z}_{i+k}))}{2},\\
&\sqrt{\Var(F(\ol{z}_i))\E\left[\|\ol{z}_i-\E \ol{z}_i\|_{a,b}^2\right]}\cdot \exp\left(-\frac{mh_0}{8\gamma}\cdot k\right) \Bigg),\end{align*}
\end{proposition}
\begin{proof}
We need to bound 
\[\mathrm{Cov}(F(\ol{z}_i),F(\ol{z}_{i+k}))=\E[(F(\ol{z}_i)-\E(F(\ol{z}_i)))(F(\ol{z}_{i+k})-\E(F(\ol{z}_{i+k})))].\]
Let $\tilde{z}_i$ be an independent identically distributed copy of $\ol{z}_i$. For $0\le l\le K-i-1$, and assume that conditioned on $\tilde{z}_{i:i+l}$ and $\ol{z}_{1:i+l}$, $\tilde{z}_{i+l+1}\sim P(\tilde{z}_{i+l},\cdot)$, and $(\ol{z}_{i+l+1},\tilde{z}_{i+l+1})$ are synchronously coupled, i.e. 
$\tilde{z}_{i+l+1}$ is defined based on the coupling between discretization levels using the same Gaussian variables that were used to move from $\ol{z}_{i+l}$ to $\ol{z}_{i+l+1}$. Since we have also used synchronous couplings in the proof of Proposition \ref{supp:prop:Wasserstein}, it follows from Proposition \ref{supp:prop:Wasserstein} that 
\begin{align*}
    &\E\left(\left.\left\|\tilde{z}_{i+l+1}-\ol{z}_{i+l+1}\right\|_{a,b}^2\right|\ol{z}_{i:i+l},\tilde{z}_{i:i+l}\right) \\
    &\le  \max\left(\left(1 - \frac{mh_l}{8\gamma}\right)^{2\cdot 2^l}, \left(1 - \frac{mh_{l+1}}{8\gamma}\right)^{2\cdot 2^{l+1}}\right)\left\|\tilde{z}_{i+l}-\ol{z}_{i+l}\right\|_{a,b}^2\\
    \intertext{ using that $1-x\le \exp(-x)$ for $x\ge 0$,}
    &\le \exp\left(-\frac{mh_0}{4\gamma}\right) \left\|\tilde{z}_{i+l}-\ol{z}_{i+l}\right\|^2_{a,b}.
\end{align*}
By using this bound recursively, we have
\begin{align*}&\E\left(\left.\left\|\tilde{z}_{i+k}-\ol{z}_{i+k}\right\|_{a,b}^2\right|\ol{z}_{i},\tilde{z}_{i}\right)\le \exp\left(-\frac{mh_0}{4\gamma}\cdot k\right)  \|\ol{z}_i-\tilde{z}_i\|_{a,b}^2.
\end{align*}
Since $\tilde{z}_i$ is independent of $\ol{z}_i$, and $\tilde{z}_i+1,\ldots,\tilde{z}_{i+k}$ was constructed using $\tilde{z}_i$ and Gaussians that are independent of $\ol{z}_i$ (synchronous coupling with $\ol{z}_{i+1},\ldots, \ol{z}_{i+k}$), it follows that $\tilde{z}_{i+k}$ is still independent of $\ol{z}_i$. Using this and the 1-Lipschitz property of $F$, we have
\begin{align*}&\mathrm{Cov}(F(\ol{z}_i),F(\ol{z}_{i+k}))=\E[(F(\ol{z}_i)-\E(F(\ol{z}_i)))(F(\ol{z}_{i+k})-\E(F(\ol{z}_{i+k})))
\\
&= \E[(F(\ol{z}_i)-\E(F(\ol{z}_i)))(F(\ol{z}_{i+k})-F(\tilde{z}_{i+k}))]\\
&= \E[(F(\ol{z}_i)-\E(F(\ol{z}_i)))\E(F(\ol{z}_{i+k})-F(\tilde{z}_{i+k})|\ol{z}_i,\tilde{z}_{i})]\\
&\le \sqrt{\Var(F(\ol{z}_i)) \E\left[\|\ol{z}_{i+k}-\tilde{z}_{i+k}\|_{a,b}^2\right]}\\
&\le \exp\left(-\frac{mh_0}{8\gamma}\cdot k\right)\sqrt{\Var(F(\ol{z}_i))\cdot 
\E\left[\|\ol{z}_i-\tilde{z}_i\|_{a,b}^2\right]}\\
&=\exp\left(-\frac{mh_0}{8\gamma}\cdot k\right)\sqrt{2\Var(F(\ol{z}_i))\cdot 
\E\left[\|\ol{z}_i-\E \ol{z}_i\|_{a,b}^2\right]},
\end{align*}
and the claim follows by summation.
\end{proof}

\begin{proposition}\label{supp:prop:varDllp1exact}
Under the same assumptions as in Corollary \ref{supp:corollary:global_strong_error}, the UBUBU samples satisfy that
\[\Var(D_{l,l+1})\le \frac{1}{K} C(\gamma, m,M,M^{s}_1) d h_l^4 \left(C(\gamma, m,M,M^{s}_1)-2\log(h_0)+\log(4)l+\frac{4\gamma}{mh_0}\right).\]
\end{proposition}
\begin{proof}
Note that the function $F(z_1,z_1)=f(z_1)-f(z_2)$ is 1-Lipschitz with respect to $\|(z_1,z_2)\|_{a,b}=\|z_1\|_{a,b}+\|z_2\|_{a,b}$. Let $\ol{z}_i=(z_{i}^{(l,l+1)},z'^{(l,l+1)}_i)$, then by Proposition \ref{supp:prop:boundDllp1}, we have that 
\begin{align*}&\Var(D_{l,l+1})\le \frac{1}{K^2}
\sum_{i=1}^{K}\sum_{k=0}^{K-i} \min\Bigg(\frac{\Var(F(\ol{z}_{i}))+\Var(F(\ol{z}_{i+k}))}{2},\\
&\sqrt{\Var(F(\ol{z}_i))\E\left[\|\ol{z}_i-\E \ol{z}_i\|_{a,b}^2\right]}\cdot \exp\left(-\frac{mh_0}{8\gamma}\cdot k\right) \Bigg).
\end{align*}
By a similar argument as in the proof of Corollary \ref{supp:corollary:global_strong_error}, using our assumptions on $B$ and $B_0$, we can show that
\[\left(\E\left[\|\ol{z}_i-\E \ol{z}_i\|_{a,b}^2\right]\right)^{1/2}\le 
\left(\E\left[\|\ol{z}_i-(x^*,0_d,x^*,0_d)\|_{a,b}^2\right]\right)^{1/2}\le C\sqrt{\frac{d}{m}},
\]
and by Proposition \ref{supp:prop:global_strong_error}, we have
\[\Var(F(\ol{z}_i))\le C(\gamma, m,M,M^{s}_1) d h_l^4.\]
Let \[k^*(l):=\max\left(\log\left(C\sqrt{\frac{1}{m}}\right)-\frac{1}{2}\log\left( C(\gamma, m,M,M^{s}_{1}) h_l^4 \right),0\right),\]
then for $k\ge \lceil k^*(l)\rceil$, we have 
\begin{align*}
&\sqrt{\Var(F(\ol{z}_i))\E\left[\|\ol{z}_i-\E \ol{z}_i\|_{a,b}^2\right]}\cdot \exp\left(-\frac{mh_0}{8\gamma}\cdot k\right)\\
&\le C(\gamma, m,M,M^{s}_1) d h_l^4  \exp\left(-\frac{mh_0}{8\gamma}\cdot (k-\lceil k^*(l)\rceil)\right).
\end{align*}
It is clear that $\lceil k^*(l)\rceil\le C(\gamma, m,M,M^{s}_1)-2\log(h_0)+\log(4)l$, and after
some rearrangement, we have 
\[\Var(D_{l,l+1})\le \frac{1}{K} C(\gamma, m,M,M^{s}_1) d h_l^4 \left(C(\gamma, m,M,M^{s}_1)-2\log(h_0)+\log(4)l+\frac{4\gamma}{mh_0}\right).\]
\end{proof}

\subsection{Variance bound of $D_{0}$}

\begin{proposition}\label{supp:prop:D0var}
Consider an $m$-strongly convex $M$-$\nabla$Lipschitz potential $U$ and let $P_{h}$ be the transition kernel of $\UBU$ with stepsize $h$. Suppose that $f:\Omega\to \R$ only depends on $x$ and is a 1-Lipschitz function. Suppose that $\gamma \geq \sqrt{8M}$, and $h < \frac{1}{2\gamma}$. 
Let $\mu_0$ be a distribution on $\Lambda$, and the Markov chain
$z_{-B_0}^{(0)}\sim \mu_0$, $z_{-B_0+1}^{(0)}\sim P_{h}(z_{-B_0}^{(0)},\cdot), \ldots, z_K\sim P_{h}(z_{K-1}^{(0)},\cdot)$. Then $D_0$ satisfies that
\[\Var(D_0)\le \frac{C}{c(h) K}\left(1+\frac{1}{c(h) K}\right)   \left(\frac{1}{\gamma}+\frac{\gamma}{M}\right)\frac{h}{c(h)}+\frac{(1-c(h))^{2(B_0+1)}}{2c(h)^2 K^2}\sigma_{\mu_0}^2,\]
where $$c(h)=\frac{mh}{8\gamma}, \quad \sigma_{\mu_0}^2=\int\int \|w-\tilde{w}\|^2_{a,b} d\mu_0(w)d\mu_0(\tilde{w}),$$
for some absolute constant $C$.
\end{proposition}
\begin{proof}
The bound is based on Theorem 2 of \cite{Joulin2010}. 
We need to control the following quantities for every $z\in \Lambda$:
\begin{equation}\label{supp:eq:sigmazdef}\sigma(z)^2:=\frac{1}{2}\int \int \|w-\tilde{w}\|^2_{a,b} P_{h}(z,dw) P_{h}(z,d \tilde{w}),
\end{equation}
\begin{equation}\label{supp:eq:nzdef}n(z):=\inf_{g:\Lambda\to \R,\|g\|_{a,b, Lip}\le 1} \frac{\int \int \|w-\tilde{w}\|^2_{a,b} P_{h}(z,dw) P_{h}(z,d \tilde{w})}{\int \int (g(w)-g(\tilde{w}))^2 P_{h}(z,dw) P_{h}(z,d \tilde{w})}.
\end{equation}
Here we choose $a=\frac{1}{M}$, and $b=\frac{1}{\gamma}$ as in Proposition \ref{supp:prop:Wasserstein}. To control $\sigma^2(z)$, let us define two independent identically distributed random variables $w(z)\sim P_{h}(z,\cdot )$ and $\tilde{w}(z)\sim P_{h}(z,\cdot)$. Using the definition of $\UBU$, we have
\begin{align*}
    \sigma(z)^2&=\frac{1}{2}\E(\|w(z)-\tilde{w}(z)\|_{a,b}^2)\\
    &=\frac{1}{2}\E\Bigg(\left\|\mathcal{UBU}\left(z,h,\xi^{(1)},\xi^{(2)},\xi^{(3)},\xi^{(4)}\right)-\mathcal{UBU}\left(z,h,\tilde{\xi}^{(1)},\tilde{\xi}^{(2)},\tilde{\xi}^{(3)},\tilde{\xi}^{(4)}\right)\right\|_{a,b}^2\Bigg)\\
    &\le     \E\Bigg(\bigg\|\mathcal{U}\left(\mathcal{B}\left(\mathcal{U}\left(z,h/2,\xi^{(1)},\xi^{(2)}\right),h\right),h/2,\xi^{(3)},\xi^{(4)}\right)\\
    &-\mathcal{U}\left(\mathcal{B}\left(\mathcal{U}\left(z,h/2,{\xi}^{(1)},{\xi}^{(2)}\right),h\right),h/2,{\tilde{\xi}}^{(3)},{\tilde{\xi}}^{(4)}\right)\bigg\|_{a,b}^2\Bigg)\\    &+\E\Bigg(\bigg\|\mathcal{U}\left(\mathcal{B}\left(\mathcal{U}\left(z,h/2,{\xi}^{(1)},{\xi}^{(2)}\right),h\right),h/2,{\tilde{\xi}}^{(3)},{\tilde{\xi}}^{(4)}\right)\\
    &-\mathcal{U}\left(\mathcal{B}\left(\mathcal{U}\left(z,h/2,{\tilde{\xi}}^{(1)},{\tilde{\xi}}^{(2)}\right),h\right),h/2,{\tilde{\xi}}^{(3)},{\tilde{\xi}}^{(4)}\right)\bigg\|_{a,b}^2\Bigg).
\end{align*}
Recalling the definitions of $\mathcal{U}$ and $\mathcal{B}$, we have
\[
\begin{split}
\mathcal{B}(x,v,h) &= (x,v - h\nabla U(x)),\\
\mathcal{U}(x,v,h/2,\xi^{(1)},\xi^{(2)}) &= \Big(x + \frac{1-\eta}{\gamma}v + \sqrt{\frac{2}{\gamma}}\left(\mathcal{Z}^{(1)}\left(h/2,\xi^{(1)}\right) - \mathcal{Z}^{(2)}\left(h/2,\xi^{(1)},\xi^{(2)}\right) \right),\\
& \eta v + \sqrt{2\gamma}\mathcal{Z}^{(2)}\left(h/2,\xi^{(1)},\xi^{(2)}\right)\Big),\\
\mathcal{Z}^{(1)}\left(h/2,\xi^{(1)}\right) &= \sqrt{\frac{h}{2}}\xi^{(1)},\\
\mathcal{Z}^{(2)}\left(h/2,\xi^{(1)},\xi^{(2)}\right) &= \sqrt{\frac{1-\eta^{2}}{2\gamma}}\Bigg(\sqrt{\frac{1-\eta}{1+\eta}\cdot \frac{4}{\gamma h}}\xi^{(1)} + \sqrt{1-\frac{1-\eta}{1+\eta}\cdot\frac{4}{\gamma h}}\xi^{(2)}\Bigg).
\end{split}
\]
First, 
    \begin{align*}    &\E\Bigg(\bigg\|\mathcal{U}\left(\mathcal{B}\left(\mathcal{U}\left(z,h/2,\xi^{(1)},\xi^{(2)}\right),h\right),h/2,\xi^{(3)},\xi^{(4)}\right)\\
    &-\mathcal{U}\left(\mathcal{B}\left(\mathcal{U}\left(z,h/2,{\xi}^{(1)},{\xi}^{(2)}\right),h\right),h/2,{\tilde{\xi}}^{(3)},{\tilde{\xi}}^{(4)}\right)\bigg\|_{a,b}^2\Bigg)\\
    &=\E\Bigg(\Bigg\|\Bigg(\sqrt{\frac{2}{\gamma}}\left(\mathcal{Z}^{(1)}\left(\frac{h}{2},\xi^{(1)}\right) - \mathcal{Z}^{(2)}\left(\frac{h}{2},\xi^{(1)},\xi^{(2)}\right) \right)-\sqrt{\frac{2}{\gamma}}\left(\mathcal{Z}^{(1)}\left({\tilde{\xi}}^{(1)}\right) - \mathcal{Z}^{(2)}\left(\frac{h}{2},\xi^{(1)},{\tilde{\xi}}^{(2)}\right) \right),\\ &\sqrt{2\gamma}\mathcal{Z}^{(2)}\left(\frac{h}{2},\xi^{(1)},\xi^{(2)}\right)-\sqrt{2\gamma}\mathcal{Z}^{(2)}\left(\frac{h}{2},{\tilde{\xi}}^{(1)},{\tilde{\xi}}^{(2)}\right)
    \Bigg)\Bigg\|_{a,b}^2\Bigg)\\
    \intertext{ using \eqref{supp:eq:normequiv}, and the fact that $a=\frac{1}{M}$}
    &\le \frac{6}{\gamma} \E\left(\left\|\mathcal{Z}^{(1)}\left(\frac{h}{2},\xi^{(1)}\right)- \mathcal{Z}^{(2)}\left(\frac{h}{2},\xi^{(1)},{\tilde{\xi}}^{(2)}\right)\right\|^2\right)+\frac{6\gamma}{M}\E\left(\left\|\mathcal{Z}^{(2)}\left(\frac{h}{2},\xi^{(1)},{\tilde{\xi}}^{(2)}\right)\right\|^2\right) \\
    &\le \left(\frac{3}{\gamma}+\frac{3\gamma}{M}\right)d h.
\end{align*}
Second, using the assumptions $\gamma\ge \sqrt{8M}$ and $h\le \frac{1}{\sqrt{M}}$, for any $x,v,x',v'$, 
\begin{align}
\nonumber&\bigg\|\mathcal{U}\left(x,v,h/2,{\tilde{\xi}}^{(3)},{\tilde{\xi}}^{(4)}\right)-\mathcal{U}\left(x',v',h/2,{\tilde{\xi}}^{(3)},{\tilde{\xi}}^{(4)}\right)\bigg\|_{a,b}^2\\
\nonumber&\le \frac{3}{2}\bigg\|\mathcal{U}\left(x,v,h/2,{\tilde{\xi}}^{(3)},{\tilde{\xi}}^{(4)}\right)-\mathcal{U}\left(x',v',h/2,{\tilde{\xi}}^{(3)},{\tilde{\xi}}^{(4)}\right)\bigg\|_{a,0}^2\\
\nonumber&\le \frac{3}{2}\left(\frac{1}{M}\exp(-\gamma h)\|v-v'\|^2+2\|x-x'\|^2+\frac{2 (1-\exp(-\gamma h/2))^2}{\gamma^2}\|v-v'\|^2\right)\\
\label{supp:eq:Udiff} &\le 3\|x-x'\|^2+ \frac{3}{2}\frac{1}{M}\|v-v'\|^2\le 6\|(x-x',v-v')\|_{a,b}^2, \\
\nonumber&\|\mathcal{B}(x,v,h)-\mathcal{B}(x',v',h)\|_{a,b}^2\le
\frac{3}{2}\|(x-x',v-v'+h\nabla U(x')-h \nabla U(x))\|_{0,a}^2\\
\nonumber&\le \frac{3}{2}\|x-x'\|^2 + \frac{3}{M}\|v-v'\|^2+\frac{3 h^2}{M} \| \nabla U(x')-\nabla U(x)\|^2\\
\label{supp:eq:Bdiff}&\le \left(\frac{3}{2}+3h^2 M\right)\|x-x'\|^2+\frac{3}{M}\|v-v'\|^2\le 6\|(x-x',v-v')\|_{a,b}^2
\end{align}
hence
\begin{align*}
&\E\Bigg(\bigg\|\mathcal{U}\left(\mathcal{B}\left(\mathcal{U}\left(z,h/2,{\xi}^{(1)},{\xi}^{(2)}\right),h\right),h/2,{\tilde{\xi}}^{(3)},{\tilde{\xi}}^{(4)}\right)\\
    &-\mathcal{U}\left(\mathcal{B}\left(\mathcal{U}\left(z,h/2,{\tilde{\xi}}^{(1)},{\tilde{\xi}}^{(2)}\right),h\right),h/2,{\tilde{\xi}}^{(3)},{\tilde{\xi}}^{(4)}\right)\bigg\|_{a,b}^2\Bigg)\\
&\le 36 \E\left(\left\|\mathcal{U}\left(z,h/2,{\xi}^{(1)},{\xi}^{(2)}\right)-\mathcal{U}\left(z,h/2,{\tilde{\xi}}^{(1)},{\tilde{\xi}}^{(2)}\right)\right\|_{a,b}^2\right)\\
&\le 36 
\E\Bigg(\Bigg\|\Bigg(\sqrt{\frac{2}{\gamma}}\left(\mathcal{Z}^{(1)}\left(\frac{h}{2},\xi^{(1)}\right) - \mathcal{Z}^{(2)}\left(\frac{h}{2},\xi^{(1)},\xi^{(2)}\right) \right)\\
&-\sqrt{\frac{2}{\gamma}}\left(\mathcal{Z}^{(1)}\left({\tilde{\xi}}^{(1)}\right) - \mathcal{Z}^{(2)}\left(\frac{h}{2},\xi^{(1)},{\tilde{\xi}}^{(2)}\right) \right),\\ &\sqrt{2\gamma}\mathcal{Z}^{(2)}\left(\frac{h}{2},\xi^{(1)},\xi^{(2)}\right)-\sqrt{2\gamma}\mathcal{Z}^{(2)}\left(\frac{h}{2},{\tilde{\xi}}^{(1)},{\tilde{\xi}}^{(2)}\right)
    \Bigg)\Bigg\|_{a,b}^2\Bigg)\le 36\left(\frac{3}{\gamma}+\frac{3\gamma}{M}\right)d h,
\end{align*}
using the same argument as for the previous term. Hence by summing up the above bounds, we have
\begin{equation}\label{supp:eq:sigmaz2bnd}
\sigma(z)^2\le 37\left(\frac{3}{\gamma}+\frac{3\gamma}{M}\right)d h.
\end{equation}
Now, we will lower bound $n(z)$ as defined in \eqref{supp:eq:nzdef}. By  \eqref{supp:eq:normequiv}, we have
\begin{align}
\nonumber&\E(\|w(z)-\tilde{w}(z)\|_{a,b}^2)\ge \frac{1}{2}\E(\|w(z)-\tilde{w}(z)\|_{a,0}^2)\\
&=\E\Bigg(\Bigg\|\mathcal{U}\left(\mathcal{B}\left(\mathcal{U}\left(z,h/2,\xi^{(1)},\xi^{(2)}\right),h\right),h/2,\xi^{(3)},\xi^{(4)}\right)
\\
\nonumber &-\E \mathcal{U}\left(\mathcal{B}\left(\mathcal{U}\left(z,h/2,\xi^{(1)},\xi^{(2)}\right),h\right),h/2,\xi^{(3)},\xi^{(4)}\right)\Bigg\|_{a,0}^2\Bigg)\\
\nonumber&\ge 
\E\Bigg(\Bigg\|\Bigg(\sqrt{\frac{2}{\gamma}}\left(\mathcal{Z}^{(1)}\left(\frac{h}{2},\xi^{(3)}\right) - \mathcal{Z}^{(2)}\left(\frac{h}{2},\xi^{(3)},\xi^{(4)}\right) \right),\sqrt{2\gamma}\mathcal{Z}^{(2)}\left(\frac{h}{2},\xi^{(3)},\xi^{(4)}\right)
    \Bigg)\Bigg\|_{a,0}^2\Bigg)\\
    \label{supp:eq:numeratorlbnd}&\ge \frac{\gamma}{M} d h.
\end{align}
For the denominator, we have
\begin{align*}
&\int \int (g(w)-g(\tilde{w}))^2 P_{h}(z,dw) P_{h}(z,d \tilde{w})=2\cdot \Var_{w\sim P_{h}(z,\cdot)}(g(w))\\
&=2\cdot \Var\left(g\left(\mathcal{U}\left(\mathcal{B}\left(\mathcal{U}\left(z,h/2,\xi^{(1)},\xi^{(2)}\right),h\right),h/2,\xi^{(3)},\xi^{(4)}\right)\right) \right)\\
\intertext{by the Efron-Stein inequality \cite{steele1986efron, Boucheron2013}}
&\le 2 \E(\Var_{\xi^{(1)},\xi^{(2)}}\left(g\left(\mathcal{U}\left(\mathcal{B}\left(\mathcal{U}\left(z,h/2,\xi^{(1)},\xi^{(2)}\right),h\right),h/2,\xi^{(3)},\xi^{(4)}\right)\right) \right)\\
&+2\E(\Var_{\xi^{(3)},\xi^{(4)}}\left(g\left(\mathcal{U}\left(\mathcal{B}\left(\mathcal{U}\left(z,h/2,\xi^{(1)},\xi^{(2)}\right),h\right),h/2,\xi^{(3)},\xi^{(4)}\right)\right) \right),
\end{align*}
where $\Var_{\xi^{(1)},\xi^{(2)}}(\cdot)$ means that we compute the conditional variance with respect to $\xi^{(3)},\xi^{(4)}$ (so the $\xi^{(3)},\xi^{(4)}$ are kept constant, and only the variance with respect to $\xi^{(1)},\xi^{(2)}$ is considered).
Let
\begin{align*}J_{\mathcal{U}}(h)&:=\frac{\partial \mathcal{U}(x,v,h,\xi^{(1)},\xi^{(2)})}{\partial (\xi^{(1)},\xi^{(2)})}\\
&=\left(\begin{matrix}\left(\sqrt{\frac{2h}{\gamma}}-\frac{\sqrt{2}(1-e^{-\gamma h})}{\gamma^{3/2}\sqrt{h}}\right)I_d, & -\frac{(1-e^{-\gamma h})\sqrt{2}}{\sqrt{\gamma h}} I_d\\
-\frac{1}{\gamma} \sqrt{1-e^{-2\gamma h}-\frac{2(1-e^{-\gamma h})^2}{\gamma h}} I_d, &\sqrt{1-e^{-2\gamma h}-\frac{2(1-e^{-\gamma h})^2}{\gamma h} }I_d
\end{matrix}\right),\\
\tilde{g}_h(z)&:=g\left(\mathcal{U}\left(\mathcal{B}\left(z,h\right),
h/2,\xi^{(3)},\xi^{(4)}\right)\right).
\end{align*}
Using the assumption that $g$ in 1-Lipschitz in \eqref{supp:eq:nzdef}, and the bounds (\ref{supp:eq:Udiff}-\ref{supp:eq:Bdiff}), it follows that $\tilde{g}_h$ 
 is a 6-Lipschitz function in $\|\cdot \|_{a,b}$, and \eqref{supp:eq:normequiv} implies that it is a 12-Lipschitz function in $\|\cdot\|_{a,0}$.
Since the continuously differentiable Lipschitz functions are dense amongst Lipschitz functions (see \cite{azagra2007smooth}), we can assume without loss of generality that $g$ and thus $\tilde{g}_h$ are continuously differentiable. Note that
\begin{align*}&\tilde{g}_h(z)-\tilde{g}_h(z')= \left<\nabla \tilde{g}_h(z), z-z'\right>+o(\|z-z'\|_{a,0})\\
&=\left<\left(\begin{matrix}I_d & 0\\ 0 &a^{-1/2}I_d\end{matrix} \right) \nabla \tilde{g}_h(z)  , \left(\begin{matrix}I_d & 0\\ 0 &a^{1/2}I_d\end{matrix}\right)(z-z')\right>+o(\|z-z'\|_{a,0}).
\end{align*}
Using this, it is easy to show that the $12$-Lipschitz property of $\tilde{g}_h$ in $\|\cdot\|_{a,0}$ implies that
$\left\|\nabla \tilde{g}_h(z)\right\|_{1/a,0}\le 12$ for every $z\in \Lambda$.
Hence, we obtain
\begin{align*}&\left\|\frac{\partial}{\partial (\xi^{(1)},\xi^{(2)})}\left(g\left(\mathcal{U}\left(\mathcal{B}\left(\mathcal{U}\left(z,h/2,\xi^{(1)},\xi^{(2)}\right),h\right),h/2,\xi^{(3)},\xi^{(4)}\right)\right) \right)\right\|\\
&=\left\|\frac{\partial}{\partial (\xi^{(1)},\xi^{(2)})}
 \tilde{g}_h\left( \mathcal{U}\left(z,h/2,\xi^{(1)},\xi^{(2)}\right)\right)
\right\|=\left\|J_U(h/2) \nabla\tilde{g}_h\left( \mathcal{U}\left(z,h/2,\xi^{(1)},\xi^{(2)}\right)\right)\right\|
\\
&\le 12 \sup_{w\in \Lambda: \|w\|_{1/a,0}\le 1} \left\|J_U(h/2) w\right\|=12\sup_{w\in \Lambda: \|w\|\le 1} \left\| J_U(h/2) \left(\begin{matrix}I_d & 0_d \\ 0_d & \frac{1}{\sqrt{M}} I_d\end{matrix}\right)w\right\|\\
&=12 \left\|J_U(h/2) \left(\begin{matrix}I_d & 0_d \\ 0_d & \frac{1}{\sqrt{M}} I_d\end{matrix}\right)\right\|\\
&=12\left\|\left(\begin{matrix}\left(\sqrt{\frac{h}{\gamma}}-\frac{2(1-e^{-\gamma h/2})}{\gamma^{3/2}\sqrt{h}}\right), & -\frac{2(1-e^{-\gamma h/2})}{\sqrt{M\gamma h}} \\
-\frac{1}{\gamma} \sqrt{1-e^{-\gamma h}-\frac{4(1-e^{-\gamma h/2})^2}{\gamma h}}, &\frac{1}{\sqrt{M}}\sqrt{1-e^{-\gamma h}-\frac{4(1-e^{-\gamma h/2})^2}{\gamma h}}
\end{matrix}\right)\right\|,\\
\intertext{using the fact that $-(1-e^{-x})^2\le -x^2+x^3$ for $x\ge 0$, and that $\gamma h\le 1$}
&\le 12 \left(\sqrt{\frac{h}{\sqrt{M}}}+\frac{2\gamma h}{\sqrt{M}}\right).
\end{align*}
From the Gaussian Poincar\'e inequality (see e.g. Theorem 3.20 of \cite{Boucheron2013}), and the fact that $\xi^{(1)}, \xi^{(2)}$ are standard normal, it follows that 
\begin{align*}&2\E(\Var_{\xi^{(1)},\xi^{(2)}}\left(g\left(\mathcal{U}\left(\mathcal{B}\left(\mathcal{U}\left(z,h/2,\xi^{(1)},\xi^{(2)}\right),h\right),h/2,\xi^{(3)},\xi^{(4)}\right)\right) \right)\\
&\le 2 \cdot 12^2 \left(\sqrt{\frac{h}{\sqrt{M}}}+\frac{2\gamma h}{\sqrt{M}}\right)^2\le 576\left(\frac{h}{\sqrt{M}}+4\frac{\gamma^2 h^2}{M}\right).
\end{align*}
We can bound the second term similarly, since
\begin{align*}
&\left\|\frac{\partial}{\partial (\xi^{(3)},\xi^{(4)})}\left(g\left(\mathcal{U}\left(\mathcal{B}\left(\mathcal{U}\left(z,h/2,\xi^{(1)},\xi^{(2)}\right),h\right),h/2,\xi^{(3)},\xi^{(4)}\right)\right) \right)\right\|\\
&=\left\|J_U(h/2) \nabla g\left(\mathcal{U}\left(\mathcal{B}\left(\mathcal{U}\left(z,h/2,\xi^{(1)},\xi^{(2)}\right),h\right),h/2,\xi^{(3)},\xi^{(4)}\right)\right)\right\|\\
\intertext{using the fact that $g$ is $2$-Lipschitz with respect to $\|\cdot \|_{a,0}$,}
&\le 2\left\|J_U(h/2) \left(\begin{matrix}I_d & 0_d \\ 0_d & \frac{1}{\sqrt{M}} I_d\end{matrix}\right)\right\|\le 2\sqrt{\frac{h}{\sqrt{M}}}+\frac{2\gamma h}{\sqrt{M}},
\end{align*}
and thus by the Gaussian Poincar\'e inequality,
\begin{align*}
&2\E(\Var_{\xi^{(3)},\xi^{(4)}}\left(g\left(\mathcal{U}\left(\mathcal{B}\left(\mathcal{U}\left(z,h/2,\xi^{(1)},\xi^{(2)}\right),h\right),h/2,\xi^{(3)},\xi^{(4)}\right)\right) \right)\\
&\le 8 \left(\sqrt{\frac{h}{\sqrt{M}}}+\frac{2\gamma h}{\sqrt{M}}\right)^2\le 16\left(\frac{h}{\sqrt{M}}+4\frac{\gamma^2 h^2}{M}\right).
\end{align*}
By adding these up, we obtain
\[
\int \int (g(w)-g(\tilde{w}))^2 P_{h}(z,dw) P_{h}(z,d \tilde{w})\le 592\left(\frac{h}{\sqrt{M}}+4\frac{\gamma^2 h^2}{M}\right),
\]
and hence by \eqref{supp:eq:nzdef} and \eqref{supp:eq:numeratorlbnd}, we have
\begin{equation}\label{supp:eq:nzbound}n(z)\ge \frac{\frac{\gamma}{M} d h}{
592\left(\frac{h}{\sqrt{M}}+4\frac{\gamma^2 h^2}{M}\right) 
}\ge\frac{\frac{\gamma}{M} d h}{
592\cdot 5\left(\frac{\gamma h}{M}\right) 
}\ge \frac{d}{3000}.
\end{equation}
Combining this with \eqref{supp:eq:sigmaz2bnd}, we have that
\[\sup_{z\in \Lambda} \frac{\sigma(z)^2}{n(z)}\le \left(37\left(\frac{3}{\gamma}+\frac{3\gamma}{M}\right)d h\right) \cdot \frac{3000}{d}\le 333000 \left(\frac{1}{\gamma}+\frac{\gamma}{M}\right)h,
\]
and the claim now follows by Theorem 2 of \cite{Joulin2010} and the bound on $\Var[\E(\hat{\pi}(f))|X_0)]$ on page 2427 of \cite{Joulin2010}, using the fact that $\kappa\ge 1-\sqrt{1-\frac{mh}{4\gamma}}\ge \frac{mh}{8\gamma}$ by Proposition \ref{supp:prop:Wasserstein}.
\end{proof}

\subsection{Variance of $S(c_R)$}
\label{supp:subsection:varianceScR}

\begin{theorem}
\label{supp:thm:exactgradUBUBU}
Suppose that Assumptions \ref{supp:assum:Lip}, \ref{supp:assum:convex}, \ref{supp:assum:Hess_Lipschitz}, \ref{supp:assum:Lipschitz}, \ref{supp:assum:initialdist} hold, and in addition, 
\begin{align*}\gamma &\geq \sqrt{8M}, \quad h_0 \leq 
\frac{1}{\gamma}\cdot \frac{m}{264M}, \quad B\ge \frac{16\log(4)\gamma}{mh_0}, \quad
B_0\ge \frac{16 \gamma}{m h_0}\log\left(\frac{c_{\mu_0}+1}{ \sqrt{M}\gamma h_0^2}\right).\end{align*}
Suppose that $c_{R}\in [0,\phi_N^{-1/2})$, and $2<\phi_N<16$.
Then for any $N\ge 1$, the $\UBUBU$ estimator $S(c_R)$ has finite expected computational cost, $\E S(c_R)=\pi(f)$, and it has finite variance. 
Moreover, it satisfies a CLT as $N\to \infty$, and the asymptotic variance $\sigma^2_S$ defined in \eqref{supp:eq:sigma2S} can be bounded as
\[\sigma^2_S\le \frac{C(m,M,M^{s}_1,\gamma,c_N,\phi_N)
}{K h_0} \left(1+\frac{1}{h_0 K}+d h_0^4\right).\]
\end{theorem}

\begin{proof}[Proof of Theorem \ref{supp:thm:exactgradUBUBU}]
By Corollary \ref{supp:corollary:global_strong_error}, and the fact that \[\E(D_{l,l+1}^2)\le 
\max_{1\le k\le K}\E\left[\left(f(z'^{(l,l+1)}_{k}) - f(z^{(l,l+1)}_{k})\right) ^2\right],\] it follows that under the assumptions of Corollary \ref{supp:corollary:global_strong_error}, we have 
\[\E(D_{l,l+1}^2)\le C d \left(\left(\gamma^{{4}} + \frac{(M^{s}_1)^2}{M}\right)   \left(\frac{\gamma}{m}\right)^2 +  \frac{M\gamma^2}{m}\right)h_l^4\le V_D \phi_D^{-l},\] for $V_D= C h_0^4 d \left(\left(\gamma^{{4}} + \frac{(M^{s}_1)^2}{M}\right)   \left(\frac{\gamma}{m}\right)^2 +  \frac{M\gamma^2}{m}\right)$ and $\phi_D=16$. 
From Proposition \ref{supp:prop:D0var}, and using the fact that $c(h_0)=\frac{h_0m}{8\gamma}$, and our assumptions on $B_0$, we have 
\begin{align}\nonumber&\Var(D_0)\le \frac{C}{c(h_0) K}\left(1+\frac{1}{c(h_0) K}\right)   \left(\frac{1}{\gamma}+\frac{\gamma}{M}\right)\frac{h_0}{c(h_0)}+\frac{(1-c(h_0))^{2(B_0+1)}}{2c(h_0)^2 K^2}\sigma_{\mu_0}^2\\
&\label{supp:eq:VarD0exactbnd}\le \frac{C}{ K}\cdot \frac{1}{h_0}\left(\frac{8\gamma}{m}\right)^2  \left(\frac{1}{\gamma}+\frac{\gamma}{M}\right)\left(1+\frac{8\gamma}{h_0 m}\cdot \frac{1}{K}\right).
\end{align}
The computational cost at each level satisfies the assumptions of Proposition \ref{supp:prop:unb}, so if we fix $2<\phi_N<16$, all assumptions of this proposition are satisfied. Hence $S(c_R)$ is an unbiased estimator with finite variance and computational cost.

The claim about the asymptotic variance follows by using the bounds in \eqref{supp:eq:VarD0exactbnd} and in Proposition \ref{supp:prop:varDllp1exact}, and adding up all terms according to \eqref{supp:eq:sigma2S}.
\end{proof}

\begin{proposition}
\label{supp:prop:boundDllp1independent}
Suppose that the assumptions of Proposition \ref{supp:prop:Wasserstein} hold for $h=h_l$. 
Let $R_{l,l+1}=(P_{h_l,h_{l+1}})^{2^l}$ be the Markov kernel for two synchronously coupled $\UBU$ chains at discretization levels $l, l+1$. This chain is moving on state space $\Lambda^2$. Let $\ol{z}_1,\ldots, \ol{z}_K$ be a Markov chain with kernel $R_{l,l+1}$. Let $F:\Lambda^2\to \R$ be of the form $F(z,z')=f(z)-f(z')$, where $f$ is of the form 
\eqref{supp:eq:fprodwdef}. Suppose that the target $\pi$ is a product distribution, satisfying the same conditions as in Proposition \ref{supp:prop:ProductExactGradUBUBU}. Then we have 
\begin{align*}&\Var\left(\frac{\sum_{i=1}^{K}F(\ol{z}_i)}{K}\right)\le \frac{2}{K^2}\sum_{i=1}^{K}\sum_{k=0}^{K-i} \min\Bigg(\frac{\Var(F(\ol{z}_i))+\Var(F(\ol{z}_{i+k}))}{2},\\
&\sqrt{4r \left(\sum_{s=1}^{r}\|w^{(s)}\|^2\right)}
\sqrt{\Var(F(\ol{z}_i)) \max_{1\le j\le d}\E\left[\|\ol{z}_{i,j}-\E\ol{z}_{i,j}\|_{a,b}^2\right]}\cdot \exp\left(-\frac{mh_0}{8\gamma}\cdot k\right) \Bigg).\end{align*}
\end{proposition}

\begin{proof}
We proceed similarly to the proof of Proposition \ref{supp:prop:boundDllp1}.
\[\mathrm{Cov}(F(\ol{z}_i),F(\ol{z}_{i+k}))=\E[(F(\ol{z}_i)-\E(F(\ol{z}_i)))(F(\ol{z}_{i+k})-\E(F(\ol{z}_{i+k})))].\]
Let $\tilde{z}_i$ be an independent identically distributed copy of $\ol{z}_i$, and define  $(\ol{z}_{i:i+k},\tilde{z}_{i:i+k})$ as synchronously coupled, in the same way as in the proof of Proposition \ref{supp:prop:boundDllp1}. 
It follows from applying Proposition \ref{supp:prop:Wasserstein} on each coordinate, and using independence that for every coordinate $1\le j\le d$,
\begin{align*}
 &\E\left(\left.\left\|\tilde{z}_{i+k,j}-\ol{z}_{i+k,j}\right\|_{a,b}^2\right|\ol{z}_{i,j},\tilde{z}_{i,j}\right)\le \exp\left(-\frac{mh_0}{4\gamma}\cdot k\right)  \|\ol{z}_{i,j}-\tilde{z}_{i,j}\|_{a,b}^2.
\end{align*}
With a slight abuse of notation, index $j$ here refers to both position and velocity components, hence $\tilde{z}_{i,j}=(\tilde{x}_{i,j},\tilde{v}_{i,j},\tilde{x}'_{i,j},\tilde{v}'_{i,j})\in \R^4$.
As previously, $\ol{z}_{i}$ and $\tilde{z}_{i+k}$ are independent, and
\begin{align*}&\mathrm{Cov}(F(\ol{z}_i),F(\ol{z}_{i+k}))= \E[(F(\ol{z}_i)-\E(F(\ol{z}_i)))\E(F(\ol{z}_{i+k})-F(\tilde{z}_{i+k})|\ol{z}_i,\tilde{z}_i)]\\
&\le \sqrt{\Var\left(F(\ol{z}_i)\right) \Var\left(F(\ol{z}_{i+k})-F(\tilde{z}_{i+k})\right)}
\end{align*}
By the Efron-Stein inequality \cite{steele1986efron, Boucheron2013}, and some rearrangement, we have
\begin{align*}
&\Var\left(F(\ol{z}_{i+k})-F(\tilde{z}_{i+k})\right)\le
2\sum_{j=1}^{d}\left(\sum_{s=1}^{r}|w^{(s)}_j|\right)^2 \E\left(\left\|\tilde{z}_{i+k,j}-\ol{z}_{i+k,j}\right\|_{a,b}^2\right)\\
&\le 2r \left(\sum_{s=1}^{r}\|w^{(s)}\|^2\right)
\exp\left(-\frac{mh_0}{4\gamma}\cdot k\right)
\max_{1\le j\le d}\E\left[\|\ol{z}_{i,j}-\tilde{z}_{i,j}\|_{a,b}^2\right]
\\
&=4r \left(\sum_{s=1}^{r}\|w^{(s)}\|^2\right)
\exp\left(-\frac{mh_0}{4\gamma}\cdot k\right)
\max_{1\le j\le d}\E\left[\|\ol{z}_{i,j}-\E\ol{z}_{i,j}\|_{a,b}^2\right],
\end{align*}
and the claim follows by rearrangement and summation.
\end{proof}

We are going to use an assumption on the initial distribution $\mu_0$ to show the dimension-free bounds for product distributions.
\begin{assumption}\label{supp:assum:product}
    Suppose that $\mu_0$ and the target distribution $\pi$ are of product form
\begin{align*}
\mu_{0}(dx,dv)&= \prod^{d}_{i=1}\mu_{0,i}(dx_i,dv_i) \quad \text{ for all }l\ge 0,\quad \pi(d x,d v) = \prod^{d}_{i=1}\tilde{\pi}_{i}(d x_i)\frac{e^{-v_i^2/2}dv_i}{\sqrt{2\pi}},
\end{align*}
for $x = (x_{1},...,x_{d}) \in \R^{d}$, $v = (v_{1},...,v_{d}) \in \R^{d}$, and that
\[\max_{1\le i\le d}\mathcal{W}_2(\pi_i,\mu_{0,i})\le c_{\mu_0}\sqrt{\frac{1}{m}},\]
for some finite constant $c_{\mu_0}$, where $\pi_{i}(d x_i,d v_i)=\tilde{\pi}_{i}(d x_i)\frac{e^{-v_i^2/2}}{\sqrt{2\pi}} dv_i$ is the joint distribution of $(x_i,v_i)$ according to the target $\pi$.
\end{assumption}

\begin{proposition}\label{supp:prop:ProductExactGradUBUBU}
Suppose that Assumption \ref{supp:assum:product} holds, and denote the potential $U$ as $U(x) = \sum^{d}_{i=1}U_{i}(x_{i})$. Suppose that Assumptions \ref{supp:assum:Lip}, \ref{supp:assum:convex}, and \ref{supp:assum:Hess_Lipschitz} hold for each component $(U_i)_{1\le i\le d}$, and that
\begin{align*}\gamma \geq \sqrt{8M}, \quad h_0 \leq 
\frac{1}{\gamma}\cdot \frac{m}{264M}, \quad B\ge \frac{16\log(4)\gamma}{mh_0}, \quad
B_0\ge \frac{16 \gamma}{m h_0}\log\left(\frac{c_{\mu_0}+1}{ \sqrt{M}\gamma h_0^2}\right).
\end{align*}
Suppose that $f$ is of the form
\begin{equation}\label{supp:eq:fprodwdef}
f(x,v) = g(\langle w^{(1)}, x \rangle,\ldots \langle w^{(r)}, x \rangle), 
\end{equation}
where $g:\R^r \to \R$ is 1-Lipschitz, and $w^{(1)},\ldots,w^{(r)} \in \R^{d}$. Suppose that $c_{R}\in [0, \phi_N^{-1/2})$ and $2<\phi_N<16$. Then for any $N\ge 1$, the $\UBUBU$ estimator $S(c_R)$ has finite expected computational cost, $\E S(c_R)=\pi(f)$, and it has finite variance. Moreover, it satisfies a CLT as $N\to \infty$, and the asymptotic variance can be bounded as 
\[
\sigma^2_S\le \frac{C(m,M,M^{s}_{1},\gamma,r,c_N,\phi_N)
}{K h_0} \sum_{1\le i\le r}\|w^{(i)}\|^{2}.
\]
\end{proposition}

\begin{proof}[Proof of Proposition \ref{supp:prop:ProductExactGradUBUBU}]
Unbiasedness, finite variance, and finite computational cost follow from Theorem \ref{supp:thm:exactgradUBUBU}. By \eqref{supp:eq:sigma2S}, the asymptotic variance can be expressed as
\[\sigma^2_S:=\Var(D_0)+\sum_{l=0}^{\infty} \Var(D_{l,l+1})\cdot \frac{\phi_N^l}{c}.\]

It is easy to show that 
 $f$ is $\sum_{s=1}^{r}\|w^{(s)}\|$-Lipschitz, so the variance term $\Var(D_0)$ can be bounded using Proposition \ref{supp:prop:D0var}, relying on the burn-in assumptions.

To control $\Var(D_{l,l+1})$, we first need to bound terms of the form $\Var(f(z_k')-f(z_k))$. 
Let $z_{k,j}=(x_{k,j},v_{k,j})\in \R^2$ denote components $j$ in both $x$ and $v$.
Using the Efron-Stein inequality \cite{steele1986efron, Boucheron2013}, and independence of the components, we have
\begin{align*}
&\Var(f(z'^{(l,l+1)}_k)-f(z^{(l,l+1)}_k))\le
2\sum_{j=1}^{d}\left(\sum_{s=1}^{r}|w^{(s)}_j|\right)^2 \E\left(\left\|z'^{(l,l+1)}_{k,j}-z_{k,j}^{(l,l+1)} 
  \right\|_{a,b}^2\right)\\
&\le 2r \left(\sum_{s=1}^{r}\|w^{(s)}\|^2\right) \max_{1\le j\le d}\E\left(\left\|z'^{(l,l+1)}_{k,j}-z^{(l,l+1)}_{k,j}\right\|_{a,b}^2\right).
\end{align*}
By applying Corollary \ref{supp:corollary:global_strong_error} component-wise, it follows that under our assumptions, 
\[\max_{1\le j\le d}\E\left(\left\|z'^{(l,l+1)}_{k,j}-z_{k,j}\right\|_{a,b}^2\right)\le C(m,M,\gamma, M^{s}_{1})h_l^4,\]
hence 
\[\Var(f(z'^{(l,l+1)}_k)-f(z^{(l,l+1)}_k))\le C(m,M, M^{s}_{1},\gamma,r) \sum_{1\le i\le r}\|w^{(i)}\|^{2} h_l^4.\]
Using this, and Proposition \ref{supp:prop:boundDllp1independent}, by a similar argument as in the proof of Theorem \ref{supp:thm:exactgradUBUBU}, we can show that 
\[\Var(D_{l,l+1})\le \frac{C(m,M, M^{s}_{1},\gamma, r)}{K} \left(\sum_{1\le i\le r}\|w^{(i)}\|^{2}\right) h_l^4 \left(1+\frac{4\gamma}{mh_0}+\log(4)l\right),\]
and the claim follows by summation and rearrangement.
\end{proof}

 
\section{Initialization and Gaussian approximation}\label{supp:sec:initial_conditions}

We will use the following assumptions in this section and sections \ref{supp:sec:Appendix_UBUBU_variance_bnd_svrg} and \ref{supp:sec:Appendix_UBUBU_variance_bnd_approx_grad}, which we restate here for easier readability. We will consider potentials of the form 
\begin{equation}\label{supp:eq:potential_form}
    U(x)=U_{0}(x) + \sum_{i=1}^{N_D} U_i(x),
\end{equation} 
where we aim to understand the scaling of the computational complexity when inexact gradients are used within the $\UBUBU$ framework in the large $N_D$ case. We assume that the potential has the form \eqref{supp:eq:potential_form} in this section and sections \ref{supp:sec:Appendix_UBUBU_variance_bnd_svrg} and \ref{supp:sec:Appendix_UBUBU_variance_bnd_approx_grad}, and we impose the following assumptions on the potential.

\begin{assumption}[$\nabla$Lipschitz property]\label{supp:assum:LipSG}
For every $1\le i\le N_D$, $U_i:\R^{d} \to \R$ is twice differentiable and there exists a $\tilde{M}>0$ such that for all $x,y \in \R^d$,
$$
\| \nabla U_i(x) - \nabla U_i(y)\| \leq \tilde{M}\|x-y\|,
$$
for every $1\le i\le N_D$ and moreover,
$$
\| \nabla U(x) - \nabla U(y)\| \leq M\|x-y\| \quad \text{for}\quad M=N_D\tilde{M}.
$$
\end{assumption}
If we have a potential that is not necessarily of the form \eqref{supp:eq:potential_form}, we assume the following Lipschitz assumption on the gradient.
\begin{assumption}[$\nabla$Lipschitz property]
\label{supp:assum:LipA}
There is a $\tilde{M}>0$ such that for all $x,y \in \R^d$,
$$
\| \nabla U(x) - \nabla U(y)\| \leq M\|x-y\| \quad \text{for}\quad M=N_D\tilde{M}.
$$
\end{assumption}

\begin{assumption}[$N_D\tilde{m}$-strong convexity]
\label{supp:assum:convexSG}
There exists a $\tilde{m}>0$ such that for all $x,y \in \R^d$
$$
\langle \nabla U(x) - \nabla U(y),x-y \rangle \geq m\|x-y\|^2 \quad \text{ for }\quad m=N_D\tilde{m}.
$$
\end{assumption}

\begin{assumption}[strongly Hessian Lipschitz property]
\label{supp:assum:Hess_LipschitzSG}
$U:\mathbb{R}^{d} \to \mathbb{R}$ is three times continuously differentiable and $M^{s}_{1}$-strongly Hessian Lipschitz if there exists $M^{s}_{1} > 0$ such that
    \[
    \|\nabla^{3}U(x)\|_{\{1,2\}\{3\}} \leq M^{s}_{1}\quad \text{for}\quad M^{s}_{1}=N_D \Tilde{M}^{s}_{1},
    \]
    for all $x \in \mathbb{R}^{d}$.   
\end{assumption}

For a better understanding of the scaling in terms of $N_D$, we also introduce
\begin{equation}\label{supp:eq:tgammadef}
\Tilde{\gamma}=\frac{\gamma}{\sqrt{N_D}},
\end{equation}
so that $\gamma=\sqrt{N_D}\Tilde{\gamma}$.\\

\subsection{OHO scheme}

In this section, we detail some results for the OHO scheme we use for initialization. We state results for a potential that satisfies Assumptions  \ref{supp:assum:Lip} and \ref{supp:assum:convex} that can be applied in the case of Gaussian approximation. In particular, we show strong error results using similar techniques to \cite{leimkuhler2023contractiona} and \cite{schuh2024}.

We define the solution map $\mathcal{H}$ to have update rule
\begin{equation}\label{supp:eq:H_step}
     \mathcal{H}: (x,v) \to \phi_{h}(x,v),
\end{equation}
where $\phi_{h}(x,v)$ is the solution to the ODE
\begin{align*}
    dX_{t} &= V_{t}dt, \qquad dV_{t} = -\nabla U(X_{t})dt,
\end{align*}
initialized at $(X_{0},V_{0}) := (x,v) \in \mathbb{R}^{2d}$ at time $h > 0$. We then define the OHO scheme with stepsize $h >0$ as a half step of $\mathcal{O}$ with stepsize $h/2$, followed by a full step of $\mathcal{H}$ with stepsize $h$ and a half step of $\mathcal{O}$ with stepsize $h/2$, which exactly preserves the invariant measure.

\begin{remark}
We remark that the OHO scheme is a special case of the scheme studied in \cite{monmarche2024entropic} using a hypocoercivity approach. It has also been considered as an exact splitting for discretization analysis in \cite{BouRabeeOwhadi2010,PM21,monmarche2022hmc,bou2023randomized}. In practice, this scheme is only applicable when the Hamiltonian dynamics can be solved exactly, for example for a Gaussian target.
\end{remark}

\begin{proposition}\label{supp:prop:OHO_strong_error}
Let $h<1/2\gamma$, $\gamma^{2}\geq 4M$, $k \in \mathbb{N}$ and $(X_{t},V_{t})_{t\geq 0} := (Z_{t})_{t \geq 0}$ be the solution of the continuous kinetic Langevin dynamics and $(x_{t},v_{t})_{t\geq 0} := (z_{t})_{t \geq 0}$ be the solution to the OHO scheme with stepsize $h >0$, with synchronously coupled Brownian motion and where both are initialized at $z_{0} = Z_{0} \sim \pi$. We have that
    \begin{align*}
    \|Z_{kh} - z_{kh}\|_{L^{2},a,b} &\leq \sqrt{\frac{3}{2}} e^{\frac{3}{2}hk\sqrt{M}}\left(\frac{3h\gamma \sqrt{k\gamma h d} + 5k(h\gamma)^{2}\sqrt{d}}{\sqrt{M}}\right).
\end{align*}
\end{proposition}
\begin{proof}
Considering the OHO scheme given by 
\begin{align*}
    &(x_{h},v_{h}) := \Bigg(x + h\left(\eta v + \sqrt{1-\eta^{2}}\xi_{1}\right) - \int^{h}_{0}\nabla U(x(t))(h-t)dt,\\
    &\eta\left( \eta v + \sqrt{1-\eta^{2}}\xi_{1} - h\nabla U(x) - \int^{h}_{0}\nabla^{2}U(x(t))\overline{v}(t)(h-t)dt\right) + \sqrt{1-\eta^{2}}\xi_{2}\Bigg),
\end{align*}
where the Hamiltonian dynamics $(x(t),v(t))^{h}_{t = 0}$ is initialized at $(x,\eta v + \sqrt{1-\eta^{2}}\xi_{1})$.
 The kinetic Langevin dynamics for one step can be written as 
\begin{align}
V_{h} &= \mathcal{E}(h)V_{0} - \int^{h}_{0}\mathcal{E}(h-s)\nabla U(X_{s})ds + \sqrt{2\gamma}\int^{h}_{0}\mathcal{E}(h-s)dW_s,\label{supp:eq:contv_1}\\
X_{h} &= X_0 + \mathcal{F}(h)V_{0} - \int^{h}_{0}\mathcal{F}(h-s)\nabla U(X_{s})ds + \sqrt{2\gamma}\int^{h}_{0}\mathcal{F}(h-s)dW_{s},\label{supp:eq:contx_1}
\end{align}
where
$
\mathcal{E}(h) = e^{-\gamma h}, \mathcal{F}(h) = \frac{1-e^{-\gamma h}}{\gamma}$,
and we couple the noises such that $\sqrt{1-\eta^{2}}\xi_{1} = \sqrt{2\gamma}\int^{h/2}_{0}\mathcal{E}(h/2-s)dW_{s}$ and $\sqrt{1-\eta^{2}}\xi_{2} = \sqrt{2\gamma}\int^{h}_{h/2}\mathcal{E}(h-s)dW_{s}$.
Considering the velocity component we have
\begin{align*}
    &\|V_{h} - v_{h}\|_{L^{2}} \\
    &\leq \left\| \eta^{2} \left(V_{0} - v_{0}\right) -\int^{h}_{0}\mathcal{E}(h-s)\left(\nabla U(X_{s})-\nabla U(x)\right)ds + \sqrt{2\gamma}(1-\eta) \int^{h/2}_{0}\mathcal{E}(h/2-s)dW_{s}\right\|_{L^{2}} \\
    &+ \sqrt{Md}\left|\frac{1-\eta^{2} - h\gamma \eta}{\gamma}\right| + \frac{h^{2}M\sqrt{d}}{2}\\
    &\leq \left\|\eta^{2} \left(V_{0} - v_{0}\right) -\int^{h}_{0}\mathcal{E}(h-s)\left(\nabla U(X_{0}) - \nabla U(x)\right)ds + \sqrt{2\gamma}(1-\eta) \int^{h/2}_{0}\mathcal{E}(h/2-s)dW_{s}\right\|_{L^{2}} \\
    &+ \left\|\int^{h}_{0}\mathcal{E}(h-s)\left(\nabla U(X_{s})-\nabla U(X_{0})\right)ds\right\|_{L^{2}}+ \left(\gamma \sqrt{M} + \frac{M}{2}\right)h^{2}\sqrt{d}\\
    &\leq \left\|\eta^{2} \left(V_{0} - v_{0}\right) -\int^{h}_{0}\mathcal{E}(h-s)\left(\nabla U(X_{0}) - \nabla U(x)\right)ds + \sqrt{2\gamma}(1-\eta) \int^{h/2}_{0}\mathcal{E}(h/2-s)dW_{s}\right\|_{L^{2}} \\
    &+ \left(\gamma \sqrt{M} + 3M\right)h^{2}\sqrt{d},
\end{align*}
where the final estimate is a rough estimate due to \eqref{supp:eq:contv_1}, $(X_{s},V_{s}) \sim \pi$ for all $s \in [0,h]$, the fact that $U$ is $M$-$\nabla$Lipschitz, $h < \frac{1}{2\gamma}$ and $\gamma^{2} \geq 4M$. Now considering time $kh \geq 0$  for $k \in \mathbb{N}$ and iteratively applying the argument whilst keeping Brownian components in the same $L^{2}$ norm, we have
\begin{align*}
    \|V_{kh} - v_{kh}\|_{L^{2}} &\leq \sum^{k}_{i=1}hMa_{i} + \sqrt{2\gamma}\left\|\sum^{k}_{i=1}\eta^{2(k-i)}\left(1 - \eta\right) \int^{(i-1/2)h}_{(i-1)h}\mathcal{E}((i-1/2)h-s)dW_{s}\right\|_{L^{2}} \\&+3k\gamma\sqrt{M}h^{2}\sqrt{d}\\
    &\leq \sum^{k}_{i=1}hMa_{i} + \frac{h\gamma \sqrt{2\gamma}}{2 }\left\| \sum^{k}_{i=1}\int^{(i-1/2)h}_{(i-1)h}\mathcal{E}((i-1/2)h-s)dW_{s}\right\|_{L^{2}} \\&+ k\left(h\gamma\right)^{2}\sqrt{d}+3k\gamma\sqrt{M}h^{2}\sqrt{d}\\
    &\leq \sum^{k}_{i=1}hMa_{i} + h\gamma\sqrt{k\gamma hd} + 3k\left(h\gamma\right)^{2}\sqrt{d},
\end{align*}
where $a_{i}:= \|X_{ih} - x_{ih}\|_{L^{2}}$ for $i \in \mathbb{N}$ and $b_{i}:= \|V_{ih} - v_{ih}\|_{L^{2}}$ for $i \in \mathbb{N}$. We have also used the independence of the Brownian motion over independent time intervals.

Now considering the position components we have
\begin{align*}
    \|X_{h} - x_{h}\|_{L^{2}} &\leq \Bigg\| X_{0} - x_{0} + \mathcal{F}(h) \left(V_{0} - v_{0}\right) - \int^{h}_{0}\mathcal{F}(h-s)\left(\nabla U(X_{s}) - \nabla U(x_{s})\right)ds \\
    &+ \int^{h}_{0}\sqrt{2\gamma} \mathcal{F}(h-s) dW_{s} - h\sqrt{2\gamma}\int^{h/2}_{0}\mathcal{E}(h/2-s)dW_{s}\Bigg\|_{L^{2}} + 2\gamma h^{2}\sqrt{d}\\
    &\leq \Bigg\|X_{0} - x_{0} +\mathcal{F}(h)(V_{0} - v_{0}) - \int^{h}_{0}\mathcal{F}(h-s)\left(\nabla U(X_{0}) - \nabla U(x_{0})\right)ds \\
    &+ \int^{h}_{0}\sqrt{2\gamma} \mathcal{F}(h-s) dW_{s} - h\sqrt{2\gamma}\int^{h/2}_{0}\mathcal{E}(h/2-s)dW_{s}\Bigg\|_{L^{2}} + 3\gamma h^{2}\sqrt{d}.
\end{align*}
Then as before we consider time $kh\geq 0$ for $k \in \mathbb{N}$ and we have
\begin{align*}
    &\|X_{kh} - x_{kh}\|_{L^{2}} \leq \sum^{k}_{i=1}\left(hb_{i} +h^{2}Ma_{i}\right) +3k\gamma h^{2}\sqrt{d} \\
    &+\sqrt{2\gamma}\left\|\sum^{k}_{i=1}\int^{ih}_{(i-1)h}\mathcal{F}(ih-s)dW_{s} - h\int^{(i-1/2)h}_{(i-1)h}\mathcal{E}((i-1/2)h-s)dW_{s}\right\|_{L^{2}}\\
    &\leq \sum^{k}_{i=1}\left(hb_{i} +h^{2}Ma_{i}\right)+ 3k\gamma h^{2}\sqrt{d}  + 2\sqrt{2\gamma} h\sqrt{hkd}.
\end{align*}
In $\|\cdot\|_{L^{2},a,b}$ using the preceding estimates we have
\begin{align*}
    \|Z_{kh} - z_{kh}\|_{L^{2},a,0} &\leq \frac{3}{2}\sum^{k}_{i=1}h\sqrt{M}\|Z_{ih} - z_{ih}\|_{L^{2},a,0} + \frac{3h\gamma \sqrt{k\gamma h d} + 5k(h\gamma)^{2}\sqrt{d}}{\sqrt{M}}\\
    &\leq e^{\frac{3}{2}hk\sqrt{M}}\left(\frac{3h\gamma \sqrt{k\gamma h d} + 5k(h\gamma)^{2}\sqrt{d}}{\sqrt{M}}\right).
\end{align*}
\end{proof}

\begin{theorem}\label{supp:theorem:OHO_strongerror}
Let $h<1/2\gamma$, $\gamma^{2}\geq 4M$, $l \in \mathbb{N}$ and $(X_{t},V_{t})_{t\geq 0} := (Z_{t})_{t \geq 0}$ be the solution of kinetic Langevin dynamics and $(x_{l},v_{l})_{l \in \mathbb{N}} := (z_{l})_{l \in \mathbb{N}}$ be the iterates of the solution to the $\mathcal{OHO}$ scheme with stepsize $h >0$, where both are initialized at the same point according to the invariant measure. We have that
    \begin{align*}
    \|Z_{lh} - z_{l}\|_{L^{2},a,b} &\leq 104h\gamma^{2}\sqrt{d}\left(\frac{3 \sqrt{2\gamma/\sqrt{M} } + 10\gamma/\sqrt{M}}{m}\right).
\end{align*}
\end{theorem}

\begin{proof}
    We use an approach used in \cite{leimkuhler2023contractiona} to remove the exponential constant in Proposition \ref{supp:prop:OHO_strong_error}. We define a sequence of interpolating variants $Z^{(k)}_{l}$ for every $k = 0,...,l$ as follows. We first define $Z^{(k)}_{0} = Z_{0}$, and then $(Z^{(k)}_{i})^{k}_{i=1}$ are defined by $\mathcal{OHO}$ steps followed by $(Z^{(k)}_{i})^{l}_{i=k+1}$ steps of kinetic Langevin dynamics with stepsize $h > 0$. We break the $l$ steps into blocks of size $\Tilde{l} = \lceil\frac{2}{3h\sqrt{M}} \rceil$, then we have
    \begin{align*}
        \|Z_{lh} - z_{l}\|_{L^{2},a,b} &= \|Z^{(0)}_{l} - Z^{(l)}_{l}\|_{L^{2},a,b}\\
        &\leq \sum^{\lfloor l/\Tilde{l}\rfloor - 1}_{j = 0}\left\|\left(Z^{(j\Tilde{l})}_{l} -Z^{((j+1)\Tilde{l})}_{l} \right)\right\|_{L^{2},a,b} + \left\|\left(Z^{(\lfloor l/\Tilde{l}\rfloor l)} - Z^{(l)}\right)\right\|_{L^{2},a,b},
    \end{align*}
    where we bound the terms using the fact that the first $j\Tilde{l}$ steps according to OHO keep the stationary distribution invariant and they only deviate in the following $\Tilde{l}$ steps, where we will use Proposition \ref{supp:prop:OHO_strong_error} with $l$ chosen as $\Tilde{l}$. We finally use contraction of the continuous dynamics (Corollary \ref{supp:cor:continuous_contraction}) in the remaining steps and we have
    \[
    \left\|\left(Z^{(j\Tilde{l})}_{l} -Z^{((j+1)\Tilde{l})}_{l} \right)\right\|_{L^{2},a,b} \leq 4 e^{-(l-1-(j+1)\Tilde{l})c(h)}\left(\frac{3h\gamma \sqrt{\Tilde{l}\gamma h d} + 5\Tilde{l}(h\gamma)^{2}\sqrt{d}}{\sqrt{M}}\right).
    \]
    Then summing up the terms we have that
    \begin{align*}
        \|Z_{lh} - z_{l}\|_{L^{2},a,b} &\leq 4\left(\frac{3h\gamma \sqrt{\Tilde{l}\gamma h d} + 5\Tilde{l}(h\gamma)^{2}\sqrt{d}}{\sqrt{M}}\right)\left(1 + \frac{1}{1-e^{-c(h)\Tilde{l}}}\right)\\
        &\leq 8\left(\frac{3h\gamma \sqrt{\Tilde{l}\gamma h d} + 5\Tilde{l}(h\gamma)^{2}\sqrt{d}}{\sqrt{M}}\right)\left(1 + \frac{12\gamma\sqrt{M}}{m}\right)\\
        &\leq 104h\gamma^{2}\sqrt{d}\left(\frac{3 \sqrt{2\gamma/\sqrt{M} } + 10\gamma/\sqrt{M}}{m}\right),
    \end{align*}
    as required.
\end{proof}
\subsection{Initialization and bounds}

For convex potentials, we can approximate the gradient with the Hessian at the minimizer by
\begin{equation}\label{supp:eq:hessian_grad_approx_star}
 \mathcal{Q}(x\mid x^{*}) = \nabla U(x^{*}) + H^{*}(x - x^{*}) = H^{*}(x - x^{*}),
\end{equation}
where $x^{*} \in \mathbb{R}^{d}$ is the minimizer of $U$ and $H^{*} = \nabla^{2}U(x^{*})$.
\begin{lemma}
\label{supp:lem:mean:minimizer:L4:1}
Considering the gradient approximation $\mathcal{Q}$ given by \eqref{supp:eq:hessian_grad_approx_star}, where the potential $U$ satisfies Assumption \ref{supp:assum:Hess_LipschitzSG}, and has a minimizer $x^{*} \in \mathbb{R}^{d}$ we then have the property
\[
\mathbb{E}\left\|\nabla U(x) - \mathcal{Q}(x\mid x^*) \right\|^p \leq (\tilde{M}^{s}_1)^p N_D^p    \|x-x^*\|_{L^{2p}}^{2p},
\]
for any $x \in \mathbb{R}^{d}$.
\end{lemma}
\begin{proof}
    Follows from Taylor expansion.
\end{proof}

We then define the measure $\mu_{G}=\mathcal{N}(x^*,(H^*)^{-1})\times \mathcal{N}(0_d,I_d)$ to be the Gaussian approximation of the target, which is the invariant measure of the OHO scheme and continuous kinetic Langevin dynamics with the use of the gradient approximation \eqref{supp:eq:hessian_grad_approx_star}.

\begin{proposition}\label{supp:prop:wass_initial_distance}
Let $p=2$ or $4$, then we have the following Wasserstein bound between a potential $U$ which satisfies Assumptions \ref{supp:assum:convexSG},\ref{supp:assum:Hess_LipschitzSG} and \ref{supp:assum:LipA}
\[
\mathcal{W}_{p,a,b}(\pi,\mu_{G}) \leq  \sqrt{\frac{3}{2}}\left(\frac{(2p)!}{2^{p}p!}\right)^{1/p}\frac{\Tilde{M}_{1}d}{\Tilde{m}^{2}N_{D}}.
\]
\end{proposition}
\begin{proof}
   If we let $\pi_{x}$ denote the marginal in the position of $\pi$ and $(\mu_{G})_{x}$ denote the marginal in position of $\mu_{G}$. We have from the equivalence of norms that for $p = 2,4$
    \begin{align*}
        \mathcal{W}_{p,a,b}(\pi,\mu_{G}) &\leq \sqrt{\frac{3}{2}}\mathcal{W}_{p,a,0}(\pi,\mu_{G}) \leq \sqrt{\frac{3}{2}}\mathcal{W}_{p}(\pi_{x},(\mu_{G})_{x}) \\&\leq \sqrt{\frac{3}{2}}\frac{\|\nabla U - \mathcal{Q}(\cdot\mid x^{*})\|_{L^{p}}}{m} \leq  \sqrt{\frac{3}{2}}\frac{
        M^{s}_{1}\|x-x^{*}\|^{2}_{L^{2p}}}{m}\\
        &\leq \sqrt{\frac{3}{2}}\left(\frac{(2p)!}{2^{p}p!}\right)^{1/p}\frac{M^{s}_{1}d}{m^{2}}
    \end{align*}
    where the third inequality is due to
    Proposition 22 of \cite{Vono2022Efficient}, the fourth due to Lemma \ref{supp:lem:mean:minimizer:L4:1} and the final inequality by Lemma \ref{supp:lem:moment_bounds}.
\end{proof}


\section{Variance bounds for $\UBUBU$ estimator with stochastic gradients}
\label{supp:sec:Appendix_UBUBU_variance_bnd_svrg}

For this section, we make use of the technique of the recent work of Hu et al \cite{Hu21optimal}, related
to using stochastic variance reduced gradient (SVRG). As a reminder, we use the following stochastic gradient approximation.

\begin{definition}\label{supp:def:subsampled_sg}
The sub-sampled stochastic gradient of $U$ at $x$ with respect to $\hat{x}$ is 
\begin{equation}\label{supp:eq:SG}
    \mathcal{G}(x,\omega|\hat{x}) = \nabla U_{0}(x) + \sum^{N_D}_{i=1}\nabla U_{i}(\hat{x}) + \frac{N_D}{N_{b}} \sum_{i \in \omega}[\nabla U_{i}(x) - \nabla U_{i}(\hat{x})],
\end{equation}
where $\omega \sim \SWR(N_D,N_{b})$.
\end{definition}
The approach we use is to update $\hat{x}$ every $\tau = \lceil N_D/N_b\rceil$ iterations with the latest position where the gradient was evaluated (this is not $x_k$ for UBU as the gradients are evaluated after moving forward by a $\mathcal{U}$ step with stepsize $h/2$). We refer to this as the stochastic variance reduced gradient (SVRG) approach (see \citep{johnson2013accelerating,Hu21optimal}).

A stochastic gradient version of the $\UBU$ scheme is simply constructed by replacing the $\mathcal{B}$ operator with
\[
\mathcal{B}_{\mathcal{G}}(x,v,h,\omega|\hat{x}) = (x,v- h\mathcal{G}(x,\omega|\hat{x})),
\]
where $\mathcal{G}$ is a stochastic gradient approximation of the potential $U$ as defined in the approximation given by \eqref{supp:eq:SG}.

We start with an alternative formula for the kinetic Langevin dynamics. This is used for the analysis of the $\UBU$ scheme in the full gradient setting in \cite{sanz2021wasserstein} and alternative schemes with the SVRG approximation in \eqref{supp:eq:SG}. 
The convenient way of expressing kinetic Langevin dynamics is to use It\^o's formula on the product $e^{\gamma t}V_{t}$. This results in the following set of equations for continuous kinetic Langevin dynamics with initial condition $(X_0,V_0) \in \R^{2d}$:
\begin{align}
V_{t} &= \mathcal{E}(t)V_{0} - \int^{t}_{0}\mathcal{E}(t-s)\nabla U(X_{s})ds + \sqrt{2\gamma}\int^{t}_{0}\mathcal{E}(t-s)dW_s, \label{supp:eq:cont_v}\\
X_{t} &= X_0 + \mathcal{F}(t)V_{0} - \int^{t}_{0}\mathcal{F}(t-s)\nabla U(X_{s})ds + \sqrt{2\gamma}\int^{t}_{0}\mathcal{F}(t-s)dW_{s} \label{supp:eq:cont_x},
\end{align}
where
\begin{equation}\label{supp:eq:EFdef}
\mathcal{E}(t) = e^{-\gamma t}\qquad \mathcal{F}(t) = \frac{1-e^{-\gamma t}}{\gamma}.
\end{equation}
Then the $\UBU$ scheme (as in \cite{sanz2021wasserstein}) can be expressed as 
\begin{align}
v_{k+1} &= \mathcal{E}(h)v_k - h\mathcal{E}(h/2)\nabla U(\overline{x}_{k}) + \sqrt{2\gamma}\int^{(k+1)h}_{kh}\mathcal{E}((k+1)h-s)dW_s,\label{supp:eq:disc_v}\\
\overline{x}_{k} &= x_{k} + \mathcal{F}(h/2)v_{k} + \sqrt{2\gamma}\int^{(k+1/2)h}_{kh}\mathcal{F}((k+1/2)h-s)dW_s, \label{supp:eq:disc_y}\\ 
x_{k+1} &= x_{k} + \mathcal{F}(h)v_{k} - h\mathcal{F}(h/2)\nabla U(\overline{x}_{k}) + \sqrt{2\gamma}\int^{(k+1)h}_{kh} \mathcal{F}((k+1)h-s)dW_s,\label{supp:eq:disc_x}
\end{align}
which can be more easily compared to the true dynamics via \eqref{supp:eq:cont_v} and \eqref{supp:eq:cont_x}. We will refer to $(\overline{x}_{k})_{k \in \mathbb{N}}$ as the gradient evalution points of the scheme. Similarly, stochastic gradient $\UBU$ can be expressed as \eqref{supp:eq:disc_v}-\eqref{supp:eq:disc_x} by replacing the gradients with stochastic gradient approximations,
\begin{align}
v_{k+1} &= \mathcal{E}(h)v_k - h\mathcal{E}(h/2)\mathcal{G}(\overline{x}_{k},\omega_{k+1}|\hat{x}_{k}) + \sqrt{2\gamma}\int^{(k+1)h}_{kh}\mathcal{E}((k+1)h-s)dW_s,\label{supp:eq:disc_v_SG}\\
\overline{x}_{k} &= x_{k} + \mathcal{F}(h/2)v_{k} + \sqrt{2\gamma}\int^{(k+1/2)h}_{kh}\mathcal{F}((k+1/2)h-s)dW_s, \label{supp:eq:disc_y_SG}\\
\hat{x}_k&=\overline{x}_{L(k)} \quad \text{for}\quad L(k)=\tau \lfloor k/\tau \rfloor,\label{supp:eq:disc_hatx_SG}\\
x_{k+1} &= x_{k} + \mathcal{F}(h)v_{k} - h\mathcal{F}(h/2)\mathcal{G}(\overline{x}_{k},\omega_{k+1}|\hat{x}_{k}) + \sqrt{2\gamma}\int^{(k+1)h}_{kh} \mathcal{F}((k+1)h-s)dW_s,\label{supp:eq:disc_x_SG}
\end{align}

If we are using a stochastic gradient approximation of the $\UBU$ dynamics, additional bias is introduce by the use of gradient approximations. We wish to measure the local error caused by the stochastic gradient approximation. 

\subsection{Variance bound of $D_{l,l+1}$}
Suppose now we have two $\UBU$ schemes, a UBU scheme $(z_{k})_{k \in \mathbb{N}} = (x_{k},v_{k})_{k \in \mathbb{N}}$ which uses a stochastic gradient approximation as defined in Definition \ref{supp:def:subsampled_sg} with $(\omega_{k})_{k \in \mathbb{N}}$ such that $\omega_{k} \sim \SWR(N_D,N_{b})$ for each $k \in \mathbb{N}$. Further at iteration $k$ define $z^{h}_{k} :=(x^{h}_{k},v^{h}_{k}) := \psi_{h}\left(z_{k},h,(W_{t'})^{(k+1)h}_{t'=kh}\right)$ to be a step of the full gradient UBU scheme at iteration $z_{k}$, with synchronously coupled Brownian motion. Then the local error after one step is
\begin{align*}
    \mathbb{E}\|(x_{k+1} - x^{h}_{k},v_{k+1}- v^{h}_{k})\|^{2} = h^{2}\left(\mathcal{E}(h/2) +\mathcal{F}(h/2)\right)^{2}\mathbb{E}\|\nabla U(\overline{x}_{k}) - \mathcal{G}(\overline{x}_{k},\omega_{k+1})\|^{2},
\end{align*}
and
\begin{align*}
    \mathbb{E}\|x_{k+1} - x^{h}_{k}\|^{2} \leq \frac{h^{4}}{4}\mathbb{E}\|\nabla U(\overline{x}_{k}) - \mathcal{G}(\overline{x}_{k},\omega_{k+1})\|^{2},
\end{align*}
where expectations are taken over stochastic gradient approximation and Brownian increments. The sequence $(\overline{x}_{k})_{k\in \mathbb{N}}$ are the points where the stochastic gradient approximations are evaluated defined by \eqref{supp:eq:disc_y_SG}. We now wish to bound the term $\mathbb{E}\|\nabla U(\overline{x}_{k}) - \mathcal{G}(\overline{x}_{k},\omega_{k+1})\|^{2}$,
uniformly in $k$ to control the error due to the stochastic gradient. For this, we state Lemma 1 of  \cite{Hu21optimal} with our notations, together with its proof.
\begin{lemma}\label{supp:lem:mean}
Considering iterates $(x_{k},v_{k},\overline{x}_{k})_{k \in \mathbb{N}}$ of stochastic gradient $\UBU$ with the SVRG $(\mathcal{G},\SWR(N_D,N_{b}))$ for a potential $U$ which has the form \eqref{supp:eq:potential_form}, with data size $N_{D}$ and batch size $N_b$, epoch length $\tau = \lceil N_D/N_b\rceil$, and initial condition $(x_{0},v_{0}) \in \R^{2d}$, then we have the property
\begin{align}
\nonumber&\mathbb{E} \left(\left\|\mathcal{G}(\overline{x}_{k},\omega_{k+1} \mid \overline{x}_{L(k)}) - \nabla U(\overline{x}_{k})\right\|^{2}\right) \\
\nonumber&\leq \frac{N_{D}(N_{D}-N_b)(\tau-1)^{2}}{N_{b}(N_{D}-1)}. \max_{j < k} \sum^{N_{D}}_{i=1}\mathbb{E}\left(\|\nabla U_{i}(\overline{x}_{j+1}) - \nabla U_{i}(\overline{x}_{j})\|^{2}\right).
\end{align}

\end{lemma}

\begin{corollary}
Suppose that Assumption \ref{supp:assum:LipSG} holds. For UBU with SVRG updates as defined by \eqref{supp:eq:disc_v_SG}-\eqref{supp:eq:disc_x_SG}, we have
\begin{align}
\label{supp:eq:SGcondition}&\mathbb{E}\left(\left\|\mathcal{G}(\overline{x}_{k},\omega_{k+1} \mid \overline{x}_{L(k)}) - \nabla U(\overline{x}_{k})\right\|^{2}\right) \leq 
\Theta
\max_{j < k} \mathbb{E}\|\overline{x}_{j+1} - \overline{x}_{j}\|^{2},\\
\label{supp:eq:Thetadef}&\Theta{= \frac{\tilde{M}^{2}N_D^2 (N_{D}-N_b)(\tau-1)^{2}}{N_{b}(N_{D}-1)}}.
\end{align}
\end{corollary}
\begin{proof}[Proof of Lemma \ref{supp:lem:mean}]
For the potential of the form $U(x) = U_{0}(x) + \sum^{N_{D}}_{i=1}U_{i}(x)$ and for $k \geq 1$ we define $\ol{X}_{i}= \nabla U_{i}(\overline{x}_{k}) - \nabla U_{i}(\overline{x}_{L(k)})$ and we define $Y_{i} = N_{D}\ol{X}_{i}- \sum^{N_{D}}_{j=1}\ol{X}_{j}$ for $i = 1,...,N_{D}$. Then we have that $\sum^{N_{D}}_{i=1}Y_{i} = 0$ and that
\[
\mathcal{G}(\overline{x}_{k},\omega_{k+1} \mid \overline{x}_{L(k)}) - \nabla U(\overline{x}_{k}) = \frac{1}{N_b}\sum_{i \in \omega_{k+1}}Y_{i}.
\]
Therefore our aim is to establish a bound on $\frac{1}{N_b}\sum_{i \in \omega_{k+1}}Y_{i}$. We have that
\begin{align*}
\mathbb{E}_{\omega_{k+1}}\left\|\frac{1}{N_b}\sum_{i \in \omega_{k+1}}Y_{i}\right\|^{2} &= \frac{1}{N^{2}_b}\mathbb{E}_{\omega_{k+1}}\left(\sum_{i \in \omega_{k+1}}\|Y_{i}\|^{2} + \sum_{i\neq j \in \omega_{k+1}}\langle Y_{i},Y_{j}\rangle\right)\\
&= \frac{1}{N_{b}N_{D}}\sum^{N_{D}}_{i=1}\|Y_{i}\|^{2} + \frac{b-1}{N_{b}N_{D}(N_{D}-1)}\sum_{i \neq j} \left\langle Y_{i},Y_{j}\right\rangle\\
&= \frac{N_{D}-N_b}{N_{D}-1}\frac{1}{N_{b}N_{D}}\sum^{N_{D}}_{i=1}\|Y_{i}\|^{2},
\end{align*}
where the last line is due to the fact that $\sum^{N_{D}}_{i=1}Y_{i} = 0$. Then using the fact that $\sum^{N_{D}}_{i=1}\|Y_{i}\|^{2} \leq N^{2}_{D} \sum^{N_{D}}_{i=1}\|\ol{X}_{i}\|^{2}$ and the last full gradient evaluation is at $k-\tau + 1\leq L(k) \leq k $ we have that
\begin{align*}
&
\mathbb{E}\left(\|\nabla U(\overline{x}_{k}) - \mathcal{G}(\overline{x}_{k},\omega_{k+1}\mid \overline{x}_{L(k)})\|^{2}\right) 
= \frac{N_{D}-N_b}{N_{b}N_{D}(N_{D}-1)}\sum^{N_{D}}_{i=1}\mathbb{E}\|Y_{i}\|^{2}\\
&\leq \frac{N_{D}(N_{D}-N_b)}{N_{b}(N_{D}-1)}\sum^{N_{D}}_{i=1}\mathbb{E}\|\ol{X}_{i}\|^{2}\\
&\leq \frac{N_{D}(N_{D}-N_b)}{N_{b}(N_{D}-1)}\sum^{N_{D}}_{i=1}\mathbb{E}\|\nabla U_{i}(\overline{x}_{k}) - \nabla U_{i}(\overline{x}_{L(k)})\|^{2}\\
&\leq \frac{N_{D}(N_{D}-N_b)(k-L(k))}{N_{b}(N_{D}-1)}\sum^{k-1}_{j = L(k)}\sum^{N_{D}}_{i=1}\mathbb{E}\|\nabla U_{i}(\overline{x}_{j+1}) - \nabla U_{i}(\overline{x}_{j})\|^{2}\\
&\leq \frac{N_{D}(N_{D}-N_b)(\tau-1)^{2}}{N_{b}(N_{D}-1)}\max_{j < k} \sum^{N_{D}}_{i=1}\mathbb{E}\|\nabla U_{i}(\overline{x}_{j+1}) - \nabla U_{i}(\overline{x}_{j})\|^{2},
\end{align*}
which concludes the proof.
\end{proof}

Hence it is sufficient to bound $\mathbb{E}\|\overline{x}_{k+1} - \overline{x}_{k}\|^{2}$ uniformly in $k \in \mathbb{N}$, which will be done in the following lemma.

\begin{lemma}[Displacement Lemma]\label{supp:lem:consecutive_y}
 Let a stochastic gradient $\UBU$ integrator defined by \eqref{supp:eq:disc_v_SG}-\eqref{supp:eq:disc_y_SG} with stochastic gradient $(\mathcal{G},\rho)$ satisfy 
 \[
\mathbb{E}\left(\left\|\mathcal{G}(\overline{x}_{k},\omega_{k+1}|\hat{x}_k) - \nabla U(\overline{x}_{k})\right\|^2\right) \leq \Theta\max_{j < k} \mathbb{E}\|\overline{x}_{j+1} - \overline{x}_{j}\|^{2},
\]
for some $\Theta > 0$.
If $U$ satisfies Assumptions \ref{supp:assum:LipSG}, \ref{supp:assum:convexSG}, $h < 1/2\gamma$ and $\gamma^{2} \geq 8M$, then 
\begin{align*}
\|\overline{x}_{k+1} - \overline{x}_{k}\|_{L^{2}} &\leq h^{2}\sqrt{\Theta} \max_{0\leq i < k} \|\overline{x}_{i+1}-\overline{x}_{i}\|_{L^{2}} + 7h\sqrt{M}\|z_{k-1} - Z^{k-1}\|_{L^{2},a,b} + 6h\sqrt{d},
\end{align*}
where $Z^{k}:= Z_{kh} = \phi(Z_{0},kh,(W_{t'})^{kh}_{t'=0}) \in \R^{2d}$ is the solution to continuous kinetic Langevin dynamics initialized at the invariant measure $Z_{0} \sim \pi$ at time $kh$ for $k \in \mathbb{N}$. 
\end{lemma}

\begin{proof}
Then we use the following estimate
\begin{align*}
\|\overline{x}_{k+1} - \overline{x}_{k}\|_{L^{2}} &= \|\overline{x}_{k+1} - x_{k+1} + x_{k+1} - x_{k} + x_{k}- \overline{x}_{k}\|_{L^{2}}\\
&\leq \|\mathcal{F}(h/2)(v_{k} - v_{k-1})\|_{L^{2}}  + \|x_{k+1} - x_{k}\|_{L^{2}}\\
&+ \sqrt{2\gamma}\left\|\int^{(k+3/2)h}_{(k+1)h}\mathcal{F}((k+3/2)h-s)dW_{s}\right\|_{L^{2}} \\&+ \sqrt{2\gamma}\left\|\int^{(k+1/2)h}_{kh}\mathcal{F}((k+1/2)h-s)dW_{s}\right\|_{L^{2}}\\
&=: \textnormal{(I)} + \textnormal{(II)} + \textnormal{(III)} + \textnormal{(IV)},
\end{align*}
and we bound (I), (II), (III) and (IV) separately. (III) and (IV) can be bounded above by $\sqrt{\gamma h^{3}d}$. Firstly, we will bound (II), but first we denote
\[
A_{j} = \|\overline{x}_{j+1} - \overline{x}_{j}\|^{2}_{L^{2}},
\]
for $j \in \mathbb{N}$, and $z^{h}_{k} = (x^{h}_{k},v^{h}_{k}):=\psi_{h}(z_{k},h,(W_{t'})^{(k+1)h}_{t'=kh})$ is an iterate with stepsize $h$ and initial point $(x_{k},v_k)$ of the full gradient $\UBU$ scheme and synchronously coupled Brownian motion to the stochastic gradient scheme. We then estimate
\begin{align*}
\|x_{k+1}-x_{k}\|_{L^{2}} &\leq \|x_{k+1} - x^{h}_{k}\|_{L^{2}} + \| x^{h}_{k}-x_{k}\|_{L^{2}} \\
&\leq \frac{h^{2}}{2}\sqrt{\Theta}\max_{j < k} \sqrt{ A_{j}} + \|x^{h}_{k}-x_k \|_{L^{2}},
\end{align*}
then if we define the notation $Z^{t}_{k} = (X^{t}_{k},V^{t}_{k}) := \phi(z_{k},t,(W_{t'})^{kh+t}_{t'=kh}) \in \R^{2d}$ for $k \in \mathbb{N}$ and $t \geq 0$ to be the continuous dynamics solution with initial condition $(x_k,v_k)$ at time $t$ defined by \eqref{supp:eq:cont_v} and \eqref{supp:eq:cont_x}. Then we can estimate the second term by splitting it up into discretization error and one-step displacement and bounding each of these terms separately as
\begin{align*}
\|x^{h}_{k}-x_{k}\|_{L^{2}} &\leq \|x^{h}_{k} - X^{h}_{k}\|_{L^{2}} + \|X^{h}_{k}-x_k\|_{L^{2}}.
\end{align*}
Then using \cite{sanz2021wasserstein}[Section 7.6] we have that
\begin{align*}
    &\|x^{h}_{k} - X^{h}_{k}\|_{L^{2}} \leq \\
    &\left\|\int^{h}_{0}\mathcal{F}(h/2)\left(\nabla U(X^{s}_{k}) - \nabla U(\overline{x}_{k})\right)ds + \int^{h}_{0}\left(\mathcal{F}(h-s)-\mathcal{F}(h/2)\right)\nabla U(X^{s}_{k})ds\right\|_{L^{2}}\\
    &\leq \frac{h}{2}\int^{h}_{0}\left\|\nabla U(X^{s}_{k}) - \nabla U(\overline{x}_{k})\right\|_{L^{2}}ds+h^{2}\max_{0\leq s\leq h}\|\nabla U(X^{s}_{k})\|_{L^{2}}\\
    &\leq \frac{hM}{2}\int^{h}_{0}\left\|X^{s}_{k}-\left(x_k + \mathcal{F}\left(h/2\right)v_k +\sqrt{2\gamma}\int^{(k+1/2)h}_{kh}\mathcal{F}\left((k+1/2)h-s\right)dW_s\right)\right\|_{L^{2}}ds \\
    &+ h^{2}\max_{0\leq s\leq h}\|\nabla U(X^{s}_{k})\|_{L^{2}}\\
    &\leq h^{2}\max_{0\leq s\leq h}\|\nabla U(X^{s}_{k})\|_{L^{2}} + \frac{hM}{2}\int^{h}_{0}\|X^{s}_{k}-x_k\|_{L^{2}}ds + \frac{h^{3}M}{4}\max_{0\leq s\leq h}\|V^{s}_{k}\|_{L^{2}} + \frac{h^{7/2}M\sqrt{\gamma d}}{4}.
    \end{align*}
    Now, we bound
    \begin{align*}
        \int^{h}_{0}\|X^{s}_{k}-x_k\|_{L^{2}}ds &\leq h^2\|v_k\|_{L^{2}} + h^{3}\max_{0\leq s\leq h}\|\nabla U(X^{s}_{k})\|_{L^{2}} + h^{5/2}\sqrt{2\gamma d},
    \end{align*}
    and using the fact that $h < \min\{\frac{1}{5\sqrt{M}},\frac{1}{2\gamma}\}$ we have
    \begin{align*}
    &\|x^{h}_{k} - X^{h}_{k}\|_{L^{2}} \leq \frac{3h^{2}}{2}\max_{0\leq s\leq h}\|\nabla U(X^{s}_{k})\|_{L^{2}} + h\max_{0\leq s\leq h}\|V^{s}_{k}\|_{L^{2}} + h\sqrt{d},
    \end{align*}
    and we have that 
\begin{align*}
\|X^{h}_{k} - x_{k}\|_{L^{2}} &= \left\|\mathcal{F}(h)v_{k} - \int^{h}_{0}\mathcal{F}(h)\nabla U(X^{s}_{k})ds + \sqrt{2\gamma}\int^{h}_{0}\mathcal{F}(h-s)dW_{s}\right\|_{L^{2}}\\
&\leq h\|v_{k}\|_{L^{2}} + h^{2}\max_{0 \leq s\leq h}\|\nabla U(X^{s}_{k})\|_{L^{2}} + \sqrt{2\gamma h^{3}d}.
\end{align*}
To bound the maximum terms we introduce $(Z_{t})_{t\geq 0} = (X_{t},V_{t})_{t \geq 0}$ to be the solution to continuous kinetic Langevin dynamics initialized at the invariant measure with synchronously coupled Brownian motion. We also define $Z^{k} := Z_{kh}$ for $k \in \mathbb{N}$. Then we have, in expectation, for any $0 \leq s \leq h$,
\begin{align*}
    \|V^{s}_{k}\|_{L^{2}} &\leq \|V^{s}_{k}-V_{kh + s}\|_{L^{2}} + \|V_{kh + s}\|_{L^{2}}\\
    &\leq \sqrt{2M}\|z_k-Z^{k}\|_{L^{2},a,b} + \sqrt{d},
\end{align*}
and for any $0\leq s \leq h$ we have
\begin{align*}
    \|\nabla U(X^{s}_{k})\|_{L^{2}} &\leq \|\nabla U(X^{s}_{k})-\nabla U(X_{kh + s})\|_{L^{2}} + \|\nabla U(X_{kh + s})\|_{L^{2}}\\
    &\leq M\|X^{s}_{k}-X_{kh + s}\|_{L^{2}} + \sqrt{Md}\\
    &\leq \sqrt{2}M\|z_k - Z^{k}\|_{L^{2},a,b} + \sqrt{Md},
\end{align*}
where we have used contraction of the continuous dynamics under synchronous coupling provided in Corollary \ref{supp:CorContinuousContraction} and \cite{dalalyan2017further}[Lemma 2] to bound $\|\nabla U(X_{kh + s})\|_{L^{2}}$. Therefore we have the following bound on (II)
\begin{align*}
    \textnormal{(II)} &\leq \frac{h^{2}}{2}\sqrt{\Theta}\max_{0\leq i < k}\sqrt{ A_{i}} + 4h\sqrt{M}\|z_k-Z^{k}\|_{L^{2},a,b} +  \frac{9h\sqrt{d}}{2},
\end{align*}
where $h<\frac{1}{5\sqrt{M}}$ due to the fact that $\gamma^{2} \geq 8M$ and $h < \frac{1}{2\gamma}$.

Next, we consider (I) and we can estimate
\begin{align*}
    \textnormal{(I)} &\leq \frac{h}{2}\|v_k - v_{k-1}\|_{L^{2}} \leq \frac{h}{2}\|v_k - v^{h}_{k-1}\|_{L^{2}} + \frac{h}{2}\|v^{h}_{k-1} - v_{k-1}\|_{L^{2}}\\
    &\leq \frac{h^{2}}{2}\sqrt{\Theta}\max_{0\leq i < k}\sqrt{A_{i}} + \frac{h}{2}\|v^{h}_{k-1} - v_{k-1}\|_{L^{2}},
\end{align*}
where
\[
\|v^{h}_{k-1} - v_{k-1}\|_{L^{2}} \leq \|v^{h}_{k-1} - V^{h}_{k-1}\|_{L^{2}} + \|V^{h}_{k-1} - v_{k-1}\|_{L^{2}}.
\]
Then we can bound
\begin{align*}
    &\|v^{h}_{k-1} -V^{h}_{k-1}\|_{L^{2}} \leq \\
    &\left\|\int^{h}_{0}\mathcal{E}(h/2)\left(\nabla U(X^{s}_{k-1})ds - \nabla U(\overline{x}_{k-1})\right) + \left(\mathcal{E}(h-s) - \mathcal{E}(h/2)\right)\nabla U(X^{s}_{k-1})ds \right\|_{L^{2}}\\
    &\leq M\int^{h}_{0}\left\|X^{s}_{k-1} -\overline{x}_{k-1}\right\|_{L^{2}}ds+ h\max_{0\leq s \leq h}\|\nabla U(X^{s}_{k-1}) \|_{L^{2}}\\
    &\leq \frac{3}{50}\max_{0\leq s \leq h}\|V^{s}_{k-1}\|_{L^{2}} + 2\frac{51}{50}h\max_{0\leq s \leq h}\|\nabla U(X^{s}_{k-1}) \|_{L^{2}} + \frac{2}{25}\sqrt{d},
\end{align*}
where we have used the estimate of $\int^{h}_{0}\|X^{s}_{k-1}-\overline{x}_{k-1}\|_{L^{2}}$ from the $\|x^{h}_{k-1}-X^{h}_{k-1}\|_{L^{2}}$ bound and the fact that $h < 1/5\sqrt{M}$. Using \eqref{supp:eq:cont_v} we have
\begin{align*}
    \|V^{h}_{k-1} - v_{k-1}\|_{L^{2}} &\leq \left\|(\mathcal{E}(h)-1)v_{k-1} - \int^{h}_{0}\mathcal{E}(h-s)\nabla U(X^{s}_{k-1})ds + \sqrt{2\gamma}\int^{h}_{0}\mathcal{E}(h-s)dW_{s}\right\|_{L^{2}}\\
    &\leq h\gamma\|v_{k-1}\|_{L^{2}} + h\max_{0\leq s\leq h}\|\nabla U(X^{s}_{k-1})\|_{L^{2}} +  \sqrt{2\gamma h d}\\
    &\leq 2h\sqrt{M}\left(\gamma + \sqrt{M}\right)\|z_{k-1} - Z^{k-1}\|_{L^{2},a,b} + h\sqrt{d}\left(\gamma + \sqrt{M}\right) + \sqrt{2\gamma h d},
\end{align*}
and we can combine terms to get the following bound on (I)
\begin{align*}
    \textnormal{(I)} &\leq  \frac{h^{2}}{2}\max_{0\leq i < k}\sqrt{\Theta A_{i}} + 3h\sqrt{M}\|z_{k-1} - Z^{k-1}\|_{L^{2},a,b} + \frac{5}{4}h\sqrt{d},
\end{align*}
and summing all terms we have that
\[
\|\overline{x}_{k+1}-\overline{x}_{k}\|_{L^{2}} \leq  h^{2}\max_{0\leq i < k}\sqrt{\Theta A_{i}} + 7h\sqrt{M}\|z_{k-1} - Z^{k-1}\|_{L^{2},a,b} + 6h\sqrt{d}.
\]
\end{proof}

\begin{proposition}\label{supp:prop:non-asymptotic-SG}
For a stochastic gradient $\UBU$ integrator with iterates $(z_{k})_{k \in \mathbb{N}}$, gradient evaluation points $(\overline{x}_{k})_{k \in \mathbb{N}}$, transition kernel $P_h$ and potential $U$ satisfying Assumptions \ref{supp:assum:LipSG}-\ref{supp:assum:convexSG}, where we approximate the gradient using a unbiased stochastic gradient $(\mathcal{G},\rho)$ satisfying
\[
\mathbb{E}\left(\left\|\mathcal{G}(\overline{x}_{k},\omega_{k+1}|\hat{x}_k) - \nabla U(\overline{x}_{k})\right\|^2\right) \leq \Theta\max_{j < k} \mathbb{E}\|\overline{x}_{j+1} - \overline{x}_{j}\|^{2}.
\]

Consider continuous kinetic Langevin dynamics initialized at the invariant measure $Z_{0} \sim \pi$, for $k \in \mathbb{N}$ define $Z^{k}:= Z_{kh} = \phi(Z_{0},kh,(W_{t'})^{kh}_{t'=0}) \in \R^{2d}$ with synchronously coupled Brownian motion to $(z_{k})_{k \in \mathbb{N}}$, then for all 
\[h < \min\left\{\frac{1}{2\Tilde{\gamma}N_D^{1/2}},1/2,{\frac{\tilde{m}^{1/3}N_{D}^{1/6}}{24(\Theta \Tilde{\gamma} )^{1/3}}},\frac{1}{4\Theta^{1/4}},\frac{\tilde{m}}{256\tilde{M}\Tilde{\gamma}N_D^{1/2}}\right\},\]
we have
\begin{align*}
    &\|z_{k}-Z^{k}\|_{L^{2},a,b} \leq  4(1-R_{2}(h)/2)^{k}\left(\|z_{0} - Z^{0}\|_{L^{2},a,b} + \sqrt{2}C_{1}h^{5/2}\right) \\
    &+ \frac{24\sqrt{2\gamma}h^{2}\left(C_{2}\sqrt{\gamma}+C_{1}\sqrt{m}\right)}{m}+ C(\Tilde{\gamma},\tilde{m},\tilde{M},\tilde{M}^{s}_{1}) h^{3/2} d^{1/2}N_D^{-3/4}\Theta^{1/2},
\end{align*}
where $R_{2}(h) = 1-c_{2}(h) + C^{2}_{0}h^{2}$.

Further for  all $\mu\in \mathcal{P}_{2}(\R^{2d})$, and  all $k \in \mathbb{N}$,
\begin{align*}
&\mathcal{W}_{2,a,b}(\mu P^{k}_{h},\pi)\leq  4(1-R_{2}(h)/2)^{k}\left(\mathcal{W}_{2,a,b}(\mu,\pi) + \sqrt{2}C_{1}h^{5/2}\right) \\
&+ \frac{24\sqrt{2\gamma}h^{2}\left(C_{2}\sqrt{\gamma}+C_{1}\sqrt{m}\right)}{m}+ C(\Tilde{\gamma},\tilde{m},\tilde{M},\tilde{M}^{s}_{1}) h^{3/2} d^{1/2}N_D^{-3/4}\Theta^{1/2}.
\end{align*}
\end{proposition}

\begin{proof}

Let us define the notation $Z^{t}_{k} = (X^{t}_{k},V^{t}_{k}) := \phi(z_{k},t,(W_{t'})^{kh+t}_{t'=kh}) \in \R^{2d}$ for $k \in \mathbb{N}$ and $t \geq 0$ to be the continuous dynamics solution with initial condition $(x_k,v_k)$ at time $t$ defined by \eqref{supp:eq:cont_v} and \eqref{supp:eq:cont_x}. Further define $z^{h}_{k} = (x^{h}_{k},v^{h}_{k}):=\psi_{h}(z_{k},h,(W_{t'})^{(k+1)h}_{t'=kh})$ is an iterate with stepsize $h$ and initial point $(x_{k},v_k)$ of the full gradient $\UBU$ scheme and synchronously coupled Brownian motion to the stochastic gradient scheme.

Firstly, we split up the difference in the following way
\begin{align*}
    &\|z_{k} - Z^{k}\|^{2}_{L^{2},a,b}  = \left\|\left(z_{k} - z^{h}_{k-1}\right) + \left(z^{h}_{k-1} - Z^{k}\right)\right\|^{2}_{L^{2},a,b}\\
    &= \left\|z_{k} - z^{h}_{k-1}\right\|^{2}_{L^{2},a,b} + 2\left\langle z_{k} - z^{h}_{k-1},z^{h}_{k-1} - Z^{k}\right\rangle_{L^{2},a,b} + \|z^{h}_{k-1} - Z^{k}\|^{2}_{L^{2},a,b}.
\end{align*}
Considering the inner product we have the expectation conditional on $z_{k-1}$ and $(W_{t'})^{kh}_{t'=(k-1)h}$ is zero as it is independent of the Brownian motion (due to synchronous coupling) and the stochastic gradient estimator is unbiased. Therefore
\begin{align*}
    \|z_{k} - Z^{k}\|^{2}_{L^{2},a,b}  &\leq \|z_{k} - z^{h}_{k-1}\|^{2}_{L^{2},a,b} + (\|\beta_{k-1}\|_{L^{2},a,b}\\
    &+ \|z^{h}_{k-1} - \psi(Z^{k-1},h,(W_{t'})^{kh}_{t' = (k-1)h})  + \alpha_{k-1}\|_{L^{2},a,b})^{2} \\
    &= \textnormal{(I)}' + \textnormal{(II)}'.
\end{align*}
We have that
\[
\textnormal{(I)}' \leq \frac{2h^{2}\Theta}{M}\max_{j<k}\| \overline{x}_{j+1} - \overline{x}_{j}\|^{2}_{L^{2}},
\]
and
\[
\textnormal{(II)}' \leq \left(\sqrt{(1-c_{2}(h) + C_{0}^{2}h^{2})\| z_{k-1} -Z^{k-1}\|^{2}_{L^{2},a,b} + 2C^{2}_{1}h^{5}}+C_{2}h^{3}\right)^{2}.
\]
Let $R_{2}(h) := c_{2}(h) - C_{0}^{2}h^{2}$, then assuming that $mh/8\gamma < R_{2}(h) < 1/2$ (which holds for $h<\frac{1}{2\gamma}$ and $h < \frac{m}{256\gamma M}$), using Lemma \ref{supp:lemma:recursion},
\begin{align*}
    \|z_{k} - Z^{k}\|_{L^{2},a,b}  &\leq \sqrt{2}(1-R_{2}(h)/2)^{k}\left(\|z_{0} - Z^{0}\|_{L^{2},a,b} + \sqrt{2C^{2}_{1}h^{5}}\right) + \frac{2\sqrt{2}C_{2}h^{3}}{R_{2}(h)}\\
    &+ 2\sqrt{\frac{\frac{2h^2 \Theta}{M}\max_{j<k}\|\overline{x}_{j+1}-\overline{x}_{j}\|^{2}_{L^{2}} + 2C^{2}_{1}h^{5}}{R_{2}(h)}},
\end{align*}
and now we wish to bound $\max_{j<k}\|\overline{x}_{j+1}-\overline{x}_{j}\|^{2}_{L^{2}}$. Considering Lemma \ref{supp:lem:consecutive_y}
we have that
\begin{align*}
    &\|\overline{x}_{k+1}-\overline{x}_{k}\|_{L^{2}} \leq h^{2}\sqrt{\Theta\max_{0\leq j \leq k}\|\overline{x}_{j+1} - \overline{x}_{j}\|^{2}_{L^{2}}} + 6h\sqrt{d}\\
    &+ 7h\sqrt{M}\left(\sqrt{2}(1-R_{2}(h)/2)^{k}\left(\|z_{0} - Z^{0}\|_{L^{2}} + \sqrt{2C^{2}_{1}h^{5}}\right) + \frac{16\sqrt{2}\gamma C_{2}h^{2}}{m}\right)\\
    &+ 56h\sqrt{\frac{h\gamma\Theta\max_{j<k}\|\overline{x}_{j+1} -\overline{x}_{j}\|^{2}_{L^{2}} + \gamma C^{2}_{1}h^{4}}{m}}.
\end{align*}
If we assume that $$h < \min\Big\{\frac{m^{1/3}}{24(\Theta\gamma )^{1/3}},\frac{1}{4\Theta^{1/4}},1\Big\},$$ then
\begin{align*}
    \max_{j<k}\|\overline{x}_{j+1}- \overline{x}_{j}\|_{L^{2}} &\leq
     21h\sqrt{M}\left(\sqrt{2}(1-R_{2}(h)/2)^{k}\left(\|z_{0} - Z^{0}\|_{L^{2}} + \sqrt{2C^{2}_{1}h^{5}}\right)+ \frac{16\sqrt{2}\gamma C_2 h^2}{m}\right) \\&+  18h\sqrt{d} + 168h^{3}\sqrt{M}C_{1}\sqrt{\frac{\gamma}{m}},
\end{align*}
and  
\begin{align*}
    &\|z_{k} - Z^{k}\|_{L^{2},a,b}  \leq \sqrt{2}(1-R_{2}(h)/2)^{k}\left(\|z_{0} - Z^{0}\|_{L^{2},a,b} + \sqrt{2}C_{1}h^{5/2}\right) \\&+ \frac{16\sqrt{2\gamma}h^{2}\left(C_{2}\sqrt{\gamma}+C_{1}\sqrt{m}\right)}{m} 
    + 2\sqrt{\frac{16 h\gamma \Theta\max_{j<k}\|\overline{x}_{j+1}-\overline{x}_{j}\|^{2}_{L^{2}}}{mM}}\\
    &\leq 4(1-R_{2}(h)/2)^{k}\left(\|z_{0} - Z^{0}\|_{L^{2},a,b} + \sqrt{2}C_{1}h^{5/2}\right) + \frac{24\sqrt{2\gamma}h^{2}\left(C_{2}\sqrt{\gamma}+C_{1}\sqrt{m}\right)}{m} \\
    &+ 144h^{3/2}\left(\frac{d \gamma \Theta}{mM}\right)^{1/2} + 1344h^{7/2}C_{1}\frac{\gamma\sqrt{\Theta}}{m},
\end{align*}
and the first claim follow by rewriting this bound in terms of $\tilde{m}$, $\tilde{M}$ and $\tilde{\gamma}$. 
For non-asymptotic Wasserstein results, we simply replace $Z^{k-1}$ with the continuous dynamics initialized at $\tilde{Z}_{k-1}\sim \pi$ be such that $\|\tilde{Z}_{k-1}-z_{k-1}\|_{L^2,a,b}=\mathcal{W}_{2,a,b}(\mu P^{k-1}_{h},\pi)$
as in \cite{sanz2021wasserstein}[Theorem 23]. We can then apply Lemma \ref{supp:lemma:recursion} to get the required result.
\end{proof}

\begin{remark}
    To get the non-asymptotic result to have discretization error which is of order $\O(h^{3/2})$, the gradient approximation needs to be an unbiased estimator of the gradient, without this property the discretization error reduces to order $\O(h)$.
\end{remark}

\begin{lemma}\label{supp:lem:diffusions_difference}
    Suppose we have two kinetic Langevin diffusions with synchronously coupled Brownian motion, $(Z_{t})_{t\geq 0}$ with potential $U$ satisfying Assumptions \ref{supp:assum:LipA}-\ref{supp:assum:Hess_LipschitzSG} and $(\tilde{Z}_{t})_{t\geq 0}$ with potential defined by \eqref{supp:eq:hessian_grad_approx_star}, a Gaussian approximation of $U$. We further assume that the diffusions are initialized at their invariant measures and $\gamma \geq \sqrt{8M}$. We then have that
    \begin{align*}
         \|Z_{t}-\Tilde{Z}_{t}\|_{L^{2},a,b} &\leq e^{-\frac{mt}{16\gamma}}\|Z_{0}-\Tilde{Z}_{0}\|_{L^{2},a,b} + \frac{dC(\tilde{\gamma},\Tilde{m},\tilde{M},\tilde{M}^{s}_{1})}{N_{D}}.
    \end{align*}
\end{lemma}
\begin{proof}
    We define $(X_{t},V_{t})_{t\geq 0}:=(Z_{t})_{t \geq 0}$ to be the diffusion according to \eqref{supp:eq:cont_v}-\eqref{supp:eq:cont_x} with potential $U$ and define $(\Tilde{X}_{t},\Tilde{V}_{t})_{t\geq 0}:=(\Tilde{Z}_{t})_{t\geq0}$ to be the diffusion according to \eqref{supp:eq:cont_v}-\eqref{supp:eq:cont_x} with potential $\tilde{U}$, defined by \eqref{supp:eq:hessian_grad_approx_star}, and synchronously coupled Brownian motion. By the same argument as Corollary \ref{supp:CorContinuousContraction} with the expectations rather than Wasserstein distance we have that for $h >0$,
\begin{align*}
    &\|Z_{(k+1)h} - \tilde{Z}_{(k+1)h}\|_{L^{2},a,b} 	\leq \left(1-\frac{mh}{16\gamma}\right)\|Z_{kh} - \Tilde{Z}_{kh}\|_{L^{2},a,b}\\&+ h\left\|\left(0,\nabla\Tilde{U}(\Tilde{X}_{kh})-\nabla{U}(\Tilde{X}_{kh})\right)\right\|_{L^{2},a,b} + C(\gamma,M,d)h^{2}\\
    &\leq \left(1-\frac{mh}{16\gamma}\right)\|Z_{kh} - \Tilde{Z}_{kh}\|_{L^{2},a,b}+ \frac{hM^{s}_{1}}{\sqrt{M}}\left\|\Tilde{X}_{kh}-x^{*}\right\|^{2}_{L^{4}}+ C(\gamma,M,d)h^{2},
\end{align*}
where the last inequality is due to Lemma \ref{supp:lem:mean:minimizer:L4:1} and $x^{*}$ is the minimizer of $U$ and $\tilde{U}$. Then due to Proposition \ref{supp:prop:full_gradient_L4} taking the limit as $h \to 0$,
\begin{align*}
\|\Tilde{X}_{kh}-x^{*}\|^{2}_{L^{4}} 
&\leq \frac{2}{m}\left[\sqrt{4\left(1-\frac{h\lambda\gamma}{2}\right)^{k}(\gamma^{4}\|\Tilde{X}_{0}-x^{*}\|^{4}_{L^{4}} + \|\Tilde{V}_{0}\|^{4}_{L^{4}}) + \frac{(6 d + 160 (1+\lambda^{2}))^{2}}{2\lambda^{2}}}\right]
\intertext{using Lemma \ref{supp:lem:moment_bounds}, }
&\leq \frac{2}{m}\left[\sqrt{\frac{12\gamma^{4}d^{2}}{m^{2}} + 12d^{2} + \frac{(6 d + 160 (1+\lambda^{2}))^{2}}{2\lambda^{2}}}\right],
\end{align*}
where $\lambda=\min\left(\frac{1}{4},\frac{m}{\gamma^2}\right)$ is defined as in \eqref{supp:eq:lambdadef}. We choose $k = t/h$ and define $$c_{u} :=\frac{2M^{s}_{1}}{m\sqrt{M}}\left[\sqrt{\frac{12\gamma^{4}d^{2}}{m} + 12d^{2} + \frac{(6 d + 160 (1+\lambda^{2}))^{2}}{2\lambda^{2}}}\right],$$ then we have
\begin{align*}
    &\limsup_{h\to 0}\frac{\|Z_{t+h} - \tilde{Z}_{t+h}\|_{L^{2},a,b} - \|Z_{t} - \Tilde{Z}_{t}\|_{L^{2},a,b}}{h} \leq -\frac{m}{16\gamma}\|Z_{t} - \Tilde{Z}_{t}\|_{L^{2},a,b} + c_{u}.
\end{align*}
All terms are bounded on the right-hand side due to the assumptions on the initial condition, therefore due to the Denjoy–Young–Saks theorem we have the upper Dini derivative (upper right-hand derivative) is finite. Hence considering $u: \mathbb{R} \to \mathbb{R}$ to be solution to the ODE
\begin{align*}
    &\frac{d}{dt}u(t) = -\frac{m}{16\gamma}u(t) + c_{u},
\end{align*}
with initial condition $u(0) = \|Z_{0}-\Tilde{Z}_{0}\|_{L^{2},a,b}$ which we can solve exactly. Therefore by the comparison principle for ODEs and Dini derivatives \cite{Khalil2002}[Lemma 3.4] we have
\begin{align*}
    \|Z_{t}-\Tilde{Z}_{t}\|_{L^{2},a,b} &\leq u(t) \leq e^{-\frac{mt}{16\gamma}}\|Z_{0}-\Tilde{Z}_{0}\|_{L^{2},a,b} + \frac{16\gamma}{m}c_{u}\\
    &\leq e^{-\frac{mt}{16\gamma}}\|Z_{0}-\Tilde{Z}_{0}\|_{L^{2},a,b} + \frac{dC(\tilde{\gamma},\Tilde{m},\tilde{M},\tilde{M}^{s}_{1})}{N_{D}},
\end{align*}
as required. 
\end{proof}
\begin{proposition}
\label{supp:prop:var_sgubu}
Suppose two stochastic gradient $\UBU$ chains at coarser and finer discretization levels $l$ and $l+1$, with synchronously coupled Brownian motions $\left(z_{k}\right)_{k\in \mathbb{N}}$ and $\left(z'_{k}\right)_{k \in \mathbb{N}}$ and stepsizes $h_{l}$ and $h_{l+1} = h_{l}/2$, satisfying the conditions of Proposition \ref{supp:prop:non-asymptotic-SG}, be such that $z_{0} \sim \pi_{0}$ and $z'_{0} \sim \pi'_{0}$. Then for $f$ satisfying Assumption \ref{supp:assum:Lipschitz} we have the following variance bound
\begin{align*}
    &\textnormal{Var}\left(f(z'_{k}) - f(z_{k})\right)\leq\mathbb{E}\left(f(z'_{k}) - f(z_{k})\right)^2 \leq  \\
        &\Bigg(\exp{\left(-\frac{mkh_{l}}{8\gamma}\right)}\left( \|z'_{0}-z_{0}\|_{L^{2},a,b} + \mathcal{W}_{2,a,b}(\pi_{0},\pi) + \mathcal{W}_{2,a,b}(\pi'_{0},\pi)\right)\\
    &+ 4(1-R(h_{l})/2)^{k}\left(\mathcal{W}_{2,a,b}(\pi_{0},\pi) + \sqrt{2}C_{1}h_{l}^{5/2}\right)\\
&+ 4(1-R(h_{l+1})/2)^{2k}\left(\mathcal{W}_{2,a,b}(\pi'_{0},\pi) + \sqrt{2}C_{1}h_{l+1}^{5/2}\right)\\
&+ \frac{36\sqrt{2\gamma}h^{2}_{l}\left(C_{2}\sqrt{\gamma}+C_{1}\sqrt{m}\right)}{m} +C(\Tilde{\gamma},\tilde{m},\tilde{M},\tilde{M}^{s}_{1}) d^{1/2} N_D^{-3/4}\Theta^{1/2} h_l^{3/2} \Bigg)^{2}.
\end{align*}
\end{proposition}
\begin{proof}
By following the same argument as Proposition \ref{supp:prop:global_strong_error} using Proposition \ref{supp:prop:non-asymptotic-SG} we have the desired result.
\end{proof}
\begin{restatable}{proposition}{prop:SVRG_Dl_l+1}\label{supp:prop:SVRG_Dl_l+1}
Suppose that the assumptions of Proposition \ref{supp:prop:non-asymptotic-SG} hold for the potential $U$, $h_{0} > 0$ and $\gamma > 0$ and the SVRG stochastic gradient approximation. Assume that $$h_{0} \leq \frac{C(\Tilde{\gamma},\Tilde{m},\Tilde{M},\tau/N_{D},N_{b})}{N^{3/2}_{D}}, \quad B\ge \frac{16\log(2^{3/2})\Tilde{\gamma}}{\Tilde{m}h_0 N^{1/2}_{D}}, \quad 
B_0\ge \frac{16\Tilde{\gamma}}{\Tilde{m}h_0 N^{1/2}_{D}}\log\left(\frac{1}{N^{9/4}_{D} h_0^{3/2}}\right),$$ {and the levels are initialized as described in Section \ref{supp:sec:initial_conditions}}. Then for every $l\ge 1$, $1\le k\le K$, for a test function $f$ which satisfies Assumption \ref{supp:assum:Lipschitz} the $\UBUBU$ samples satisfy
\begin{align*}
\Var\left(f(z'^{(l,l+1)}_{k}) - f(z^{(l,l+1)}_{k})\right) &\leq\mathbb{E}\left[\left(f(z'^{(l,l+1)}_{k}) - f(z^{(l,l+1)}_{k})\right) ^2 \right]\\ &\le \mathbb{E}\|z'^{(l,l+1)}_{k} - z^{(l,l+1)}_{k}\|^{2}_{a,b}\\
&\le  C(\Tilde{\gamma},\Tilde{m},\Tilde{M},\Tilde{M}^{s}_{1},\tau/N_{D},N_{b})N^{{5}/2}_{D}h_l^3 d,
\end{align*}
and further
\begin{equation}
    \Var(D_{l,l+1}) \leq C(\Tilde{\gamma},\Tilde{m},\Tilde{M},\Tilde{M}^{s}_{1},\tau/N_{D},N_{b})N^{{5}/2}_{D}h_l^3 d.
\end{equation}
\end{restatable}
\begin{proof}[Proof of Proposition \ref{supp:prop:SVRG_Dl_l+1}]
Following a similar proof as Corollary \ref{supp:corollary:global_strong_error}. However, we need to be careful with the bounding the distance between $z_{0}$ and $z'_{0}$, which is the reason for our construction of the initial conditions in Section \ref{supp:sec:initial_conditions} using the OHO scheme. In particular, we wish to have at most $\mathcal{O}(1/N_{D})$ distance in initialization. We define $(Z_{t})_{t\geq 0}$ to be defined by continuous kinetic Langevin dynamics
with the potential $U$ and $Z_{0} \sim \pi$ such that $\|Z_{0} - z'_{0}\|_{L^{2},a,b} = \mathcal{W}_{2,a,b}(\pi,\mu_{G})$ are optimally coupled. We define $(Z^{\mathcal{G}}_{t})_{t\geq 0}$ to be defined by the continuous kinetic Langevin dynamics with the potential being a Gaussian approximation of the potential such that $Z^{\mathcal{G}}_{0} = z'_{0} \sim \mu_{G}$, $(z^{\mathcal{G}}_{t})_{t\geq 0}$ to be the OHO scheme with the potential being a Gaussian approximation of the potential such that $z^{\mathcal{G}}_{0} = z'_{0} \sim \mu_{G}$. We therefore have $z^{\mathcal{G}}_{B} = z_{0}$ and
\begin{align*}
    \|z_{0} - z'_{B/h_{l+1}}\|_{L^{2},a,b} &\leq \|z^{\mathcal{G}}_{B} - Z^{\mathcal{G}}_{B} \|_{L^{2},a,b} + \| Z^{\mathcal{G}}_{B} - Z_{B} \|_{L^{2},a,b} + \|Z_{B} - z'_{B/h_{l+1}} \|_{L^{2},a,b}\\
    &\leq h\sqrt{d}C(\Tilde{\gamma},\Tilde{m},\Tilde{M}) + \frac{dC(\Tilde{\gamma},\Tilde{m},\Tilde{M},\Tilde{M}^{s}_{1})}{N_{D}}\\
    &+ 5\|Z_{0} - z'_{0}\|_{L^{2},a,b} + C(\Tilde{\gamma},\Tilde{m},\Tilde{M},\Tilde{M}^{s}_{1},\tau/N_{D},N_{b})N^{{5}/2}_{D}h_l^3 d\\
    &\leq \frac{dC(\Tilde{\gamma},\Tilde{m},\Tilde{M},\Tilde{M}^{s}_{1},\tau/N_{D},N_{b})}{N_{D}},
\end{align*}
where we have used Theorem \ref{supp:theorem:OHO_strongerror} for the first term, Lemma \ref{supp:lem:diffusions_difference} for the second and Proposition \ref{supp:prop:non-asymptotic-SG} for the third.

We also have the following rough estimates for the Wasserstein distances
\begin{align*}
    \mathcal{W}_{2,a,b}(\mu^{(l+1)}_{0},\pi) = \mathcal{W}_{2,a,b}(\pi_{0},\pi)&=\mathcal{W}_{2,a,b}(\mu_{G},\pi)\leq \frac{dC(\Tilde{\gamma},\Tilde{m},\Tilde{M},\Tilde{M}^{s}_{1})}{N_{D}},
\end{align*}
which follow from Proposition \ref{supp:prop:wass_initial_distance}, where the estimate of $\mathcal{W}_{2,a,b}(\mu^{(l+1)}_{0},\pi)$ along with Proposition \ref{supp:prop:non-asymptotic-SG} implies $\mathcal{W}_{2,a,b}(\pi'_{0},\pi) \leq \frac{dC(\Tilde{\gamma},\Tilde{m},\Tilde{M},\Tilde{M}^{s}_{1},\tau/N_{D},N_{b})}{N_{D}}$.

We have $(B_0+Bl)2^l$ burn-in steps at level $l$, and $(B_0+B(l+1))2^{l+1}$ burn-in steps at level $l+1$. 
Using the assumption that $h_{0} < \frac{\Tilde{m}}{256\Tilde{M}\Tilde{\gamma}N^{1/2}_{D}}$, we have for all $i \in \mathbb{N}$
$R(h_i)\ge \frac{mh_{i}}{8\gamma}$, and using Proposition \ref{supp:prop:non-asymptotic-SG} we have
\begin{align*}
   &\Var\left(f(z'^{(l,l+1)}_{k}) - f(z^{(l,l+1)}_{k})\right) \leq\mathbb{E}\left[\left(f(z'^{(l,l+1)}_{k}) - f(z^{(l,l+1)}_{k})\right) ^2\right]\\
   &\leq \Bigg(\exp\left(-\frac{\Tilde{m}\sqrt{N_{D}}(B_0+l B)h_0}{8\Tilde{\gamma}}\right) \left(\|z'_{B/h_{l+1}}-z_{0}\|_{L^{2},a,b}+\mathcal{W}_{2,a,b}(\pi_0,\pi)+\mathcal{W}_{2,a,b}(\pi_0',\pi)\right)
        \\
        &+4\exp\left(-\frac{\Tilde{m}\sqrt{N_{D}}(B_0+l B)h_0}{16\Tilde{\gamma}}\right)  (\mathcal{W}_{2,a,b}(\pi_0,\pi)+ \mathcal{W}_{2,a,b}(\pi_0',\pi)) \\
        &+ C(\Tilde{\gamma},\Tilde{m},\Tilde{M},\Tilde{M}^{s}_{1},\tau/N_{D},N_{b})h^{3/2}_{l}d^{1/2}N^{{5}/4}_{D}\Bigg)^2\\
   & \le \Bigg(\exp\left(-\frac{m(B_0+lB)h_0}{16\gamma}\right) \frac{dC(\Tilde{\gamma},\Tilde{m},\Tilde{M},\Tilde{M}^{s}_{1},\tau/N_{D})}{N_{D}}+ C(\Tilde{\gamma},\Tilde{m},\Tilde{M},\Tilde{M}^{s}_{1},\tau/N_{D},N_{b})h^{3/2}_{l}d^{1/2}N^{{5}/4}_{D}\Bigg)^2\\
        \intertext{using the assumptions on $B_0$ and $B$}
    &\le C(\Tilde{\gamma},\Tilde{m},\Tilde{M},\Tilde{M}^{s}_{1},\tau/N_{D},N_{b})N^{{5}/2}_{D}h_l^3 d. 
\end{align*}
We now use the simple bound
\begin{align*}
    \Var(D_{l,l+1}) &\leq \mathbb{E}(D^{2}_{l,l+1}) \leq \max_{1 \leq k \leq K}\mathbb{E}\left[\left(f(z'^{l,l+1}_{k}) - f(z^{(l,l+1)}_{k}\right)^{2}\right]\\
    &\leq C(\Tilde{\gamma},\Tilde{m},\Tilde{M},\Tilde{M}^{s}_{1},\tau/N_{D},N_{b})N^{{5}/2}_{D}h_l^3 d
\end{align*}
as required.
\end{proof}

\begin{remark}
As an alternative, one can consider a coupling with a randomized midpoint scheme, which was utilized in the work of \cite{Hu21optimal} and \cite{bou2022unadjusted} in the context of kinetic Langevin dynamics and Hamiltonian Monte Carlo. This is beyond the scope of the work, and thus we leave this as a direction to consider for future work.
\end{remark}


\begin{restatable}{proposition}{propSGKLDCVVariance}
\label{supp:prop:SGKLDCVVariance}
Suppose a full gradient Gaussian approximation OHO chain $\left(z_{k}\right)_{k \in \mathbb{N}}$ at level $0$ and a stochastic gradient $\UBU$ chain $\left(z'_{k}\right)_{k \in \mathbb{N}}$  at level $1$ using the SVRG unbiased estimator, with stepsizes $h_0$ and $h_{1}=\frac{h_0}{2}$, respectively. Further, we assume that they have synchronously coupled Brownian motions and $z_{0} \sim \pi_{0} = \mu_{G}$ and $z'_{0} \sim \pi'_{0}$. Assuming the same assumptions as Proposition \ref{supp:prop:global_strong_error} for $\left(z_{k}\right)_{k \in \mathbb{N}}$ and Proposition \ref{supp:prop:non-asymptotic-SG} for $\left(z'_{k}\right)_{k \in \mathbb{N}}$. Then for $f$ satisfying Assumption \ref{supp:assum:Lipschitz} we have the following variance bound
\begin{align*}
   &\Var\left(f(z'_{k}) - f(z_{k})\right) \leq\mathbb{E}\left[\left(f(z'_{k}) - f(z_{k})\right) ^2\right]\\
   &  \leq \Bigg(\exp{\left(-\frac{\Tilde{m}\sqrt{N_{D}}kh_{0}}{8\Tilde{\gamma}}\right)}\left(\|z'_{0}-z_{0}\|_{L^{2},a,b}+\mathcal{W}_{2,a,b}(\pi_0',\pi)\right) 
        \\
        &+ 4\left(1-R(h_{1})/2\right)^{2k}\mathcal{W}_{2,a,b}(\pi'_{0},\pi) + \frac{dC(\Tilde{\gamma},\Tilde{m},\Tilde{M},\Tilde{M}^{s}_{1})}{N_{D}} +h_{0}\sqrt{d}C(\Tilde{\gamma},\Tilde{m},\Tilde{M})\\
        &+  C(\Tilde{\gamma},\tilde{m},\tilde{M},\tilde{M}^{s}_{1}) d^{1/2} N_D^{-3/4}\Theta^{1/2} h^{3/2}_{1}\Bigg)^2,
\end{align*}
where $R(h_{1}) = 1-\sqrt{(1-c(h_{1}))^{2}+C^{2}_{0}h^{2}_{0}}$.
\end{restatable}
\begin{proof}
By following the same argument as Proposition \ref{supp:prop:global_strong_error}, but by using Proposition \ref{supp:prop:non-asymptotic-SG} and Theorem \ref{supp:theorem:OHO_strongerror} we have the desired result. However, because level zero and level one are approximating different diffusions, we can't use the contraction results for the continuous dynamics to bound $\mathbf{(II)}$, so we consider an alternative argument. For this component, we use Lemma \ref{supp:lem:diffusions_difference}. To bound $\mathbf{(I)}$ we use Theorem \ref{supp:theorem:OHO_strongerror} and to bound $\mathbf{(III)}$ we use Proposition \ref{supp:prop:non-asymptotic-SG} and we have the required result.
\end{proof}

\begin{restatable}{proposition}{prop:CV_SVRG_Dl_l+1}\label{supp:prop:CV_SVRG_Dl_l+1}
Suppose that the assumptions of Proposition \ref{supp:prop:SGKLDCVVariance} hold for the potential $U$ and $h_{0} > 0$. Assume that $$h_{0} \leq \frac{C(\Tilde{\gamma},\Tilde{m},\Tilde{M},\tau/N_{D},N_{b})}{N^{3/2}_{D}}, \quad B\ge \frac{16\log(2^{3/2})\Tilde{\gamma}}{\Tilde{m}h_0 N^{1/2}_{D}}, \quad 
B_0\ge \frac{16\Tilde{\gamma}}{\Tilde{m}N^{1/2}_{D}h_0}\log\left(\frac{1}{N^{9/4}_{D} h_0^{3/2}}\right),$$ $1<k\leq K$, the levels are initialized as described in Section \ref{supp:sec:initial_conditions} and for a function $f$ which satisfies Assumption \ref{supp:assum:Lipschitz} for stochastic gradient $\UBUBU$ we have
\begin{equation}
    \Var(D_{0,1}) \leq \frac{C(\Tilde{\gamma},\Tilde{m},\Tilde{M},\Tilde{M}^{s}_{1}) d^{2}}{N^{2}_{D}} + C(\Tilde{\gamma},\Tilde{m},\Tilde{M},\Tilde{M}^{s}_{1},\tau/N_{D},N_{b})N^{5/2}_{D}h_{0}^3 d.
\end{equation}
\end{restatable}
\begin{proof}
   Following the same proof as Proposition \ref{supp:prop:SVRG_Dl_l+1} using the results of Proposition \ref{supp:prop:SGKLDCVVariance}.
\end{proof}
\subsection{Variance of $S(c_R)$}
\label{supp:subsection:varianceScRSG}

\begin{theorem}
\label{supp:thm:stochgradUBUBU}
Considering UBUBU with stochastic gradients, suppose that Assumptions  
\ref{supp:assum:Lipschitz},  \ref{supp:assum:LipSG}, \ref{supp:assum:convexSG}, \ref{supp:assum:Hess_LipschitzSG} hold, and in addition $\gamma \geq \sqrt{8M}$,
\begin{align*}  h_0 \leq \frac{C(\Tilde{\gamma},\Tilde{m},\Tilde{M},\tau/N_{D},N_{b})}{N^{3/2}_{D}}, \quad B\ge \frac{16\log(2^{3/2})\Tilde{\gamma}}{\Tilde{m}h_0N^{1/2}_{D}}, \quad B_0\ge \frac{16\Tilde{\gamma}}{\Tilde{m}h_0N^{1/2}_{D}}\log\left(\frac{1}{N^{9/4}_{D}h_0^{3/2}}\right).\end{align*}
Suppose that $c_{R}\in [0, \phi_N^{-1/2})$ and $2<\phi_N<8$. Then for any $N\ge 1$, the $\UBUBU$ estimator $S(c_R)$ has finite expected computational cost, $\E S(c_R)=\pi(f)$, and it has finite variance. 
Moreover, it satisfies a CLT as $N\to \infty$, and the asymptotic variance $\sigma^2_S$ defined in \eqref{supp:eq:sigma2S} can be bounded as
\[\sigma^2_S\le \frac{1}{\Tilde{m}N_{D}K} + C(\Tilde{\gamma},\Tilde{m},\Tilde{M},\Tilde{M}^{s}_{1},\tau/N_{D},N_{b},\phi_{N})\frac{d^{2}}{c_{N}N^{2}_{D}}.
\]
\end{theorem}

\begin{proof}
By Propositions \ref{supp:prop:SVRG_Dl_l+1} and \ref{supp:prop:CV_SVRG_Dl_l+1}, we have for $l\geq 1$ that
\[\E(D_{l,l+1}^2)\le C(\Tilde{\gamma},\Tilde{m},\Tilde{M},\Tilde{M}^{s}_{1},\tau/N_{D},N_{b}) h^{3}_{l}dN^{5/2}_{D}\le V_{D_{1}} \phi_{D_{1}}^{-l},\] for $V_{D_{1}}=C(\Tilde{\gamma},\Tilde{m},\Tilde{M},\Tilde{M}^{s}_{1}\tau/N_{D},b) h^{3}_{0}dN^{5/2}_{D}$ and $\phi_{D_{1}} = 8$. 
For $l = 0$ we have
\[\E(D_{l,l+1}^2)\le C(\Tilde{\gamma},\Tilde{m},\Tilde{M},\Tilde{M}^{s}_{1},\tau/N_{D},N_{b}) \frac{d^{2}}{N^{2}_{D}}\le V_{D_{2}} \phi_{D_{2}}^{-l},\] for $V_{D_{2}}= C(\Tilde{\gamma},\Tilde{m},\Tilde{M},\Tilde{M}^{s}_{1},\tau/N_{D},N_{b}) \frac{d^{2}}{N^{2}_{D}}$ and $\phi_{D_{2}} = 2$.

Due to the fact that for $D_{0}$ we take $K$ i.i.d. Gaussian samples, it is easy to show using the Gaussian Poincar\'e inequality that
\begin{align}\nonumber&\Var(D_0)\le \frac{1}{\Tilde{m}N_{D}K}.
\end{align}
The computational cost at levels $l \geq 1$ satisfies the assumptions of Proposition \ref{supp:prop:unb}, so if we fix $2<\phi_{N}<8$, all assumptions of this proposition are satisfied. Hence $S(c_R)$ is an unbiased estimator with finite variance and computational cost.

For the asymptotic variance using \eqref{supp:eq:sigma2S}, and the above estimates  we have
\begin{align*}
\sigma^{2}_{S} &\leq \frac{1}{\Tilde{m}N_{D}} + C(\Tilde{\gamma},\Tilde{m},\Tilde{M},\Tilde{M}^{s}_{1},\tau/N_{D},N_{b}) \frac{d^{2}}{c_{N}N^{2}_{D}} + \sum^{\infty}_{l=1}\frac{V_{D_{1}}\phi^{-l}_{D_{1}}}{c_{N}\phi^{-l}_{N}}\\
&\leq \frac{1}{\Tilde{m}N_{D}} + C(\Tilde{\gamma},\Tilde{m},\Tilde{M},\Tilde{M}^{s}_{1},\tau/N_{D},N_{b},\phi_{N})\left(\frac{d^{2}}{c_{N}N^{2}_{D}} + \frac{h^{3}_{0}dN^{5/2}_{D}}{c_{N}}\right),\\
\intertext{if we choose $h_{0}$ to be of the order $\mathcal{O}(1/{N^{3/2}_{D}})$ then we have}
&\leq \frac{1}{\Tilde{m}N_{D}} + C(\Tilde{\gamma},\Tilde{m},\Tilde{M},\Tilde{M}^{s}_{1},\tau/N_{D},N_{b},\phi_{N})\frac{d^{2}}{c_{N}N^{2}_{D}},
\end{align*}
as required.

\end{proof}

\section{Variance bounds for $\UBUBU$ estimator with approximate gradients}
\label{supp:sec:Appendix_UBUBU_variance_bnd_approx_grad}
One can also approximate the gradient in a cheap way, which has bias, but such that the bias tends to zero with the stepsize. 
The multilevel estimator will still be an unbiased estimator from the target measure.

For convex potentials, we can approximate the gradient with the Hessian at the minimizer by
\begin{equation}\label{supp:eq:hessian_grad_approx}
 \mathcal{Q}(x\mid \hat{x}) = \nabla U(\hat{x}) + \nabla^{2}U(x^{*})(x - \hat{x}).
\end{equation}
Despite the fact that this estimator is biased, in our multilevel approach, the overall estimator will still be unbiased.

As before, the updates in $(\ol{x}_k,\ol{v}_k)_{k\ge 0}$ form a $\mathcal{B}\mathcal{U}$ step, so they can be expressed as
\begin{align}
\label{supp:eq:BUsteps:approxx}
    \overline{x}_{k+1} &= \overline{x}_{k} + \frac{1-\eta^{2}}{\gamma}\left(\overline{v}_{k} - h \mathcal{Q}(\ol{x}_k|\ol{x}_{L(k)})\right) + \sqrt{\frac{2}{\gamma}}\left(\mathcal{Z}^{(1)}\left(h,\xi^{(1)}_{k+1}\right) - \mathcal{Z}^{(2)}\left(h,\xi^{(1)}_{k+1},\xi^{(2)}_{k+1}\right)\right),\\
\label{supp:eq:BUsteps:approxv}
    \overline{v}_{k+1} &= \eta^{2}\left(\overline{v}_{k} - h\mathcal{Q}(\ol{x}_k|\ol{x}_{L(k)})\right) + \sqrt{2\gamma} \mathcal{Z}^{(2)}\left(h,\xi^{(1)}_{k+1},\xi^{(2)}_{k+1}\right),
\end{align}
where $\hat{x}_{k}=\ol{x}_{L(k)}$ and $L(k)=\tau \lfloor k/\tau\rfloor$.

It turns out that at level $0$, it can be advantageous to simply use the gradients of the Gaussian approximation, and never compute gradients of $U$. This corresponds to gradient approximation of the form
\begin{equation}
\mathcal{Q}^*(x) =\mathcal{Q}(x|x^*)= \nabla^{2}U(x^{*})(x - x^*),
\end{equation}
and so \eqref{supp:eq:BUsteps:approxx}-\eqref{supp:eq:BUsteps:approxv} holds with $\hat{x}_{k}=x^*$ for every $k\ge 0$ in this case.

\subsection{Non-asymptotic guarantees}

\begin{lemma}
\label{supp:lem:mean:approx:L4}
Considering iterates $(x_{k},v_{k},\overline{x}_{k})_{k \in \mathbb{N}}$ of approximate gradient $\UBU$, with epoch length $\tau$ and gradient approximation $\mathcal{Q}$ given by \eqref{supp:eq:hessian_grad_approx}, and initial condition $(x_{0},v_{0}) \in \R^{2d}$, we have the property
\[
\mathbb{E}\left\|\nabla U(\overline{x}_{k}) - \mathcal{Q}(\overline{x}_{k}\mid \overline{x}_{L(k)}) \right\|^2 \leq \tilde{M}_1^2 N_D^2    (\tau -1)^2\max_{j\leq k}\|\ol{x}_j-x^*\|_{L^4}^2 \cdot \max_{j < k} \|\overline{x}_{j+1} - \overline{x}_{j}\|_{L^4}^{2},
\]
and we also have
\[
\mathbb{E}\left\|\nabla U(\overline{x}_{k}) - \mathcal{Q}(\overline{x}_{k}\mid x^*) \right\|^2 \leq \tilde{M}_1^2 N_D^2    \|\ol{x}_k-x^*\|_{L^4}^4.
\]

\end{lemma}
\begin{proof}
Let the last full gradient evaluation be at iteration $L(k)$, then
\begin{align*}
&\mathbb{E}\|\nabla U(\overline{x}_{k}) - \mathcal{Q}(\overline{x}_{k}\mid \overline{x}_{L(k)})\|^{2} = \mathbb{E}\|\nabla U(\overline{x}_{k}) - \nabla U(\overline{x}_{L(k)}) - \nabla^{2}U(x^{*})(\overline{x}_{k} - \overline{x}_{L(k)})\|^2\\
&=\mathbb{E}\left\|\left(\int_{t=0}^{1}\nabla^{2}U(\ol{x}_k+t(\ol{x}_{L(k)}-\ol{x}_{k})) \mathrm{d}t - \nabla^{2}U(x^{*})\right)(\overline{x}_{k} - \overline{x}_{L(k)})\right\|^2\\
&\le \tilde{M}_1^2 N_D^2 \mathbb{E}\left(\left(\int_{t=0}^{1}\|\ol{x}_k+t(\ol{x}_{L(k)}-\ol{x}_{k})-x^*\|\right)^2 \left\|\overline{x}_{k} - \overline{x}_{L(k)}\right\|^2\right)\\
&\le \frac{\tilde{M}_1^2 N_D^2}{2}
\left(\|\ol{x}_{k}-x^*\|_{L^4}^2+\|\ol{x}_{L(k)}-x^*\|_{L^4}^2\right)
\left\|\overline{x}_{k} - \overline{x}_{L(k)}\right\|_{L^4}^2\\
&\le \tilde{M}_1^2 N_D^2    (\tau -1)^2\max_{j\leq k}\|\ol{x}_j-x^*\|_{L^4}^2 \cdot \max_{j < k} \|\overline{x}_{j+1} - \overline{x}_{j}\|_{L^4}^{2}.
\end{align*}
The second claim follows by Taylor expansion.
\end{proof}

Now, we are going to bound the terms 
$\|\ol{x}_j-x^*\|_{L^4}$ and $\|\overline{x}_{j+1} - \overline{x}_{j}\|_{L^4}$.

\begin{lemma}\label{supp:lem:approxxzdiff}
When using exact gradients, we have
\begin{align*}
    &\|\ol{x}_{k+1}-\ol{x}_k\|_{L^4}\le 2 h \sqrt{M} \|z_k-z^*\|_{L^4,a,b}+2h\sqrt{d}.
\end{align*}
With approximate gradients, we have
\begin{align*}
    &\|\ol{x}_{k+1}-\ol{x}_k\|_{L^4}\le 
2 h \sqrt{M} \left(\|\ol{z}_k-z^*\|_{L^4,a,b}+\sqrt{2}(1+M/m)\|\ol{z}_{L(k)}-z^*\|_{L^4,a,b}\right)+2h\sqrt{d}.
\end{align*}
\end{lemma}
\begin{proof}
In the case of exact gradients, we have
\begin{align*}
    \overline{x}_{k+1} &= \overline{x}_{k} + \frac{1-\eta^{2}}{\gamma}\left(\overline{v}_{k} - h\nabla U(\overline{x}_{k})\right) + \sqrt{\frac{2}{\gamma}}\left(\mathcal{Z}^{(1)}\left(h,\xi^{(1)}_{k+1}\right) - \mathcal{Z}^{(2)}\left(h,\xi^{(1)}_{k+1},\xi^{(2)}_{k+1}\right)\right),\\
    \overline{v}_{k+1} &= \eta^{2}\left(\overline{v}_{k} - h\nabla U(\overline{x}_{k})\right) + \sqrt{2\gamma} \mathcal{Z}^{(2)}\left(h,\xi^{(1)}_{k+1},\xi^{(2)}_{k+1}\right),
\end{align*}
so using $\|\nabla U(x_k)\|\le M\|x_k-x^*\|$, Lemma \ref{supp:lem:Zcovbnd}, and the fact that $\xi\sim N(0,I_d)$ satisfies that $\E(\|\xi\|^4)\le 3d^2$, we have that for $h\le 1/\sqrt{M}$, $\gamma \ge \sqrt{8M}$,
\begin{align*}
    &\|\ol{x}_{k+1}-\ol{x}_k\|_{L^4}\le h \|\ol{v}_{k}\|_{L^4} + h^2 M \|x_k-x^*\|_{L^4}+3^{1/4}2^{1/2} h \sqrt{d} \le 2 h \sqrt{M} \|z_k-z^*\|_{L^4,a,b}+2h\sqrt{d}.
\end{align*}
For approximate gradients, we have
\begin{align}
    \overline{x}_{k+1} &= \overline{x}_{k} + \frac{1-\eta^{2}}{\gamma}\left(\overline{v}_{k} - h \mathcal{Q}(\ol{x}_k|\ol{x}_{L(k)})\right) + \sqrt{\frac{2}{\gamma}}\left(\mathcal{Z}^{(1)}\left(h,\xi^{(1)}_{k+1}\right) - \mathcal{Z}^{(2)}\left(h,\xi^{(1)}_{k+1},\xi^{(2)}_{k+1}\right)\right),\\
    \overline{v}_{k+1} &= \eta^{2}\left(\overline{v}_{k} - h\mathcal{Q}(\ol{x}_k|\ol{x}_{L(k)})\right) + \sqrt{2\gamma} \mathcal{Z}^{(2)}\left(h,\xi^{(1)}_{k+1},\xi^{(2)}_{k+1}\right).
\end{align}
Let $\tilde{x}_k=x_{L(k)}-(\nabla^{2}U(x^{*}))^{-1} \nabla U(\ol{x}_{L(k)})$, and $\tilde{U}_k(x)=\frac{1}{2}(x-\tilde{x}_k)^T \nabla^{2}U(x^{*})(x-\tilde{x}_k)$. Then
 the approximate gradient step is the same as an exact gradient step with respect to the potential $\tilde{U}_k$. So
we have by the result for exact gradients that for approximate gradients, 
\begin{align*}
    &\|\ol{x}_{k+1}-\ol{x}_k\|_{L^4}\le h \|\ol{v}_{k}\|_{L^4} + h^2 M \|x_k-\tilde{x}_k\|_{L^4}+3^{1/4}2^{1/2} h d 
    \\
    &\le 2 h \sqrt{M} \left\|\ol{z}_k-\left(\begin{matrix}\tilde{x}_k\\ 0_d\end{matrix}\right)\right\|_{L^4,a,b}+2h\sqrt{d}.
\end{align*}
Here using the triangle inequality, we have
\begin{align*}
&\left\|z_k-\left(\begin{matrix}\tilde{x}_k\\ 0_d\end{matrix}\right)\right\|_{L^4,a,b}
\le \|\ol{z}_k-z^*\|_{L^4,a,b}+ \left\|\left(\begin{matrix}\tilde{x}_k\\ 0_d\end{matrix}\right)-z^*\right\|_{L^4,a,b}\\
&\le
\|\ol{z}_k-z^*\|_{L^4,a,b}+
\sqrt{2}(1+M/m)\|\ol{z}_{L(k)}-z^*\|_{L^4,a,b},
\end{align*}
hence 
\begin{align*}
    &\|\ol{x}_{k+1}-\ol{x}_k\|_{L^4}\le
2 h \sqrt{M} \left(\|\ol{z}_k-z^*\|_{L^4,a,b}+\sqrt{2}(1+M/m)\|\ol{z}_{L(k)}-z^*\|_{L^4,a,b}\right)+2h\sqrt{d}.
\end{align*}
\end{proof}

We still need to control the evolution of $\|\ol{z}_k-z^*\|_{L^4,a,b}$.
As before in \eqref{supp:eq:Vxvdef}, we define the Lyapunov function $\mathcal{V}$ as 
\[\mathcal{V}(x,v)=U(x) - U(x^{*}) + \frac{1}{4}\gamma^{2}\left( \|x-x^{*} + \gamma^{-1}v\|^{2} + \|\gamma^{-1}v\|^{2}-\lambda\|x-x^{*}\|^{2}\right).\]
The following lemma establishes some useful properties about this.
\begin{lemma}\label{supp:lemma:V12bnd}
Suppose that $\gamma \ge \sqrt{8M}$, and that Assumptions  \ref{supp:assum:Lip} and \ref{supp:assum:convex} hold for $U$. Then for any $z=(x,v)\in \Lambda$, $\mathcal{V}(x,v)\ge 0$, and 
\begin{equation}\mathcal{V}^{1/2}(x,v)\ge \frac{1}{8} (\gamma \|x-x^{*}\|+\|v\|)\ge \frac{\sqrt{M}}{8} \|z-z^*\|_{a,b}.
\end{equation}
Moreover, $\mathcal{V}^{1/2}$ is $8\gamma$-Lipschitz with respect to the $\|\cdot \|_{a,b}$ norm. 
\end{lemma}
\begin{proof}
Using the strong convexity of $U$, 
\begin{align*}\mathcal{V}(x,v)&=U(x) - U(x^{*}) + \frac{1}{4}\gamma^{2}\left( \|x-x^{*} + \gamma^{-1}v\|^{2} + \|\gamma^{-1}v\|^{2}-\lambda\|x-x^{*}\|^{2}\right)\\
&\ge \frac{m}{2} \|x-x^*\|^2+\frac{1}{4}\gamma^2 \left((1-\lambda)\|x-x^*\|^2+2\gamma^{-2}\|v\|^2+2\left<x-x^*,\gamma^{-1} v\right> \right)\\
\intertext{using that $\left|2\left<x-x^*,\gamma^{-1} v\right>\right|\le \frac{\|x-x^*\|^2}{c}+c\|\gamma^{-1} v\|^2$ with $c=8/5$, and that $0<\lambda\le \frac{1}{4}$,}
&\ge \frac{1}{4}\gamma^2\left(\frac{1}{8}\|x-x^*\|^2+\frac{2}{5}\|\gamma^{-1}v\|^2\right)\ge \frac{1}{64}\left(\gamma \|x-x^*\|+\|v\|\right)^2\ge \frac{M}{6 4}\|z-z^*\|_{a,b}^2,
\end{align*}
and our first claim follows by taking square-root.

For the second claim, note that $\nabla \mathcal{V}^{1/2}(x,v)=\frac{1}{2}\frac{\nabla\mathcal{V}}{\mathcal{V}^{1/2}(x,v)}$. Here
\begin{align*}
&\nabla_{x} \mathcal{V}(x,v)=\nabla U(x)+\frac{1}{2}\gamma^2 ((1-\lambda)(x-x^*)+\gamma^{-1}v),\\
&\|\nabla_{x} \mathcal{V}(x,v)\|\le \left(M+\frac{\gamma^2(1-\lambda)}{2}\right)\|x-x^*\|+\frac{\gamma}{2}\|v\|\le \gamma^2 \|x-x^*\|+\frac{\gamma}{2}\|v\|,\\
&\nabla_{v} \mathcal{V}(x,v)=\frac{1}{2}\gamma^2 (\gamma^{-1}((x-x^*)+\gamma^{-1}v)+\gamma^{-2} v)\\
&\|\nabla_{v} \mathcal{V}(x,v)\|\le \frac{\gamma}{2}\|x-x^*\|+\|v\|,
\end{align*}
so we have
\begin{align*}&\|\nabla_x \mathcal{V}^{1/2}(x,v)\|=\frac{\|\nabla_x \mathcal{V}(x,v)\|}{2 \mathcal{V}^{1/2}(x,v)}\le 4\gamma,\\
&\|\nabla_v \mathcal{V}^{1/2}(x,v)\|=\frac{\|\nabla_v \mathcal{V}(x,v)\|}{2 \mathcal{V}^{1/2}(x,v)}\le 4,
\end{align*}
and since $\gamma\ge \sqrt{8M}$, for any $(x,v),(x',v')\in \Lambda$, we have \begin{align*}&|\mathcal{V}^{1/2}(x,v)-\mathcal{V}^{1/2}(x',v')|=\left<\int_{t=0}^{1}\nabla \mathcal{V}^{1/2}(x+t(x'-x),v+t(x'-v))dt,(x'-x,v'-v)\right>\\
&\le 4\sqrt{2}\gamma \|(x,v)-(x',v')\|_{a,0}\le 8\gamma \|(x,v)-(x',v')\|_{a,b}.\end{align*}
\end{proof}

As previously, let $
    \lambda=\min\left(\frac{1}{4},\frac{m}{\gamma^2}\right)$, and $c_4(h) = h\lambda\gamma -8h^{2}\gamma^{2}\left(4 + \lambda\right)$. By  \eqref{supp:eq:detVsquarebnd}, for the exact gradient scheme, if $c_4(h)<\frac{1}{2}$, we have
\begin{align*}
&\mathbb{E}\left[\mathcal{V}(\overline{x}_{k+1},\overline{v}_{k+1})^{2} \mid \overline{x}_{k},\overline{v}_{k}\right] \leq \left(1-\frac{c_4(h)}{2}\right)\mathcal{V}^{2}(\overline{x}_{k},\overline{v}_{k}) +\frac{(6h\gamma d + 160h\gamma(1+\lambda^{2}))^{2}}{4c_4(h)}  + 24h^{2}\gamma^{2}d^{2}.
\end{align*}
Let $C_{\mathcal{V}}(h):=\frac{(6h\gamma d + 160h\gamma(1+\lambda^{2}))^{2}}{4c_4(h)}  + 24h^{2}\gamma^{2}d^{2}$, then
by applying this $j$ times, we have
\begin{align}\label{supp:eq:Vbndjsteps}
&\mathbb{E}\left[\mathcal{V}(\overline{x}_{k+j},\overline{v}_{k+j})^{2} \mid \overline{x}_{k},\overline{v}_{k}\right] \leq \left(1-\frac{c_4(h)}{2}\right)^j\mathcal{V}^{2}(\overline{x}_{k},\overline{v}_{k})+C_{\mathcal{V}}(h)\frac{1-(c_4(h)/2)^j}{1-c_4(h)/2}.
\end{align}

Now we are going to generalise this result to the approximate gradient scheme.

\begin{lemma}
Consider iterates $(x_{k},v_{k},\overline{x}_{k})_{k \in \mathbb{N}}$ of approximate gradient $\UBU$, with epoch length $\tau$ and gradient approximation $\mathcal{Q}$ given by \eqref{supp:eq:hessian_grad_approx}, and initial condition $(x_{0},v_{0}) \in \R^{2d}$. Suppose that $L(k)=k$ (i.e. $k$ is divisible by $\tau$), and $c_4(h)>0 $, then for any $1\le j\le \tau$, we have
\begin{align*}
&\|\mathcal{V}^{1/2}(\ol{z}_{k+j})\|_{L^4}\leq \left[\left(1-\frac{c_4(h)}{2}\right)^j\|\mathcal{V}^{1/2}(\ol{z}_{k})\|_{L^4}^4+C_{\mathcal{V}}(h)j\right]^{1/4}\\
&+ 8\gamma \left(48h^2 \sqrt{M}\cdot j^2(\|\mathcal{V}^{1/2}(\ol{z}_{k})\|_{L^4}^4+C_{\mathcal{V}}(h)j)^{1/4} + 6h^2 j^2 \sqrt{d M}\right).
\end{align*}
\end{lemma}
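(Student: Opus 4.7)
The plan is to couple $(\overline{z}_{k+j})_{j\ge 0}$ synchronously with an exact-gradient UBU chain $(\overline{z}^E_{k+j})_{j\ge 0}$ driven by the same Brownian increments and started from $\overline{z}^E_k=\overline{z}_k$. Since $\mathcal{V}^{1/2}$ is $8\gamma$-Lipschitz with respect to $\|\cdot\|_{a,b}$ by Lemma \ref{lemma:V12bnd}, Minkowski's inequality yields
\begin{equation*}
\|\mathcal{V}^{1/2}(\overline{z}_{k+j})\|_{L^4}\;\le\;\|\mathcal{V}^{1/2}(\overline{z}^E_{k+j})\|_{L^4}\;+\;8\gamma\,\|\overline{z}_{k+j}-\overline{z}^E_{k+j}\|_{L^4,a,b},
\end{equation*}
and the two summands in the target inequality will arise by controlling each piece separately.

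For the exact-gradient piece I would iterate \eqref{eq:Vbndjsteps}: using the geometric-sum identity $\frac{1-(c_4(h)/2)^j}{1-c_4(h)/2}=\sum_{i=0}^{j-1}(c_4(h)/2)^i\le j$, one obtains $\mathbb{E}[\mathcal{V}^2(\overline{z}^E_{k+j})]\le(1-c_4(h)/2)^j\,\mathbb{E}[\mathcal{V}^2(\overline{z}_k)]+C_{\mathcal{V}}(h)\,j$, and taking fourth roots reproduces the first summand of the stated bound verbatim. Writing $V_j:=\bigl(\|\mathcal{V}^{1/2}(\overline{z}_k)\|_{L^4}^4+C_{\mathcal{V}}(h)j\bigr)^{1/4}$, this same estimate also gives $\|\mathcal{V}^{1/2}(\overline{z}^E_{k+l})\|_{L^4}\le V_j$ for every $l\le j$, a fact that I will need below.

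For the coupling error $\delta_i:=\|\overline{z}_{k+i}-\overline{z}^E_{k+i}\|_{L^4,a,b}$ (with $\delta_0=0$), observe that both schemes share the $\mathcal{U}$ substeps, so by the per-step Lipschitz bounds \eqref{eq:Udiff}--\eqref{eq:Bdiff} together with Minkowski's inequality in $L^4$ each step contributes at most $Ch M^{-1/2}\|\mathcal{Q}(\overline{x}_{k+i}|\overline{x}_k)-\nabla U(\overline{x}_{k+i})\|_{L^4}$. Writing the approximation error as $\mathcal{Q}(x|\overline{x}_k)-\nabla U(x)=[\nabla U(\overline{x}_k)-\nabla U(x)]+\nabla^2U(x^*)(x-\overline{x}_k)$, Assumption \ref{assum:Lip} gives the crude bound $\|\mathcal{Q}(\overline{x}_{k+i}|\overline{x}_k)-\nabla U(\overline{x}_{k+i})\|\le 2M\|\overline{x}_{k+i}-\overline{x}_k\|$. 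Telescoping $\overline{x}_{k+i}-\overline{x}_k$ step by step and invoking Lemma \ref{lem:approxxzdiff} together with Lemma \ref{lemma:V12bnd} (to convert $\|\cdot\|_{a,b}$-distances to values of $\mathcal{V}^{1/2}$) produces
\begin{equation*}
\|\overline{x}_{k+i}-\overline{x}_k\|_{L^4}\;\lesssim\;i\,h\bigl(V_j+\gamma\max_{l\le i}\delta_l\bigr)+i\,h\sqrt{d}.
\end{equation*}
Summing the per-step estimate for $\delta_{i+1}-\delta_i$ over $i=0,\dots,j-1$ and using $\sum_{i=0}^{j-1}i\le j^2/2$ then yields a recursion of the form
\begin{equation*}
\max_{l\le j}\delta_l\;\lesssim\;h^2\sqrt{M}\,j^2\bigl(V_j+\sqrt{d}\bigr)\;+\;Ch^2\gamma\sqrt{M}\,j^2\max_{l\le j}\delta_l.
\end{equation*}
The main obstacle is this implicit recursion: under the standing hypotheses $h\le 1/(2\gamma)$ and $\gamma^2\ge 8M$ (which in particular make $h^2\gamma\sqrt{M}$ small for $j$ bounded by $\tau$), the last term can be absorbed on the left-hand side to produce $\delta_j\le 48h^2\sqrt{M}\,j^2 V_j+6h^2 j^2\sqrt{dM}$. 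Combining this with the exact-gradient bound through the opening Minkowski inequality then yields exactly the claim. The delicate point is the book-keeping of the numerical constants $48$ and $6$: in Lemma \ref{lem:approxxzdiff} the factor $\sqrt{2}(1+M/m)$ must be dominated under $\gamma^2\ge 8M$ so as not to inflate them, and one must check that the absorption step preserves these explicit values.
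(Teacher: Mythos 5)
Your proposal takes a genuinely different route, but it has a gap. You propose to synchronously couple the approximate-gradient chain $(\overline{z}_{k+j})$ directly with a single exact-gradient chain $(\overline{z}^E_{k+j})$ and bound $\delta_j=\|\overline{z}_{k+j}-\overline{z}^E_{k+j}\|_{L^4,a,b}$ by a step-by-step recursion. The problem is that the per-step gradient-approximation error $\|\mathcal{Q}(\overline{x}_{k+i}\,|\,\overline{x}_k)-\nabla U(\overline{x}_{k+i})\|_{L^4}$ is evaluated at the \emph{approximate}-gradient chain's position, so when you convert it into a Lyapunov bound you have to write $\|\mathcal{V}^{1/2}(\overline{z}_{k+i})\|_{L^4}\le\|\mathcal{V}^{1/2}(\overline{z}^E_{k+i})\|_{L^4}+8\gamma\delta_i$, re-introducing $\delta_i$ into the estimate. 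This forces the implicit recursion $\max_{l\le j}\delta_l\lesssim h^2\sqrt{M}j^2(V_j+\sqrt{d})+Ch^2\gamma\sqrt{M}j^2\max_{l\le j}\delta_l$, and the absorption step needs $Ch^2\gamma\sqrt{M}j^2<1$ for all $j\le\tau$. The lemma's only hypothesis is $c_4(h)>0$, which bounds $h\gamma$ but says nothing about $h^2\gamma\sqrt{M}\tau^2$, so the absorption cannot close in general. (Conditions of the form $h\tau\gamma\le 2$ appear only later, in Corollary \ref{cor:AG_min_distance}; they are not available here.) Even when those conditions do hold, absorbing an implicit term strictly inflates the prefactor, so you would not recover the exact constants $48$ and $6$ that the lemma asserts.

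The paper's proof sidesteps all of this with an interpolation argument rather than direct coupling. It defines a family of hybrid chains $\overline{z}_{k+j}^{(i)}$, each performing $j-i$ exact-gradient steps followed by $i$ approximate-gradient steps, and telescopes $\overline{z}_{k+j}-\overline{z}^{(0)}_{k+j}=\sum_{i=0}^{j-1}(\overline{z}_{k+j}^{(i+1)}-\overline{z}_{k+j}^{(i)})$. Any two consecutive hybrids agree on the first $j-i-1$ steps, differ only in whether step $j-i$ applies $\mathcal{Q}$ or $\nabla U$, and then both evolve by approximate-gradient steps for the remaining $i$ steps. Two consequences follow. First, the one-step discrepancy is evaluated at a state lying on the purely exact-gradient chain, so the error is controlled by $\|\overline{x}^{(0)}_{k+i}-\overline{x}_k\|_{L^4}$, a quantity bounded by \eqref{eq:Vbndjsteps} alone, with no self-referential term. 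Second, the remaining $i$ steps share the same (Gaussian) approximate potential, so Proposition \ref{prop:Wasserstein} applies and the one-step error does not grow as it propagates. This removes the implicit recursion entirely, which is why the paper's version needs only $c_4(h)>0$ and produces the precise constants. Your decomposition is a natural first attempt, but to make it work you would need to impose and carry along an extra stepsize restriction of the kind that the interpolation argument was designed to avoid.
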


\begin{proof}We use an interpolation argument, inspired by the interpolation to independence coupling in \cite{chen2010stein}. For $0\le i\le j$, let $\ol{z}_{k+j}^{(i)}=\left(\ol{x}_{k+j}^{(i)},\ol{v}_{k+j}^{(i)}\right)$ be defined by performing $j-i$ $\mathcal{BU}$ steps with exact gradients starting from $(\ol{x}_{k},\ol{v}_{k})$ according to \eqref{supp:eq:BUsteps:exactx}-\eqref{supp:eq:BUsteps:exactv}, followed by $i$ steps with approximate gradients according to \eqref{supp:eq:BUsteps:approxx}-\eqref{supp:eq:BUsteps:approxv}. Then we have $\ol{z}_{k+j}=\ol{z}_{k+j}^{(j)}$, and $\ol{z}_{k+j}^{(0)}$ corresponds to taking $j$ steps with exact gradients. By the triangle inequality, we have
\[\|\ol{z}_{k+j}-\ol{z}_{k+j}^{(0)}\|_{a,b}\le \sum_{i=0}^{j-1} \|\ol{z}_{k+j}^{(i+1)}-\ol{z}_{k+j}^{(i)}\|_{a,b}.\]
Using Proposition \ref{supp:prop:Wasserstein}, we have a contraction according to $\|\cdot \|_{a,b}$ with synchronous coupling when using the approximate gradients (because these are exact gradients with respect to a Gaussian), so we have
\[\|\ol{z}_{k+j}^{(i+1)}-\ol{z}_{k+j}^{(i)}\|_{L^4,a,b}\le \| \ol{z}_{k+i+1}^{(1)}-\ol{z}_{k+i+1}^{(0)}\|_{L^4,a,b},\]
which is the one-step error of the approximate gradient scheme versus the exact gradient scheme. 
\begin{align*}
&\| \ol{z}_{k+i+1}^{(1)}-\ol{z}_{k+i+1}^{(0)}\|_{L^4,a,b}\\
&=
\left\|\left(\frac{(1-\eta^{2})h}{\gamma} (\mathcal{Q}(\ol{x}_{k+i}^{(0)}|\ol{x}_k)-\nabla U(\ol{x}_{k+i}^{(0)})),\eta^{2} h (\mathcal{Q}(\ol{x}_{k+i}^{(0)}|\ol{x}_k)-\nabla U(\ol{x}_{k+i}^{(0)})) \right)\right\|_{L^4,a,b}\\
&\le \sqrt{2}\|\ol{x}_{k+i}^{(0)}-\ol{x}_{k}\|_{L^4} \cdot M \left(h^2+\frac{h}{\sqrt{M}}\right).
\end{align*}
So, for $h<\frac{1}{\sqrt{M}}$, we have 
\begin{align*}&\|\ol{z}_{k+j}-\ol{z}_{k+j}^{(0)}\|_{L^4,a,b}
\le 3 h\sqrt{M} \sum_{i=0}^{j-1} \|\ol{x}_{k+i}^{(0)}-\ol{x}_{k}\|_{L^4} \\
&\le 3h\sqrt{M}\cdot j\cdot \sum_{0\le i\le j-1}\|\ol{x}_{k+i+1}^{(0)}-\ol{x}_{k+i}^{(0)}\|_{L^4}\\
\intertext{using Lemma \ref{supp:lem:approxxzdiff},}
&\le 6h^2 M\cdot j\cdot \sum_{0\le i\le j-1}\|\ol{z}_{k+i}^{(0)}-z^*\|_{L^4,a,b} + 6h^2 j^2 \sqrt{d M}\\
\intertext{using Lemma \ref{supp:lemma:V12bnd},}
&\le 48h^2 \sqrt{M}\cdot j\cdot \sum_{0\le i\le j-1}\|\mathcal{V}^{1/2}(\ol{z}_{k+i}^{(0)})\|_{L^4} + 6h^2 j^2 \sqrt{d M}\\
\intertext{using \eqref{supp:eq:Vbndjsteps}}
&\le 48h^2 \sqrt{M}\cdot j^2(\|\mathcal{V}^{1/2}(\ol{z}_{k})\|_{L^4}^4+C_{\mathcal{V}}(h)j)^{1/4} + 6h^2 j^2 \sqrt{d M}.
\end{align*}

We do know that
\begin{align*}
&\|\mathcal{V}^{1/2}(\ol{z}^{(0)}_{k+j})\|_{L^4}^4\leq \left(1-\frac{c_4(h)}{2}\right)^j\|\mathcal{V}^{1/2}(\ol{z}_{k})\|_{L^4}^4+C_{\mathcal{V}}(h),
\end{align*}
so by the $8\gamma$-Lipschitz property of $\mathcal{V}^{1/2}$ in $\|\cdot\|_{a,b}$ by Lemma \ref{supp:lemma:V12bnd}, we have
\begin{align*}
&\|\mathcal{V}^{1/2}(\ol{z}_{k+j})\|_{L^4}\leq \left[\left(1-\frac{c_4(h)}{2}\right)^j\|\mathcal{V}^{1/2}(\ol{z}_{k})\|_{L^4}^4+C_{\mathcal{V}}(h)j\right]^{1/4}\\
&+ 8\gamma \left(48h^2 \sqrt{M}\cdot j^2(\|\mathcal{V}^{1/2}(\ol{z}_{k})\|_{L^4}^4+C_{\mathcal{V}}(h)j)^{1/4} + 6h^2 j^2 \sqrt{d M}\right).
\end{align*}

\end{proof}

\begin{corollary}\label{supp:cor:AG_min_distance}
Consider iterates $(x_{k},v_{k}, \overline{x}_{k})_{k \in \mathbb{N}}$ of approximate gradient UBU, with epoch length $\tau$ and gradient approximation $\mathcal{Q}$ given by \eqref{supp:eq:hessian_grad_approx} approximating a potential $U$ which satisfies Assumptions \ref{supp:assum:LipA} and \ref{supp:assum:convexSG} with $z_{0} \sim \mu_{G}$. Assume that 
\[h < \min\left\{2/\tau\gamma,1,1/2\gamma,\frac{\lambda  \tau}{64(432 \sqrt{M} \tau^2+\gamma(1+\lambda)\tau)}\right\}, \quad \gamma \geq \sqrt{M},\] then 
    \[\|\overline{z}_{k}-z^{*}\|_{L^{4},a,b}\leq  \frac{C(\Tilde{\gamma},\Tilde{m},\Tilde{M})\sqrt{d}}{\sqrt{N_{D}}}.\]
\end{corollary}
\begin{proof}
If we define $b_{k}:= \|\mathcal{V}^{1/2}(\ol{z}_{\tau k})\|_{L^4}$, then for $\gamma \geq \sqrt{8M}$ and $h < \frac{8}{\tau \gamma}$, we have $c_{4}(h) \leq 2/\tau$ (here $c_4(h)$ and $\lambda$ are defined as in \eqref{supp:eq:c4hdef} and \eqref{supp:eq:lambdadef}), and so
\begin{align}
    b_{k+1} &\leq \left[\left(1-\frac{c_{4}(h)}{2}\right)^{\tau}b^{4}_{k} + C_{\mathcal{V}}(h)\tau\right]^{1/4} + 384h^{2}\gamma\sqrt{M}\tau^{2}\left(b^{4}_{k} + C_{\mathcal{V}}(h)\tau\right)^{1/4} + 48\gamma\sqrt{M}h^{2}\tau^{2}\sqrt{d} \nonumber\\
    &\leq \left[\left(1-\frac{c_{4}(h)\tau}{4}\right)b^{4}_{k}  + C_{\mathcal{V}}(h)\tau\right]^{1/4} + 384h^{2}\gamma\sqrt{M}\tau^{2}\left(b^{4}_{k} + C_{\mathcal{V}}(h)\tau\right)^{1/4} + 48\gamma\sqrt{M}h^{2}\tau^{2}\sqrt{d}.\label{supp:eq:bkp1bnd1}
\end{align}
Using this, for $b_{k} < \max\left\{\left(\frac{8C_{\mathcal{V}}(h)}{c_{4}(h)}\right)^{1/4},\sqrt{d}\right\}$ we have that
\begin{align}
    b_{k+1} &\leq (1+384h^{2}\gamma\sqrt{M}\tau^{2})\left[ 
\frac{8C_{\mathcal{V}}(h)}{c_{4}(h)} + d^{2} + C_{\mathcal{V}}(h)\tau\right]^{1/4} + 48\gamma\sqrt{M}h^{2}\tau^{2}\sqrt{d}.\label{supp:eq:bkp1bnd2}
\end{align}
For $b_{k} \ge \max\left\{\left(\frac{8C_{\mathcal{V}}(h)}{c_{4}(h)}\right)^{1/4},\sqrt{d}\right\}$, using that $(1+x)^{1/4}\le 1+\frac{x}{4}$ for $x\in [-1,\infty)$, we have 
\begin{align*}
    b_{k+1} &\leq 
    b_{k}\left[\left[\left(1-\frac{c_{4}(h)\tau}{4}\right)  + \frac{\C_{\mathcal{V}}(h)\tau}{b_k^4}\right]^{1/4} + 384h^{2}\gamma\sqrt{M}\tau^{2}\left(1 + \frac{C_{\mathcal{V}}(h)\tau}{b_k^4}\right)^{1/4} + \frac{48\gamma\sqrt{M}h^{2}\tau^{2}\sqrt{d}}{b_k}\right]\\
    &\leq \left[ 1-\frac{c_{4}(h)\tau}{32}  + 432h^{2}\gamma\sqrt{M}\tau^{2}\right]b_{k}\\
    \intertext{ using the definition $c_4(h)=h\lambda\gamma -8h^{2}\gamma^{2}\left(4 + \lambda\right)$}
    &\le \left[ 1-h \frac{\lambda \gamma \tau}{32}+h^2 (432\gamma\sqrt{M}\tau^{2}+\gamma^2(1+\lambda)\tau)\right]b_{k}
    \intertext{ using the assumption $h\le \frac{\lambda \gamma \tau}{64(432\gamma \sqrt{M} \tau^2+\gamma^2(1+\lambda)\tau)}$
    }
    &\le \left[ 1-h \frac{\lambda \gamma \tau}{64}\right]b_{k}.
\end{align*}
Therefore we have that for all $k \in \mathbb{N}$
\begin{align*}
    b_{k} &\leq \left[ 1-h \frac{\lambda \gamma \tau}{64}\right]^{k}b_{0} + (1+384h^{2}\gamma\sqrt{M}\tau^{2})\left[ 
\frac{8C_{\mathcal{V}}(h)}{c_{4}(h)} + d^{2} + C_{\mathcal{V}}(h)\tau\right]^{1/4}+ 48\gamma\sqrt{M}h^{2}\tau^{2}\sqrt{d}.
\end{align*}
Now considering $b_{k,j} := \|\mathcal{V}^{1/2}(\ol{z}_{\tau k + j})\|_{L^4}$ we have that by the same argument replacing $\tau$ by $j$ that 
\begin{align*}
b_{k,j} &\leq \left[ 1-h \frac{\lambda \gamma j}{64}\right]^{j}b_{k} + (1+384h^{2}\gamma\sqrt{M}j^{2})\left[ 
\frac{8C_{\mathcal{V}}(h)}{c_{4}(h)} + d^{2} + C_{\mathcal{V}}(h)j\right]^{1/4}+ 48\gamma\sqrt{M}h^{2}j^{2}\sqrt{d}.
\end{align*}
Therefore considering the iterates of the approximate gradient UBU scheme we have
\[
\frac{\sqrt{\Tilde{M}N_{D}}}{8}\|\overline{z}_{k}-z^{*}\|_{L^{4},a,b}\leq  C(\Tilde{\gamma},\Tilde{m},\Tilde{M})\left(\|\mathcal{V}^{1/2}(\overline{z}_{0})\|_{L^{4}} + h\tau \sqrt{N_{D}} +\sqrt{d}\right)
\]
and therefore using Lemma \ref{supp:lem:moment_bounds} for the initial distribution we have
\[\|\overline{z}_{k}-z^{*}\|_{L^{4},a,b}\leq  \frac{C(\Tilde{\gamma},\Tilde{m},\Tilde{M})\sqrt{d}}{\sqrt{N_{D}}}.\]
\end{proof}

\begin{proposition}\label{supp:theorem:non-asymptotic-AG2}
For an approximate gradient $\UBU$ integrator with iterates $(z_{k})_{k \in \mathbb{N}}$, transition kernel $P_h$ and a potential $U$ satisfying Assumptions \ref{supp:assum:Lip}-\ref{supp:assum:Hess_Lipschitz} and $z_{0} \sim \mu_{G}$, where we approximate the gradient using the gradient approximation $\mathcal{Q}$ given in \eqref{supp:eq:hessian_grad_approx}.
Consider the continuous solution to kinetic Langevin dynamics $(Z_{t})_{t \geq 0}$, and define $Z^{k}:=Z_{kh}$ for $k \in \mathbb{N}$, where $Z^{0} \sim \pi$ is initialized at the invariant measure with synchronously coupled Brownian motion to $(z_{k})_{k \in \mathbb{N}}$, then for all 
\[h < \min\left\{2/\tau\gamma,1,1/2\gamma,\frac{\lambda  \tau}{64(432 \sqrt{M} \tau^2+\gamma(1+\lambda)\tau)}\right\}, \quad \gamma \geq \sqrt{8M},\]
$k,l \in \mathbb{N}$ such that $k > l$
\begin{align*}
    \|z_{k}-Z^{k}\|_{L^{2},a,b} &\leq (1-c(h))^{k-l}\|z_{l} - Z^{l}\|_{L^{2},a,b} + \frac{h\left((\tau - 1)\sqrt{d}+ \sqrt{N_{D}}\right)C(\Tilde{\gamma},\Tilde{m},\Tilde{M},\Tilde{M}^{s}_{1})\sqrt{d}}{\sqrt{N_{D}}},
\end{align*}
and further
\begin{align*}
\mathcal{W}_{2,a,b}(\mu_{G} P^{k}_{h},\pi)&\leq (1-c(h))^{k-l}\mathcal{W}_{2,a,b}(\mu_{G}P^{l}_{h},\pi) + \frac{h\left((\tau - 1)\sqrt{d}+ \sqrt{N_{D}}\right)C(\Tilde{\gamma},\Tilde{m},\Tilde{M},\Tilde{M}^{s}_{1})\sqrt{d}}{\sqrt{N_{D}}}.
\end{align*}
\end{proposition}

\begin{proof}
Firstly, we introduce the notation $z^{h}_{k}:= (x^{h}_{k},v^{h}_{k}):= \psi_{h}\left(z_{k},h,(W_{t'})^{(k+1)h}_{t'=kh}\right)$ for all $k \in \mathbb{N}$, an iteration of the full gradient scheme with stepsize $h>0$ and initial point $z_{k}$ with synchronously coupled Brownian motion to the approximate gradient scheme. We split up the difference in the following way
\begin{align*}
    \|z_{k} &- Z^{k}\|_{L^{2},a,b} \leq \left\|z_{k} - z^{h}_{k-1}\right\|_{L^{2},a,b} + \|z^{h}_{k-1} - Z^{k}\|_{L^{2},a,b}\\
    &\leq \|z_{k} - z^{h}_{k-1}\|_{L^{2},a,b} + \|\psi(Z^{k-1},h,(W_{t'})^{kh}_{t' = (k-1)h}) - Z^{k}\|_{L^{2},a,b}\\
    &+ \|z^{h}_{k-1} - \psi(Z^{k-1},h,(W_{t'})^{kh}_{t' = (k-1)h})\|_{L^{2},a,b} \\
    &= \textnormal{(I)}' + \textnormal{(II)}' + \textnormal{(III)}'.
\end{align*}
We have by Corollary \ref{supp:cor:AG_min_distance}, Lemma \ref{supp:lem:approxxzdiff} and Lemma \ref{supp:lem:mean:approx:L4} that
\begin{align*}
\textnormal{(I)}' &\leq \frac{\sqrt{2}h}{\sqrt{M}}M^{s}_{1}(\tau-1)\max_{j\leq k-1}\|\overline{x}_{j}-x^{*}\|_{L^{4}}\cdot \max_{j<k-1}\|\overline{x}_{j+1}-\overline{x}_{j}\|_{L^{4}}\\
&\leq (\tau - 1)h^{2}C(\Tilde{\gamma},\Tilde{m},\Tilde{M},\Tilde{M}^{s}_{1})d, \\
\intertext{by the discretization results in Section \ref{supp:app:disc_bounds}}
\textnormal{(II)}'&\leq \Tilde{C}h^{2} \leq \frac{3}{7}\sqrt{d}\left(\sqrt{M} + \gamma \right)h^2
\end{align*}
and
\[
\textnormal{(III)}' \leq (1-c(h))\| z_{k-1} -Z^{k-1}\|_{L^{2},a,b},
\]
where the inequality for $\textnormal{(II)}'$ is shown in Section \ref{supp:app:disc_bounds}.
Therefore going from local to global we have that 
\begin{align*}
    \|z_{k} - Z^{k}\|_{L^{2},a,b}  &\leq (1-c(h))^{k-l}\|z_{l} - Z^{l}\|_{L^{2},a,b} + \frac{h^{2}\left((\tau - 1)d+ \sqrt{N_{D}d}\right)C(\Tilde{\gamma},\Tilde{m},\Tilde{M},\Tilde{M}^{s}_{1})}{c(h)}\\
    &= (1-c(h))^{k-l}\|z_{l} - Z^{l}\|_{L^{2},a,b} + \frac{h\left((\tau - 1)d+ \sqrt{N_{D}d}\right)C(\Tilde{\gamma},\Tilde{m},\Tilde{M},\Tilde{M}^{s}_{1})}{\sqrt{N_{D}}}.
\end{align*}
For non-asymptotic Wasserstein results, we simply replace $Z^{k-1}$ with the continuous dynamics initialized at $\tilde{Z}_{k-1}\sim \pi$ be such that $\|\tilde{Z}_{k-1}-z_{k-1}\|_{L^2,a,b}=\mathcal{W}_{2,a,b}(\mu P^{k-1}_{h},\pi)$ as in \cite{sanz2021wasserstein}[Theorem 23]. We can then apply Lemma \ref{supp:lemma:recursion} to get the required result.
\end{proof}

\subsection{Variance bound of $D_{l,l+1}$}

\begin{proposition}
\label{supp:prop:var_agubu_2}
Suppose two approximate gradient $\UBU$ chains at coarser and finer discretization levels $l$ and $l+1$, with synchronously coupled Brownian motions $\left(z_{k}\right)_{k\in \mathbb{N}}$ and $\left(z'_{k}\right)_{k \in \mathbb{N}}$ and stepsizes $h_{l}$ and $h_{l+1} = h_{l}/2$, satisfying the conditions of Proposition \ref{supp:theorem:non-asymptotic-AG2}, be such that $z_{0} \sim \pi_{0} {= \mu_{G}}$ and $z'_{0} \sim \pi'_{0} = {\mu_{G}(P^{A}_{h_{l+1}})^{B/h_{l+1}}}$. Then for $f$ satisfying Assumption \ref{supp:assum:Lipschitz} we have the following variance bound
\begin{align*}
    &\textnormal{Var}\left(f(z'_{k}) - f(z_{k})\right)\leq\mathbb{E}\left(f(z'_{k}) - f(z_{k})\right)^2 \leq  \\
        &\Bigg(\exp{\left(-\frac{\Tilde{m}\sqrt{N_{D}}kh_{l}}{8\Tilde{\gamma}}\right)}\left( \|z'_{0}-z_{0}\|_{L^{2},a,b} + \mathcal{W}_{2,a,b}(\pi_{0},\pi) + \mathcal{W}_{2,a,b}(\pi'_{0},\pi)\right)\\
    &+ (1-c(h_{l}))^{k}\mathcal{W}_{2,a,b}(\pi_{0},\pi) + (1-c(h_{l+1}))^{2k}\mathcal{W}_{2,a,b}(\pi'_{0},\pi) \\
    &+ \frac{h_{l}\left((\tau - 1)d+ \sqrt{N_{D}d}\right)C(\Tilde{\gamma},\Tilde{m},\Tilde{M},\Tilde{M}^{s}_{1})}{\sqrt{N_{D}}}  \Bigg)^{2}.
\end{align*}
\end{proposition}
\begin{proof}
By following the same argument as Proposition \ref{supp:prop:global_strong_error} using Proposition \ref{supp:theorem:non-asymptotic-AG2} we have the desired result.
\end{proof}

\begin{corollary}\label{supp:cor:var_agubu_2}
    Suppose that the assumptions of Proposition \ref{supp:prop:var_agubu_2} hold for the potential $U$ and $h_{0} > 0$.  Assume that $$h_{0}\leq \frac{C(\Tilde{\gamma},\Tilde{m},\Tilde{M},\tau/N_{D})}{N^{3/2}_{D}},\quad B\ge \frac{16\log(2)\Tilde{\gamma}}{\Tilde{m}h_0 N^{1/2}_{D}}, \quad 
B_0\ge \frac{16\Tilde{\gamma}}{\Tilde{m}N^{1/2}_{D}h_0}\log\left(\frac{1}{N^{3}_{D} h_0^{2}}\right),$$ and the levels are initialized as described in Section \ref{supp:sec:initial_conditions}, Let $l \geq 1$, $1 \leq k \leq K$, and a test function $f$ satisfy Assumption \ref{supp:assum:Lipschitz} then for approximate gradient UBUBU with $\tau = N_{D}$ we have
    \begin{equation}
        \textnormal{Var}(D_{l,l+1}) \leq C(\Tilde{\gamma},\Tilde{m},\Tilde{M},\Tilde{M}^{s}_{1},\tau/N_{D})d^{2}h^{2}_{l}N_{D}.
    \end{equation}
\end{corollary}
\begin{proof}
Following the proof of Proposition \ref{supp:prop:SVRG_Dl_l+1}, but using the results of Proposition \ref{supp:theorem:non-asymptotic-AG2} you get the required result. 
\end{proof}

\begin{proposition}
\label{supp:prop:var_agubu_1}
Suppose a OHO chain at level $0$ using a full gradient Gaussian approximation and a approximate gradient $\UBU$ chain at level $1$, with synchronously coupled Brownian motions $\left(z_{k}\right)_{k\in \mathbb{N}}$ and $\left(z'_{k}\right)_{k \in \mathbb{N}}$ and stepsizes $h_{0}$ and $h_{1} = h_{0}/2$, satisfying the conditions of Proposition \ref{supp:theorem:non-asymptotic-AG2}, be such that $z_{0} \sim \pi_{0}{=\mu_{G}}$ and $z'_{0} \sim \pi'_{0}{=\mu_{G}(P^A_{h_{1}})^{B/h_{1}}}$. Then for $f$ satisfying Assumption \ref{supp:assum:Lipschitz} we have the following variance bound
\begin{align*}
   &\Var\left(f(z'_{k}) - f(z_{k})\right) \leq\mathbb{E}\left[\left(f(z'_{k}) - f(z_{k})\right) ^2\right]\le \mathbb{E}\|z'_{k} - z_{k}\|^{2}_{a,b}\\
   &\leq \Bigg(\exp\left(-\frac{mkh_0}{16\gamma}\right) \left(\|z'_{0}-z_{0}\|_{L^{2},a,b}+\mathcal{W}_{2,a,b}(\pi_0',\pi)\right)
        \\
        &+ (1-c(h_{1}))^{k}\mathcal{W}_{2,a,b}(\pi'_{0},\pi) + \frac{dC(\tilde{\gamma},\Tilde{m},\tilde{M},\tilde{M}^{s}_{1})}{N_{D}}+ C(\Tilde{\gamma},\Tilde{m},\Tilde{M})h_{0}\sqrt{d}\\
        &+ \frac{h_{1}((\tau-1)d + \sqrt{N_{D}d})C(\Tilde{\gamma},\Tilde{m},\Tilde{M},\Tilde{M}^{s}_{1}))}{\sqrt{N_{D}}}\Bigg)^2.
\end{align*}
\end{proposition}
\begin{proof}

We aim to consider the same argument as Proposition \ref{supp:prop:SGKLDCVVariance} using the results from Proposition \ref{supp:theorem:non-asymptotic-AG2}, Lemma \ref{supp:lem:diffusions_difference} and Proposition \ref{supp:theorem:OHO_strongerror}.
\end{proof}
\begin{corollary}\label{supp:cor:var_agubu_1}
    Suppose that the assumptions of Proposition \ref{supp:prop:var_agubu_1} hold for the potential $U$ and $h_{0} > 0$.  Assume that  $$h_{0} \leq \frac{C(\tilde{\gamma},\Tilde{m},\Tilde{M},\tau/N_{D})}{N^{3/2}_{D}},\qquad B\ge \frac{16\log(2)\Tilde{\gamma}}{\Tilde{m}h_0 N^{1/2}_{D}}, \quad 
B_0\ge \frac{16\Tilde{\gamma}}{\Tilde{m}N^{1/2}_{D}h_0}\log\left(\frac{1}{ N^{3}_{D} h_0^{2}}\right),$$ and the levels are initialized as described in Section \ref{supp:sec:initial_conditions}. Let $1 \leq k \leq K$, and a test function $f$ satisfy Assumption \ref{supp:assum:Lipschitz} then for approximate gradient UBUBU with $\tau = N_{D}$ we have
    \begin{equation}
        \textnormal{Var}(D_{0,1}) \leq \frac{C(\Tilde{\gamma},\Tilde{m},\Tilde{M},\Tilde{M}^{s}_{1},\tau/N_{D})d^{2}}{N^{2}_{D}}.
    \end{equation}
\end{corollary}
\begin{proof}
Following the same argument as Corollary \ref{supp:cor:var_agubu_2}, but using Proposition \ref{supp:prop:var_agubu_1}.
\end{proof}

\subsection{Variance bound of $S(c_{R})$}\label{supp:subsection:varianceScRapprox}

\begin{theorem}
\label{supp:thm:AgradUBUBUL4}
Considering UBUBU-Approx method, suppose that Assumptions 
\ref{supp:assum:Lipschitz}, \ref{supp:assum:convexSG}, \ref{supp:assum:Hess_LipschitzSG}, \ref{supp:assum:LipA} hold, and in addition $\gamma \geq \sqrt{8M}$,
\begin{align*} h_0 \leq \frac{C(\Tilde{\gamma},\Tilde{m},\Tilde{M},\tau/N_{D})}{N^{3/2}_{D}}, \quad B\ge \frac{16\log(2)\Tilde{\gamma}}{\Tilde{m}h_0 N^{1/2}_{D}}, \quad 
B_0\ge \frac{16\Tilde{\gamma}}{\Tilde{m}N^{1/2}_{D}h_0}\log\left(\frac{1}{ N^{3}_{D} h_0^{2}}\right).\end{align*}
Suppose that $c_{R}\in [0, \phi_N^{-1/2})$ and $2<\phi_N<4$.
Then for any $N\ge 1$, $S(c_R)$ has finite expected computational cost, $\E S(c_R)=\pi(f)$, and it has finite variance. 
Moreover, it satisfies a CLT as $N\to \infty$, and the asymptotic variance $\sigma^2_S$ defined in \eqref{supp:eq:sigma2S} can be bounded as
\[\sigma^2_S\le \frac{1}{\Tilde{m}N_{D}K} + \frac{C(\Tilde{\gamma},\Tilde{m},\Tilde{M},\Tilde{M}^{s}_{1},\tau/N_{D},\phi_{N})d^{2}}{c_{N}N^{2}_{D}}.\]
\end{theorem}

\begin{proof}
Following the same argument as Theorem \ref{supp:thm:stochgradUBUBU} using Corollaries \ref{supp:cor:var_agubu_2} and \ref{supp:cor:var_agubu_1}.

\end{proof}


\section{Auxiliary results \& RHMC algorithm}
\label{supp:app:disc_bounds}

\begin{lemma}\label{supp:lemma:recursion}
    If we have a sequence of non-negative numbers $(r_{k})_{k \in \mathbb{N}}$ such that for constants $A \in (0,1/2)$, $B,C,D \in \R_{\geq 0}$ such that
    \[
    r^{2}_{k+1} \leq \left(\left(\left(1-A\right)r^{2}_{k} +B\right)^{1/2}+C\right)^{2} + D
    \]
    then 
    \[
    r_{k} \leq \sqrt{2}\left(1-\frac{A}{2}\right)^{k}(r_{0} + \sqrt{B}) + \frac{2\sqrt{2}C}{A} + 2\sqrt{\frac{D + B}{A}}.
    \]
\end{lemma}
\begin{proof}
    If we define $\Tilde{r}_{k} := \sqrt{(1-A)r^{2}_{k} + B}$, then we have that
    \begin{align*}
        \Tilde{r}^{2}_{k+1} &\leq (1-A)\left(\Tilde{r}_{k} + C\right)^{2} + (1-A)D + B\\
        &\leq ((1-A/2)\Tilde{r}_{k} + C)^{2} + D + B.
    \end{align*}
    Then using \cite{DalKar2019}[Lemma 7] we have that
    \begin{align*}
        \Tilde{r}_{k} \leq (1-A/2)^{k}\Tilde{r}_{0} + \frac{2C}{A} + \sqrt{\frac{2(D + B)}{A}},
    \end{align*}
    then 
    \begin{align*}
        r_{k}\sqrt{1-A} &\leq \Tilde{r}_{k} \leq (1-A/2)^{k}\left(r_{0} + \sqrt{B}\right) + \frac{2C}{A} + \sqrt{\frac{2(D + B)}{A}},
    \end{align*}
    and, using the fact that $A \leq 1/2$, we obtain the required result.
\end{proof}

\begin{lemma} \label{supp:lem:moment_bounds}
If a potential $U: \mathbb{R}^{d} \to \mathbb{R}$ is such that $\nabla^{2}U \succ mI$, and $\nabla U(x^{*}) = 0$ then for $x \sim \pi \propto e^{-U(x)}$ we have
\[
\mathbb{E}\left[\left\|x - x^{*}\right\|^{4}\right]^{1/4} \leq 3^{1/4}\sqrt{\frac{d}{m}}\quad \textnormal{and} \quad \mathbb{E}\left[\left\|x - x^{*}\right\|^{8}\right]^{1/8} \leq 105^{1/8}\sqrt{\frac{d}{m}}.
\]
\end{lemma}
\begin{proof}
By using integration by parts and the convexity of $U$ we have that
\begin{align*}
&\int_{x \in \mathbb{R}^{d}}\|x - x^{*}\|^{4}e^{-U(x)}dx \leq \int_{x\in \mathbb{R}^{d}} \sum^{d}_{i=1}\sum^{d}_{j=1}\left(x_{i} - x^{*}_{i}\right)^{2}\left(x_{j}-x^{*}_{j}\right)^{2}e^{-U(x)}dx\\
&\leq d\sum^{d}_{i=1}\int_{x\in \mathbb{R}^{d}} \left(x_{i} - x^{*}_{i}\right)^{4}e^{-U(x)}dx\\
&\leq \frac{d}{m}\sum^{d}_{i=1}\int_{x_{-i}\in \mathbb{R}^{d-1}}\int_{x_{i} \in \mathbb{R}} \left(x_{i} - x^{*}_{i}\right)^{3}\partial_{i}U(x)e^{-U(x)}dx_{i}dx_{-i}\\
&= \frac{3d}{m}\sum^{d}_{i=1}\int_{x_{-i}\in \mathbb{R}^{d-1}}\int_{x_{i} \in \mathbb{R}} \left(x_{i} - x^{*}_{i}\right)^{2}e^{-U(x)}dx_{i}dx_{-i}\\
&\leq \frac{3d}{m^{2}}\sum^{d}_{i=1}\int_{x_{-i}\in \mathbb{R}^{d-1}}\int_{x_{i} \in \mathbb{R}} \left(x_{i} - x^{*}_{i}\right)\partial_{i}U(x)e^{-U(x)}dx_{i}dx_{-i}\\
&= \frac{3d}{m^{2}}\sum^{d}_{i=1}\int_{x\in \mathbb{R}^{d}} e^{-U(x)}dx,
\end{align*}
and similarly, we have
\begin{align*}
&\int_{x \in \mathbb{R}^{d}}\|x - x^{*}\|^{8}e^{-U(x)}dx \\
&\leq \int_{x\in \mathbb{R}^{d}} \sum^{d}_{i,j,k,l=1}\left(x_{i} - x^{*}_{i}\right)^{2}\left(x_{j}-x^{*}_{j}\right)^{2}\left(x_{k}-x^{*}_{k}\right)^{2}\left(x_{l}-x^{*}_{l}\right)^{2}e^{-U(x)}dx\\
&\leq \int_{x\in \mathbb{R}^{d}} d^{3}\sum^{d}_{i=1}\left(x_{i} - x^{*}_{i}\right)^{8}e^{-U(x)}dx\\
&\leq \frac{105d^{3}}{m^{4}}\sum^{d}_{i=1}\int_{x\in \mathbb{R}^{d}} e^{-U(x)}dx,
\end{align*}
as required.

\end{proof}

 \begin{proposition}[Local error bounds for $\UBU$]
 Suppose we have a potential $U$ which satisfies Assumptions \ref{supp:assum:Lip} and \ref{supp:assum:convex}. Let $\phi\left(\xi,h,(W_{t'})^{h}_{t'=0}\right)$ be the solution to the continuous kinetic Langevin dynamics at time $h>0$ with initial condition $\xi \sim \pi$, using Brownian motion $(W_{t'})^{t}_{t'=0}$.  Let $\psi_{h}\left(\xi,h,(W_{t'})^{t}_{t'=0}\right)$ to be the solution of the numerical discretization $\UBU$ with stepsize $h>0$, the same initial condition and Brownian motion. Then we have the following local error bound
 \[
 \|\phi(\xi,h,(W_{t'})^{h}_{t' = 0}) - \psi_{h}(\xi,h,(W_{t'})^{h}_{t' = 0})\|_{L^{2},a,b} \leq \frac{3}{7}\sqrt{d}\left(\sqrt{M} + \gamma \right)h^{2},
 \]
 for $h < \min\left\{\frac{1}{5\sqrt{M}},\frac{1}{2\gamma}\right\}$.
 \end{proposition}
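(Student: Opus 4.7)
The plan is to bound the one-step error in the $L^2$-norm associated with $\|\cdot\|_{a,b}$ by computing, explicitly, the difference between the true kinetic Langevin solution at time $h$ (written in the form \eqref{eq:cont_v}--\eqref{eq:cont_x}) and a single $\UBU$ step (written in the form \eqref{eq:disc_v}--\eqref{eq:disc_x}), both driven by the same Brownian motion. Since the stochastic integrals in \eqref{eq:disc_v}--\eqref{eq:disc_x} are the exact stochastic integrals of the $\mathcal{U}$ steps (this is why the $\mathcal{Z}^{(1)}, \mathcal{Z}^{(2)}$ formulas in \eqref{eq:Z12def} were chosen), all noise terms cancel identically, so the one-step error is a purely deterministic functional of $\xi$ and the path $(X_s)_{0\le s\le h}$.

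The first step is to write the velocity error as
\begin{align*}
V_h - v_1 &= h\mathcal{E}(h/2)\nabla U(\xi_x) - \int_0^h \mathcal{E}(h-s)\nabla U(X_s)\,ds - h\mathcal{E}(h/2)\bigl[\nabla U(\overline{x}_0)-\nabla U(\xi_x)\bigr] \\
&\quad +\,(\text{cancellation of the true drift part up to evaluation point})
\end{align*}
and then to decompose the residual as
$$
\int_0^h \bigl\{\mathcal{E}(h/2)\nabla U(\overline{x}_0) - \mathcal{E}(h-s)\nabla U(X_s)\bigr\}ds = A_1 + A_2,
$$
with $A_1=\int_0^h \mathcal{E}(h/2)\bigl[\nabla U(\overline{x}_0)-\nabla U(X_s)\bigr]ds$ and $A_2=\int_0^h \bigl[\mathcal{E}(h/2)-\mathcal{E}(h-s)\bigr]\nabla U(X_s)ds$. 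The term $A_2$ gives $O(h^2)$ via the Taylor expansion $\mathcal{E}(h/2)-\mathcal{E}(h-s) = \gamma(s-h/2)e^{-\gamma h/2} + O(h^2)$, whose $O(h)$ piece integrates to zero over $[0,h]$ (a midpoint cancellation), leaving $\|A_2\|_{L^2}\lesssim \gamma^2 h^3 \|\nabla U(X_s)\|_{L^2}\lesssim \gamma^2 h^3\sqrt{Md}$. The term $A_1$ is handled by $M$-gradient Lipschitz plus the identity $\overline{x}_0-X_s = (\xi_x + \mathcal{F}(h/2)\xi_v + \sqrt{2\gamma}\int_0^{h/2}\mathcal{F}(h/2-r)dW_r) - X_s$, whose $L^2$ norm is $O(h)$ using $X_s-\xi_x = O(h)$ in $L^2$ bounds obtained from \eqref{eq:cont_x} combined with $\xi\sim\pi$ and $\|\nabla U(X_s)\|_{L^2}\le \sqrt{Md}$, $\|\xi_v\|_{L^2}=\sqrt{d}$.

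The same procedure is applied to the position error $X_h-x_1$, where the ``integration weight midpoint cancellation" argument now uses $\mathcal{F}(h-s)-\mathcal{F}(h/2)$ in place of $\mathcal{E}(h-s)-\mathcal{E}(h/2)$; an analogous decomposition into $B_1+B_2$ yields $\|X_h-x_1\|_{L^2}\lesssim h^2\sqrt{d}$ with constants involving $\sqrt{M}$ and $\gamma$. Stationarity $\xi\sim\pi$ is used throughout to replace position moments by $\|x-x^*\|_{L^2}\le\sqrt{d/m}$ and velocity moments by $\sqrt{d}$, and to replace $\|\nabla U(X_s)\|_{L^2}$ (for $s\in[0,h]$) by $\sqrt{Md}$ via the standard drift estimate together with continuity; the hypothesis $h<1/(5\sqrt{M})$ enters exactly to absorb $hM$ factors into constants, while $h<1/(2\gamma)$ keeps $\mathcal{E}$ and $\mathcal{F}$ bounded and makes the Taylor remainder of $\mathcal{E}(\cdot)$ and $\mathcal{F}(\cdot)$ at $h/2$ of size $O((\gamma h)^2)$.

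Finally, the two bounds are combined in the $\|\cdot\|_{a,b}$ norm using $a=1/M$, $b=1/\gamma$, and the equivalence \eqref{eq:normequiv}: $\|(X_h-x_1,V_h-v_1)\|_{L^2,a,b}^2 \le \tfrac{3}{2}(\|X_h-x_1\|_{L^2}^2 + M^{-1}\|V_h-v_1\|_{L^2}^2)$, which converts the velocity bound $\lesssim\gamma h^2\sqrt{Md}$ into a term of size $\gamma h^2\sqrt{d}$ and the position bound into a term of size $\sqrt{M}h^2\sqrt{d}$, yielding the claimed constant $\tfrac{3}{7}\sqrt{d}(\sqrt{M}+\gamma)h^2$ after bookkeeping. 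The main obstacle is the careful tracking of constants: the stated prefactor $3/7$ is tight enough that the ``midpoint cancellation" in $A_2,B_2$ (i.e., the observation that the $O(h)$ part of the Taylor expansion of $\mathcal{E}(h-s)-\mathcal{E}(h/2)$ integrates to zero on $[0,h]$) must be exploited fully rather than being discarded, otherwise one only obtains the weaker constant with $\gamma h$ replaced by $\gamma h \cdot \text{const}$. All other steps are direct Taylor expansions plus moment bounds under $\pi$.
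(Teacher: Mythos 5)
Your decomposition $\Delta_v = A_1 + A_2$ (argument error in $\nabla U$ plus exponential-weight error) is a reasonable alternative to the paper's, which instead applies the fundamental theorem of calculus to the product $\mathcal{E}(h-s')\nabla U(x(s'))$ around the midpoint $s'=h/2$, producing a boundary term $-h\mathcal{E}(h/2)(\nabla U(x(h/2)) - \nabla U(\overline{x}_0))$ plus two double integrals $\tilde{I}_1$ (Hessian term) and $\tilde{I}_2$ (friction term). However, the way you propose to bound $A_1$ does not yield a constant anywhere near $3/7$. You estimate $\|\overline{x}_0 - X_s\|_{L^2} = O(h)$ by the triangle inequality through $\xi_x$, i.e. via $\|\overline{x}_0 - \xi_x\| + \|X_s - \xi_x\|$. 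This produces a constant of order $1$--$2$ (rather than $1/4$) multiplying $h\sqrt{d}$, because it ignores two crucial cancellations. First, one must anchor the comparison at the midpoint $X_{h/2}$ rather than at $\xi_x$: the quantity $\overline{x}_0 - X_{h/2} = \int_0^{h/2}\mathcal{F}(h/2-r)\nabla U(X_r)\,dr$ has \emph{no noise component at all} (the $\mathcal{U}$ half-step is exact in law, and you in fact note this at the start but then do not use it here), so it is $O(h^2\sqrt{Md})$, which is why this contribution is pushed to $O(h^3)$ in the paper's boundary term. Second, for the remaining difference $X_{h/2} - X_s = -\int_{h/2}^s V_r\,dr$, one must not take the supremum over $s$ and multiply by $h$, but instead integrate $|s-h/2|$ over $[0,h]$ to get $h^2/4$; this is exactly what the paper's nested integral $\int_0^h\int_{h/2}^s$ structure encodes, combined with the stationary bound $\|\nabla^2 U(X_{s'}) V_{s'}\|_{L^2} \le M\sqrt{d}$ which handles the full velocity (drift and noise) in one stroke.

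Your closing remark that the ``midpoint cancellation'' $\int_0^h(s-h/2)\,ds = 0$ in $A_2$ (and $B_2$) must be exploited fully to reach $3/7$ is also misplaced. The paper's $\tilde{I}_2$ is bounded by Cauchy--Schwarz with the absolute value $\left|\int_{h/2}^s\cdots\,ds'\right|$, i.e.\ without exploiting any sign cancellation at all, and this still gives $\|\tilde{I}_2\|_{L^2}\le \tfrac{1}{4}\gamma\sqrt{Md}\,h^2$, which is tight enough. The critical source of the $1/4$ factor is the centering at $h/2$ (area $\int_0^h|s-h/2|\,ds = h^2/4$), not a signed cancellation. To repair your argument you need to redo the $A_1$ estimate: write $\nabla U(\overline{x}_0) - \nabla U(X_s) = [\nabla U(\overline{x}_0) - \nabla U(X_{h/2})] + [\nabla U(X_{h/2}) - \nabla U(X_s)]$, bound the first bracket by $M\|\overline{x}_0-X_{h/2}\|_{L^2}\le \tfrac{M}{8}h^2\sqrt{Md}$ and the second by $M\sqrt{d}\,|s-h/2|$ (via the FTC along the true path and stationarity of $V$), then integrate; this reconstructs exactly the paper's boundary-plus-$\tilde{I}_1$ split and recovers the constant.
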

 \begin{proof}
 Using the method of \cite{sanz2021wasserstein} we wish to bound the local error of the $\UBU$ scheme, when initialized at the target measure of the continuous dynamics. When considering \eqref{supp:eq:cont_v} and \eqref{supp:eq:disc_v} we have that for $\xi \sim \pi$
 \[
 \phi(\xi,h,\left(W_{t'}\right)^{h}_{t'=0}) - \psi_{h}(\xi,h,\left(W_{t'}\right)^{h}_{t'=0}) = (\Delta_x,\Delta_v),
 \]
 \[
 \Delta_x = -\int^{h}_{0}\mathcal{F}(h-s)\nabla U(x(s))ds + h\mathcal{F}(h/2)\nabla U(y).
 \]
 and
 \[
    \Delta_v = -\int^{h}_{0}\mathcal{E}(h-s)\nabla U(x(s))ds + h\mathcal{E}(h/2)\nabla U(y).
 \]
 Next, we use the fundamental theorem of calculus
 \begin{align*}
 \mathcal{E}(h-s)\nabla U(x(s)) &= \mathcal{E}(h/2)\nabla U(x(h/2)) \\&+ \int^{s}_{h/2}\left( \mathcal{E}(h-s')\nabla^{2}U(x(s'))v(s') + \gamma \mathcal{E}(h-s')\nabla U(x(s'))\right)ds'.
 \end{align*}
 Then 
 \[
 \Delta_v = - h\mathcal{E}(h/2)\left(\nabla U(x(h/2)) - \nabla U(y)\right) + \Tilde{I}_1
+ \Tilde{I}_2, \]
where
\[
\Tilde{I}_{1} = -\int^{h}
_{0}\int^{s}_{h/2}\mathcal{E}(h-s')\nabla^{2}U(x(s'))v(s')ds'ds,\]
and
\[
\Tilde{I}_{2} = -\int^{h}_{0}\int^{s}_{h/2} \gamma \mathcal{E}(h-s')\nabla U(x(s'))ds'ds.
\]
Hence
\[
\|h\mathcal{E}(h/2)\left(\nabla U(x(h/2)) - \nabla U(y)\right)\|_{L^{2}} \leq \frac{h^{3}M^{3/2}\sqrt{d}}{\sqrt{48}}
\]
from \cite{sanz2021wasserstein}[Eq. 36]. Now, we estimate $\Tilde{I}_1$ as
\begin{align*}
    \mathbb{E}\left(\|\Tilde{I}_{1}\|^{2}\right) &\leq \mathbb{E}\left[ \left(\int^{h}_{0}\left|\int^{s}_{h/2}\mathcal{E}(h-s')^2 ds'\right|ds\right)\times \left(\int^{h}_{0} \left| \int^{s}_{h/2}\|\nabla^{2}U(x(s'))v(s')\|^2ds'\right|ds\right)\right]\\
    &\leq \frac{\mathcal{F}(h)^{2}}{4} \times \frac{h^{2}M^{2}d}{4} \leq \frac{h^{4}M^{2}d}{16},
\end{align*}
and we estimate $\Tilde{I}_2$ as
\begin{align*}
     \mathbb{E}\left(\|\Tilde{I}_{2}\|^{2}\right) &\leq \gamma^2 \mathbb{E}\left[ \left(\int^{h}_{0}\left|\int^{s}_{h/2}\mathcal{E}(h-s')^2 ds'\right|ds\right)\times \left(\int^{h}_{0}\left| \int^{s}_{h/2}\left\|\nabla U(x(s'))\right\|^2 ds'\right|ds \right)\right]\\
    &\leq \gamma^2\frac{\mathcal{F}(h)^{2}}{4} \times \frac{h^{2}Md}{4} \leq \frac{h^{4}M \gamma^2 d}{16},
\end{align*}
then 
\[
\|\Delta_v\|_{L^{2}} \leq \frac{h^{3}M^{3/2}\sqrt{d}}{\sqrt{48}} + \frac{h^{2}M\sqrt{d}}{4} + \frac{h^{2}\gamma \sqrt{Md}}{4}.
\]
Using \cite{sanz2021wasserstein}[Eq 42 Estimate] we get the bound
\[
\|\Delta_x\|_{L^{2}} \leq \frac{h^{3}}{24}\left(\sqrt{3}hM^{3/2} + \left(\frac{\sqrt{42}}{2} + 1\right)M + \gamma M^{1/2}\right)\sqrt{d}.
\]
In the modified Euclidean norm we have
\begin{align*}
   \left\|(\Delta_x,\Delta_v)\right\|_{L^{2},a,b} &\leq \sqrt{\frac{3}{2}}\left(\left\|\Delta_x\right\|_{L^{2}} + \frac{1}{\sqrt{M}}\left\|\Delta_v\right\|_{L^{2}}\right) \\
   &\leq \sqrt{\frac{3d}{2}}h^2\left(\frac{h}{24}\left(\sqrt{3}hM^{3/2} + \frac{9}{2}M + \gamma M^{1/2}\right) + \frac{\sqrt{M}}{4} + \frac{\gamma}{4}\right),
\end{align*}
and under the assumption that $h<\min\{\frac{1}{5\sqrt{M}},\frac{1}{2\gamma}\}$ we see that
\begin{align*}
   \left\|(\Delta_x,\Delta_v)\right\|_{L^{2},a,b} \leq \frac{3}{7}\sqrt{d}\left(\sqrt{M} + \gamma \right)h^2.
\end{align*}
\end{proof}

The following lemma will be bound the variances of $\mathcal{Z}^{(1)}$, $\mathcal{Z}^{(2)}$ and $\mathcal{Z}^{(1)}-\mathcal{Z}^{(2)}$. 

\begin{lemma}\label{supp:lem:Zcovbnd}
For $\mathcal{Z}^{(1)}$ and $\mathcal{Z}^{(2)}$ as defined
\[
\begin{split}
\mathcal{Z}^{(1)}\left(h,\xi^{(1)}\right) &= \sqrt{h}\xi^{(1)},\\
\mathcal{Z}^{(2)}\left(h,\xi^{(1)},\xi^{(2)}\right) &= \sqrt{\frac{1-\eta^{4}}{2\gamma}}\Bigg(\sqrt{ \frac{1-\eta^{2}}{1+\eta^{2}}\cdot \frac{2}{\gamma h}}\xi^{(1)} + \sqrt{1-\frac{1-\eta^{2}}{1+\eta^{2}}\cdot\frac{2}{\gamma h}}\xi^{(2)}\Bigg),
\end{split}
\]
we have
\begin{align*}
\Cov\left(\mathcal{Z}^{(1)}\left(h,\xi^{(1)}\right)\right)&=h I_d,\\
\Cov\left(\mathcal{Z}^{(2)}\left(h,\xi^{(1)},\xi^{(2)}\right)\right)&\preceq h I_d,\\
\Cov\left(\mathcal{Z}^{(1)}\left(h,\xi^{(1)}\right)-\mathcal{Z}^{(2)}\left(h,\xi^{(1)},\xi^{(2)}\right)\right)&\preceq \frac{\gamma h^2}{4} I_d.
\end{align*}
\end{lemma}
\begin{proof}
From the definitions of $\mathcal{Z}^{(1)}$ and $\mathcal{Z}^{(2)}$ it is clear that $\Cov\left(\mathcal{Z}^{(1)}\left(h,\xi^{(1)}\right)\right)=h I_d$ and $\Cov\left(\mathcal{Z}^{(2)}\left(h,\xi^{(1)},\xi^{(2)}\right)\right)=\frac{1-\eta^4}{2\gamma} I_d\preceq h I_d$. For the last claim, we have
\begin{align*}
&\Cov\left(\mathcal{Z}^{(2)}\left(h,\xi^{(1)},\xi^{(2)}\right)-\mathcal{Z}^{(1)}\left(h,\xi^{(1)}\right)\right)
\\&=\left(\sqrt{\frac{1-\eta^{4}}{2\gamma}}\sqrt{\frac{1-\eta^{2}}{1+\eta^{2}}\cdot \frac{2}{\gamma h}}-\sqrt{h}\right)^2
+ \frac{1-\eta^4}{2\gamma}\left(1-\frac{1-\eta^{2}}{1+\eta^{2}}\cdot\frac{2}{\gamma h}\right)\\
&=\frac{1-\eta^{4}}{2\gamma}+h-2\frac{(1-\eta^{2})}{\gamma}=\frac{1-e^{-2\gamma h}-4(1-e^{-\gamma h})+2\gamma h}{2\gamma}\le \frac{(\gamma h)^2}{2} \cdot \frac{1}{2\gamma}\le \frac{\gamma h^2}{4}.
\end{align*}
\end{proof}

\begin{lemma}\label{supp:lem:A123bnd}
Let $A=\sum_{l=1}^{n}A^{(l)}$, with $A^{(l)}\in \R^{d\times d}$ for every $1\le l\le n$. Then we have
\begin{align*}
\|A\|_{\{12\}\{3\}}=\left\|\sum_{i_1,l,m} (A^{(l)}_{i_1,\cdot,\cdot})^T\cdot A^{(m)}_{i_1,\cdot,\cdot}\right\|^{1/2}.
\end{align*}

\end{lemma}
\begin{proof}
This follows by expanding the formula $\|A\|_{\{12\}\{3\}}=\left\|\sum_{i_1} A_{i_1,\cdot,\cdot}^T\cdot A_{i_1,\cdot,\cdot}\right\|^{1/2}$ shown in Lemma 7 of \cite{paulin2024}.
\end{proof}

The following lemma shows some bounds for the gradient-Lipschitz constant $M$ and strongly Hessian Lipschitz constant $M_1^{s}$ for the Bayesian multinomial regression example.
\begin{lemma}\label{supp:lem:BMR}
Consider the Bayesian multinomial regression likelihood of the form,
\begin{equation}
p(y^j|q) = \frac{\exp(\langle x^j, q^{y^j}\rangle)}{\sum_{1\le k\le m} \exp(\langle x^j, q^{k}\rangle)},
\end{equation}
where the posterior potential is given as
\begin{equation}
U(q) = - \log(p_0(q)) - \sum^{N_{D}}_{k=1}\log\left(p(y^j|q) \right),
\end{equation}
with $p_0(q)=\frac{\exp(-\|q\|^2/(2\sigma_0^2))}{(\pi\sigma_0^2)^{d/2}}$. This satisfies the following bounds,
\begin{align*}
&\sup_{q\in \R^d}\|\nabla^2 U(q)\|\le \sigma_0^{-2}+\left\|\sum_{l=1}^{N_D} (x^l)(x^l)^T\right\|,\\
&\sup_{q\in \R^d}\|\nabla^3 U(q)\|_{\{12\}\{3\}}\le 6\left\|\sum_{l=1}^{N_D}\left[(x^l)(x^l)^T  \left(\sum_{m=1}^{N_D}\langle x^{l},x^m\rangle^2\right)\right]\right\|^{1/2}.
\end{align*}
\end{lemma}
\begin{remark}
    If $N_D\to \infty$, and $(x^l)_{1\le l\le N_D}$ are i.i.d. samples from a continuous $d$-dimensional distribution that is non-degenerate with $\E(\|x^l\|^6)=\mathcal{O}(1)$, then we would expect $\|\nabla^2 U(q)\|\propto \frac{N_D}{d}$, and $\|\nabla^3 U(q)\|_{\{12\}\{3\}} \propto \frac{N_D}{d}$.
\end{remark}
\begin{proof}
For $1\le i\le m$, let $E^i=\left(\begin{matrix}&0_{d_{o}}\\ &\vdots \\ &I_{d_{o}} \\ &\vdots \\ &0_{d_{o}}\end{matrix}\right)$ be an $d\times d_{o}$ block matrix with an identity matrix at block $i$. Let 
\[S(x,q)=\sum_{1\le l\le m} \exp(\langle x, q^l\rangle).\]
Let $x\otimes y\otimes z\in \R^{d\times d\times d}$ denote the tensor product of 3 vectors, i.e. $(x\otimes y\otimes z)_{ijk}=x_i y_j z_k$. Then we can express the likelihood term $\log\left(p(y|q) \right)$ and its derivatives as follows.
\begin{align*}
&\log\left(p(y|q) \right)=\langle x, q^{y}\rangle - \log(S(x,q))\\
&\nabla_{q}\log\left(p(y|q) \right)=\sum_{i=1}^{m} (E^i x) \left(\I[y=i] - \exp(\langle x, q^i\rangle)(S(x,q))^{-1}\right)\\
&\nabla^2_{q}\log\left(p(y|q) \right)=\sum_{i,j=1}^{m} (E^i x)(E^j x)^T \exp(\langle x, q^i\rangle+\langle x, q^j\rangle) (S(x,q))^{-2} \\
&-\sum_{i=1}^{m} (E^i x)(E^i x)^T \exp(\langle x, q^i\rangle) (S(x,q))^{-1}\\
&\nabla^3_{q}\log\left(p(y|q) \right)=-\sum_{i=1}^{m} (S(x,q))^{-1} \exp(\langle x, q^i\rangle) (E^i x)\otimes (E^i x) \otimes (E^i x)\\
&+\sum_{i,j=1}^{m} (S(x,q))^{-2} \exp(\langle x, q^i\rangle+\langle x, q^j\rangle)\cdot \\
&\cdot \left((E^i x)\otimes (E^i x) \otimes (E^j x)+(E^i x)\otimes (E^j x) \otimes (E^i x) +(E^i x)\otimes (E^j x) \otimes (E^j x)\right)\\
&-2 (S(x,q))^{-3}\sum_{i,j,k=1}^{m} (E^i x)\otimes (E^j x) \otimes (E^k x)\exp(\langle x, q^i\rangle+\langle x, q^j\rangle+\langle x, q^k\rangle).
\end{align*}
The first claim of the lemma bounding $\|\nabla^2 U(q)\|$ follows from the fact that 
\[0_d \preceq -\nabla^2_{q}\log\left(p(y|q) \right)\preceq \sum_{i}^{m} (E^i x)(E^i x)^T \exp(\langle x, q^i\rangle) (S(x,q))^{-1}\preceq \sum_{i}^{m} (E^i x)(E^i x)^T,\]
here $\preceq$ denotes the semidefinite order.

For the second claim, note that 
\begin{align*}
&\left\|-\sum_{l=1}^{N_D}\sum_{i=1}^{m} (S(x^l,q))^{-1} \exp(\langle x^l, q^i\rangle) (E^i x^l)\otimes (E^i x^l) \otimes (E^i x^l)\right\|_{\{12\}\{3\}}\\
\intertext{using Lemma \ref{supp:lem:A123bnd}}
&\le \left\|\sum_{l,m=1}^{N_D} \langle x^{l},x^m\rangle \left((x^l)(x^m)^T\right)^2 \right\|^{1/2}= \left\|\sum_{l,m=1}^{N_D} \langle x^{l},x^m\rangle^2 (x^l)(x^m)^T \right\|^{1/2}\\
&\le \left\|\frac{1}{2}\sum_{l,m=1}^{N_D} \langle x^{l},x^m\rangle^2 [(x^l)(x^l)^T+(x^m)(x^m)^T] \right\|^{1/2}\\
&=\left\|\sum_{l=1}^{N_D}\left[(x^l)(x^l)^T  \left(\sum_{m=1}^{N_D}\langle x^{l},x^m\rangle^2\right)\right]\right\|^{1/2}.
\end{align*}
The other terms in the sum can be bounded similarly as 
\begin{align*}
&\Bigg\|\sum_{i,j=1}^{m} (S(x,q))^{-2} \exp(\langle x, q^i\rangle+\langle x, q^j\rangle)\cdot \\
&\cdot \left((E^i x)\otimes (E^i x) \otimes (E^j x)+(E^i x)\otimes (E^j x) \otimes (E^i x) +(E^i x)\otimes (E^j x) \otimes (E^j x)\right)\Bigg\|_{\{12\}\{3\}}\\
&\le 3\left\|\sum_{l=1}^{N_D}\left[(x^l)(x^l)^T  \left(\sum_{m=1}^{N_D}\langle x^{l},x^m\rangle^2\right)\right]\right\|^{1/2},\\
&\left\|-2 (S(x,q))^{-3}\sum_{i,j,k=1}^{m} (E^i x)\otimes (E^j x) \otimes (E^k x)\exp(\langle x, q^i\rangle+\langle x, q^j\rangle+\langle x, q^k\rangle)
\right\|_{\{12\}\{3\}}\\
&\le 2\left\|\sum_{l=1}^{N_D}\left[(x^l)(x^l)^T  \left(\sum_{m=1}^{N_D}\langle x^{l},x^m\rangle^2\right)\right]\right\|^{1/2},
\end{align*}
and the claim follows by the triangle inequality.
\end{proof}

\begin{algorithm}[h]
     \footnotesize
     \begin{algorithmic}[1]
    \State \textbf{Input:} \begin{itemize}
 \item stepsize $h$.
  \item Initial distribution $\mu_0$ on $\R^d\times \R^d$.
  \item Potential function $U:\R^d\to \R$ of target distribution.
  \item Number of samples parameter $K$.
  \item Expected number of leapfrog steps parameter $E_{L}\ge 1$.
  \item Partial refreshment parameter $\alpha$.
 \end{itemize}
    \State \textbf{Initialise $(x_0,v_0)\sim \mu_0$.} 
    \For {$i=1,\ldots, K$} 
    \State Sample $L\sim \mathrm{Geom}(1/E_L)$.
    \State \textbf{Perform $L$ leapfrog steps.}
    \State Set $(\tilde{x}_0,\tilde{v}_0):=(x_i,v_i)$.
    \For{$j=0,\ldots, L-1$}
        \State             $\tilde{v}_{j+1/2}:=\tilde{v}_j-\frac{h}{2}\nabla U(\tilde{x}_j)$
      \State $\tilde{x}_{j+1}:=\tilde{x}_j+h\tilde{v}_{j+1/2}$
      \State
      $\tilde{v}_{j+1}:=\tilde{v}_{j+1/2}-\frac{h}{2}\nabla U(\tilde{x}_{j+1})$
     \EndFor        
    \State Let $(x_{i}',v_{i}')=(\tilde{x}_L,\tilde{v}_L)$
    \State \textbf{Compute Hamiltonian.}
    \State $H(x_i,v_i)=U(x_i)+\frac{1}{2}\|v_i\|^2$, $H(x_i',v_i')=U(x_i')+\frac{1}{2}\|v_i'\|^2$.
    \State \textbf{Perform Metropolis-Hastings accept/reject step (with flipping the velocity upon rejection).}
    \State With probability $\min\left[1,\exp(H(x_i,v_i)-H(x_i',v_i'))\right]$, set $(x_{i+1},v_{i+1})=(x_i',v_i')$ (accept proposal).
    \State Otherwise, set $(x_{i+1},v_{i+1})=(x_i,-v_i)$ (reject proposal).
    \State \textbf{Partial velocity refreshment.}
    \State Sample $Z\sim \mathcal{N}(0_{d}, I_d)$ and update $v_{i+1}\to \alpha v_{i+1}+(1-\alpha^2)^{1/2} Z$.
    \EndFor     

 \State \textbf{Output:} 
 \State Samples  $(x_1,v_1),\ldots, (x_K,v_K)$.     
\end{algorithmic}
 	\caption{Randomized Hamiltonian Monte Carlo with Partial Refreshment (RHMC)}
 	\label{supp:alg:RHMC}
\end{algorithm}
\end{appendix}

\bibliographystyle{imsart-number} 
\bibliography{references}




\end{document}